\definecolor{blue}{rgb}{0,0.2,1}
\definecolor{red}{rgb}{0.9,0,0}
\newtheorem{theorem}{Theorem}
\newtheorem{lemma}[theorem]{Lemma}
\newtheorem{definition}[theorem]{Definition}
\newtheorem{corollary}[theorem]{Corollary}
\newtheorem{remark}{Remark}
\newcommand{\poly} {\operatorname{poly}}
\newcommand{\Sn}{\mathfrak{S}_n}
\NewDocumentCommand{\End}{o}{\mathrm{End}^{\Sn}}
\newcommand{\cH}{\mathcal{H}}
\newcommand{\HS}{\mathcal{H}}
\newcommand{\eps}{\varepsilon}
\newcommand{\N}{\mathbb{N}}
\newcommand{\comment}[1]{}
\begin{document}

 \title{\texorpdfstring{An invitation to the sample complexity of\\quantum hypothesis testing}{An invitation to the sample complexity of quantum hypothesis testing}}

\author{Hao-Chung~Cheng}
\affiliation{Department of Electrical Engineering,
National Taiwan University, Taipei 106, Taiwan (R.O.C.)}
\affiliation{Hon Hai (Foxconn) Quantum Computing Center, New Taipei City 236, Taiwan (R.O.C.)}
\affiliation{Physics Division, National Center for Theoretical Sciences, Taipei 106, Taiwan (R.O.C.)}

\author{Nilanjana~Datta}
\affiliation{Department of Applied Mathematics and Theoretical Physics,
University of Cambridge, Wilberforce Road, Cambridge, CB3 0WA, United Kingdom}

\author{Nana~Liu}
\email{nana.liu@quantumlah.org}
\affiliation{Institute of Natural Sciences, School of Mathematical Sciences, MOE-LSC,
Shanghai Jiao Tong University, Shanghai, 200240, P. R. China}
\affiliation{Shanghai Artificial Intelligence Laboratory, Shanghai, China}
\affiliation{University of Michigan-Shanghai Jiao Tong University Joint Institute, Shanghai 200240, China}

\author{Theshani~Nuradha}
\affiliation{School of Electrical and Computer Engineering, Cornell University, Ithaca, New York 14850, USA}

\author{Robert~Salzmann}
\affiliation{Department of Applied Mathematics and Theoretical Physics,
University of Cambridge, Wilberforce Road, Cambridge, CB3 0WA, United Kingdom}
\affiliation{Universit\'e de  Lyon, Inria, ENS Lyon, UCBL, LIP, F-69342, Lyon Cedex 07, France}

\author{Mark~M.~Wilde}
\affiliation{School of Electrical and Computer Engineering, Cornell University, Ithaca, New York 14850, USA}

\date{\today}

\begin{abstract} 
Quantum hypothesis testing has been traditionally studied from the information-theoretic perspective, wherein one is interested in the optimal decay rate of error probabilities as a function of the number of samples of an unknown state. In this paper, we study the sample complexity of quantum hypothesis testing, wherein the goal is to determine the minimum number of samples needed to reach a desired error probability. By making use of the wealth of knowledge that already exists in the literature on quantum hypothesis testing, we characterize the sample complexity of binary quantum hypothesis testing in the symmetric and asymmetric settings, and we provide bounds on the sample complexity of multiple quantum hypothesis testing. In more detail, we prove that the sample complexity of symmetric binary quantum hypothesis testing depends logarithmically on the inverse error probability and inversely on the negative logarithm of the fidelity. As a counterpart of the quantum Stein's lemma, we also find that the sample complexity of asymmetric binary quantum hypothesis testing depends logarithmically on the inverse type~II error probability and inversely on the quantum relative entropy, provided that the type~II error probability is sufficiently small. We then provide lower and upper bounds on the sample complexity of multiple quantum hypothesis testing, with it remaining an intriguing open question to improve these bounds.
The final part of our paper outlines and reviews how sample complexity of quantum hypothesis testing is relevant to a broad swathe of research areas and can enhance understanding of many fundamental concepts, including quantum algorithms for simulation and search, quantum learning and classification, and foundations of quantum mechanics. As such, we view our paper as an invitation to researchers coming from different communities to study and contribute to the problem of sample complexity of quantum hypothesis testing, and we outline a number of open directions for future research.
\end{abstract}

\maketitle

\setcounter{tocdepth}{2} 
\tableofcontents

\section{Introduction}

\label{sec:intro}

Distinguishing between various possibilities is fundamental to the scientific method and the process of discovery and categorization. As such, mathematical methods underlying distinguishability have been studied for a long time~\cite{NP33} and are now highly developed and fall under the general framework of hypothesis testing~\cite{casella2021statistical}. There are various ways of formulating the hypothesis testing problem, and due to the core role of distinguishability in a variety of fields, it has also been influential in many domains beyond mathematical statistics.

Here we are interested in quantum hypothesis testing, which has its origins in early work in the field of quantum information theory~\cite{Hel67,Hel69,Hol73}. Quantum hypothesis testing is an important foundational topic in the sense that it merges quantum mechanics and mathematical statistics and is applicable to basic quantum physics experiments. Moreover, in addition to applications in communication, information processing, and computation~\cite{bae2015quantum}, it has even found use in understanding foundational aspects of quantum mechanics~\cite{pusey2012reality}. In the most basic version of the problem, a quantum system is prepared in one of two possible states (density operators), denoted by $\rho$ and $\sigma$,  and it is the goal of the distinguisher, who does not know \textit{a priori} which state was prepared, to determine the identity of the unknown state. The distinguisher ideally wishes to minimize the probability of making an incorrect decision.

The task of quantum hypothesis testing becomes more interesting when multiple copies of the unknown state are provided to the distinguisher. In the case that $n\in \mathbb{N}$ copies (or samples) are provided, the states are then denoted by $\rho^{\otimes n}$ and $\sigma^{\otimes n}$ (i.e., tensor-power states). Intuitively,  extra samples are helpful for decreasing the error probability when trying to determine the unknown state. In fact, it is well known in the information theory literature that the error probability decreases exponentially fast in the number of samples provided that $\rho \neq \sigma$~\cite{HP91,ON00,ACM+07,NS09,Nag06,Hay07}.

Quantum hypothesis testing can be generalized to the case in which there are multiple hypotheses~\cite{YKL75}. In this case, a value $x$ is selected from a finite alphabet (say, $\mathcal{X}$) with a prior probability distribution $\{p(x)\}_{x \in \mathcal{X}}$. and a state $\rho_x^{\otimes n}$ is prepared, where $n\in \mathbb{N}$. It is then the goal of the distinguisher to determine the value of $x$ with as small an error probability as possible. It is also known in this case that the error probability generally decays exponentially fast with $n$~\cite{Li16}. Alternatively, one can turn this problem around and fix a constant upper bound on the desired error probability and demand that the distinguisher determines the value of $x$ with as few samples as possible (minimize $n$) while meeting the error probability constraint.

On the one hand, it is a common goal in the information theory literature to determine the optimal exponential rate of decay of the error probability, known as an error exponent. This is typically studied in the large $n$ limit, but more recently researchers have sought out more refined statements~\cite{MTMH13,Li14,DPR16}. Furthermore, this line of study has direct links with communication theory and allows for making statements about error exponents when communicating messages over a quantum channel~\cite{Hay07,TV15,DTW16,TBR16,CHT17,CTT2017,Hao17,WTB17}.

On the other hand, it is more common in the  algorithms and machine learning literature to consider a notion called sample complexity~\cite{arunachalam2018optimal,Canonne22}; that is, if we fix the error probability to be no larger than a constant $\varepsilon \in [0,1]$, what is the smallest value of $n$ (minimum number of samples) needed to ensure that the distinguisher can figure out the unknown state with an error probability no larger than $\varepsilon$? This latter notion is useful because it indicates how long one needs to wait in order to make a decision with a desired performance, and it is more compatible with the runtime of a probabilistic or quantum algorithm that solves a problem of interest. 

In this paper, we provide a systematic study of the sample complexity of quantum hypothesis testing, when there are either two or more hypotheses to distinguish. More broadly, we consider our paper to be an invitation to the wider community to study more general questions regarding the sample complexity of quantum hypothesis testing, which touches the interface between statistics, quantum information, quantum algorithms, and learning theory.
We divide our paper into three main parts:
\begin{enumerate}
    \item 
The first part is mainly pedagogical, which we have aimed to make broadly accessible.
We provide some background on quantum information-theoretic quantities of interest used throughout our paper, and we also review symmetric binary, asymmetric binary, and multiple quantum hypothesis testing. Here we also include a new finding that there is an efficient, polynomial-time algorithm for computing the optimal error probabilities in these three different settings.

\item The second part contains our main technical results. Here we precisely define various instances of the problem of sample complexity for quantum hypothesis testing, and we prove bounds for these various sample complexities. We also discuss distinctions between sample complexity in the classical and quantum cases. 

\item The last part includes a selection of applications, which serve only as a springboard for open questions that the wider community can begin to study in more depth.

\end{enumerate}

\section{Background}

In this section, we establish some notation and recall various quantities of interest used throughout the rest of our paper.

Let $\mathbb{N}\coloneqq \{1,2,\ldots\}$.
Throughout our paper, we let $\rho$ and $\sigma$ denote quantum states, which are positive semi-definite operators acting on a separable Hilbert space and with trace equal to one. Note that separable Hilbert spaces are those that are infinite-dimensional and are spanned by a countable orthonormal basis~\cite[Proposition 1.12]{HeinosaariZ11}.
Let  $I$  denote the identity operator.
For every bounded operator $A$ and $p \geq 1$, we define the Schatten $p$-norm as
\begin{align} \label{eq:Schatten}
	\left\|A\right\|_p \coloneqq  \left( \Tr\!\left[ \left( A^\dagger A \right)^{p/2} \right] \right)^{1/p}.
\end{align}
Due to the variational characterization of the trace norm, 
\begin{align} \label{eq:variational_trace-norm}
	\left\|A\right\|_1 = \max_{U} \mathrm{Re}\!\left[ \Tr\!\left[ AU\right] \right],
\end{align}
where the optimization is over every unitary $U$.

\subsection{Quantum information-theoretic quantities}

\begin{definition}[Fidelities] \label{definition:fidelities}
	Let $\rho$ and $\sigma$ be quantum states.
	\begin{enumerate}[1)]
		\item The quantum fidelity is defined as~\cite{Uhl76}
		\begin{align} \label{eq:fidelity}
			F(\rho,\sigma) \coloneqq  \left\|\sqrt{\rho} \sqrt{\sigma} \right\|_1^2.
		\end{align}
		Note that $F(\rho,\sigma) \in [0,1]$, it is equal to one if and only if $\rho=\sigma$, and it is equal to zero if and only if $\rho $ is orthogonal to $\sigma$ (i.e., $\rho\sigma=0$).
		
		\item The Holevo fidelity  is defined as~\cite{Holevo1972fid}
		\begin{align} \label{eq:Holevo_fidelity}
			F_\mathrm{H}(\rho,\sigma) \coloneqq  \left(\Tr\!\left[\sqrt{\rho} \sqrt{\sigma} \right]\right)^2.
		\end{align}
	Note that $F_\mathrm{H}(\rho,\sigma) \in [0,1]$, it is equal to one if and only if $\rho=\sigma$, and it is equal to zero if and only if $\rho $ is orthogonal to $\sigma$ (i.e., $\rho\sigma=0$).
	\end{enumerate}
\end{definition}

A divergence $\bm{D}(\rho\|\sigma)$ is a function of two quantum states $\rho$ and $\sigma$, and we say that it obeys the data-processing inequality if the following holds for all states $\rho$ and $\sigma$ and every channel $\mathcal{N}$:
\begin{equation}
    \bm{D}(\rho\|\sigma) \geq \bm{D}(\mathcal{N}(\rho)\|\mathcal{N}(\sigma)).
\end{equation}

\begin{definition}[Distances and divergences] \label{definition:distances}
	Let $\rho$ and $\sigma$ be quantum states.
	\begin{enumerate}[1)]
		\item The normalized trace distance is defined as 
		\begin{align}
			 \frac12 \left\|\rho-\sigma \right\|_1.
		\end{align}

		\item The Bures distance is defined as~\cite{Hel67b, Bur69}
		\begin{align}
			d_\mathrm{B}(\rho,\sigma) 
			&\coloneqq  \min_{U} \left\| \sqrt{\rho} - U \sqrt{\sigma} \right\|_2
			\\
			&= \min_{U} \sqrt{ 2\left( 1 - \mathrm{Re}\!\left[ \Tr\!\left[ \sqrt{\rho}\sqrt{\sigma} U \right] \right] \right) }
			\\
			&\overset{\eqref{eq:variational_trace-norm}}{=}\sqrt{2\left(1 - \sqrt{F}(\rho,\sigma)\right)},
		\end{align}
		where the optimization is over every unitary $U$.
		
		\item The quantum Hellinger distance is defined as~\cite{jencova2004,LZ04}
		\begin{align}
			d_\mathrm{H}(\rho,\sigma) \coloneqq  \left\|\sqrt{\rho}-\sqrt{\sigma}\right\|_2
			=\sqrt{2\left(1 - \sqrt{F_\mathrm{H}}(\rho,\sigma)\right)}.
		\end{align}

        \item The Petz--R\'enyi divergence of order $\alpha \in (0,1) \cup (1,\infty)$ is defined as~\cite{Pet86}
        \begin{align}
            D_\alpha(\rho\|\sigma) & \coloneqq \frac{1}{\alpha-1} \ln Q_\alpha(\rho\|\sigma) \label{eq:petz-renyi-div} \\
            {\hbox{where}} \quad
            Q_\alpha(A\|B) & \coloneqq \lim_{\varepsilon \to 0^+} \Tr[A^\alpha (B+\varepsilon I)^{1-\alpha}], \qquad \forall A,B\geq 0.
        \end{align}
        It obeys the data-processing inequality for all $\alpha \in (0,1)\cup(1,2]$~\cite{Pet86}. Note also that $D_{1/2}(\rho\|\sigma) = -\ln F_{\mathrm{H}}(\rho,\sigma)$.

        \item The sandwiched R\'enyi divergence of order $\alpha \in (0,1) \cup (1,\infty)$ is defined as~\cite{MDS+13,wilde_strong_2014}
        \begin{align}
            \widetilde{D}_\alpha(\rho\|\sigma) & \coloneqq \frac{1}{\alpha-1} \ln Q_\alpha(\rho\|\sigma) \label{eq:sandwiched-renyi-div} \\
            {\hbox{where}} \quad \widetilde{Q}_\alpha(A\|B) & \coloneqq \lim_{\varepsilon\to 0^+} \Tr[\left(A^{\frac{1}{2}} (B+\varepsilon I)^{\frac{1-\alpha}{\alpha}} A^{\frac{1}{2}} \right)^\alpha], \qquad \forall A,B\geq 0.
        \end{align}
        It obeys the data-processing inequality for all $\alpha \in [1/2,1)\cup(1,\infty)$~\cite{FL13,Wilde_2018}. Note also that $\widetilde{D}_{1/2}(\rho\|\sigma) = -\ln F(\rho,\sigma)$.

        \item The quantum relative entropy is defined as~\cite{Ume62}
        \begin{equation}
            D(\rho\|\sigma) \coloneqq \lim_{\varepsilon \to 0^+} \Tr[\rho(\ln \rho - \ln (\sigma + \varepsilon I))],
        \end{equation}
        and it obeys the data-processing inequality~\cite{lindblad1975completely}.
  
		\item The Chernoff divergence is defined as~\cite{ACM+07}
		\begin{align}
			C(\rho\|\sigma) &\coloneqq  -\ln Q_{\min}(\rho\|\sigma), \\
			{\hbox{where}} \quad Q_{\min}(A\|B )&\coloneqq  \min_{s \in [0,1] }Q_s(A\|B), \quad \forall A,B\geq 0. 
		\end{align}
	\end{enumerate}
\end{definition}

If the quantum states $\rho$ and $\sigma$ commute, i.e., $\rho \sigma = \sigma \rho$, then they share a common eigenbasis.
Let $\{P(x)\}_{x}$ and $\{Q(x)\}_{x}$ be the sets of eigenvalues of $\rho$ and $\sigma$, respectively.
This situation is known as  the \emph{classical scenario} because the above-defined divergences reduce to their corresponding classical divergences between probability mass functions $P$ and $Q$.
For instance, in this case, the normalized trace distance is equal to the classical total variation distance, i.e., $\frac{1}{2}\left\|\rho -\sigma\right\|_1 = \frac12 \sum_x |P(x)-Q(x)|$. Furthermore,
both the quantum fidelity and Holevo fidelity reduce to the squared Bhattacharyya coefficient, i.e.,
$F(\rho,\sigma) = F_{\mathrm{H}}(\rho,\sigma) = \big( \sum_x \sqrt{P(x)Q(x)} \big)^2$.
Likewise, the Bures distance and quantum Hellinger distance correspond to the classical Hellinger distance (up to a factor $\sqrt{2}$).
Both the Petz--R\'enyi  and sandwiched R\'enyi divergences reduce to the classical R\'enyi divergence~\cite{Ren62}, and the quantum relative entropy reduces to the Kullback--Leibler divergence~\cite{KLdiv51}.

In the general case in which $\rho$ and $\sigma$ do not commute, $F(\rho,\sigma) \neq F_{\mathrm{H}}(\rho,\sigma)$ and 
$D_{\alpha}(\rho\Vert \sigma) \neq \widetilde{D}_{\alpha}(\rho\Vert \sigma)$ for all $\alpha \in (0,1)\cup(1,\infty)$ 
if they are finite
\cite[Theorem 2.1]{Hia94}.
This explains why there are more divergences in the quantum scenario than there are in the classical scenario.
We refer readers to Lemma~\ref{lemma:Fact} in Appendix~\ref{sec:proofs-divergences} for the relations between those quantities.

\subsection{Quantum hypothesis testing}

\label{sec:qht-review}

Let us first review binary classical hypothesis testing, which is a special case of quantum hypothesis testing.
Suppose a classical system is modeled by a random variable $Y$, which is distributed according to a distribution $P$ under the null hypothesis and  according to a distribution $Q$ under the alternative hypothesis.
The goal of a distinguisher is to guess which hypothesis is correct.
To do so, the distinguisher may take $n$ independent and identical samples of $Y$ and apply a (possibly randomized) test, which is mathematically described by a conditional distribution $P_{0|Y^n}^{(n)}$ and $P_{1|Y^n}^{(n)} = 1 - P_{0|Y^n}^{(n)}$, where $0$ and $1$ correspond to guessing the null hypothesis $P$ and alternative hypothesis $Q$, respectively.
There are two kinds of errors that can occur, and performance metrics can be constructed based on the probabilities of these errors.  
The probability of guessing $Q$ under null hypothesis $P$ is then given by 
the expectation $\mathbb{E}_{Y^n\sim P^{\otimes n}}\!\left[P_{1|Y^n}^{(n)}\right]$,
and the probability of guessing $P$ under the alternative hypothesis $Q$ is given by 
$\mathbb{E}_{Y^n\sim Q^{\otimes n}}\!\left[P_{0|Y^n}^{(n)}\right]$.

Now let us move on to the scenario of quantum hypothesis testing, in which the underlying system is  quantum mechanical.
Instead of being modeled by a probability distribution, a quantum system is prepared in a quantum state, which is modeled by a density operator.
In the case of two hypotheses (i.e., two states $\rho$ and $\sigma$), the system is either prepared in the state $\rho^{\otimes n}$ or $\sigma^{\otimes n}$, where $n \in \mathbb{N}$.
As in the classical scenario, the goal of a distinguisher is to guess which state was prepared, doing so by means of a quantum measurement with two outcomes. Mathematically, such a measurement is described by two operators $\Lambda^{(n)}_\rho$ and $\Lambda^{(n)}_\sigma$ satisfying $\Lambda^{(n)}_\rho,\Lambda^{(n)}_\sigma \geq 0$ and $\Lambda^{(n)}_\rho+\Lambda^{(n)}_\sigma = I^{\otimes n}$, where the outcome~$\Lambda^{(n)}_\rho$ is associated with guessing $\rho$ and $\Lambda^{(n)}_\sigma$ is associated with guessing $\sigma$. 
The probability of guessing $\sigma$ when the prepared state is $\rho$ is equal to $\Tr[\Lambda^{(n)}_\sigma \rho^{\otimes n}]$, and the probability of guessing $\rho$ when the prepared state is $\sigma$ is equal to $\Tr[\Lambda^{(n)}_\rho \sigma^{\otimes n}]$.
If $\rho$ and $\sigma$ commute, then without loss of generality the distinguisher can choose a measurement $(\Lambda_{\rho}^{(n)}, \Lambda_{\sigma}^{(n)})$ sharing a common eigenbasis with $\rho$ and $\sigma$~\cite[Remark 3]{Bus12}.
In such a case, quantum hypothesis testing reduces to classical hypothesis testing, comparing the distribution resulting from the eigenvalues of $\rho$ against that resulting from the eigenvalues of $\sigma$.

There are two scenarios of interest here, called the symmetric and asymmetric settings, reviewed in more detail in Sections~\ref{sec:sym-bin-QHT} and \ref{sec:asym-bin-QHT}, respectively. We then move on to reviewing multiple hypothesis testing in Section~\ref{sec:multiple-QHT}.

\subsubsection{Symmetric binary quantum hypothesis testing}

\label{sec:sym-bin-QHT}

In the setting of symmetric binary quantum hypothesis testing, we suppose that there is a prior probability $p \in (0,1)$ associated with preparing the state $\rho$, and there is a prior probability $q \equiv 1-p$ associated with preparing the state $\sigma$. Indeed, the unknown state is prepared by flipping a coin, with the probability of heads being $p$ and the probability of tails being $q$. If the outcome of the coin flip is heads, then $n$ quantum systems are prepared in the state $\rho^{\otimes n}$, and if the outcome is tails, then the $n$ quantum systems are prepared in the state $\sigma^{\otimes n}$.  Thus, the expected error probability in this experiment is as follows:
\begin{align}
    p_{e,\Lambda^{(n)}_\rho}(p, \rho,q,\sigma, n) & \coloneqq   p \Tr[\Lambda^{(n)}_\sigma \rho^{\otimes n}] + q \Tr[\Lambda^{(n)}_\rho \sigma^{\otimes n}] \\
    & = p \Tr[(I^{\otimes n} - \Lambda^{(n)}_\rho) \rho^{\otimes n}] + q \Tr[\Lambda^{(n)}_\rho \sigma^{\otimes n}],
    \label{eq:err-prob-fixed-meas}
\end{align}
where the second equality follows because $\Lambda^{(n)}_\sigma = I^{\otimes n} - \Lambda^{(n)}_\rho$ and furthermore indicates that the error probability can be written as a function of only the first measurement operator~$ \Lambda^{(n)}_\rho$.

Given $p$ and descriptions of the states $\rho$ and $\sigma$, the distinguisher can minimize the error-probability expression in~\eqref{eq:err-prob-fixed-meas} over all possible measurements.
The Helstrom--Holevo theorem~\cite{Hel69, Hol73} states that the optimal error probability $p_e(p, \rho,q,\sigma,n)$ of hypothesis testing is as follows:
\begin{align}
	\label{eq:Helstrom1}
	p_e(p, \rho,q,\sigma,n) &\coloneqq  \inf_{\Lambda^{(n)}} p_{e,\Lambda^{(n)}}(p, \rho,q,\sigma, n) \\
 & = \inf_{\Lambda^{(n)}} \left\{ p \Tr\!\left[ (I^{\otimes n}-\Lambda^{(n)})\rho^{\otimes n} \right] + q \Tr[\Lambda^{(n)} \sigma^{\otimes n}] : 0\leq \Lambda^{(n)} \leq I^{\otimes n}  \right\}
	\\
	\label{eq:Helstrom2}	
	&= \frac12 \left( 1 - \left\| p \rho^{\otimes n} - q \sigma^{\otimes n} \right\|_1 \right).
\end{align}

\subsubsection{Asymmetric binary quantum hypothesis testing}

\label{sec:asym-bin-QHT}

In the setting of asymmetric binary quantum hypothesis testing, there are no prior probabilities associated with preparing an unknown state---we simply assume that a state is prepared deterministically, but the identity of the prepared state is unknown to the distinguisher. The goal of the distinguisher is to minimize the probability of the second kind of error subject to a constraint on the probability of the first kind of error. Indeed, given a fixed $\varepsilon \in [0,1]$, the scenario reduces to the following optimization problem:
\begin{equation} 
\beta_{\varepsilon}(\rho^{\otimes n}\Vert\sigma^{\otimes n})\coloneqq\inf
_{\Lambda^{(n)}}\left\{
\begin{array}
[c]{c}
\operatorname{Tr}[\Lambda^{(n)}\sigma^{\otimes n}]:\operatorname{Tr}
[(I^{\otimes n}-\Lambda^{(n)})\rho^{\otimes n}]\leq\varepsilon,\\
0\leq\Lambda^{(n)}\leq I^{\otimes n}
\end{array}
\right\}  .
\label{eq:beta-err-asymm}
\end{equation}
This formulation of the hypothesis testing problem is relevant, for example, to detecting anomalies or disease signatures, where the type~I error probability is called the ``false positive rate,'' and the type~II error probability is called the ``missed detection rate'' or ``false negative rate.'' Indeed, in such a scenario, one is willing to tolerate a fixed rate of false positives but then wishes to minimize the rate of missed detections.

\subsubsection{Multiple quantum hypothesis testing}

\label{sec:multiple-QHT}

We now review multiple quantum hypothesis testing, also known as 
$M$-ary quantum hypothesis testing because the goal is to select one among $M$ possible hypotheses. 
Let $\mathcal{S} \coloneqq  \left\{ (p_m,  \rho_m)\right\}_{m=1}^M $ be an ensemble of $M$ states with prior probabilities taking values in the set $\left\{ p_m \right\}_{m=1}^M$.
Without loss of generality, let us assume that $p_m>0$ for all $m\in \{1,2,\ldots, M\}$.
The minimum error probability of $M$-ary hypothesis testing, given $n$ copies of the unknown state, is as follows:
\begin{align}
	\label{eq:error_M-ary}
	p_e(\mathcal{S},n) 
	&\coloneqq  \inf_{ (\Lambda^{(n)}_1, \ldots, \Lambda^{(n)}_M) } \sum_{m=1}^M p_m \Tr\!\left[(I^{\otimes n}-\Lambda^{(n)}_m)\rho^{\otimes n}_m  \right],
\end{align}
where the minimization is over every positive operator-valued measure (POVM) (i.e., a tuple $(\Lambda^{(n)}_1,\ldots, \Lambda^{(n)}_M)$ satisfying $\Lambda^{(n)}_i \geq 0$ for all $i\in \{1,\ldots,M\}$ and $\sum_{m=1}^M \Lambda^{(n)}_m = I^{\otimes n}$).

\subsubsection{Efficient algorithm for computing optimal error probabilities in quantum hypothesis testing}

In Sections~\ref{sec:sym-bin-QHT}, \ref{sec:asym-bin-QHT}, and \ref{sec:multiple-QHT}, we outlined several quantum hypothesis testing tasks, including symmetric binary, asymmetric binary, and multiple hypotheses. Along with these tasks are the associated optimal error probabilities in~\eqref{eq:Helstrom1},~\eqref{eq:beta-err-asymm}, and~\eqref{eq:error_M-ary}, which involve an optimization over all possible measurements acting on $n$ quantum systems.

Let us suppose that each quantum system has a fixed dimension $d$. In this case, all of these optimization problems can be cast as semi-definite programs (SDPs)~\cite{BV04}, meaning that standard semi-definite programming solvers can be used to calculate the optimal values. However, the matrices involved in these optimization problems are of dimension $d^n \times d^n$, and thus a naive approach to solving these SDPs requires time exponential in the number $n$ of systems. As such, it might seem as if calculating these optimal values is an intractable task.

The naive approach mentioned above neglects the fact that the optimization problems in~\eqref{eq:Helstrom1},~\eqref{eq:beta-err-asymm}, and~\eqref{eq:error_M-ary} possess a large amount of symmetry, due to the fact that the inputs to these problems are tensor-power states and are thus invariant under permutations of the systems. By exploiting this permutation symmetry, we show in Appendix~\ref{app:efficient-algo-sdp} that the SDPs needed to compute the optimal values can be reduced to SDPs of size polynomial in $n$, where the polynomial degree is a function of $d$. Thus, based on standard results on the efficiency of SDP solvers~\cite{PW2000,Arora2005,Arora2012,Lee2015}, there is thus a polynomial-time algorithm for computing the optimal error probabilities in quantum hypothesis testing. This observation closely follows other recent advances in quantum information, having to do with computing bounds on channel capacities~\cite{FawziIEEE22} and with computing the optimal error probability in asymmetric channel discrimination~\cite{BDS+24}. 

\section{Statement of the problem}

While reviewing the various hypothesis testing problems in Section~\ref{sec:qht-review}, we see that the goal is to minimize the error probability for a fixed choice of $n\in\mathbb{N}$. As stated in the introduction (Section~\ref{sec:intro}), the optimization task for sample complexity flips this reasoning around: indeed, the goal is to determine the minimum value of $n\in\mathbb{N}$ (i.e., minimum number of samples) needed to meet a fixed error probability constraint. As discussed in the introduction of our paper, this formulation of the hypothesis testing problem is more consistent with the notion of the runtime of a probabilistic or quantum algorithm, for which one typically finds that the runtime depends on the desired error. Indeed, if there are procedures for preparing the various states involved in a given hypothesis testing problem, with fixed runtimes, then the sample complexity corresponds to the total amount of time one must wait to prepare $n$ samples to achieve a desired error probability. It should certainly be mentioned that sample complexity ignores the runtime of a quantum measurement procedure that actually discriminates the states, so that this notion can be understood as straddling the boundary between information theory and complexity theory.
Let us finally mention here that the notion of sample complexity has been extensively studied in the classical setting, along with applications to property testing and studies on information-constrained settings including privacy and communication limitations \cite{thesis_SC_lower,Cann_PT,Goldreich_2017,Inference_SC1,inference_SC2,com_priv_accur23,Structu_Optimal_Tests19,Private_H_Selec,Locally_private_HS20,Contraction_local_new24}.

More formally, we state the  definitions of the sample complexity of symmetric binary, asymmetric binary, and multiple quantum hypothesis testing in the following Definitions~\ref{def:binary_symmetric}, \ref{def:binary_asymmetric}, and~\ref{def:M-ary}, respectively. In each case, the simplest way to state the definitions is to employ the various error-probability metrics in~\eqref{eq:Helstrom1},~\eqref{eq:beta-err-asymm}, and~\eqref{eq:error_M-ary} and define the sample complexity to be the minimum value of $n\in\mathbb{N}$ needed to get the error probability metric below a threshold~$\varepsilon \in [0,1]$.

\begin{figure}
       \centering
       \includegraphics[width=0.55\linewidth]{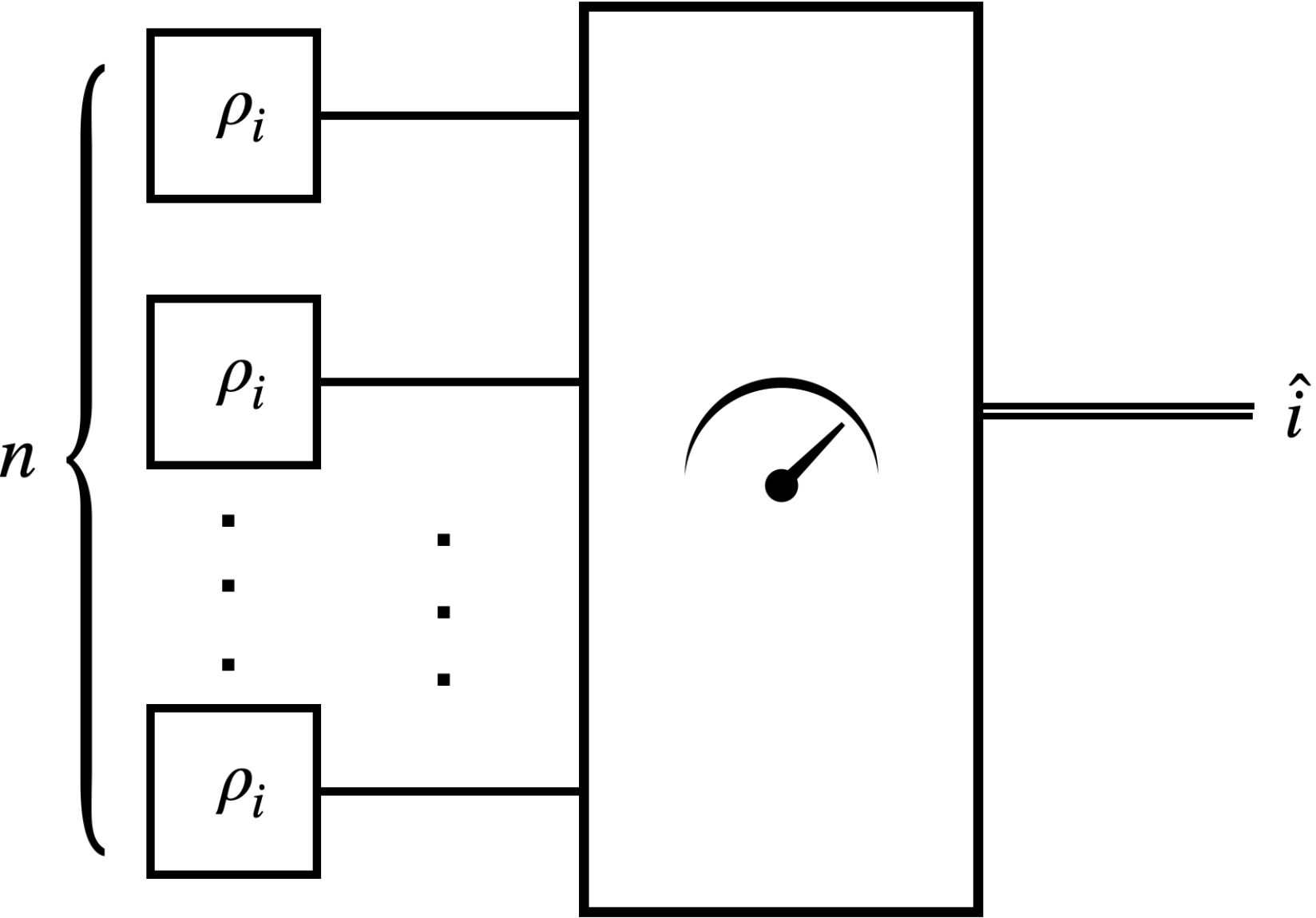}
\caption{Sample complexity of symmetric hypothesis testing: Let $\left\{ (p_i,  \rho_i)\right\}_{i=1}^M $ be an ensemble of $M$ states with prior probabilities taking values in the set $\left\{ p_i \right\}_{i=1}^M$.
Fix $\varepsilon \in [0,1]$. The sample complexity is the smallest $n \in \mathbb{N}$ such that $n$ copies of the unknown state are sufficient to decide its identity, in such a way that the error probability does not exceed the threshold~$\varepsilon
       $ (i.e.; $ \sum_{i=1}^M p_i \operatorname{Pr}\{\hat{i}  \neq i\} \leq \varepsilon$). When $M=2$, the setting is  symmetric binary hypothesis testing.
       }
\label{fig:intro_sample_complexity}
\end{figure} 

\begin{definition}[Sample complexity of symmetric binary hypothesis testing] \label{def:binary_symmetric}
	Let $p\in(0,1)$, $q = 1-p$, and $\varepsilon \in [0,1]$, and let $\rho$ and $\sigma$ be states.
	The sample complexity $n^*(p,\rho,q,\sigma, \varepsilon)$ of symmetric binary quantum hypothesis testing is defined as follows:
	\begin{align}
		n^*(p,\rho,q,\sigma, \varepsilon)
\label{eq:def:sample_complexity_symmetric1}
		&\coloneqq \inf\left\{ n \in \mathbb{N} : p_e(p, \rho,q,\sigma,n) \leq \varepsilon \right \} .
	\end{align}
\end{definition}

\begin{definition}[Sample complexity of asymmetric binary hypothesis testing] 
\label{def:binary_asymmetric}
Let $\varepsilon,\delta\in\left[  0,1\right]  $, and let $\rho$ and $\sigma$
be states. The sample complexity $n^{\ast}(\rho,\sigma,\varepsilon,\delta)$ of asymmetric binary quantum hypothesis testing is
defined as follows:
\begin{equation}
n^{\ast}(\rho,\sigma,\varepsilon,\delta)\coloneqq \inf\left\{  n\in\mathbb{N}
:\beta_{\varepsilon}(\rho^{\otimes n}\Vert\sigma^{\otimes n})\leq
\delta\right\}  \label{eq:asymm-beta-rewrite}  .
\end{equation}

\end{definition}

\begin{definition}[Sample complexity of $M$-ary hypothesis testing] \label{def:M-ary}
	Let $\varepsilon\in[0,1]$, and let $\mathcal{S} \coloneqq  \left\{ (p_m,  \rho_m)\right\}_{m=1}^M $ be an ensemble of $M$ states.
	The sample complexity $n^*(\mathcal{S},\varepsilon)$ of $M$-ary quantum hypothesis testing is defined as follows:
	\begin{align} \label{eq:def:sc_M-ary}
		n^*(\mathcal{S},\varepsilon) 
		& \coloneqq  \inf\left\{ n \in \mathbb{N} : p_e(\mathcal{S},n) \leq \varepsilon\right\}.
	\end{align}
\end{definition}

Figure~\ref{fig:intro_sample_complexity} depicts the setting involved in the sample complexity of symmetric hypothesis testing,
and 
Figure~\ref{fig:sample_complexity} illustrates the figures of merit underlying sample complexity in both the asymmetric and symmetric binary settings.

\begin{remark}[Equivalent expressions for sample complexities]
The sample complexity $n^*(p,\rho,q,\sigma, \varepsilon)$ of symmetric binary quantum hypothesis testing has the following equivalent expressions:
	\begin{align}
		n^*(p,\rho,q,\sigma, \varepsilon)
& =     \inf_{\Lambda^{(n)}} 
		\begin{Bmatrix} n\in \mathbb{N} : p \Tr\!\left[\left(I^{\otimes n} - \Lambda^{(n)}\right) \rho^{\otimes n} \right] + q \Tr\!\left[ \Lambda^{(n)} \sigma^{\otimes n} \right] \leq \varepsilon ,  \\  0\leq \Lambda^{(n)}\leq I^{\otimes n} 
		\end{Bmatrix}
		\\
		\label{eq:def:sample_complexity_symmetric2}
		&= \inf \left\{ n\in\mathbb{N}: \frac12 \left( 1 - \left\| p \rho^{\otimes n} - q \sigma^{\otimes n} \right\|_1 \right) \leq \varepsilon  \right\}
		\\
		\label{eq:def:sample_complexity_symmetric3}
		&=\inf \left\{ n\in\mathbb{N}: 1-2\varepsilon \leq  \left\| p \rho^{\otimes n} - q \sigma^{\otimes n} \right\|_1   \right\},
	\end{align}
	where the equality~\eqref{eq:def:sample_complexity_symmetric2} follows from the Helstrom--Holevo theorem in~\eqref{eq:Helstrom1}--\eqref{eq:Helstrom2}.
 By recalling  the quantity $\beta_{\varepsilon}(\rho^{\otimes n}\Vert\sigma^{\otimes n})$ defined in~\eqref{eq:beta-err-asymm}, 
we can rewrite the sample complexity $n^{\ast}(\rho,\sigma,\varepsilon
,\delta)$ of asymmetric binary quantum hypothesis testing in the following two ways:
\begin{align}
n^{\ast}(\rho,\sigma,\varepsilon,\delta)  & =\inf_{\Lambda^{(n)}}\left\{
\begin{array}
[c]{c}
n\in\mathbb{N}:\operatorname{Tr}[(I^{\otimes n}-\Lambda^{(n)})\rho^{\otimes
n}]\leq\varepsilon,\\
\operatorname{Tr}[\Lambda^{(n)}\sigma^{\otimes n}]\leq\delta,\ 0\leq
\Lambda^{(n)}\leq I^{\otimes n}
\end{array}
\right\}
\label{eq:asymm-beta-rewrite-both-errs}
\\
& =\inf\left\{  n\in\mathbb{N}:\beta_{\delta}(\sigma^{\otimes n}\Vert
\rho^{\otimes n})\leq\varepsilon\right\}  .\label{eq:asymm-beta-rewrite-2}
\end{align}
See Appendix~\ref{app:proof-of-equiv-exps-asymm-err} for an explicit proof.
The expression in~\eqref{eq:asymm-beta-rewrite-both-errs} indicates that the sample complexity for asymmetric binary quantum hypothesis testing can be thought of as 
the minimum number of samples required to get the type~I error probability below $\varepsilon$ and the type~II error probability below $\delta$.
Finally, the sample complexity $n^*(\mathcal{S},\varepsilon)$ of $M$-ary quantum hypothesis testing can be rewritten as follows:
	\begin{align}
		n^*(\mathcal{S},\varepsilon) 
  & = \inf_{ \left(\Lambda^{(n)}_1, \ldots, \Lambda^{(n)}_M \right) } 
		\left\{ n\in \mathbb{N}: \sum_{m=1}^M p_m \Tr\!\left[\left(I^{\otimes n}-\Lambda_m^{(n)}\right) \rho_m^{\otimes n}  \right] \leq \varepsilon
		\right\},
	\end{align}
	where $\Lambda^{(n)}_1,\ldots, \Lambda^{(n)}_M \geq 0$ and $\sum_{m=1}^M \Lambda^{(n)}_m = I^{\otimes n}$.
    
\end{remark}

Before proceeding to the development of our main results in the next section, let us first identify some conditions under which the sample complexity of symmetric binary quantum hypothesis testing is  trivial, i.e., such that it is equal to either one or infinity.

\begin{remark}[Trivial cases]
\label{rem:trivial-conditions}
Let $p$, $q$, $\varepsilon$, $\rho$, and $\sigma$ be as stated in Definition~\ref{def:binary_symmetric}.
    If 
    $\rho \perp \sigma$ (i.e., $\rho\sigma = 0$),
    $\varepsilon \in [1/2,1]$, or $\exists\, s\in[0,1]$ such that $\varepsilon \geq p^s q^{1-s}$, then the following equality holds
	\begin{align}
		\label{eq:binary_symmetric1}
		n^*(p,\rho,q,\sigma, \varepsilon) = 1.
	\end{align}
	If $\rho = \sigma$ and $\min\{p,q\} > \varepsilon \in [0,1/2)$, then
	\begin{align}
		\label{eq:binary_symmetric2}
		n^*(p,\rho,q,\sigma, \varepsilon) = +\infty.
	\end{align}
\end{remark}

\begin{proof}
See Appendix~\ref{app:proof-remark-triv-cond}.
\end{proof}

\begin{figure}
       \centering
       \includegraphics[width=0.8\linewidth]{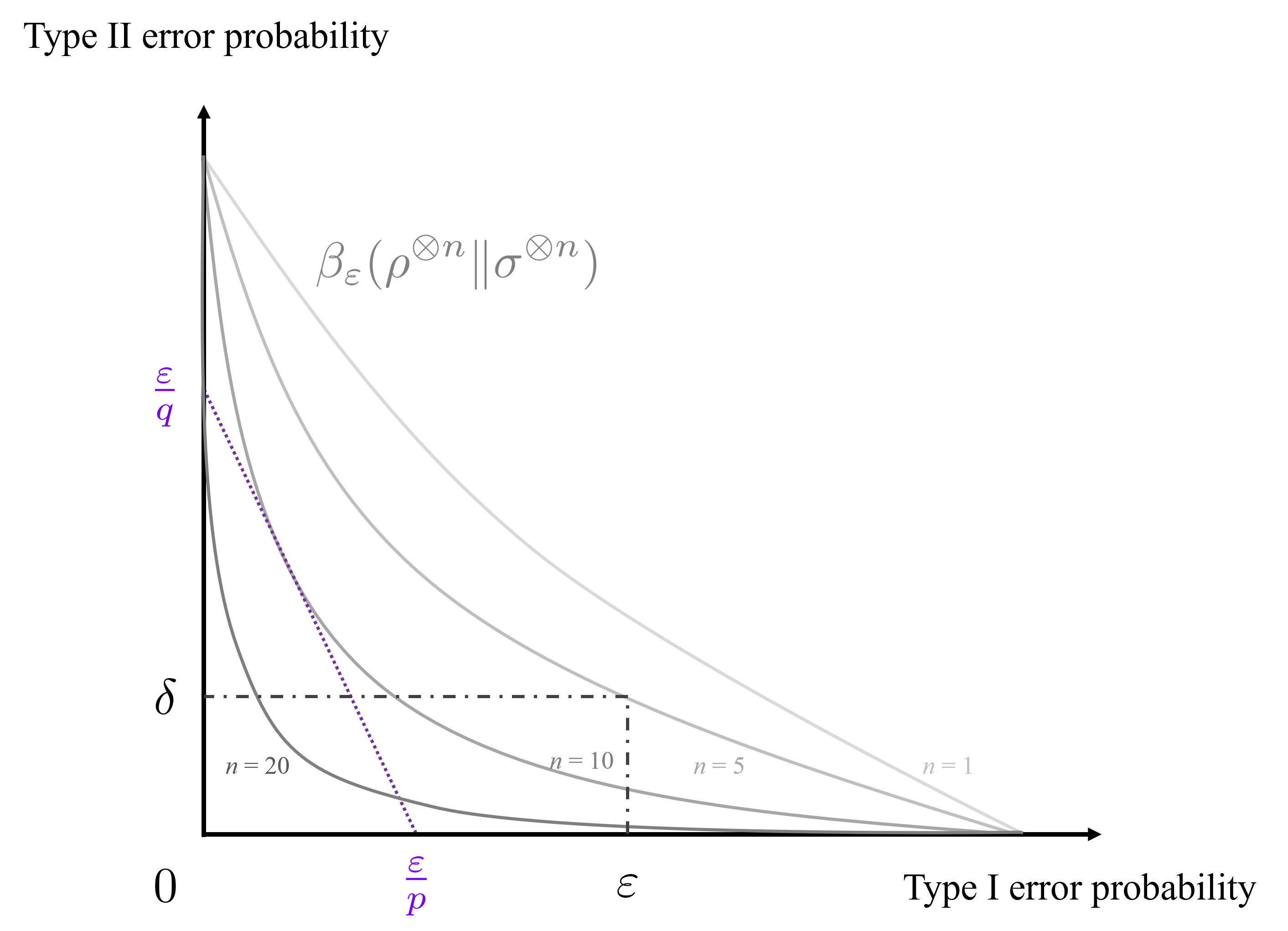}
       \caption{
       Illustration of sample complexity in the asymmetric and symmetric binary settings.
       Each convex curve plots the minimum type~II error probability when the type~I error probability is no larger than $\varepsilon$,
       i.e., $\beta_{\varepsilon}(\rho^{\otimes n}\Vert \sigma^{\otimes n})$ defined in \eqref{eq:beta-err-asymm}, for some sample $n$, showing the trade-off between the minimum type~I error probability and the minimum type~II error probability (unless $\rho \perp \sigma$).
       As $n$ increases, one can find a collective measurement $\{\Lambda_{\rho}^{(n)}, \Lambda_{\sigma}^{(n)} \}$ to make both error probabilities smaller; hence, the curve moves toward the origin.
       The smallest sample size such that the type~I error probability  is no larger than $\varepsilon$ and the type~II error probability is no larger than $\delta$ (as shown by the dash-dotted line) is $n=5$, showing the sample complexity $n^*(\rho,\sigma, \varepsilon ,\delta) = 5$ of asymmetric binary hypothesis testing in \eqref{eq:asymm-beta-rewrite}.
       The minimum probability of symmetric hypothesis testing, \eqref{eq:Helstrom1}, constrained to be $\varepsilon$, is given by the dotted linear line, and the curve with minimum sample size that touches the linear line is $n=10$.
       In this case, the sample complexity of symmetric hypothesis testing is $n^*(p,\rho,q,\sigma,\varepsilon) = 10$.
       }
       \label{fig:sample_complexity}
\end{figure}

\section{Sample complexity results}

\label{sec:samp-comp-results}

Having defined various sample complexities of interest in Definitions~\ref{def:binary_symmetric}, \ref{def:binary_asymmetric}, and \ref{def:M-ary}, it is clear that calculating the precise values of $n^*(p,\rho,q,\sigma, \varepsilon)$, $
n^{\ast}(\rho,\sigma,\varepsilon,\delta)$, and $
n^*(\mathcal{S},\varepsilon)$ is not an easy computational problem. As such, we are then motivated to find lower and upper bounds on these sample complexities, which are easier to compute and ideally match in an asymptotic sense. We are able to meet this challenge for the symmetric and asymmetric binary settings, mostly by building on the vast knowledge that already exists regarding quantum hypothesis testing. For {$M$-ary} hypothesis testing, we are only able to give lower and upper bounds that differ asymptotically by a factor of $\ln M$. However, we note that this finding is consistent with the best known result in the classical case~\cite[Fact~2.4]{pensia2023communicationconstrained}. Before proceeding with stating our results, let us note here that all of our results hold for states acting on a separable (infinite-dimensional) Hilbert space, unless otherwise noted.

\subsection{Symmetric binary quantum hypothesis testing}

\subsubsection{Two pure states}

Let us begin by considering the sample complexity of symmetric binary quantum hypothesis testing when distinguishing two pure states (i.e., rank-$1$ projection operators), which is much simpler than the more general case of two arbitrary mixed states. It is also interesting from a fundamental perspective because there is no classical analog of pure states~\cite{hardy2001quantum,Hardy2002}, as all classical pure states correspond to degenerate (deterministic) probability distributions that are either perfectly distinguishable or perfectly indistinguishable. In this case, we find an exact result, which furthermore serves as a motivation for the kind of expression we wish to obtain in the general mixed-state case. This finding was essentially already reported in~\cite[Eq.~(39)]{kimmel2017hamiltonian}, with the main difference below being a generalization to arbitrary priors. In any case, Theorem~\ref{theorem:binary_symmetric_pure} is a direct consequence of some simple algebra and the following equality~\cite[Proposition~21]{mishra2023optimal}, which holds for (unnormalized) vectors $|\varphi\rangle$ and $|\zeta\rangle$:
\begin{equation}
\left\Vert |\varphi\rangle\!\langle\varphi|-|\zeta\rangle\!\langle\zeta
|\right\Vert _{1}^{2}=\left(  \langle\varphi|\varphi\rangle+\langle\zeta
|\zeta\rangle\right)  ^{2}-4\left\vert \langle\zeta|\varphi\rangle\right\vert
^{2}.
\label{eq:vector-trace-norm-fid-identity}
\end{equation}

\begin{theorem}[Sample complexity: symmetric binary case with two pure states] \label{theorem:binary_symmetric_pure}
Let $p$, $q$, $\rho$, $\sigma$, and $\varepsilon$ be as stated in Definition~\ref{def:binary_symmetric}, and furthermore let $\rho = \psi \equiv |\psi\rangle\!\langle \psi|$ and $\sigma = \varphi \equiv |\varphi\rangle\!\langle \varphi|$ be pure states, such that the conditions in Remark~\ref{rem:trivial-conditions} do not hold. Then
\begin{equation}
n^{\ast}(p,\psi,q,\phi,\varepsilon)=\left\lceil \frac{\ln\!\left(  \frac
{pq}{\varepsilon\left(  1-\varepsilon\right)  }\right)  }{-\ln F(\psi,\phi
)}\right\rceil .
\end{equation}
\end{theorem}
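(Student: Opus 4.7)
The plan is to reduce the sample-complexity definition to a scalar inequality in $n$ that can be solved explicitly, using the pure-state structure to get a closed form for the trace norm. First, I would invoke the equivalent characterization \eqref{eq:def:sample_complexity_symmetric3}, which recasts $n^{\ast}(p,\psi,q,\varphi,\varepsilon)$ as the smallest $n\in\mathbb{N}$ satisfying $\left\|p\psi^{\otimes n} - q\varphi^{\otimes n}\right\|_1 \geq 1 - 2\varepsilon$. The trace norm here has an exact closed form for pure states: applying the identity \eqref{eq:vector-trace-norm-fid-identity} with the unnormalized vectors $\sqrt{p}\,|\psi\rangle^{\otimes n}$ and $\sqrt{q}\,|\varphi\rangle^{\otimes n}$, and using $|\langle\psi|\varphi\rangle|^{2n} = F(\psi,\varphi)^n$ together with $p+q=1$, I obtain
\begin{equation}
\left\| p\psi^{\otimes n} - q\varphi^{\otimes n}\right\|_1^2 = (p+q)^2 - 4pq\,|\langle\psi|\varphi\rangle|^{2n} = 1 - 4pq\,F(\psi,\varphi)^n.
\end{equation}

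Second, I would substitute this into the threshold condition. Since the trivial cases of Remark~\ref{rem:trivial-conditions} are excluded, $\varepsilon \in [0,1/2)$, so both sides of $\left\|p\psi^{\otimes n} - q\varphi^{\otimes n}\right\|_1 \geq 1 - 2\varepsilon$ are nonnegative and squaring is an equivalence. After squaring and rearranging, the condition becomes
\begin{equation}
F(\psi,\varphi)^n \leq \frac{\varepsilon(1-\varepsilon)}{pq}.
\end{equation}
Exclusion of the trivial cases further guarantees $F(\psi,\varphi) \in (0,1)$ (neither $\psi = \varphi$, as this would force $n^{\ast} = +\infty$ or $= 1$, nor $\psi \perp \varphi$), so $\ln F(\psi,\varphi) < 0$. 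Taking logarithms and dividing, with the inequality flipping, I arrive at
\begin{equation}
n \geq \frac{\ln\!\left(pq/(\varepsilon(1-\varepsilon))\right)}{-\ln F(\psi,\varphi)}.
\end{equation}

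Third, since the left-hand side of the original trace-norm inequality is monotonically nondecreasing in $n$ (the fidelity term $F(\psi,\varphi)^n$ shrinks), the minimum $n\in\mathbb{N}$ achieving the bound is obtained by taking the ceiling of the right-hand side, yielding the claimed formula. I do not anticipate a substantive obstacle for this proof, only careful bookkeeping: I would need to verify that the excluded trivial conditions in Remark~\ref{rem:trivial-conditions} ensure simultaneously that (a) the numerator $\ln(pq/(\varepsilon(1-\varepsilon)))$ is strictly positive, i.e.\ $\varepsilon(1-\varepsilon) < pq$, which is the condition that a single sample does not already suffice; (b) $F(\psi,\varphi)\in(0,1)$ so that the denominator is finite and positive; and (c) the resulting ceiling is at least $1$, so that the infimum over $\mathbb{N}$ is actually attained.
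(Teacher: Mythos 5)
Your proposal is correct and follows essentially the same route as the paper's proof: both reduce the problem via \eqref{eq:def:sample_complexity_symmetric3} to the threshold condition $1-2\varepsilon \leq \left\Vert p\psi^{\otimes n}-q\varphi^{\otimes n}\right\Vert_1$, apply the pure-state identity \eqref{eq:vector-trace-norm-fid-identity} with the vectors $\sqrt{p}\,|\psi\rangle^{\otimes n}$ and $\sqrt{q}\,|\varphi\rangle^{\otimes n}$ to get $1-4pq\,F(\psi,\varphi)^n$ under the square root, and then square, take logarithms, and conclude with the ceiling. Your additional bookkeeping (positivity of the numerator from the excluded trivial cases, $F\in(0,1)$, and monotonicity in $n$ justifying the ceiling) makes explicit what the paper leaves implicit, but it is the same argument.
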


\begin{proof}
See Appendix~\ref{app:proof-binary_symmetric_pure}.
\end{proof}

The exact result above indicates the kind of expression that we should strive for in the more general mixed-state case: logarithmic dependence on the inverse error probability and inverse dependence on the divergence $-\ln F(\psi,\phi
)$. For pure states, by inspecting~\eqref{eq:petz-renyi-div} and~\eqref{eq:sandwiched-renyi-div}, we see that this latter divergence is equivalent to the Petz--R\'enyi relative entropy of order 1/2, as well as the sandwiched R\'enyi relative entropy of order 1/2. Furthermore, it is equivalent, up to a constant, to the same divergences for every order $\alpha \in (0,1)$. 

\begin{figure}
       \centering
       \includegraphics[width=0.8\linewidth]{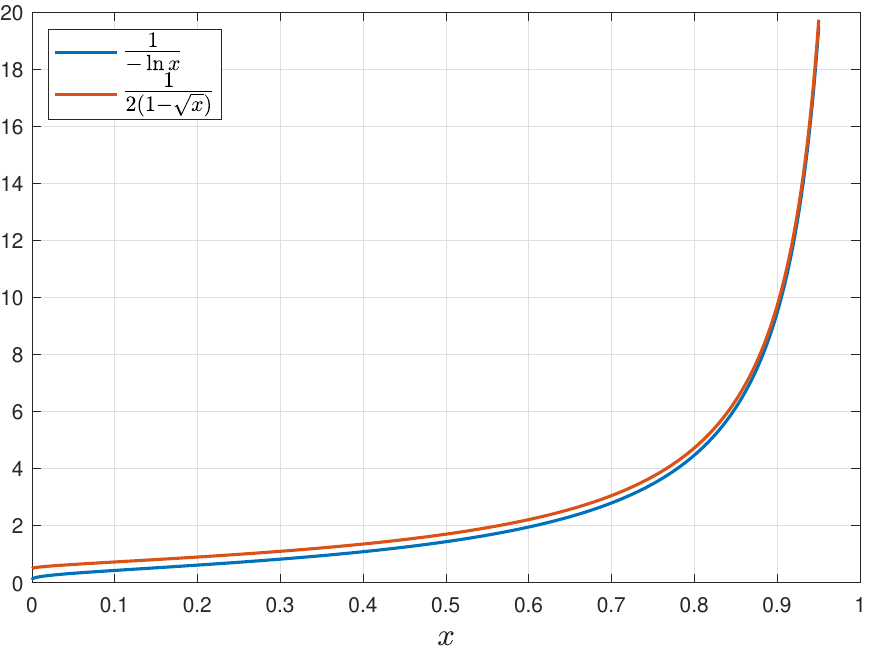}
       \caption{Comparison between the functions $\frac{1}{-\ln x}$ and $\frac{1}{2(1-\sqrt{x})}$ for $x \in [0,1]$, demonstrating that there is little difference in characterizing sample complexity of binary symmetric hypothesis testing by $[-\ln F(\rho,\sigma)]^{-1}$ instead of $\left[ d_{\mathrm{B}}(\rho,\sigma) \right]^{-2}$. The largest gap between these functions is equal to $\frac{1}{2}$ and occurs at $x=0$.}
       \label{fig:compare-F-Hellinger}
   \end{figure}

The characterization in Theorem~\ref{theorem:binary_symmetric_pure} depends inversely on the negative logarithm of the fidelity, rather than the inverse of the squared Hellinger or Bures distance, the latter being more common in formulations of the sample complexity of classical hypothesis testing (see~\cite[Theorem~4.7]{bar2002complexity} and~\cite{canonne17note}). However, we should note that there is little difference between the functions $\frac{1}{-\ln x}$ and $\frac{1}{2(1-\sqrt{x})}$ when $x \in [0,1]$. Indeed, the largest gap between these functions is $1/2$, occurring at $x=0$. As such, our characterization in terms of $[-\ln F(\rho,\sigma)]^{-1}$ instead of $\left[ d_{\mathrm{B}}(\rho,\sigma) \right]^{-2}$ makes little to no difference in terms of asymptotic sample complexity. Note, however, that this factor of $1/2$ can make a notable difference in applications in the finite or small sample regime. The two functions are plotted in Figure~\ref{fig:compare-F-Hellinger} in order to make this point visually clear.

\subsubsection{Two general states} \label{sec:two_general_states}

Let us now move on to the general mixed-state case. Theorem~\ref{theorem:binary_symmetric} below provides lower and upper bounds on the sample complexity for this case. The main tool for establishing both lower bounds is the generalized Fuchs-van-de-Graaf inequality recalled in~\eqref{eq:FG99}, and the main tool for establishing the upper bound is the Audenaert inequality recalled in~\eqref{eq:Chernoff}.
Let us note that the upper bound is achieved by the Helstrom--Holevo test, sometimes also called the quantum Neyman--Pearson test; i.e.,~$ \Lambda^{(n)}$ is a projection onto the positive part of 
$p \rho^{\otimes n} -q \sigma^{\otimes n}$.

\begin{theorem}[Sample complexity: symmetric binary case with two general states] \label{theorem:binary_symmetric}
	Let $p$, $q$, $\varepsilon$, $\rho$, and $\sigma$ be as stated in Definition~\ref{def:binary_symmetric} such that the conditions in Remark~\ref{rem:trivial-conditions} do not hold.
	Then the following bounds hold
	\begin{align}
		\label{eq:binary_symmetric3}
		\max\left\{ \frac{\ln(pq/\varepsilon) }{ -\ln F(\rho,\sigma) } ,\frac{1-\frac{\varepsilon(1-\varepsilon)}{pq}}{ \left[d_{\mathrm{B}}(\rho,\sigma)\right]^2  } \right\} \leq n^*(p,\rho,q,\sigma, \varepsilon)
		\leq \left \lceil \inf_{s\in\left[  0,1\right]  }\frac{\ln\!\left(  \frac
{p^{s}q^{1-s}}{\varepsilon}\right)  }{-\ln\operatorname{Tr}[\rho^{s}
\sigma^{1-s}]}\right\rceil.
	\end{align}
\end{theorem}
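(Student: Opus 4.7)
The plan is to handle the upper and lower bounds separately, exploiting in each case the multiplicativity of the underlying quantities under tensor powers: $F(\rho^{\otimes n},\sigma^{\otimes n})=F(\rho,\sigma)^n$ for the fidelity-based lower bounds, and $\operatorname{Tr}[(\rho^{\otimes n})^s(\sigma^{\otimes n})^{1-s}]=(\operatorname{Tr}[\rho^s\sigma^{1-s}])^n$ for the Audenaert-based upper bound. The Helstrom--Holevo identity \eqref{eq:Helstrom2} is used throughout to convert statements about $p_e$ into statements about $\|p\rho^{\otimes n}-q\sigma^{\otimes n}\|_1$ and vice versa.

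For the upper bound, I would apply the Audenaert inequality \eqref{eq:Chernoff} with the choice $A=p\rho^{\otimes n}$ and $B=q\sigma^{\otimes n}$ to obtain
\begin{equation}
p_e(p,\rho,q,\sigma,n)\leq p^{s}q^{1-s}\bigl(\operatorname{Tr}[\rho^{s}\sigma^{1-s}]\bigr)^{n},\qquad s\in[0,1].
\end{equation}
Imposing that the right-hand side be at most $\varepsilon$, solving for $n$, and taking a ceiling yields the sufficient condition $n\geq\lceil\ln(p^{s}q^{1-s}/\varepsilon)/(-\ln\operatorname{Tr}[\rho^{s}\sigma^{1-s}])\rceil$; infimizing over $s\in[0,1]$ produces the stated upper bound, and the measurement that attains it is the Helstrom--Holevo test, i.e., the projector onto the positive part of $p\rho^{\otimes n}-q\sigma^{\otimes n}$.

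For the lower bounds, the common starting point is the generalized Fuchs--van-de-Graaf inequality \eqref{eq:FG99} applied with the weights $p,q$ to the $n$-copy states, which gives
\begin{equation}
\|p\rho^{\otimes n}-q\sigma^{\otimes n}\|_{1}\leq\sqrt{1-4pq\,F(\rho,\sigma)^{n}}.
\end{equation}
Substituting this into \eqref{eq:Helstrom2} yields $p_e\geq\tfrac12\bigl(1-\sqrt{1-4pq\,F(\rho,\sigma)^{n}}\bigr)$. For the first lower bound, I would apply the scalar inequality $1-\sqrt{1-x}\geq x/2$ valid on $[0,1]$ to obtain the cleaner estimate $p_e\geq pq\,F(\rho,\sigma)^{n}$; imposing $p_e\leq\varepsilon$ and taking logarithms then produces $n\geq\ln(pq/\varepsilon)/[-\ln F(\rho,\sigma)]$. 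For the Bures-distance lower bound, I would instead isolate $F(\rho,\sigma)^{n}\leq\varepsilon(1-\varepsilon)/(pq)$ from the same inequality, rearrange to $1-F(\rho,\sigma)^{n}\geq(pq-\varepsilon(1-\varepsilon))/(pq)$, apply Bernoulli in the form $1-F^{n}\leq n(1-F)$, and finally use $1-F(\rho,\sigma)=(1-\sqrt{F})(1+\sqrt{F})\leq 2(1-\sqrt{F(\rho,\sigma)})=[d_{\mathrm{B}}(\rho,\sigma)]^{2}$ to conclude $n[d_{\mathrm{B}}(\rho,\sigma)]^{2}\geq(pq-\varepsilon(1-\varepsilon))/(pq)$.

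The main obstacle is to confirm that the generalized (weighted) Fuchs--van-de-Graaf inequality is indeed available in the stated form $\|p\rho-q\sigma\|_{1}\leq\sqrt{1-4pq\,F(\rho,\sigma)}$, since it is this weighted version (not the familiar unweighted one) that converts the multiplicativity of Uhlmann fidelity into a tight lower bound on $p_e$; once this ingredient is secured, the two lower bounds and the upper bound reduce to elementary scalar manipulations, and taking the maximum of the two lower bounds together with the ceiling on the upper bound assembles the final statement.
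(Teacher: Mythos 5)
Your proposal is correct and follows essentially the same route as the paper's proof: the Audenaert/quantum Chernoff bound \eqref{eq:Chernoff} with $A=p\rho^{\otimes n}$, $B=q\sigma^{\otimes n}$ for the upper bound (achieved by the Helstrom--Holevo test), and the generalized Fuchs--van de Graaf inequality \eqref{eq:FG99}---which, being stated for arbitrary positive semi-definite operators, already covers the weighted version you flag as the ``main obstacle''---for both lower bounds. The only cosmetic difference is in the Bures-distance bound, where you apply Bernoulli's inequality directly to $F(\rho,\sigma)^{n}$ and then use $1-F\leq\left[d_{\mathrm{B}}(\rho,\sigma)\right]^{2}$, whereas the paper applies Bernoulli to the multiplicativity identity for $d_{\mathrm{B}}$; the two chains are equivalent.
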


\begin{proof}
    See Appendix~\ref{app:binary_symmetric}. 
\end{proof}

The statement given in Theorem~\ref{theorem:binary_symmetric} is sufficiently strong to lead to the following characterization of sample complexity of symmetric binary quantum hypothesis testing in the general case, given by Corollary~\ref{corollary:binary-informal-2} below. This finding essentially matches the exact result for the pure-state case in Theorem~\ref{theorem:binary_symmetric_pure}, up to constant factors and for small error probability~$\varepsilon$. Corollary~\ref{corollary:binary-informal-2} below follows by using the first lower bound in~\eqref{eq:binary_symmetric3} and by picking $s=1/2$ in the upper bound in~\eqref{eq:binary_symmetric3}, along with relations between $F$ and $F_{\mathrm{H}}$ recalled in~\eqref{eq:relation_fidelity}.

\begin{corollary} \label{corollary:binary-informal-2}
	Let $p,q,\varepsilon, \rho$, and $\sigma$ be as stated in Definition~\ref{def:binary_symmetric}, such that the conditions in Remark~\ref{rem:trivial-conditions} do not hold.
	Then the following inequalities hold:
	\begin{align}
		\label{eq:binary-informal-other1}
		\frac{\ln\!\left(\frac{pq}{\varepsilon}\right) }{ -\ln F_\mathrm{H}(\rho,\sigma) }   & \leq n^*(p,\rho,q,\sigma, \varepsilon)
		\leq  \left\lceil \frac{ \ln \!\left( \frac{\sqrt{p q} }{ \varepsilon } \right) }{-\frac{1}{2}\ln  F_{\mathrm{H}}(\rho,\sigma)} \right \rceil  .
		\\
		\label{eq:binary-informal-other2}
		\frac{\ln\!\left(\frac{pq}{\varepsilon}\right) }{ -\ln F(\rho,\sigma) }   & \leq n^*(p,\rho,q,\sigma, \varepsilon)
		\leq  \left \lceil \frac{ \ln \!\left( \frac{\sqrt{p q} }{ \varepsilon } \right) }{-\frac{1}{2}\ln  F(\rho,\sigma)} \right\rceil  .
	\end{align}
	Thus, after fixing the priors $p$ and $q$ to be constants, we have characterized the sample complexity as follows:
	\begin{align}
		n^*(p,\rho,q,\sigma,\varepsilon) 
		= \Theta\!\left( \frac{\ln\!\left(\frac{1}{\varepsilon}\right) }{ -\ln F_\mathrm{H}(\rho,\sigma) }  \right)
		= \Theta\!\left( \frac{\ln\!\left(\frac{1}{\varepsilon}\right) }{ -\ln F(\rho,\sigma) }  \right).
  \label{eq:asymp-samp-comp-sym-bin-QHT}
	\end{align}
\end{corollary}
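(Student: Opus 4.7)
The plan is to derive Corollary~\ref{corollary:binary-informal-2} as a straightforward specialization of Theorem~\ref{theorem:binary_symmetric}, invoking only the relation $F_{\mathrm{H}}(\rho,\sigma) \leq F(\rho,\sigma)$ recalled in \eqref{eq:relation_fidelity}. First I would dispatch the lower bounds. The first argument of the maximum in \eqref{eq:binary_symmetric3} already furnishes $\frac{\ln(pq/\varepsilon)}{-\ln F(\rho,\sigma)} \leq n^{\ast}(p,\rho,q,\sigma,\varepsilon)$, which is precisely the lower bound appearing in \eqref{eq:binary-informal-other2}. From $F_{\mathrm{H}} \leq F$ I get $-\ln F_{\mathrm{H}} \geq -\ln F$ and hence $\frac{\ln(pq/\varepsilon)}{-\ln F_{\mathrm{H}}(\rho,\sigma)} \leq \frac{\ln(pq/\varepsilon)}{-\ln F(\rho,\sigma)}$, so the lower bound in \eqref{eq:binary-informal-other1} follows for free.

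Next I would handle the upper bounds by evaluating the infimum on the right-hand side of \eqref{eq:binary_symmetric3} at the single point $s = 1/2$. At that point, $\Tr[\rho^{s}\sigma^{1-s}] = \Tr[\sqrt{\rho}\sqrt{\sigma}] = \sqrt{F_{\mathrm{H}}(\rho,\sigma)}$ directly from the definition \eqref{eq:Holevo_fidelity}, so the upper bound of \eqref{eq:binary_symmetric3} collapses to
\begin{equation*}
n^{\ast}(p,\rho,q,\sigma,\varepsilon) \leq \left\lceil \frac{\ln(\sqrt{pq}/\varepsilon)}{-\tfrac{1}{2}\ln F_{\mathrm{H}}(\rho,\sigma)} \right\rceil,
\end{equation*}
which is the upper bound of \eqref{eq:binary-informal-other1}. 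Applying $F_{\mathrm{H}} \leq F$ a second time, together with monotonicity of the ceiling function, weakens this to the upper bound of \eqref{eq:binary-informal-other2}.

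Finally, for the $\Theta$-characterization in \eqref{eq:asymp-samp-comp-sym-bin-QHT}, I would freeze $p,q$ as constants and compare the two-sided bounds just derived. The numerators $\ln(pq/\varepsilon)$ and $\ln(\sqrt{pq}/\varepsilon)$ are both $\Theta(\ln(1/\varepsilon))$ as $\varepsilon \to 0$, while the lower-bound denominator $-\ln F(\rho,\sigma)$ differs from the upper-bound denominator $-\tfrac{1}{2}\ln F(\rho,\sigma)$ by only a factor of two (and analogously for $F_{\mathrm{H}}$), so the ratio of the upper to the lower bound is $O(1)$ and $n^{\ast}$ is sandwiched as claimed. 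There is no genuine obstacle in this argument; the only bookkeeping issue is to ensure that $F,F_{\mathrm{H}} \in (0,1)$ and that the $\ln$ arguments stay bounded away from unity, both of which are secured by the hypothesis that the trivial cases of Remark~\ref{rem:trivial-conditions} are excluded.
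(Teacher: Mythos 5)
Your proposal is correct and follows essentially the same route as the paper's own proof: the lower bounds come from the first inequality in \eqref{eq:binary_symmetric3} together with $F_{\mathrm{H}}(\rho,\sigma)\leq F(\rho,\sigma)$, the upper bounds come from evaluating the infimum in \eqref{eq:binary_symmetric3} at $s=1/2$ (where $\Tr[\rho^{1/2}\sigma^{1/2}]=\sqrt{F_{\mathrm{H}}}(\rho,\sigma)$) and applying the same fidelity relation, and the $\Theta$-characterization follows by comparing the resulting two-sided bounds, whose numerators are $\Theta(\ln(1/\varepsilon))$ and whose denominators differ by a factor of two.
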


\begin{proof}
The first inequality in~\eqref{eq:binary-informal-other1} follows from the first inequality in~\eqref{eq:binary_symmetric3} and the inequalities in~\eqref{eq:relation_fidelity}. The second inequality in~\eqref{eq:binary-informal-other1} follows from the second inequality in~\eqref{eq:binary_symmetric3} by picking $s=1/2$. The inequalities in~\eqref{eq:binary-informal-other2} follow from similar reasoning and using the inequalities in~\eqref{eq:relation_fidelity}.
\end{proof}

Corollary~\ref{corollary:binary-informal-2} demonstrates that the asymptotic sample complexity in the classical case of commuting $\rho$ and $\sigma$ is  uniquely characterized by the negative logarithm of the classical fidelity, because $F_\mathrm{H}(\rho,\sigma) = F(\rho,\sigma)$ in such a case. The characterization in Corollary~\ref{corollary:binary-informal-2}  is a strengthening of the existing characterizations of sample complexity in the classical case (see~\cite[Theorem~4.7]{bar2002complexity} and~\cite{canonne17note}), as it has asymptotically matching lower and upper bounds and includes dependence on the error probability, and the underlying distributions (commuting states). 

However, Corollary~\ref{corollary:binary-informal-2} also demonstrates that the asymptotic sample complexity in the quantum case does not have a unique characterization. Indeed, the quantities $-\ln F(\rho,\sigma)$ and $-\ln F_\mathrm{H}(\rho,\sigma)$ are related by multiplicative constants, as recalled in~\eqref{eq:relation_Bures_Chernoff}, and these constants get discarded when using $O$, $\Omega$, and $\Theta$ notation. These constants can actually have dramatic ramifications for the quantum technology required to implement a given measurement strategy. On the one hand, the upper bound on sample complexity in~\eqref{eq:binary_symmetric3} assumes the ability to perform a collective measurement on all $n$ copies of the unknown state. This is essentially equivalent to performing a general unitary on all $n$ systems followed by a product measurement, and so will likely need a full-scale, fault-tolerant quantum computer for its implementation. On the other hand, the upper bound on sample complexity in~\eqref{eq:binary-informal-other2} can be achieved by product measurements and classical post-processing.
This is in contrast to the upper bound in~\eqref{eq:binary_symmetric3}.  Indeed, one can repeatedly apply a measurement known as the Fuchs--Caves measurement~\cite{Fuchs1995} on each individual copy of the unknown state and then process the measurement outcomes classically to achieve the upper bound in~\eqref{eq:binary-informal-other2} (see Appendix~\ref{app:fuchs-caves} for details). Asymptotically, there is no difference between the sample complexities of these strategies, even though there are drastic differences between the technologies needed to realize them. Thus, if one is interested in merely achieving the asymptotic sample complexity, then the latter strategy using the Fuchs--Caves measurement and classical post-processing is preferred.

Continuing with the observation that the asymptotic sample complexity in the quantum case is not unique, let us note that one can even equivalently characterize it by means of the following family of $z$-fidelities, which are based on the $\alpha$-$z$ divergences~\cite{AD15} by setting $\alpha = 1/2$ and obey the data-processing inequality for all $z\geq 1/2$~\cite[Theorem~1.2]{Z20}:
\begin{equation}
F_{z}(\rho,\sigma)\coloneqq \left(  \operatorname{Tr}\!\left[  \left(  \sigma^{1/4z}
\rho^{1/2z}\sigma^{1/4z}\right)  ^{z}\right]  \right)  ^{2} = \left\Vert \rho^{1/4z}\sigma^{1/4z}\right\Vert _{2z}^{4z}.
\end{equation}
When $z=1/2$, we get that $F_{z=1/2}(\rho,\sigma) = F(\rho,\sigma)$, and when $z=1$, we get that $F_{z=1}(\rho,\sigma) = F_{\mathrm{H}}(\rho,\sigma)$.
Furthermore,~\cite[Proposition~6]{LT15} implies that the $z$-fidelities are monotone decreasing for all $z\geq 1/2$, so that
\begin{equation}
    -\ln F(\rho,\sigma) \leq -\ln F_z(\rho,\sigma) \leq -\ln F_{z'}(\rho,\sigma) \leq -\ln F_{\mathrm{H}}(\rho,\sigma) \leq -2\ln F(\rho,\sigma),
    \label{eq:z-fidelities-ineqs}
\end{equation}
for all $z$ and $z'$ satisfying $1/2 \leq z \leq z' \leq 1$.
The inequalities in~\eqref{eq:z-fidelities-ineqs} combined with~\eqref{eq:asymp-samp-comp-sym-bin-QHT} then imply that all of these $z$-fidelities equivalently characterize the asymptotic sample complexity of symmetric binary quantum hypothesis testing.

\subsection{Asymmetric binary quantum hypothesis testing}

Theorem~\ref{thm:binary_asymmetric} below provides lower and upper bounds on the sample complexity of asymmetric quantum hypothesis testing, as introduced in Definition~\ref{def:binary_asymmetric}.
Unlike the symmetric setting considered in Section~\ref{sec:two_general_states}, the lower bound is expressed in terms of the sandwiched R\'enyi divergence~$\widetilde{D}_{\alpha}$, while the upper bound is expressed in terms of the Petz--R\'enyi divergence~$D_{\alpha}$.
The main tool for establishing the lower bounds is the strong converse bound recalled in Lemma~\ref{lemma:strong_converse}, and the main tool for establishing the upper bound is the quantum Hoeffding bound recalled in Lemma~\ref{lemma:Hoeffding}.
Let us note that the upper bound can be achieved by the Helstrom--Holevo test, i.e.,~projection 
onto the positive part of $ \rho^{\otimes n} - \lambda \sigma^{\otimes n}$, or the \emph{pretty-good measurement}~\cite{Bel75, HW94}
$ \Lambda^{(n)} = ( \rho^{\otimes n} + \lambda \sigma^{\otimes n} )^{-1/2} \rho^{\otimes n} ( \rho^{\otimes n} + \lambda \sigma^{\otimes n} )^{-1/2}
$ for some properly chosen parameter $\lambda > 0$.

\begin{theorem}
\label{thm:binary_asymmetric} Fix $\varepsilon,\delta \in (0,1)$, and let $\rho$ and $\sigma$ be states. 
Suppose there exists
$\gamma>1$ such that $\widetilde{D}_{\gamma}(\rho\Vert\sigma)<+\infty
$ and $\widetilde{D}_{\gamma}(\sigma\Vert\rho)<+\infty
$.
Then the following  bounds hold for the sample complexity $n^{\ast}(\rho,\sigma,\varepsilon,\delta)$ of asymmetric binary quantum hypothesis testing:
\begin{multline}
\max\left\{  \sup_{ \alpha \in (1,\gamma] } \left(  \frac{\ln\!\left(  \frac{\left(  1-\varepsilon\right)
^{\alpha^{\prime}}}{\delta}\right)  }{\widetilde{D}_{\alpha}(\rho\Vert\sigma
)}\right)  ,\ \sup_{ \alpha \in (1,\gamma] } \left(  \frac{\ln\!\left(  \frac{\left(
1-\delta\right)  ^{\alpha^{\prime}}}{\varepsilon}\right)  }{\widetilde
{D}_{\alpha}(\sigma\Vert\rho)}\right)  \right\}  \leq n^{\ast}(\rho,\sigma,\varepsilon,\delta) \\ \leq 
\min\left\{  \left\lceil
\inf_{\alpha\in\left(  0,1\right)  }\left(  \frac{\ln\!\left(  \frac
{\varepsilon^{\alpha^{\prime}}}{\delta}\right)  }{D_{\alpha}(\rho\Vert\sigma
)}\right)  \right\rceil ,\left\lceil \inf_{\alpha\in\left(  0,1\right)
}\left(  \frac{\ln\!\left(  \frac{\delta^{\alpha^{\prime}}}{\varepsilon
}\right)  }{D_{\alpha}(\sigma\Vert\rho)}\right)  \right\rceil \right\}  .
\label{eq:binary_asymmetric_samp_comp}
\end{multline}
where $\alpha^{\prime}\coloneqq\frac{\alpha}{\alpha-1}$.
\end{theorem}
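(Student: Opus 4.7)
My plan is to convert the theorem into two finite-blocklength inequalities on $\beta_\varepsilon(\rho^{\otimes n}\|\sigma^{\otimes n})$ and invert them, using the two equivalent expressions for the sample complexity recorded in \eqref{eq:asymm-beta-rewrite} and \eqref{eq:asymm-beta-rewrite-2}. The former reads $n^*(\rho,\sigma,\varepsilon,\delta) = \inf\{n : \beta_\varepsilon(\rho^{\otimes n}\|\sigma^{\otimes n}) \leq \delta\}$, and the latter reads $n^*(\rho,\sigma,\varepsilon,\delta) = \inf\{n : \beta_\delta(\sigma^{\otimes n}\|\rho^{\otimes n}) \leq \varepsilon\}$. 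Running the same argument through both formulations will produce the two entries inside the outer $\max$ (for lower bounds) and $\min$ (for upper bounds). Throughout I will freely use the additivity $\widetilde{D}_\alpha(\rho^{\otimes n}\|\sigma^{\otimes n}) = n\widetilde{D}_\alpha(\rho\|\sigma)$ and analogously for $D_\alpha$.

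For the lower bound I would invoke the strong-converse bound (Lemma~\ref{lemma:strong_converse}) in the form: for every $\alpha \in (1,\gamma]$,
\begin{equation}
\beta_\varepsilon(\rho^{\otimes n}\|\sigma^{\otimes n}) \geq (1-\varepsilon)^{\alpha'}\exp\!\big(-n\,\widetilde{D}_\alpha(\rho\|\sigma)\big).
\end{equation}
If $n = n^*(\rho,\sigma,\varepsilon,\delta)$, then $\beta_\varepsilon \leq \delta$ by \eqref{eq:asymm-beta-rewrite}, so taking logarithms and rearranging yields $n \geq \ln\!\big((1-\varepsilon)^{\alpha'}/\delta\big)/\widetilde{D}_\alpha(\rho\|\sigma)$. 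Taking the supremum over $\alpha \in (1,\gamma]$ produces the first term inside the $\max$, and the second term follows by applying the same argument to the swapped expression \eqref{eq:asymm-beta-rewrite-2}, in which the roles of $(\rho,\varepsilon)$ and $(\sigma,\delta)$ are exchanged.

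For the upper bound I would invoke the quantum Hoeffding bound (Lemma~\ref{lemma:Hoeffding}), which exhibits an explicit test (the Helstrom--Holevo test, or equivalently the pretty-good measurement with a suitably chosen parameter $\lambda > 0$) achieving, for every $\alpha \in (0,1)$,
\begin{equation}
\beta_\varepsilon(\rho^{\otimes n}\|\sigma^{\otimes n}) \leq \varepsilon^{\alpha'}\exp\!\big(-n\, D_\alpha(\rho\|\sigma)\big).
\end{equation}
Requiring the right-hand side to be at most $\delta$ and solving for $n$ yields the sufficient condition $n \geq \ln(\varepsilon^{\alpha'}/\delta)/D_\alpha(\rho\|\sigma)$. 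Hence $n^* \leq \lceil \ln(\varepsilon^{\alpha'}/\delta)/D_\alpha(\rho\|\sigma) \rceil$ for every such $\alpha$, and optimizing over $\alpha \in (0,1)$, together with the analogous swap via \eqref{eq:asymm-beta-rewrite-2}, yields both entries of the $\min$.

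The argument is conceptually straightforward, with the main care-point being a sanity check on the finiteness assumption: the hypothesis $\widetilde{D}_\gamma(\rho\|\sigma), \widetilde{D}_\gamma(\sigma\|\rho) < +\infty$ is exactly what is required to keep the denominators in the lower bound finite, and by monotonicity of $\alpha \mapsto \widetilde{D}_\alpha$ this propagates to every $\alpha \in (1,\gamma]$, so the suprema are genuinely informative rather than vacuous. No analogous condition is needed for the upper bound, since $D_\alpha$ for $\alpha \in (0,1)$ is always well-defined and finite on states with overlapping supports. The only remaining technical content is the verification of Lemmas~\ref{lemma:strong_converse} and~\ref{lemma:Hoeffding}; modulo those, the rest of the proof is pure algebraic rearrangement.
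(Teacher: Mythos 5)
Your proposal is correct and follows essentially the same route as the paper's own proof: the lower bound comes from the strong-converse bound (Lemma~\ref{lemma:strong_converse}) together with additivity of $\widetilde{D}_\alpha$, the upper bound from the quantum Hoeffding bound (Lemma~\ref{lemma:Hoeffding}) together with additivity of $D_\alpha$, and both entries of the $\max$ and $\min$ are obtained exactly as you describe, by rerunning the argument through the swapped expression \eqref{eq:asymm-beta-rewrite-2}. The only differences are cosmetic (you state the lemmas in exponentiated rather than logarithmic form, and you make explicit the finiteness remark that the paper leaves implicit).
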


\begin{proof}
See Appendix~\ref{app:proof_samp_comp_binary_asymmetric}.
\end{proof}

\begin{remark}
For the finite-dimensional case, let us note that 
\begin{equation}
    \widetilde{D}_{\alpha}(\rho\Vert\sigma
)\leq D_{\max}(\rho\Vert\sigma
) \coloneqq \ln \inf\left\{ \lambda > 0 : \rho \leq \lambda \sigma \right\}
= \ln \lambda_{\max}\!\left( \sigma^{-1/2} \rho \sigma ^{-1/2} \right) < +\infty
\end{equation}
for all $\alpha \in [0,\infty]$
as long as $\operatorname{supp}(\rho) \subseteq \operatorname{supp}(\sigma)$~\cite[Theorem 7]{MDS+13},~\cite[Lemma 3.12]{MO17} (see~\cite{Dat09} for $D_{\max}$).
\end{remark}

The bounds given in Theorem~\ref{thm:binary_asymmetric}  lead to the following asymptotic characterization of the sample complexity of asymmetric binary quantum hypothesis testing, given by Corollary~\ref{corollary:binary-asymmetric-informal} below.
Interestingly, 
the asymptotic sample complexity given in~\eqref{eq:binary-asymmetric-informal1} below establishes, up to a multiplicative constant, the following asymptotic relationship between the number $n$ of samples and the type~II error probability $\delta$:
\begin{align}
\label{eq:Stein1}
   n \simeq \frac{\ln\!\left(
\frac{1}{\delta}\right)  }{D(\rho\Vert\sigma)} \qquad \Leftrightarrow \qquad \delta \simeq \mathrm{e}^{- n D(\rho \Vert \sigma)},
\end{align}
whenever $\varepsilon \in (0,1)$ is fixed to be a constant and provided that $\delta$ is sufficiently small.
This  characterization is consistent with that from the  quantum Stein's lemma~\cite{HP91,ON00}, which states that the largest decaying rate of the type~II error probability is governed by the quantum relative entropy whenever the type~I error probability is at most a fixed $\varepsilon \in (0,1)$; i.e.,
\begin{align}
\label{eq:Stein2}
\lim_{n\to \infty} - \frac{1}{n} \ln \beta_{\varepsilon}( \rho^{\otimes n} \Vert \sigma^{\otimes n} )
= D(\rho \Vert \sigma).
\end{align}

\begin{corollary}
[Bounds on sample complexity of asymmetric binary quantum hypothesis
testing]
\label{corollary:binary-asymmetric-informal}
Fix $\varepsilon,\delta\in\left(  0,1\right)  $, and let $\rho$ and $\sigma$
be as stated in Definition~\ref{def:binary_asymmetric}. For sufficiently small $\delta$, the sample
complexity of asymmetric binary quantum hypothesis testing satisfies the
following bounds:
\begin{equation}
\frac{2}{5}\left(  \frac{\ln\!\left(  \frac{1}{\delta}\right)  }{D(\rho
\Vert\sigma)}\right)  \leq n^{\ast}(\rho,\sigma,\varepsilon,\delta
)\leq\left\lceil 4\left(  \frac{\ln\!\left(  \frac{1}{\delta}\right)  }
{D(\rho\Vert\sigma)}\right)  \right\rceil ,
\label{eq:binary-asymmetric-informal1}
\end{equation}
provided that there exists $\gamma\in(0,1]$ such that $D_{1+\gamma}(\rho
\Vert\sigma)<+\infty$. 
\end{corollary}

\begin{proof}
See Appendix~\ref{app:proof-binary-asymmetric-informal}. Also, see~\eqref{eq:small-delta-1} and~\eqref{eq:small-delta-2} for the precise meaning of \textquotedblleft
sufficiently small~$\delta$.\textquotedblright
\end{proof}

\subsection{Multiple quantum hypothesis testing} 

\label{sec:M-ary}

Now we generalize the binary mixed-state case discussed in Section~\ref{sec:two_general_states} to an arbitrary number of states, $M$.
The lower bound essentially follows from the lower bound for the two-state case because discriminating $M$ states simultaneously is not easier than discriminating any pair of states.
The upper bound is achieved by the \emph{pretty-good measurement}~\cite{Bel75, HW94}, which can be implemented via a quantum algorithm~\cite{GLM+22}.
The main tool for showing the upper bound is by testing each state against any of the other states.
Since there are $M(M-1)$ such pairs, the upper bound ends up with a logarithmic dependence on~$M$. This idea was introduced in~\cite[Theorem~4]{BK02}, and a similar idea has been used in~\cite{harrow2012many} and~\cite[Section 4]{AM14}. Let us note that the upper bound in~\eqref{eq:sc_M-ary} below is a refinement of that presented in~\cite{harrow2012many}.

\begin{theorem}[Sample complexity of $M$-ary quantum hypothesis testing] \label{theorem:sc_M-ary}
	Let $n^*(\mathcal{S},\varepsilon)$ be as stated in Definition~\ref{def:M-ary}.
	Then,
	\begin{align} \label{eq:sc_M-ary}
		\max_{m\neq \bar{m}}  \frac{ \ln\!\left( \frac{p_m p_{\bar m}}{ (p_m + p_{\bar m})\varepsilon } \right) }{ -\ln F(\rho_m, \rho_{\bar m }) }
		\leq 
		n^*(\mathcal{S},\varepsilon) \leq 
		\left\lceil  \max_{m\neq \bar{m}}  \frac{2\ln\!\left( \frac{ M(M-1) \sqrt{p_m} \sqrt{p_{\bar{m}}}  }{ 2\varepsilon } \right) }{ - \ln F\!\left(\rho_{m},\rho_{\bar{m}} \right) } \right\rceil .
	\end{align}
\end{theorem}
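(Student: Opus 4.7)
The plan is to establish the two bounds in \eqref{eq:sc_M-ary} by separate arguments. For the lower bound, I would observe that solving the $M$-ary discrimination problem cannot be easier than distinguishing any pair of states in the ensemble, and then reduce to the binary lower bound from Theorem~\ref{theorem:binary_symmetric}. For the upper bound, I would analyze the pretty-good measurement (PGM) applied to $n$ copies via a Barnum--Knill-type pairwise fidelity bound, exploiting multiplicativity $F(\rho^{\otimes n},\sigma^{\otimes n}) = F(\rho,\sigma)^{n}$ and then choosing $n$ large enough that each pairwise contribution is appropriately small.

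For the lower bound, I would fix any pair $m \neq \bar{m}$. Given any $M$-ary POVM $(\Lambda^{(n)}_k)_{k=1}^{M}$ satisfying $\sum_{m'} p_{m'} \Tr[(I^{\otimes n} - \Lambda^{(n)}_{m'}) \rho_{m'}^{\otimes n}] \leq \varepsilon$, I would coarsen it to the binary POVM $(\Lambda^{(n)}_m,\, I^{\otimes n} - \Lambda^{(n)}_m)$, interpreting the first outcome as a guess of $\rho_m$ and the second as a guess of $\rho_{\bar{m}}$, under the renormalized priors $p_m/(p_m + p_{\bar{m}})$ and $p_{\bar{m}}/(p_m + p_{\bar{m}})$. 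Using $\Lambda^{(n)}_m \leq I^{\otimes n} - \Lambda^{(n)}_{\bar{m}}$ to control the confusion probability under $\rho_{\bar{m}}^{\otimes n}$, the weighted binary error is bounded above by $p_e(\mathcal{S},n)/(p_m + p_{\bar{m}}) \leq \varepsilon/(p_m + p_{\bar{m}})$. Applying the first lower bound in \eqref{eq:binary_symmetric3} with this renormalized threshold and simplifying yields $n \geq \ln(p_m p_{\bar{m}}/((p_m + p_{\bar{m}})\varepsilon))/(-\ln F(\rho_m, \rho_{\bar{m}}))$, and maximizing over pairs gives the stated lower bound.

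For the upper bound, I would take the PGM $\Lambda^{(n)}_m \coloneqq P_n^{-1/2} p_m \rho_m^{\otimes n} P_n^{-1/2}$ with $P_n \coloneqq \sum_{m'} p_{m'} \rho_{m'}^{\otimes n}$, understood via the pseudoinverse on the support of $P_n$. A Barnum--Knill-type pairwise inequality controls the PGM error by a sum of terms of the form $\sqrt{p_m p_{m'}}\, \sqrt{F(\rho_m^{\otimes n}, \rho_{m'}^{\otimes n})}$ over pairs $\{m,m'\}$; multiplicativity of the fidelity then replaces the square root on $n$ copies by $F(\rho_m, \rho_{m'})^{n/2}$. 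With $M(M-1)/2$ unordered-pair terms, it suffices to force each to be at most $2\varepsilon/[M(M-1)]$ in order to cap the sum at $\varepsilon$. Solving $\sqrt{p_m p_{m'}}\, F(\rho_m, \rho_{m'})^{n/2} \leq 2\varepsilon/[M(M-1)]$ for $n$, applying the ceiling, and maximizing over pairs then yields the stated upper bound.

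The main obstacle will be selecting the precise variant of the Barnum--Knill-type bound so that the constant inside the logarithm comes out to exactly $M(M-1)/(2\varepsilon)$; several closely related pairwise inequalities exist in the literature, differing by explicit factors of~$2$ and by whether the sum runs over ordered or unordered pairs, so one must match the formulation used in \cite{harrow2012many} carefully. A secondary, more routine concern is verifying that the PGM is well-defined when $P_n$ is rank-deficient (handled via the pseudoinverse on its support) and that the ceiling operation in the final expression costs at most one extra sample, which does not alter the asymptotic scaling.
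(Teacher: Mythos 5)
Your proposal matches the paper's proof essentially step for step: the lower bound is obtained by reducing to pairwise binary discrimination (your POVM coarsening argument is equivalent to the paper's relaxation of the $M$-ary constraint) and invoking the binary bound $\varepsilon \geq pq\,F(\rho,\sigma)^n$ from the proof of Theorem~\ref{theorem:binary_symmetric}, while the upper bound uses the pretty-good measurement, a pairwise square-root-fidelity bound summed over unordered pairs, multiplicativity of the fidelity, and solving for $n$. The single obstacle you flag---finding the pairwise inequality with exactly the right constant---is resolved in the paper by first deriving the ordered-pair bound from Lemma~\ref{lemma:Fact} (parts \ref{fact:PGM_MIN}, \ref{fact:FG}, and \ref{fact:subadditivity}) and then citing \cite[Eq.~(8)]{AM14} for the factor of $\tfrac{1}{2}$ that yields the unordered-pair sum and hence the $2\varepsilon$ inside the logarithm.
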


\begin{proof}
    See Appendix~\ref{app:proof-sc_M-ary}.
\end{proof}

\begin{remark} \label{rem:Ke_Li}
Ref.~\cite[Eq.~(36)]{Li16} established a multiple quantum Chernoff bound for $M$-ary hypothesis testing with error exponent $\min_{m\neq \bar{m}}  C(\rho_m\Vert \rho_{\bar{m}})$, which holds for finite-dimensional states. 
This result also implies 
\begin{align}
\label{eq:LiUpperBound}
    n^*(\mathcal{S},\varepsilon)
    \leq  O\!\left( \max_{m \neq \bar{m}} \frac{\ln M}{ -\ln F(\rho_m,\rho_{\bar{m}})} \right),
\end{align}
similarly to the upper bound in~\eqref{eq:sc_M-ary}.
\end{remark}
\begin{proof}
See Appendix~\ref{app:proof:rem:Ke_Li}.
\end{proof}

\begin{remark}[Pure-state case]
For a collection of pure states, the following upper bound can be derived by using~\cite[Eq.~(9)]{AM14}:
\begin{equation}
    n^*(\mathcal{S},\varepsilon) 
		\leq  \left\lceil\max_{m\neq \tilde{m}}  \frac{\ln\!\left( \frac{ M(M-1) \left(p_m^2 + p_{\tilde{m}}^{2} \right)  }{2 p_m p_{\tilde{m}} \varepsilon } \right) }{ - \ln F\!\left(\rho_{m},\rho_{\tilde{m}} \right) } \right\rceil.
\end{equation}
\end{remark}

\begin{remark}[Classical case]
As far as we know, the tightest upper bound on the sample complexity of $M$-ary hypothesis testing in the classical scenario is as follows~\cite[Theorem 15]{SV18}:
\begin{align} \label{eq:SV18}
    n^*(\mathcal{S}, \varepsilon)
    \leq \left\lceil  \max_{m\neq \bar{m}}  \inf_{s\in[0,1]} \frac{\ln\!\left( \frac{ M p_m^{s} {p}_{\bar{m}}^{1-s}  }{ \varepsilon } \right) }{ -\ln \Tr\!\left[ \rho_{m}^s \rho_{\bar{m}}^{1-s} \right] } \right\rceil,
\end{align}
which has an improved logarithmic dependence on $M$.
\end{remark}

By employing the same analysis as given in the proof of Corollary~\ref{corollary:binary-informal-2}, we obtain the following results:

\begin{corollary}[Asymptotic sample complexity] \label{cor:mary}
	Let $\mathcal{S}$ be as stated in Definition~\ref{def:M-ary}.
	If we regard each element of $\{p_m\}_{m=1}^M$ and $\varepsilon$ as constants, and suppose that 
	\begin{align}
		\frac{p_m p_{\bar m}}{ p_m + p_{\bar m} } \geq \varepsilon,
	\end{align}
	for $m, \bar{m} \in \{1,2,\ldots, M\}$, and $\bar{m} \ne m$, then we have characterized the sample complexity of $M$-ary quantum hypothesis testing as follows:
 \begin{align} \label{eq:sc_M-ary_asymptotics}
		\Omega\!\left( \frac{1}{ \min_{m\neq \bar{m}} [-\ln F\!\left(\rho_{m},\rho_{\bar{m}} \right)] } \right) 
		\leq 
		n^*(\mathcal{S},\varepsilon) 
		\leq O \!\left( \frac{\ln M }{\min_{m\neq \bar{m}} [-\ln F\!\left(\rho_{m},\rho_{\bar{m}} \right)] } \right).
	\end{align}
	The above bounds also hold when replacing $F$ with $F_{\mathrm{H}}$.
\end{corollary}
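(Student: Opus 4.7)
The plan is to derive Corollary~\ref{cor:mary} directly from the explicit bounds established in Theorem~\ref{theorem:sc_M-ary}, substituting the regime in which the priors $\{p_m\}_{m=1}^M$ and the error threshold $\varepsilon$ are treated as fixed constants. The argument parallels the one used to obtain Corollary~\ref{corollary:binary-informal-2} from Theorem~\ref{theorem:binary_symmetric}: the two main ingredients are (i) identifying which pair $(m,\bar{m})$ dominates the max in each bound, and (ii) translating between $F$ and $F_{\mathrm{H}}$ via the constant-factor comparison in \eqref{eq:relation_fidelity}.

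For the lower bound, I would start from the left-hand inequality in \eqref{eq:sc_M-ary}, namely
\begin{equation}
    n^{*}(\mathcal{S},\varepsilon) \geq \max_{m \neq \bar{m}} \frac{\ln\!\left( \frac{p_m p_{\bar m}}{(p_m + p_{\bar m}) \varepsilon} \right)}{-\ln F(\rho_m, \rho_{\bar m})}.
\end{equation}
The hypothesis $p_m p_{\bar m}/(p_m + p_{\bar m}) \geq \varepsilon$ ensures that every numerator is nonnegative, and since the priors and $\varepsilon$ are fixed, each such numerator is $\Theta(1)$. The outer maximum is then dominated by the pair minimizing $-\ln F(\rho_m, \rho_{\bar m})$, which produces the claimed $\Omega\!\left(1/\min_{m\neq \bar m}[-\ln F(\rho_m,\rho_{\bar m})]\right)$ bound.

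For the upper bound, I would invoke the right-hand inequality in \eqref{eq:sc_M-ary},
\begin{equation}
    n^{*}(\mathcal{S},\varepsilon) \leq \left\lceil \max_{m\neq \bar{m}} \frac{2\ln\!\left(\frac{M(M-1)\sqrt{p_m p_{\bar m}}}{2\varepsilon}\right)}{-\ln F(\rho_m, \rho_{\bar m})} \right\rceil.
\end{equation}
For fixed priors and $\varepsilon$, each numerator is $O(\ln M)$ (with the hidden constant depending only on the priors and $\varepsilon$), while the outer maximum again singles out the pair with the smallest $-\ln F$. This yields the upper bound $O\!\left(\ln M / \min_{m\neq \bar m}[-\ln F(\rho_m,\rho_{\bar m})]\right)$.

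To obtain the same statement with $F_{\mathrm{H}}$ in place of $F$, I would appeal to the inequalities in \eqref{eq:relation_fidelity}, which imply that $-\ln F$ and $-\ln F_{\mathrm{H}}$ differ by at most a multiplicative constant. Since this constant is absorbed into $\Omega$ and $O$, both asymptotic bounds carry over verbatim. There is no deep technical obstacle here: the derivation is essentially a direct substitution into the bounds of Theorem~\ref{theorem:sc_M-ary}, and the only real point of care is verifying that the numerator in the lower bound is nonnegative, which is precisely the purpose of the hypothesis $p_m p_{\bar m}/(p_m + p_{\bar m}) \geq \varepsilon$.
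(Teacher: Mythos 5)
Your proposal is correct and follows essentially the same route as the paper: the paper proves this corollary by directly instantiating the bounds of Theorem~\ref{theorem:sc_M-ary} with the priors and $\varepsilon$ treated as constants (mirroring the proof of Corollary~\ref{corollary:binary-informal-2}), using the hypothesis $p_m p_{\bar m}/(p_m+p_{\bar m}) \geq \varepsilon$ to keep the lower-bound numerators nonnegative, and invoking \eqref{eq:relation_fidelity} to pass between $F$ and $F_{\mathrm{H}}$ at the cost of a multiplicative constant absorbed by the $O$ and $\Omega$ notation. Your identification of the dominating pair in each maximum and your handling of the $F$ versus $F_{\mathrm{H}}$ substitution match the paper's argument.
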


\section{Applications and further avenues of research}

\label{sec:apps}

We briefly discuss several broad areas in quantum science where the concept of sample complexity in quantum hypothesis testing is relevant. For those who come from diverse fields, our goal here is to illustrate how sample complexity can appear in widely different and perhaps unexpected areas -- thus it should not be confined to limited areas in information theory or computer science alone. We try to distill the intuitive essence of these connections that have appeared in various literature and to encourage the reader to explore these in further detail and to identify new relationships. 

In Section~\ref{sec:optimalquantum}, we explore possible uses of sample complexity in the proofs of optimality of certain quantum algorithms. In Section~\ref{sec:quantumlearning}, we devise ways of understanding sample complexity in the learning framework for quantum states. In Section~\ref{sec:quantumfoundation}, we note the relevance of sample complexity in the foundations of quantum information and quantum mechanics. Sample complexity in these contexts can serve as an improved technical tool, introduce a modified framework for an old problem, or provide new interpretations that originate from seemingly disparate areas. In Section~\ref{sec:priv-QHT}, we summarize recent works that have utilized the sample complexity bounds derived in this work in order to characterize the cost of privacy in the task of quantum hypothesis testing.

Let us emphasise here again that our results for  sample complexity, as presented in Section~\ref{sec:samp-comp-results}, hold for the discrimination not only of discrete-variable (qubits and qudits) quantum systems, but also continuous-variable quantum systems (qumodes) as well as hybrid discrete-continuous variable quantum systems, unless otherwise stated. These sample complexity bounds also hold for mixed states and multiple hypotheses $M$. In the applications discussed, while many previous works have mostly considered $M=2$ and pure qubit-based systems, our findings directly extend the applicability of those results to $M>2$, mixed states, continuous-variable quantum systems and also to hybrid discrete-continuous variable settings.

\subsection{Optimality of quantum algorithms} \label{sec:optimalquantum}

\subsubsection{Quantum simulation of linear ordinary and partial differential equations}
Any linear system of ordinary differential equations (ODEs) and linear partial differential equations (PDEs) can be represented in the following way:
\begin{align} \label{eq:differentialequation}
    \frac{d \mathbf{u}(t)}{dt}=-i \mathbf{A}(t) \mathbf{u}(t), \,\, {\hbox{with}}\,\,{\mathbf{u}(0)=\mathbf{u}_0,}
\end{align}
 For a system of $D$ linear ordinary differential equations for $D$ scalar functions $\{u_j(t)\}_{j=1}^D$ and with $\{|j\rangle\}_j$ an orthonormal basis, then $\mathbf{u}(t) \equiv \sum_{j=1}^D u_j(t)|j\rangle$ in~\eqref{eq:differentialequation} is a $D$-dimensional vector and $\mathbf{A}(t)$ is a $D \times D$ matrix. 
 We note that any inhomogeneous terms $\mathbf{f}$ and higher-order time derivatives can always be accommodated through an appropriate dilation, e.g., $\mathbf{u} \rightarrow \mathbf{u} \otimes |0\rangle + \mathbf{f} \otimes |1\rangle$ and $\mathbf{u} \rightarrow \mathbf{u} \otimes |0\rangle + d\mathbf{u}/dt \otimes |1\rangle$, respectively (for example, see~\cite{schr2, analog2023}). When $D$ is large enough, this system can also represent a discretised linear partial differential equation. For example, in Eulerian discretisation (e.g., finite difference methods), where $n$ is the size of the discretisation and $d$ is the spatial dimension of the PDE, so that $D \sim O(n^{d})$. We will discuss the continuous representation of the partial differential equation later. 
 
 A pure quantum state vector is the normalised vector $|u(t)\rangle \equiv \mathbf{u}(t)/\|\mathbf{u}(t)\|$, where normalisation is through the $l_2$ norm $\| \cdot \|$. For Schr\"odinger-like equations, $\mathbf{A}(t)$ is a Hermitian matrix, so that~\eqref{eq:differentialequation} can be directly approached through quantum simulation~\cite{lloyd1996universal,childs2018toward} with $\mathbf{A}(t)$ the corresponding Hamiltonian. This means that the transformation between $|u(0)\rangle$ and $|u(t)\rangle$ is unitary and the corresponding norms are preserved. For example, if~$\mathbf{A}$ is time-independent, we can simulate the final state $|u(t)\rangle=\exp(-i \mathbf{A} t)|u(0)\rangle$ through the unitary operator $\exp(-i \mathbf{A} t)$. However, for general ODEs, $\mathbf{A}(t)$ is \textit{not} necessarily Hermitian. This means that in order to simulate $|u(t)\rangle$ through quantum simulation, we need to dilate the state $\mathbf{u}(t)$ so that evolution in the dilated space is unitary. Various dilation procedures are possible, including Schr\"odingerisation~\cite{schr1, schr2, analog2023}, block-encoding~\cite{gilyen2019quantum, martyn2021grand,an2022theory}, or Stinespring dilation~\cite{busch2016dilation}. While explicit procedures are necessary for actual implementation of these methods, we will see that the usefulness of sample complexity in quantum hypothesis testing lies in providing us with a lower bound on the optimal cost of any quantum algorithm to prepare~$|u(t)\rangle$~\cite{an2022theory}.

 The basic suggestive idea is the following (for example, see~\cite{an2022theory}). Suppose we begin with two different initial conditions $\mathbf{u}_0$ and $\tilde{\mathbf{u}}_0$ for the same differential equation in~\eqref{eq:differentialequation}. When $\mathbf{A}$ is Hermitian like in the example above, then the distance  is preserved; i.e.,  \begin{equation}
     \|\mathbf{u}(t)-\tilde{\mathbf{u}}(t)\|=\|\exp(-i\mathbf{A}t)(\mathbf{u}_0-\tilde{\mathbf{u}}_0)\|=\|\mathbf{u}_0-\tilde{\mathbf{u}}_0\|.
 \end{equation}
 Thus, the distinguishability of the two initial states and the distinguishability of the two final states do not change since unitary evolution preserves $\|\mathbf{u}(t)-\tilde{\mathbf{u}}(t)\|$ for all $t$. However, in more general cases, $\mathbf{A}=\mathbf{A}_1+i\mathbf{A}_2$ is \textit{not} Hermitian (where $\mathbf{A}_1$, $\mathbf{A}_2$ are Hermitian). Under the assumption that $\mathbf{A}_1$ and $\mathbf{A}_2$ commute, define the ratio~$R$ as follows:
 \begin{equation}
 R \equiv \|\mathbf{u}(t)-\tilde{\mathbf{u}}(t)\|/\|\mathbf{u}_0-\tilde{\mathbf{u}}_0\|=\|\exp(\mathbf{A}_2 T)(\mathbf{u}_0-\tilde{\mathbf{u}}_0)\|/\|\mathbf{u}_0-\tilde{\mathbf{u}}_0\|.    
 \end{equation}
 Then if we assume the spectral norm of $\exp(\mathbf{A}_2 t)$ is large -- i.e., $\max_{\mathbf{v} \neq \mathbf{0}} \left( \| \exp( \mathbf{A}_2 t) \mathbf{v} \| / \| \mathbf{v} \| \right)= \left(\max \text{eigenvalue of } \exp(2\mathbf{A}_2 t) \right)^{1/2}$ is large -- then $R$ grows with $t$ when we assume $\mathbf{v}=\mathbf{u}_0-\tilde{\mathbf{u}}_0$. This means that an initially hard to distinguish pair $(\mathbf{u}_0, \tilde{\mathbf{u}}_0)$ can become easily distinguishable when $t$ is large enough or when $\mathbf{A}_2$ has large enough eigenvalues. The latter case occurs when the differential equation is very dissipative, like the heat equation with high diffusion coefficients. In this case, a quantum simulation algorithm for~\eqref{eq:differentialequation} with non-Hermitian $\mathbf{A}$ can act as a state discriminator. Since the optimal sample complexity~$n^*$ in quantum hypothesis testing provides the optimal cost in performing state discrimination, this sample complexity~$n^*$ can be used to derive a lower bound on the cost for the successful quantum simulation of~\eqref{eq:differentialequation} with some maximum error probability. Note that here we have ignored many details and subtleties (like differences between distances between classical vectors and those between quantum states) to focus only on conveying the intuitive essence of the idea. The arguments above can be refined with the improved $n^*$ lower bounds in this paper and for time-dependent $\mathbf{A}(t)$. Although better bounds than those provided by quantum hypothesis testing are possible~\cite{an2022theory}, these require stronger assumptions than are usually considered in state discrimination -- assuming access to not only the unitary operator creating the state, which then effectively allows amplitude amplification.

 We can extend this reasoning also to linear PDEs in their continuous representation, without discretisation. In this case, Eq.~\eqref{eq:differentialequation} can be used to represent a $(d+1)$-dimensional PDE using the continuous-variable representation $\mathbf{u}(t) \equiv \int_{-\infty}^{\infty} u(t, x)|x\rangle dx$, where we use $x=(x_1,\ldots,x_d)$ to denote the $d$-dimensional spatial degrees of freedom in the PDE. Here~$|x_j\rangle$ is a position eigenstate with corresponding position operator $\hat{x}_j$, where $\hat{x}_j|x_j\rangle=x_j|x_j\rangle$. In this case $\mathbf{A}$ is no longer a finite matrix like for ODEs, but it is rather an infinite-dimensional operator acting on $\mathbf{u}(t)$. Here $\mathbf{A}(t)$ involves both  position and momentum operators $\hat{x}_j, \hat{p}_j$, for $j\in \{1, \ldots, d\}$, obeying the commutation relations $[\hat{x}_j, \hat{p}_j]=iI$. To derive the form of $\mathbf{A}(t)$ from the original PDE, it can be shown, for example in~\cite{analog2023}, that one only needs to make the replacement $x u(t,x)  \rightarrow \hat{x} \mathbf{u}(t)$ and $\partial^k u(t,x)/\partial^k x \rightarrow (i\hat{p})^k \mathbf{u}(t)$. In fact, Eq.~\eqref{eq:differentialequation} can also represent a system of linear PDEs, by using a hybrid continuous-variable and discrete variable representation~\cite{analog2023}.
 
 It is important to note that the sample complexities, denoted by $n^*$, presented in this paper are also applicable to continuous variables as well as hybrid systems. Thus the basic argument for deriving lower bounds on the optimal cost in quantum simulation of PDEs and system of PDEs using the sample complexity in quantum hypothesis testing follows. However, given subtleties with infinite-dimensional systems, these ideas need to be refined and further explored. 

 So far we have only considered the embedding of scalar degrees of freedom into vectors~$\mathbf{u}(t)$, both finite and infinite dimensional. However, there are also differential equations for matrices $\Xi(t)$, for instance
 \begin{align} \label{eq:matrixode}
     \frac{d \Xi(t)}{dt}=\mathcal{L}(\Xi(t)), \qquad \Xi(0)=\Xi_0,
 \end{align}
 where $\mathcal{L}$ is a linear superoperator. We can consider the embedding of this matrix into an unnormalised density matrix, and quantum simulation for density matrices can be subsequently used. Similar arguments now based on quantum hypothesis testing for mixed states could potentially then be applied in this case -- in certain regimes -- to identify lower bounds on the optimal cost in the quantum simulation of~\eqref{eq:matrixode}. 

\textbf{Implications and future directions:} Previously, the sample complexity of symmetric binary quantum hypothesis testing was known only for pure states having a  uniform prior probability. This has been used as a major ingredient in providing lower bounds on the cost of successful quantum simulation of linear ODEs in~\eqref{eq:differentialequation}; see \cite[Lemma~7]{liu2021efficient}.  As a future direction, one can investigate applying the general sample complexity bounds derived in our work to establish lower bounds on the cost of successful quantum simulation of differential equations in terms of the matrices in~\eqref{eq:matrixode}.

\textbf{Linear algebra:} Another class of related applications is to consider quantum algorithms for linear systems of equations, for example~\cite{harrow2009quantum}, which exploits quantum phase estimation. Alternative methods of solving these problems through a dynamical process like~\eqref{eq:differentialequation} is also possible. This is done by mapping the discrete problem onto iterative algorithms and then taking the continuous-time limit~\cite{quantumlinearalgebra2023}. Thus, lower bounds for optimal algorithms for quantum algorithms for linear algebra can also be approached through $n^*$ and can be further explored.

\subsubsection{Quantum simulation of nonlinear ordinary and partial differential equations}

Previously we saw that the optimal sample complexity in quantum hypothesis testing is only relevant for linear differential equations when the dynamics served to increase the distance between two initially closely-separated states, thus making them easier to distinguish. This dynamics cannot include, for instance, purely unitary transformations. In the presence of \textit{nonlinear} dynamics, on the other hand, initially closely-separated states can be driven apart very quickly. The rate of separation is related to the Lyapunov coefficient of the dynamical system and depends on the degree of nonlinearity of the differential equation. For example, see~\cite{vulpiani2009chaos}.

Suppose that two initial conditions $\mathbf{u}_0$ and $\tilde{\mathbf{u}}_0$ satisfy $\|\mathbf{u}_0-\tilde{\mathbf{u}}_0\|^2 \sim \eta_{|\lambda|}(0)$ and they evolve in time $t$ under the same nonlinear dynamics characterised by some parameter $|\lambda|$ (we can let $|\lambda|=0$ denote purely linear and unitary dynamics). We assume $\eta_{|\lambda|}(0)$ to be a constant independent of $|\lambda|$. However, as time grows, $\|\mathbf{u}(t)-\tilde{\mathbf{u}}(t)\|^2 \sim \eta_{|\lambda|}(t)$, so that $\eta_{|\lambda|}(t)$ does depend on $|\lambda|$ for $t >0$. For many nonlinear dynamics of interest, $\eta_{|\lambda|}(t)$ increases with increasing $t$ and $|\lambda|$, thus making the states easier to distinguish. Thus the nonlinear dynamics can serve as a state discriminator. We can now approach this in a similar way to previous arguments in the linear non-unitary dynamics scenario. Assuming $\|\mathbf{u}(t)\| \sim  \|\tilde{\mathbf{u}}(t)\|$ and we embed $\mathbf{u}(t)$ and $ \tilde{\mathbf{u}}(t)$ into the amplitudes of pure quantum states with density matrices $\rho$ and $ \tilde{\rho}$, respectively, then $\|\mathbf{u}(t)-\tilde{\mathbf{u}}(t)\|^2 \sim d^2_H(\rho, \tilde{\rho}) \|\mathbf{u}(t)\|^2$. Ignoring the normalisation constant, we therefore see that the optimal sample complexity $n^* \sim 1/d_H^2 \sim 1/\eta_{|\lambda|}(t)$ in distinguishing quantum states $|u(t)\rangle$ and $|\tilde{u}(t)\rangle$ up to some maximum failure probability $\varepsilon$ provides a lower bound on the quantum simulation of nonlinear dynamics.   It can also place an upper bound on the amount of nonlinearity in a differential equation if we require the quantum simulation to be efficient for that nonlinear dynamics. In an illustrative sense, imposing efficiency with respect to $t$ for the optimal quantum algorithm requires $n^* \sim 1/\eta_{|\lambda|}(t) \lesssim \operatorname{poly}(t)$, where $\eta_{|\lambda|}(t)$ can  in principle be derived or approximated from the known nonlinear dynamics. This therefore puts an upper bound on $|\lambda|$. Thus for nonlinear ODEs with high enough nonlinearity, $\eta \sim \exp(-t)$ is possible, implying $n^* \sim \exp(t)$, which means any quantum algorithm is inefficient for high enough nonlinearities. For example, in~\cite{liu2021efficient}, a system of two differential equations with quadratic nonlinearity was studied, and such an upper bound for a characterisation of nonlinearity was identified.

Given more precise bounds $n^*$ in the current paper, previous arguments can be refined and explored further. The analysis can also be extended to  continuous-variable settings, as well as embedding of the nonlinear dynamics into density matrices, instead of pure states, by exploiting our optimal sample complexity bounds in these scenarios. Thus we can also extend to applications beyond~\eqref{eq:matrixode} to nonlinear differential matrix equations
\begin{align} \label{eq:nonlinearmatrixode}
     \frac{d \Xi(t)}{dt}=\mathcal{N}(\Xi(t)), \qquad \Xi(0)=\Xi_0,
 \end{align}
where $\mathcal{N}$ can be a nonlinear superoperator. Nonlinear differential equations for matrices include important classes like the Riccati differential equation, which appear in many areas ranging from optimal control~\cite{abou2012matrix}, estimation problems~\cite{nikoukhah1990kalman}, and network theory~\cite{anderson1999riccati}. It is also interesting to explore the role of the maximum error probability $\varepsilon$ and the interplay with the characterisation of nonlinearity.

\textbf{Implications and future directions:} Similar to the quantum simulation of linear ODEs and PDEs, as a future direction, one can investigate applying the sample complexity bounds derived for general states (including mixed states) to establish optimality bounds on the quantum simulation of nonlinear differential matrix equations in~\eqref{eq:nonlinearmatrixode}.
\smallskip

\textbf{Nonlinear quantum mechanics}: We can also find applications of sample complexity bounds in quantum hypothesis testing to determining the regime of validity for quantum simulation via effective nonlinear quantum mechanics~\cite{abrams1998nonlinear, childs2016optimal}. For example, one can find an upper bound on the time for which certain nonlinear quantum mechanical models remain valid. For instance, it is known that in the nonlinear Gross--Pitaevskii model with strength~$g$, it is possible to distinguish two states separated by distance $\eta$ in time $t \sim (1/g) \ln(1/\eta)$. We know from the optimal sample complexity that $n^* \sim 1/\eta$, where $n^*$ can also be interpreted as the number of particles whose effective dynamics obeys the Gross--Pitaevskii equation. This means that $t \lesssim (1/g) \ln n^*$~\cite{childs2016optimal}. Further exploration with more models and using more precise $n^*$ bounds and including the maximum failure probability could give further insight on both emergent nonlinearity in quantum mechanics and the computational power of different physical models. 

We emphasize that in the above applications, we are not using a quantum state discrimination protocol in order to perform quantum simulation for differential equations. Rather, the observation is that if we do have good enough quantum algorithms for simulating differential equations with two close initial conditions, then this algorithm is also sufficient for good enough state discrimination between two of the final states when the running time for the differential equation is long enough. Thus, the query complexity in the quantum algorithm for the differential equation provides an upper bound to the cost in a state discrimination problem to distinguish those same final states, or equivalently the sample complexity provides a lower bound on quantum differential equation solvers.

\subsubsection{Unstructured search and related applications}

In a typical search problem for an unstructured dataset, the task is to identify an unknown value $r$ that is an integer in the set $\{1, \ldots, d\}$ (for example, for a given function $f$, one seeks a value of $r$ for which $f(r)=1$). Grover's algorithm is a well-established quantum algorithm for this task~\cite{grover1996fast}, and it accomplishes the desired goal by preparing an approximation of the unknown state~$|r\rangle$ when we have access to an oracle that can validate whether or not we have the correct state~$|r\rangle$. In a continuous-time version of Grover's algorithm, the oracle access one assumes is a unitary generated by the Hamiltonian $H=|r\rangle\! \langle r|+|s\rangle\! \langle s|$ acting on a $d$-dimensional complex Hilbert space, where $|s\rangle=(1/\sqrt{d})\sum_{j=1}^d |j\rangle$. Evolving $|s\rangle$ by $\exp(-iH t)$, it can be easily shown that the probability of the final state being $|r\rangle$ is equal to one when $t \sim \sqrt{d}$~\cite{farhi1998analog}.

Suppose that we modify this search problem into a binary problem, where we ask whether or not this $r$ value even exists (e.g., whether a solution to $f(\cdot)=1$ exists for a given $f$). There are then two different corresponding Hamiltonians $H_0=|r\rangle\! \langle r|+|s\rangle\! \langle s|$ and $H_1=|s\rangle\! \langle s|$. Solving the binary version of the search problem means we want to distinguish between the oracles $U_0=\exp(-iH_0 t)$ and $U_1=\exp(-iH_1 t)$, and we want to distinguish between them by using as few samples of the initial states (and thus also access to the unitaries) as possible. The analysis in~\cite{farhi1998analog} reduces the optimality question to a state discrimination problem between states with distance $\eta$. For example, we can use a Hadamard test (see~\cite{childs2016optimal} and similar ideas in~\cite{abrams1998nonlinear}), which is a circuit involving either controlled-$U_0$ or controlled-$U_1$. The overlap of the two possible final states for the Hadamard test circuit can approach a constant for an appropriate choice for $t$ like $t \sim \sqrt{d}$. The optimal sample complexity in quantum hypothesis testing $n^* \sim \ln(1/\varepsilon)/\eta$ and therefore produces a lower bound on sample complexity for the basic discrimination problem with a maximum failure probability~$\varepsilon$. See~\cite{kimmel2017hamiltonian} for a similar discussion. Extensions to the continuous-variable scenario and to mixed quantum states could also be valuable.

\textbf{Implications and future directions:} The modification of the unstructured search problem into the binary problem above can also be mapped onto other problems, where the optimal sample complexity in quantum hypothesis testing provides a lower bound on the optimal complexity for those algorithms. These other problems include quantum fingerprinting~\cite{buhrman2001quantum}, orthogonality testing~\cite{kimmel2017hamiltonian}, and also density matrix exponentiation~\cite{kimmel2017hamiltonian, lloyd2014quantum}.

\subsubsection{Hidden subgroup problem}

The hidden subgroup problem remains one of the most prominent classes of problems for which quantum algorithms have been developed, including Shor's factoring algorithm~\cite{shor1994algorithms}, Shor's quantum algorithm for discrete logarithms~\cite{shor1994algorithms}, and also the earliest quantum algorithms like Simon's algorithm~\cite{simon1997power} and the Deutsch--Jozsa algorithm~\cite{deutsch1992rapid}. Here we are given a group $G$, and we want to find a generating set for a subgroup $H \subset G$, where $H$ is a so-called `hidden subgroup'. For a finite set $S$, a function $F: G \rightarrow S$ is said to `hide' the subgroup $H$ if $F(g_1)=F(g_2)$ for all $g_1, g_2 \in G$ if and only if $g_1 H=g_2 H$. Then if $F$ is given via an oracle, the task is to determine $H$ through minimal number of queries to the oracle. This problem can actually be rephrased as a multiple quantum hypothesis testing problem~\cite{harrow2012many}.  Here we are given a coset state defined by $\rho_H=(1/ |G|)\sum_{g \in G}|g H\rangle\! \langle gH|$ where $|gH\rangle=(1/\sqrt{|H|})\sum_{h \in H}|gh \rangle$. Let the number of subgroups be $M$. Then the goal of identifying $H \subset G$ involves using a minimal number of samples to distinguish between $M$ different coset states. Thus the bound $n^* \lesssim \ln(M)$ in quantum hypothesis testing for mixed states can be used to upper bound the query complexity $\sim \ln(M)$ for solving the hidden subgroup problem. Using the more precise bounds for $n^*$ obtained in this paper, the bounds for different hidden subgroup problems can then be refined. 

\textbf{Implications and future directions:} The precise sample complexity bounds for hypothesis testing obtained in this paper are potential key tools in providing sharper optimality bounds on  quantum algorithms designed to solve different hidden subgroup problems, beyond those already established in \cite{harrow2012many}.

\subsection{Quantum learning and classification} \label{sec:quantumlearning}

The goal of quantum classification is to classify an unknown quantum state $\sigma$ into one of $M$ classes, where we are not given any prior classical information about $\sigma$. That is, we would like to identify conditions under which different states cannot be discriminated (i.e., belong to the same class), whereas the goal of state discrimination is to identify how to discriminate states. The task of quantum classification requires the construction of a quantum classifier, which can come in different forms: (a) deterministic or (b) probabilistic. A deterministic classifier gives a deterministic outcome for the predicted class whereas a probabilistic classifier gives a stochastic output for the predicted class. For our current purpose, we only discuss the following probabilistic quantum classifier, for illustrative purposes. 

\begin{definition}[Probabilistic quantum classifier] \label{def:quantumclassifier} Suppose that for any input quantum state $\sigma^{\otimes n}$ where $\sigma$ is selected from a given (can be unknown) distribution $\mathcal{D}$, it is guaranteed that $\sigma$ already belongs to one of $M$ distinct classes, labelled $c_{\sigma} \in \{1, \ldots, M\}$ (true label of $\sigma$). We can define a probabilistic quantum classifier for $\sigma$ with respect to a given $n$ by a set $\{\Lambda^{(n)}_k\}_{k=1}^M$ with the following properties. (i) The set forms a POVM, so that $\Lambda^{(n)}_{k}\geq 0$ for all k and $\sum_{k=1}^M \Lambda^{(n)}_{k}=I^{\otimes n}$; (ii) $\Tr[\Lambda^{(n)}_{k} \sigma^{\otimes n}]$ corresponds to the probability that the predicted label of $\sigma$ is $k$.
\end{definition} 
We remark that more generally,  $c_{\sigma}$ just needs to belong to a set with cardinality $M$, but we choose the set $\{1, \ldots, M\}$ throughout for simplicity. 

In the above definition we have not defined what it means to have a successful classifier, which we will discuss later. For example, a perfect classifier with respect to a given $n$ is such that, for every quantum state $\sigma$ selected from the distribution $\mathcal{D}$, the conditions $\text{Tr}[\Lambda_{k=c_{\sigma}} \sigma^{\otimes n}]=1$ and $\text{Tr}[\Lambda_{k \neq c_{\sigma}} \sigma^{\otimes n}]=0$ hold. However, this is usually not possible. For whatever figures of merit for success, the construction of such a POVM requires some information about the classes. In the context of supervised learning problems, we are given training data, which consists of a set of states and their corresponding known (true) labels. 

\begin{definition}[Training dataset]\label{def:trainingdataset}
     The training dataset with $N$ states for the classification problem with $M$ classes is the set $\mathcal{T}_N=\{(s_i, c_i)\}_{i=1}^N$, where $s_i$ is a state and $c_i$ is its known (true)  corresponding label where $c_i \in \{1, \ldots, M\}$.  
There are three main categories of  training data $\mathcal{T}$ that we can consider for our quantum classification problem. 
\begin{enumerate}
    \item $s_i$ is the full classical description of the corresponding quantum state $\Sigma_i$. We then call $\mathcal{T}_N$ a \textit{classical training data set}.

     \item $s_i$ is the quantum state $\Sigma_i$ itself and partial classical information is given. Here we call $\mathcal{T}_N$ a \textit{partially quantum training data set}.
     
    \item $s_i$ is the quantum state $\Sigma_i$ itself with no other information provided about $\Sigma_i$. Here we call $\mathcal{T}_N$ a \textit{fully quantum training data set}.
\end{enumerate}
\end{definition}
We note that, unless otherwise stated, $\Sigma_i$ can be either a finite-dimensional, infinite-dimensional (continuous-variable) quantum state, or even a hybrid (finite-dimensional and infinite dimensional) state. In our formulation, $c_i$ is  a classical label, and we do not here consider the more general case where $c_i$ can be a quantum state itself. 

Then our $M$-ary supervised quantum classification problem involves two steps. Given an unknown quantum state $\sigma$ selected from some distribution $\mathcal{D}$, the task is to assign to $\sigma$ one of $M$ possible distinct labels. The first step -- training (or learning) step -- is the construction of a (probabilistic) quantum classifier -- given access to a training data set $\mathcal{T}_N$. This usually involves an optimisation procedure after being given a figure of merit (loss function) to optimise. The second step -- classifying step -- is the computation of the predicted class of~$\sigma$ when given access to $\sigma$. These two steps -- training and classifying -- are distinct and therefore have different costs associated with them. We can state these costs in an informal way below. 

\begin{definition}[Training cost]
The cost in constructing (or learning) a quantum classifier is the cost required to build this classifier given access to $\mathcal{T}_N$, subject to a bound in precision for a given figure of merit. 
\end{definition}

\begin{definition}[Classifying cost]
\label{def:classifyingcost}  This is the minimal number of copies $n$ of the unknown quantum state $\sigma$ required to make a prediction of its class subject to a given bound in precision for a given figure of merit. 
\end{definition}

For example, for our probabilistic quantum classifier $\{\Lambda_k^{(n)}\}_{k=1}^M$, a minimum of $n$ copies is needed. 

At the moment we have not yet discussed the criteria to identify appropriate figures of merit. There are different figures depending on one's applications, requirements, and constraints. The criterion closest in spirit to error probability in quantum hypothesis testing is the notion of training error defined below. See~\cite{banchi2021generalization} for a  discussion on the training error below for $n=1$ and the relationship to quantum hypothesis testing. There are also other important figures of merit in learning apart from training error. This includes \textit{test error} and also \textit{generalisation error}. We will not go into these concepts here, but interested readers can refer to~\cite{hastie2009elements}. 

\begin{definition}[Training error]
\label{def:trainingerror}
     Suppose we are given the fully quantum training dataset $\mathcal{T}_N$ defined in Definition~\ref{def:trainingdataset} with $\Sigma_i=\rho_i^{\otimes n}$. Then the training error $\varepsilon$ of a probabilistic quantum classifier in Definition~\ref{def:quantumclassifier} is the probability that this classifier makes the wrong prediction, i.e.,
    \begin{align}
        \varepsilon_{\Lambda} \coloneqq \frac{1}{N}\sum_{(c_i, \rho_i^{\otimes n}) \in \mathcal{T}_N} \left(1-\operatorname{Tr}[\Lambda^{(n)}_{c_i}\rho^{\otimes n}_i]\right)=1-\frac{1}{N}\sum_{(c_i, \rho^{\otimes n}_i) \in \mathcal{T}_N} \operatorname{Tr}[\Lambda^{(n)}_{c_i}\rho^{\otimes n}_i].
\end{align}
\end{definition}

    If there are $M$ possible classes, then $c_j$ can only take $M$ possible different values. If we are given sufficient training data, whereby $N>M$, then there must exist an equivalence class of states $\mathcal{S}_j$ containing $\rho_i, \rho_j$ with $i \neq j$ where $c_i=c_j$. We can therefore relabel any $c_i$ with $i>M$ by $c_j$ with $j \leq M$ since there are only $M$ distinct labels. The size of this equivalence class we can label as $N_j$, where $j \in \{1,\ldots, M\}$ and $\sum_{j=1}^M N_j=N$. Furthermore, suppose we consider the partially and fully quantum training dataset such that all the quantum states in the same equivalence class are in fact identical states. This means $\rho_i=\rho_j$ for all $i,j$ where $c_i=c_j$. Thus, one quantum state defines its own class and $N_j$ corresponds to the number of copies of $\rho_j$ one has in $\mathcal{T}_N$ for $j\in \{1, \ldots, M\}$. Such a training set $\mathcal{T}_N$ is then equivalent to $\{N_j \text{ copies of } (c_j,  \rho_j^{\otimes n})\}_{j=1}^M$ with $\sum_{j=1}^M N_j=N$. If we are only allowed to select one state $\rho_j$ at a time and we are given $N$ chances to make the selection, then this is also equivalent to $\{p_j, \rho^{\otimes n}_j\}_{j=1}^M$ where $p_j=N_j/N$ is the probability that the state $\rho_j$ is selected. Note that in this scenario $c_j=j$ is a redundant label once $\rho_j$ is given. Clearly, this can also be considered the input of an $M$-ary quantum hypothesis testing scenario if we are allowed to take $n$ samples of the states $\rho_j$. 
    
    If the POVM $\{\Lambda^{(n)}_{c_j}\}_{j=1}^M$ is used for the quantum state discrimination problem, the corresponding expected error probability is 
\begin{align} \label{eq:quantumclasserror}
    p_{e, \Lambda}=1-\sum_{c_j=j =1}^M p_j \Tr [\Lambda^{(n)}_{c_j} \rho^{\otimes n}_j] =\varepsilon_{\Lambda} ,
\end{align}
which coincides with the training error probability in Definition~\ref{def:trainingerror}. This means that minimising $\varepsilon_{\Lambda}$ over $\{\Lambda^{(n)}_{c_j}\}_{j=1}^M$ (i.e., optimising the training error) is equivalent to minimising $p_{e, \Lambda}$ above, which corresponds to the optimal error probability in state discrimination. For example, when $M=2$ and $n=1$, the optimal training error coincides with the result given by the Helstrom--Holevo theorem in~\eqref{eq:Helstrom2} (also see~\cite{banchi2021generalization}). 

We note that if $\mathcal{T}_N$ were instead the classical training dataset, then each equivalence class $|\mathcal{S}_j|=1$ because each $s_i$ is in the same class, being identical and can be tested to be identical without using more resources. Thus $p_j=1/M$. 

\subsubsection{Sample complexity of symmetric hypothesis testing in learning}

To see the relevance of sample complexity for symmetric hypothesis testing in the learning context, we can consider the following. The optimal sample complexity corresponds to the optimal number of copies $n$ of $\sigma$ required to make a prediction to some maximum error probability $\varepsilon_{\Lambda} \leq \varepsilon$. Thus we can define optimal sample complexity in learning as follows
\begin{align} \label{eq:quantumlearningcomplexity}
    N^*(\varepsilon)\coloneqq \inf_{\{\Lambda_{c_1}^{(n)}, \ldots, \Lambda_{c_M}^{(n)}\}} \{ n \in \mathbb{N} : \varepsilon_{\Lambda}^{(n)}=1-\sum_{c_j=j=1}^M p_j \Tr[\Lambda_{c_j}^{(n)}\rho_j^{\otimes n}] \leq \varepsilon \},
\end{align}
by analogy with Definition~\ref{def:M-ary}.

We can interpret the constraint $\varepsilon_{\Lambda} \leq \varepsilon$ in~\eqref{eq:quantumlearningcomplexity} in different ways. In the context of interpreting $\varepsilon_{\Lambda}$ as training error, $N^*$ here is the optimal classifying cost in Definition~\ref{def:classifyingcost} when we fix an upper bound $\varepsilon$ on the training error. This upper bound can be motivated in different ways. Often for training data, a regularisation is required to prevent over-fitting. This happens when the training error might be very small or near perfect, but it could result in the classifier performing badly on new data not in the original training dataset. To prevent this, some non-zero training error is required. Thus there exists some $\varepsilon_{\text{reg}}$ where $\varepsilon_{\text{reg}} <\varepsilon_{\Lambda}$. This means that any upper bound should be chosen to satisfy $\varepsilon \geq \varepsilon_{\text{reg}}$. It is also possible in some scenarios to constrain the maximum allowed training error directly, so that $\varepsilon$ can be set to be a constant. 

In the case of binary classification where $M=2$, if $\mathcal{T}_N$ defined above coincides with the true distribution of states from which $\sigma$ is selected, then the $\varepsilon_{\Lambda} \leq \varepsilon$ condition actually corresponds to the classifier being called $\varepsilon$-\textit{approximately correct} in the context of  probably approximately correct (PAC) learning~\cite{Aar07, CH16a, Roc18, arunachalam2018optimal}. However, unlike PAC learning, our sample complexity is defined with respect to a given distribution of states, and so it is not identical to the sample complexity defined in PAC learning. 

In the case of arbitrary $M$ classes, the corresponding optimal classifying cost is $N^*(\varepsilon) \leq O(\ln(M))$, where we use the result in Theorem~\ref{theorem:sc_M-ary} and we ignore dependencies on all parameters except $M$. For example, we see that the optimal protocol from quantum hypothesis testing is exponentially more efficient in $M$ compared to the algorithm `classification via state discrimination' in~\cite{gambs2008quantum}, where $\sigma$ is pairwise compared to each $\rho_j$ in $\mathcal{T}_N$. Of course, this $O(\ln M)$ scaling is also present for classical states, so this improvement is not due to the effect of exploiting quantum correlations in collective measurements. This $O(\ln(M))$ scaling is recovered, for example, for pure states in a different algorithm having access to Helstrom measurements~\cite{gambs2008quantum}.

\textbf{Implications and future directions:}
In this work, we do not consider the training cost in constructing the classifier, but we evaluate the classifying cost when the classifier is given. We defer the study of analysing the cost in identifying the classifier to future work. It is important to note that the cost of selecting the best classifier  would vary depending on whether the dataset is classical, partially quantum, or fully quantum.

In this section, we showed that the optimal sample complexity of learning, as defined in~\eqref{eq:quantumlearningcomplexity}, is directly related to the sample complexity of $M$-ary hypothesis testing. 
Now, the connection to $M$-ary hypothesis testing leads to the upper bound $O\!\left( \ln (M)\right)$ on the optimal sample complexity of learning, 
for $M \geq 2$. This indeed provides insight into how  the number of classes in the learning problem impacts the learning process.

\subsubsection{Sample complexity of asymmetric hypothesis testing in learning}

Now we can consider the asymmetric hypothesis testing scenario. In symmetric hypothesis testing, the error probabilities due to a false positive (type~I error) result and a false negative (type~II error) result are weighted in the same way. This corresponds to the situation of binary quantum classification where, in the expected error probability in~\eqref{eq:quantumclasserror}, we have $p_j=1/2$ for $j \in \{1,2\}$. However, in asymmetric hypothesis testing, these two types of errors are treated differently. 

This asymmetric setting for hypothesis testing has been previously connected to robustness of quantum classifiers~\cite{weber2021optimal}. Intuitively, if the optimal hypothesis test performs poorly in distinguishing between the original state $\sigma$ and a perturbed state $\sigma'$, then a quantum classifier will likely categorise $\sigma, \sigma'$ into the same class and is thus robust against perturbations $\sigma \rightarrow \sigma'$. For example, it was shown in~\cite{weber2021optimal} that asymmetric quantum hypothesis testing provides the robust region around $\sigma$ when the optimal type~II error probability is greater than 1/2. This formalism involves constraining the type~I error probability while optimising the type~II error probability, which is the usual setup in asymmetric hypothesis testing.  

In the definition of optimal sample complexity in Definition~\ref{def:binary_asymmetric}, however, we must place constant upper bounds on both type~I and type~II errors and 
only optimise 
over the number of copies of the input quantum states put into the classifier. This corresponds to a more complex learning scenario, which is related but not identical to the asymmetric loss function introduced in~\cite{bach2006considering} (this corresponds to the scenario where $p_1 \neq p_2$). In our case, the sample complexity corresponds to the minimal classifying cost for the probabilistic quantum classifier we defined where we fix upper bounds to the error probability terms $p_1 \Tr[\Lambda^{(n)}_2\rho_1^{\otimes n}]$ and $p_2 \Tr[\Lambda^{(n)}_1 \rho^{\otimes n}_2]$ \textit{separately}. 

\subsubsection{Concluding remarks on sample complexity in learning}

Going beyond supervised learning problems to unsupervised problems -- where we are not provided with training data -- these can also be considered in the context of state discrimination problems. For an example, see~\cite{sentis2019unsupervised}. The focus is on finding the optimal single-shot protocol, which does not fit our sample complexity paradigm. 

It is an interesting general question to consider how learning protocols will change if we focus instead on optimising sample complexity. In the context of supervised problems, this is relevant to small sample learning, when we only have a limited number of training data to learning from. This is important when the available quantum data production is scarce.

\subsection{Foundational quantum information and quantum mechanics}

\label{sec:quantumfoundation}

Optimal sample complexity of quantum hypothesis testing could also be relevant to foundational quantum information and quantum mechanics. We have already touched upon works that study the use of nonlinear quantum mechanical systems to solve unstructured search problems. The stronger the nonlinearity, the more efficient the corresponding search algorithm is. However, given that we know that such search problems and their generalisations can be NP-complete and even in \#P, it is not likely that this can be solved efficiently even by a quantum computer, in polynomial time. Although quantum mechanics is believed to be fundamentally linear, it is not known if there could be a more fundamental theory that is nonlinear. Thus, bounds from optimal sample complexity could potentially put fundamental bounds on nonlinear quantum mechanics at the foundational level~\cite{abrams1998nonlinear}. 

\textbf{Implications and future directions:}
There are many areas where the minimum error probability in quantum hypothesis testing is relevant to the foundations of quantum information theory and quantum mechanics (see~\cite{bae2015quantum} for many examples). Applications range from understanding no-go theorems in the interpretation of quantum states~\cite{pusey2012reality} and  related to bounding types of  \texorpdfstring{$\psi$}{Psi}-epistemic theories~\cite{leifer2013maximally}, `reproducing' standard quantum mechanics from the larger class of general probabilistic theories~\cite{arai2023derivation}, the operational meaning of min-entropy~\cite{konig2009operational}, security proofs in quantum key distribution~\cite{leverrier2009unconditional}, the construction of dimension witnesses for quantum states~\cite{brunner2013dimension}, relationship to no-signalling~\cite{bae2015quantum} and quantum cloning~\cite{bae2006asymptotic}.

It is then interesting to ask whether the optimal sample complexity setting can give rise to intriguing new questions. For example, it is known that in some instances optimal asymptotic cloning is equivalent to optimal state discrimination~\cite{bae2006asymptotic, bruss2000phase}. Asymptotic cloning refers to the limit where the number of clones tend to infinity. Equivalence is only possible in the asymptotic limit because optimal state discrimination is in general imperfect, which leads to an imperfect cloning procedure. Suppose we fix a degree of imperfection allowed in a cloning process. Then focusing instead on the protocol for optimal sample complexity could provide an alternative imperfect cloning procedure. 

\subsection{Private quantum hypothesis testing}

\label{sec:priv-QHT}

In this section, we briefly remark how the results of the present paper have concretely influenced subsequent works on quantum local differential privacy \cite{nuradha2024contractionQLDP,cheng2024samplePrivate}. Before going into more detail, let us motivate this topic with some background. With the development of quantum technologies, it is vital to ensure the privacy of generated quantum states. Recently, statistical privacy frameworks for quantum data have been introduced, generalizing classical statistical privacy frameworks studied in~\cite{DMNS06,DR14,KM14,nuradha2022pufferfishJ}. These privacy frameworks provide \textit{provable privacy guarantees} to the fully quantum and hybrid quantum-classical setting~\cite{QDP_computation17, aaronson2019gentle,hirche2023quantum, NGW_QPP_24}. Quantum local differential privacy (QLDP) is one such framework, which ensures that two distinct quantum states passed through a private quantum channel are difficult to distinguish by a
measurement~\cite{hirche2023quantum} (see also~\cite{NGW_QPP_24}). Formally, we say that a quantum channel $\mathcal{A}$ satisfies $\varepsilon$-QLDP for $\varepsilon >0$ if the following condition holds: 
\begin{equation}
    \sup_{\rho,\sigma \in \mathcal{D}} E_{e^\varepsilon}\!\left( \mathcal{A}(\rho) \Vert \mathcal{A}(\sigma) \right) =0, 
\end{equation}
where the supremization is over all quantum states with the same dimension and \\ $E_\gamma(\rho \Vert \sigma) \coloneqq \Tr\!\left[(\rho-\gamma \sigma)_+ \right]$ is the hockey-stick divergence, defined for $\gamma \geq 1$, with $\left(  A\right)  _{+}\coloneqq \sum_{i:a_{i}\geq0}a_{i}|i\rangle\!\langle i| $
for a Hermitian operator $A$ with spectral decomposition $A = \sum_{i}a_{i}|i\rangle\!\langle i|$.

The sample complexity of hypothesis selection under privacy constraints imposed by classical local differential privacy has been studied extensively in the classical literature \cite{Structu_Optimal_Tests19,Private_H_Selec,Locally_private_HS20,Contraction_local_new24}. Related to the quantum setting, the sample complexity of quantum hypothesis testing under privacy constraints imposed by quantum local differential privacy have been studied quite recently in~\cite{nuradha2024contractionQLDP,cheng2024samplePrivate}. These  works utilize the non-private sample complexities derived in this work to quantify the cost of privatizing data samples. For instance, consider the binary symmetric hypothesis testing of two orthogonal states $\rho$ and $\sigma$. To this end, we need only one sample of the unknown state to declare whether it is $\rho$ or $\sigma$ (Remark~\ref{rem:trivial-conditions}). Now, let us consider the scenario where we are given access to the privatized states of the orthogonal input states, denoted as $\mathcal{A}(\rho)$ and $\mathcal{A}(\sigma)$, where $\mathcal{A}$ is a quantum channel that satisfies $\varepsilon$-QLDP. For this setting, even if we are allowed to use the best private channel $\mathcal{A}$ satisfying the stated privacy level, the above works showed that the smallest number of samples to achieve a fixed error in the hypothesis testing of the two states scaled as $\Theta\! \left(\left((e^\varepsilon +1)/(e^\varepsilon-1)\right)^2\right)$ for all $\varepsilon > 0$ \cite[Corollary~2]{nuradha2024contractionQLDP}. This essentially quantifies the cost that we have to pay in the hypothesis testing to ensure privacy in this setting.

 Similarly, our results provide a foundation for analyzing the impact on resources in information-constrained settings, as done with privacy for a fundamental statistical task of quantum hypothesis testing. In that respect, our work also opens up a new research direction, namely, to  characterize the sample complexity of information-constrained quantum hypothesis testing.

\section{Summary and future directions}

In this paper, we defined the sample complexity of quantum hypothesis testing for the symmetric binary setting, the asymmetric binary setting, and for multiple hypotheses. We gave a characterization of the sample complexity for the two aforementioned binary settings. We found in the symmetric binary setting that it depends logarithmically on the priors and inverse error probability and inversely on the negative logarithm of the fidelity. In the asymmetric binary setting, it depends logarithmically on the inverse type~II error probability and inversely on the quantum relative entropy, provided that the type~II error probability is sufficiently small. Due to non-commutativity in the quantum case, there is freedom in characterizing the sample complexity of symmetric binary quantum hypothesis testing, and we found that other measures like the Holevo fidelity or the broader family of $z$-fidelities also characterize it. Furthermore, collective measurement strategies or simpler product measurements followed by classical post-processing achieve the same sample complexity, indicating that strategies with radically different technological requirements achieve the same sample complexity, in contrast to previous findings in the information-theoretic setting. It is also interesting to extend quantum sample complexity bounds to the task of the discrimination of quantum channels. 

We also summarized 
many applications of sample complexity of quantum hypothesis testing, including topics as diverse as quantum algorithms for simulation and unstructured search, quantum learning and classification,  foundations of quantum mechanics, and private quantum hypothesis testing. Sample complexity plays an important role in establishing the fundamental limitations of the first two tasks. For simulation and search, the idea is that these algorithms  produce distinguishable states as output when run on different input states, so that they serve as state discriminators and are ultimately subject to the limitations on such state discriminators set by quantum hypothesis testing. Throughout Section~\ref{sec:apps}, we illustrated a number of questions to motivate future work in this direction.

\bigskip 
\textbf{Note:} While finalizing our paper, we noticed the independent and concurrent arXiv post~\cite{sample_complexity_classical2024}, which considers sample complexity of classical hypothesis testing in the symmetric and asymmetric binary settings. We note here that all of our results apply to the classical case by substituting commuting density operators that encode probability distributions along their diagonals.

\begin{acknowledgments}
We thank Cl{\'e}ment~Canonne for several helpful discussions. We are grateful to Marco Fanizza for notifying us of an error in a previous version of Corollary~\ref{corollary:binary-asymmetric-informal}. 
ND acknowledges support from the Engineering and Physical Sciences Research Council [Grant Ref: EP/Y028732/1].
HCC acknowledges support from National Science and Technology Council under grants 112-2636-E-002-009, 113-2119-M-007-006, 113-2119-M-001-006, NSTC 113-2124-M-002-003, and from  Ministry of Education under grants NTU-112V1904-4, NTU-CC-113L891605, and NTU-113L900702. NL acknowledges funding from the Science and Technology Program of Shanghai, China (21JC1402900), the NSFC grants No. 12341104 and No. 12471411, the Shanghai Jiao Tong University 2030 Initiative, and the Fundamental Research Funds for the Central Universities. TN  acknowledges support from the NSF under grant no.~2329662. RS acknowledges funding from the Cambridge Commonwealth,
European and International Trust and from the European Research Council (ERC Grant AlgoQIP, Agreement No. 851716). MMW acknowledges support from the NSF under grants 2329662, 2315398, 1907615 and is especially grateful for an international travel supplement to 1907615. This supplement supported a trip to visit the quantum information group at Cambridge University in October 2022, during which this research project was initiated.
\end{acknowledgments}

\bibliography{Ref}

\begin{thebibliography}{156}%
\makeatletter
\providecommand \@ifxundefined [1]{%
 \@ifx{#1\undefined}
}%
\providecommand \@ifnum [1]{%
 \ifnum #1\expandafter \@firstoftwo
 \else \expandafter \@secondoftwo
 \fi
}%
\providecommand \@ifx [1]{%
 \ifx #1\expandafter \@firstoftwo
 \else \expandafter \@secondoftwo
 \fi
}%
\providecommand \natexlab [1]{#1}%
\providecommand \enquote  [1]{``#1''}%
\providecommand \bibnamefont  [1]{#1}%
\providecommand \bibfnamefont [1]{#1}%
\providecommand \citenamefont [1]{#1}%
\providecommand \href@noop [0]{\@secondoftwo}%
\providecommand \href [0]{\begingroup \@sanitize@url \@href}%
\providecommand \@href[1]{\@@startlink{#1}\@@href}%
\providecommand \@@href[1]{\endgroup#1\@@endlink}%
\providecommand \@sanitize@url [0]{\catcode `\\12\catcode `\$12\catcode
  `\&12\catcode `\#12\catcode `\^12\catcode `\_12\catcode `\%12\relax}%
\providecommand \@@startlink[1]{}%
\providecommand \@@endlink[0]{}%
\providecommand \url  [0]{\begingroup\@sanitize@url \@url }%
\providecommand \@url [1]{\endgroup\@href {#1}{\urlprefix }}%
\providecommand \urlprefix  [0]{URL }%
\providecommand \Eprint [0]{\href }%
\providecommand \doibase [0]{https://doi.org/}%
\providecommand \selectlanguage [0]{\@gobble}%
\providecommand \bibinfo  [0]{\@secondoftwo}%
\providecommand \bibfield  [0]{\@secondoftwo}%
\providecommand \translation [1]{[#1]}%
\providecommand \BibitemOpen [0]{}%
\providecommand \bibitemStop [0]{}%
\providecommand \bibitemNoStop [0]{.\EOS\space}%
\providecommand \EOS [0]{\spacefactor3000\relax}%
\providecommand \BibitemShut  [1]{\csname bibitem#1\endcsname}%
\let\auto@bib@innerbib\@empty
\bibitem [{\citenamefont {Neyman}\ and\ \citenamefont {Pearson}(1933)}]{NP33}%
  \BibitemOpen
  \bibfield  {author} {\bibinfo {author} {\bibfnamefont {J.}~\bibnamefont
  {Neyman}}\ and\ \bibinfo {author} {\bibfnamefont {E.~S.}\ \bibnamefont
  {Pearson}},\ }\bibfield  {title} {\bibinfo {title} {On the problem of the
  most efficient tests of statistical hypotheses},\ }\href
  {https://doi.org/10.1098/rsta.1933.0009} {\bibfield  {journal} {\bibinfo
  {journal} {Philosophical Transactions of the Royal Society A: Mathematical,
  Physical and Engineering Sciences}\ }\textbf {\bibinfo {volume} {231}},\
  \bibinfo {pages} {289} (\bibinfo {year} {1933})}\BibitemShut {NoStop}%
\bibitem [{\citenamefont {Casella}\ and\ \citenamefont
  {Berger}(2021)}]{casella2021statistical}%
  \BibitemOpen
  \bibfield  {author} {\bibinfo {author} {\bibfnamefont {G.}~\bibnamefont
  {Casella}}\ and\ \bibinfo {author} {\bibfnamefont {R.~L.}\ \bibnamefont
  {Berger}},\ }\href@noop {} {\emph {\bibinfo {title} {Statistical
  Inference}}}\ (\bibinfo  {publisher} {Cengage Learning},\ \bibinfo {year}
  {2021})\BibitemShut {NoStop}%
\bibitem [{\citenamefont {Helstrom}(1967{\natexlab{a}})}]{Hel67}%
  \BibitemOpen
  \bibfield  {author} {\bibinfo {author} {\bibfnamefont {C.~W.}\ \bibnamefont
  {Helstrom}},\ }\bibfield  {title} {\bibinfo {title} {Detection theory and
  quantum mechanics},\ }\href {https://doi.org/10.1016/s0019-9958(67)90302-6}
  {\bibfield  {journal} {\bibinfo  {journal} {Information and Control}\
  }\textbf {\bibinfo {volume} {10}},\ \bibinfo {pages} {254} (\bibinfo {year}
  {1967}{\natexlab{a}})}\BibitemShut {NoStop}%
\bibitem [{\citenamefont {Helstrom}(1969)}]{Hel69}%
  \BibitemOpen
  \bibfield  {author} {\bibinfo {author} {\bibfnamefont {C.~W.}\ \bibnamefont
  {Helstrom}},\ }\bibfield  {title} {\bibinfo {title} {Quantum detection and
  estimation theory},\ }\href {https://doi.org/10.1007/bf01007479} {\bibfield
  {journal} {\bibinfo  {journal} {Journal of Statistical Physics}\ }\textbf
  {\bibinfo {volume} {1}},\ \bibinfo {pages} {231} (\bibinfo {year}
  {1969})}\BibitemShut {NoStop}%
\bibitem [{\citenamefont {Holevo}(1973)}]{Hol73}%
  \BibitemOpen
  \bibfield  {author} {\bibinfo {author} {\bibfnamefont {A.~S.}\ \bibnamefont
  {Holevo}},\ }\bibfield  {title} {\bibinfo {title} {Statistical decision
  theory for quantum systems},\ }\href
  {https://doi.org/https://doi.org/10.1016/0047-259X(73)90028-6} {\bibfield
  {journal} {\bibinfo  {journal} {Journal of Multivariate Analysis}\ }\textbf
  {\bibinfo {volume} {3}},\ \bibinfo {pages} {337} (\bibinfo {year}
  {1973})}\BibitemShut {NoStop}%
\bibitem [{\citenamefont {Bae}\ and\ \citenamefont
  {Kwek}(2015)}]{bae2015quantum}%
  \BibitemOpen
  \bibfield  {author} {\bibinfo {author} {\bibfnamefont {J.}~\bibnamefont
  {Bae}}\ and\ \bibinfo {author} {\bibfnamefont {L.-C.}\ \bibnamefont {Kwek}},\
  }\bibfield  {title} {\bibinfo {title} {Quantum state discrimination and its
  applications},\ }\href {https://doi.org/10.1088/1751-8113/48/8/083001}
  {\bibfield  {journal} {\bibinfo  {journal} {Journal of Physics A:
  Mathematical and Theoretical}\ }\textbf {\bibinfo {volume} {48}},\ \bibinfo
  {pages} {083001} (\bibinfo {year} {2015})}\BibitemShut {NoStop}%
\bibitem [{\citenamefont {Pusey}\ \emph {et~al.}(2012)\citenamefont {Pusey},
  \citenamefont {Barrett},\ and\ \citenamefont {Rudolph}}]{pusey2012reality}%
  \BibitemOpen
  \bibfield  {author} {\bibinfo {author} {\bibfnamefont {M.~F.}\ \bibnamefont
  {Pusey}}, \bibinfo {author} {\bibfnamefont {J.}~\bibnamefont {Barrett}},\
  and\ \bibinfo {author} {\bibfnamefont {T.}~\bibnamefont {Rudolph}},\
  }\bibfield  {title} {\bibinfo {title} {On the reality of the quantum state},\
  }\href {https://doi.org/10.1038/nphys2309} {\bibfield  {journal} {\bibinfo
  {journal} {Nature Physics}\ }\textbf {\bibinfo {volume} {8}},\ \bibinfo
  {pages} {475} (\bibinfo {year} {2012})}\BibitemShut {NoStop}%
\bibitem [{\citenamefont {Hiai}\ and\ \citenamefont {Petz}(1991)}]{HP91}%
  \BibitemOpen
  \bibfield  {author} {\bibinfo {author} {\bibfnamefont {F.}~\bibnamefont
  {Hiai}}\ and\ \bibinfo {author} {\bibfnamefont {D.}~\bibnamefont {Petz}},\
  }\bibfield  {title} {\bibinfo {title} {The proper formula for relative
  entropy and its asymptotics in quantum probability},\ }\href
  {https://doi.org/10.1007/bf02100287} {\bibfield  {journal} {\bibinfo
  {journal} {Communications in Mathematical Physics}\ }\textbf {\bibinfo
  {volume} {143}},\ \bibinfo {pages} {99} (\bibinfo {year} {1991})}\BibitemShut
  {NoStop}%
\bibitem [{\citenamefont {Ogawa}\ and\ \citenamefont {Nagaoka}(2000)}]{ON00}%
  \BibitemOpen
  \bibfield  {author} {\bibinfo {author} {\bibfnamefont {T.}~\bibnamefont
  {Ogawa}}\ and\ \bibinfo {author} {\bibfnamefont {H.}~\bibnamefont
  {Nagaoka}},\ }\bibfield  {title} {\bibinfo {title} {Strong converse and
  {Stein's} lemma in quantum hypothesis testing},\ }\href
  {https://doi.org/10.1109/18.887855} {\bibfield  {journal} {\bibinfo
  {journal} {{IEEE} Transaction on Information Theory}\ }\textbf {\bibinfo
  {volume} {46}},\ \bibinfo {pages} {2428} (\bibinfo {year}
  {2000})}\BibitemShut {NoStop}%
\bibitem [{\citenamefont {Audenaert}\ \emph {et~al.}(2007)\citenamefont
  {Audenaert}, \citenamefont {Calsamiglia}, \citenamefont {Mu{\~{n}}oz-Tapia},
  \citenamefont {Bagan}, \citenamefont {Masanes}, \citenamefont {Acin},\ and\
  \citenamefont {Verstraete}}]{ACM+07}%
  \BibitemOpen
  \bibfield  {author} {\bibinfo {author} {\bibfnamefont {K.~M.~R.}\
  \bibnamefont {Audenaert}}, \bibinfo {author} {\bibfnamefont {J.}~\bibnamefont
  {Calsamiglia}}, \bibinfo {author} {\bibfnamefont {R.}~\bibnamefont
  {Mu{\~{n}}oz-Tapia}}, \bibinfo {author} {\bibfnamefont {E.}~\bibnamefont
  {Bagan}}, \bibinfo {author} {\bibfnamefont {L.}~\bibnamefont {Masanes}},
  \bibinfo {author} {\bibfnamefont {A.}~\bibnamefont {Acin}},\ and\ \bibinfo
  {author} {\bibfnamefont {F.}~\bibnamefont {Verstraete}},\ }\bibfield  {title}
  {\bibinfo {title} {Discriminating states: The quantum {Chernoff} bound},\
  }\href {https://doi.org/10.1103/physrevlett.98.160501} {\bibfield  {journal}
  {\bibinfo  {journal} {Physical Review Letters}\ }\textbf {\bibinfo {volume}
  {98}},\ \bibinfo {pages} {160501} (\bibinfo {year} {2007})}\BibitemShut
  {NoStop}%
\bibitem [{\citenamefont {Nussbaum}\ and\ \citenamefont
  {Szko{\l}a}(2009)}]{NS09}%
  \BibitemOpen
  \bibfield  {author} {\bibinfo {author} {\bibfnamefont {M.}~\bibnamefont
  {Nussbaum}}\ and\ \bibinfo {author} {\bibfnamefont {A.}~\bibnamefont
  {Szko{\l}a}},\ }\bibfield  {title} {\bibinfo {title} {The {Chernoff} lower
  bound for symmetric quantum hypothesis testing},\ }\href
  {https://doi.org/10.1214/08-aos593} {\bibfield  {journal} {\bibinfo
  {journal} {Annals of Statistics}\ }\textbf {\bibinfo {volume} {37}},\
  \bibinfo {pages} {1040} (\bibinfo {year} {2009})}\BibitemShut {NoStop}%
\bibitem [{\citenamefont {Nagaoka}(2006)}]{Nag06}%
  \BibitemOpen
  \bibfield  {author} {\bibinfo {author} {\bibfnamefont {H.}~\bibnamefont
  {Nagaoka}},\ }\href@noop {} {\bibinfo {title} {The converse part of the
  theorem for quantum {Hoeffding} bound}} (\bibinfo {year} {2006}),\ \Eprint
  {https://arxiv.org/abs/quant-ph/0611289} {arXiv:quant-ph/0611289}
  \BibitemShut {NoStop}%
\bibitem [{\citenamefont {Hayashi}(2007)}]{Hay07}%
  \BibitemOpen
  \bibfield  {author} {\bibinfo {author} {\bibfnamefont {M.}~\bibnamefont
  {Hayashi}},\ }\bibfield  {title} {\bibinfo {title} {Error exponent in
  asymmetric quantum hypothesis testing and its application to
  classical-quantum channel coding},\ }\href
  {https://doi.org/10.1103/physreva.76.062301} {\bibfield  {journal} {\bibinfo
  {journal} {Physical Review A}\ }\textbf {\bibinfo {volume} {76}},\ \bibinfo
  {pages} {062301} (\bibinfo {year} {2007})}\BibitemShut {NoStop}%
\bibitem [{\citenamefont {Yuen}\ \emph {et~al.}(1975)\citenamefont {Yuen},
  \citenamefont {Kennedy},\ and\ \citenamefont {Lax}}]{YKL75}%
  \BibitemOpen
  \bibfield  {author} {\bibinfo {author} {\bibfnamefont {H.~P.}\ \bibnamefont
  {Yuen}}, \bibinfo {author} {\bibfnamefont {R.~S.}\ \bibnamefont {Kennedy}},\
  and\ \bibinfo {author} {\bibfnamefont {M.}~\bibnamefont {Lax}},\ }\bibfield
  {title} {\bibinfo {title} {Optimum testing of multiple hypotheses in quantum
  detection theory},\ }\href {https://doi.org/10.1109/TIT.1975.1055351}
  {\bibfield  {journal} {\bibinfo  {journal} {IEEE Transactions on Information
  Theory}\ }\textbf {\bibinfo {volume} {21}},\ \bibinfo {pages} {125} (\bibinfo
  {year} {1975})}\BibitemShut {NoStop}%
\bibitem [{\citenamefont {Li}(2016)}]{Li16}%
  \BibitemOpen
  \bibfield  {author} {\bibinfo {author} {\bibfnamefont {K.}~\bibnamefont
  {Li}},\ }\bibfield  {title} {\bibinfo {title} {{Discriminating quantum
  states: The multiple {Chernoff} distance}},\ }\href
  {https://doi.org/10.1214/16-AOS1436} {\bibfield  {journal} {\bibinfo
  {journal} {The Annals of Statistics}\ }\textbf {\bibinfo {volume} {44}},\
  \bibinfo {pages} {1661 } (\bibinfo {year} {2016})}\BibitemShut {NoStop}%
\bibitem [{\citenamefont {Tomamichel}\ and\ \citenamefont
  {Hayashi}(2013)}]{MTMH13}%
  \BibitemOpen
  \bibfield  {author} {\bibinfo {author} {\bibfnamefont {M.}~\bibnamefont
  {Tomamichel}}\ and\ \bibinfo {author} {\bibfnamefont {M.}~\bibnamefont
  {Hayashi}},\ }\bibfield  {title} {\bibinfo {title} {A hierarchy of
  information quantities for finite block length analysis of quantum tasks},\
  }\href {https://doi.org/10.1109/TIT.2013.2276628} {\bibfield  {journal}
  {\bibinfo  {journal} {IEEE Transactions on Information Theory}\ }\textbf
  {\bibinfo {volume} {59}},\ \bibinfo {pages} {7693} (\bibinfo {year}
  {2013})}\BibitemShut {NoStop}%
\bibitem [{\citenamefont {Li}(2014)}]{Li14}%
  \BibitemOpen
  \bibfield  {author} {\bibinfo {author} {\bibfnamefont {K.}~\bibnamefont
  {Li}},\ }\bibfield  {title} {\bibinfo {title} {Second-order asymptotics for
  quantum hypothesis testing},\ }\href {https://doi.org/10.1214/13-aos1185}
  {\bibfield  {journal} {\bibinfo  {journal} {The Annals of Statistics}\
  }\textbf {\bibinfo {volume} {42}},\ \bibinfo {pages} {171} (\bibinfo {year}
  {2014})}\BibitemShut {NoStop}%
\bibitem [{\citenamefont {Datta}\ \emph
  {et~al.}(2016{\natexlab{a}})\citenamefont {Datta}, \citenamefont {Pautrat},\
  and\ \citenamefont {Rouz{\'{e}}}}]{DPR16}%
  \BibitemOpen
  \bibfield  {author} {\bibinfo {author} {\bibfnamefont {N.}~\bibnamefont
  {Datta}}, \bibinfo {author} {\bibfnamefont {Y.}~\bibnamefont {Pautrat}},\
  and\ \bibinfo {author} {\bibfnamefont {C.}~\bibnamefont {Rouz{\'{e}}}},\
  }\bibfield  {title} {\bibinfo {title} {Second-order asymptotics for quantum
  hypothesis testing in settings beyond i.i.d.-quantum lattice systems and
  more},\ }\href {https://doi.org/10.1063/1.4953582} {\bibfield  {journal}
  {\bibinfo  {journal} {Journal of Mathematical Physics}\ }\textbf {\bibinfo
  {volume} {57}},\ \bibinfo {pages} {062207} (\bibinfo {year}
  {2016}{\natexlab{a}})}\BibitemShut {NoStop}%
\bibitem [{\citenamefont {Tomamichel}\ and\ \citenamefont {Tan}(2015)}]{TV15}%
  \BibitemOpen
  \bibfield  {author} {\bibinfo {author} {\bibfnamefont {M.}~\bibnamefont
  {Tomamichel}}\ and\ \bibinfo {author} {\bibfnamefont {V.~Y.~F.}\ \bibnamefont
  {Tan}},\ }\bibfield  {title} {\bibinfo {title} {Second-order asymptotics for
  the classical capacity of image-additive quantum channels},\ }\href
  {https://doi.org/10.1007/s00220-015-2382-0} {\bibfield  {journal} {\bibinfo
  {journal} {Communications in Mathematical Physics}\ }\textbf {\bibinfo
  {volume} {338}},\ \bibinfo {pages} {103} (\bibinfo {year}
  {2015})}\BibitemShut {NoStop}%
\bibitem [{\citenamefont {Datta}\ \emph
  {et~al.}(2016{\natexlab{b}})\citenamefont {Datta}, \citenamefont
  {Tomamichel},\ and\ \citenamefont {Wilde}}]{DTW16}%
  \BibitemOpen
  \bibfield  {author} {\bibinfo {author} {\bibfnamefont {N.}~\bibnamefont
  {Datta}}, \bibinfo {author} {\bibfnamefont {M.}~\bibnamefont {Tomamichel}},\
  and\ \bibinfo {author} {\bibfnamefont {M.~M.}\ \bibnamefont {Wilde}},\
  }\bibfield  {title} {\bibinfo {title} {On the second-order asymptotics for
  entanglement-assisted communication},\ }\href
  {https://doi.org/10.1007/s11128-016-1272-5} {\bibfield  {journal} {\bibinfo
  {journal} {Quantum Information Processing}\ }\textbf {\bibinfo {volume}
  {15}},\ \bibinfo {pages} {2569} (\bibinfo {year}
  {2016}{\natexlab{b}})}\BibitemShut {NoStop}%
\bibitem [{\citenamefont {Tomamichel}\ \emph {et~al.}(2016)\citenamefont
  {Tomamichel}, \citenamefont {Berta},\ and\ \citenamefont {Renes}}]{TBR16}%
  \BibitemOpen
  \bibfield  {author} {\bibinfo {author} {\bibfnamefont {M.}~\bibnamefont
  {Tomamichel}}, \bibinfo {author} {\bibfnamefont {M.}~\bibnamefont {Berta}},\
  and\ \bibinfo {author} {\bibfnamefont {J.~M.}\ \bibnamefont {Renes}},\
  }\bibfield  {title} {\bibinfo {title} {Quantum coding with finite
  resources},\ }\href {https://doi.org/10.1038/ncomms11419} {\bibfield
  {journal} {\bibinfo  {journal} {Nature Communications}\ }\textbf {\bibinfo
  {volume} {7}},\ \bibinfo {pages} {11419} (\bibinfo {year}
  {2016})}\BibitemShut {NoStop}%
\bibitem [{\citenamefont {Cheng}\ \emph {et~al.}(2019)\citenamefont {Cheng},
  \citenamefont {Hsieh},\ and\ \citenamefont {Tomamichel}}]{CHT17}%
  \BibitemOpen
  \bibfield  {author} {\bibinfo {author} {\bibfnamefont {H.-C.}\ \bibnamefont
  {Cheng}}, \bibinfo {author} {\bibfnamefont {M.-H.}\ \bibnamefont {Hsieh}},\
  and\ \bibinfo {author} {\bibfnamefont {M.}~\bibnamefont {Tomamichel}},\
  }\bibfield  {title} {\bibinfo {title} {Quantum sphere-packing bounds with
  polynomial prefactors},\ }\href {https://doi.org/10.1109/TIT.2019.2891347}
  {\bibfield  {journal} {\bibinfo  {journal} {IEEE Transactions on Information
  Theory}\ }\textbf {\bibinfo {volume} {65}},\ \bibinfo {pages} {2872}
  (\bibinfo {year} {2019})},\ \Eprint {https://arxiv.org/abs/1704.05703}
  {arXiv:1704.05703} \BibitemShut {NoStop}%
\bibitem [{\citenamefont {Chubb}\ \emph {et~al.}(2017)\citenamefont {Chubb},
  \citenamefont {Tan},\ and\ \citenamefont {Tomamichel}}]{CTT2017}%
  \BibitemOpen
  \bibfield  {author} {\bibinfo {author} {\bibfnamefont {C.~T.}\ \bibnamefont
  {Chubb}}, \bibinfo {author} {\bibfnamefont {V.~Y.~F.}\ \bibnamefont {Tan}},\
  and\ \bibinfo {author} {\bibfnamefont {M.}~\bibnamefont {Tomamichel}},\
  }\bibfield  {title} {\bibinfo {title} {Moderate deviation analysis for
  classical communication over quantum channels},\ }\href
  {https://doi.org/10.1007/s00220-017-2971-1} {\bibfield  {journal} {\bibinfo
  {journal} {Communications in Mathematical Physics}\ }\textbf {\bibinfo
  {volume} {355}},\ \bibinfo {pages} {1283} (\bibinfo {year}
  {2017})}\BibitemShut {NoStop}%
\bibitem [{\citenamefont {Cheng}\ and\ \citenamefont {Hsieh}(2018)}]{Hao17}%
  \BibitemOpen
  \bibfield  {author} {\bibinfo {author} {\bibfnamefont {H.-C.}\ \bibnamefont
  {Cheng}}\ and\ \bibinfo {author} {\bibfnamefont {M.-H.}\ \bibnamefont
  {Hsieh}},\ }\bibfield  {title} {\bibinfo {title} {Moderate deviation analysis
  for classical-quantum channels and quantum hypothesis testing},\ }\href
  {https://doi.org/10.1109/TIT.2017.2781254} {\bibfield  {journal} {\bibinfo
  {journal} {IEEE Transactions on Information Theory}\ }\textbf {\bibinfo
  {volume} {64}},\ \bibinfo {pages} {1385} (\bibinfo {year} {2018})},\ \Eprint
  {https://arxiv.org/abs/1701.03195} {arXiv:1701.03195} \BibitemShut {NoStop}%
\bibitem [{\citenamefont {Wilde}\ \emph {et~al.}(2017)\citenamefont {Wilde},
  \citenamefont {Tomamichel},\ and\ \citenamefont {Berta}}]{WTB17}%
  \BibitemOpen
  \bibfield  {author} {\bibinfo {author} {\bibfnamefont {M.~M.}\ \bibnamefont
  {Wilde}}, \bibinfo {author} {\bibfnamefont {M.}~\bibnamefont {Tomamichel}},\
  and\ \bibinfo {author} {\bibfnamefont {M.}~\bibnamefont {Berta}},\ }\bibfield
   {title} {\bibinfo {title} {Converse bounds for private communication over
  quantum channels},\ }\href {https://doi.org/10.1109/TIT.2017.2648825}
  {\bibfield  {journal} {\bibinfo  {journal} {IEEE Transactions on Information
  Theory}\ }\textbf {\bibinfo {volume} {63}},\ \bibinfo {pages} {1792}
  (\bibinfo {year} {2017})}\BibitemShut {NoStop}%
\bibitem [{\citenamefont {Arunachalam}\ and\ \citenamefont
  {De~Wolf}(2018)}]{arunachalam2018optimal}%
  \BibitemOpen
  \bibfield  {author} {\bibinfo {author} {\bibfnamefont {S.}~\bibnamefont
  {Arunachalam}}\ and\ \bibinfo {author} {\bibfnamefont {R.}~\bibnamefont
  {De~Wolf}},\ }\bibfield  {title} {\bibinfo {title} {Optimal quantum sample
  complexity of learning algorithms},\ }\href
  {http://jmlr.org/papers/v19/18-195.html} {\bibfield  {journal} {\bibinfo
  {journal} {The Journal of Machine Learning Research}\ }\textbf {\bibinfo
  {volume} {19}},\ \bibinfo {pages} {2879} (\bibinfo {year}
  {2018})}\BibitemShut {NoStop}%
\bibitem [{\citenamefont {Canonne}(2022{\natexlab{a}})}]{Canonne22}%
  \BibitemOpen
  \bibfield  {author} {\bibinfo {author} {\bibfnamefont {C.~L.}\ \bibnamefont
  {Canonne}},\ }\bibfield  {title} {\bibinfo {title} {Topics and techniques in
  distribution testing: A biased but representative sample},\ }\href
  {https://doi.org/10.1561/0100000114} {\bibfield  {journal} {\bibinfo
  {journal} {Foundations and Trends in Communications and Information Theory}\
  }\textbf {\bibinfo {volume} {19}},\ \bibinfo {pages} {1032} (\bibinfo {year}
  {2022}{\natexlab{a}})}\BibitemShut {NoStop}%
\bibitem [{\citenamefont {Heinosaari}\ and\ \citenamefont
  {Ziman}(2011)}]{HeinosaariZ11}%
  \BibitemOpen
  \bibfield  {author} {\bibinfo {author} {\bibfnamefont {T.}~\bibnamefont
  {Heinosaari}}\ and\ \bibinfo {author} {\bibfnamefont {M.}~\bibnamefont
  {Ziman}},\ }\href {https://doi.org/10.1017/cbo9781139031103} {\emph {\bibinfo
  {title} {The Mathematical Language of Quantum Theory}}}\ (\bibinfo
  {publisher} {Cambridge University Press},\ \bibinfo {year}
  {2011})\BibitemShut {NoStop}%
\bibitem [{\citenamefont {Uhlmann}(1976)}]{Uhl76}%
  \BibitemOpen
  \bibfield  {author} {\bibinfo {author} {\bibfnamefont {A.}~\bibnamefont
  {Uhlmann}},\ }\bibfield  {title} {\bibinfo {title} {The ``transition
  probability'' in the state space of a {$*$}-algebra},\ }\href
  {https://doi.org/https://doi.org/10.1016/0034-4877(76)90060-4} {\bibfield
  {journal} {\bibinfo  {journal} {Reports on Mathematical Physics}\ }\textbf
  {\bibinfo {volume} {9}},\ \bibinfo {pages} {273} (\bibinfo {year}
  {1976})}\BibitemShut {NoStop}%
\bibitem [{\citenamefont {Holevo}(1972)}]{Holevo1972fid}%
  \BibitemOpen
  \bibfield  {author} {\bibinfo {author} {\bibfnamefont {A.~S.}\ \bibnamefont
  {Holevo}},\ }\bibfield  {title} {\bibinfo {title} {On quasiequivalence of
  locally normal states},\ }\href {https://doi.org/10.1007/BF01035528}
  {\bibfield  {journal} {\bibinfo  {journal} {Theoretical and Mathematical
  Physics}\ }\textbf {\bibinfo {volume} {13}},\ \bibinfo {pages} {1071}
  (\bibinfo {year} {1972})}\BibitemShut {NoStop}%
\bibitem [{\citenamefont {Helstrom}(1967{\natexlab{b}})}]{Hel67b}%
  \BibitemOpen
  \bibfield  {author} {\bibinfo {author} {\bibfnamefont {C.~W.}\ \bibnamefont
  {Helstrom}},\ }\bibfield  {title} {\bibinfo {title} {Minimum mean-squared
  error of estimates in quantum statistics},\ }\href
  {https://doi.org/10.1016/0375-9601(67)90366-0} {\bibfield  {journal}
  {\bibinfo  {journal} {Physics Letters A}\ }\textbf {\bibinfo {volume} {25}},\
  \bibinfo {pages} {101} (\bibinfo {year} {1967}{\natexlab{b}})}\BibitemShut
  {NoStop}%
\bibitem [{\citenamefont {Bures}(1969)}]{Bur69}%
  \BibitemOpen
  \bibfield  {author} {\bibinfo {author} {\bibfnamefont {D.}~\bibnamefont
  {Bures}},\ }\bibfield  {title} {\bibinfo {title} {An extension of
  {Kakutani's} theorem on infinite product measures to the tensor product of
  semifinite {$\omega^*$}-algebras},\ }\href
  {https://doi.org/10.1090/s0002-9947-1969-0236719-2} {\bibfield  {journal}
  {\bibinfo  {journal} {Transactions of the American Mathematical Society}\
  }\textbf {\bibinfo {volume} {135}},\ \bibinfo {pages} {199} (\bibinfo {year}
  {1969})}\BibitemShut {NoStop}%
\bibitem [{\citenamefont {Jenčová}(2004)}]{jencova2004}%
  \BibitemOpen
  \bibfield  {author} {\bibinfo {author} {\bibfnamefont {A.}~\bibnamefont
  {Jenčová}},\ }\bibfield  {title} {\bibinfo {title} {Geodesic distances on
  density matrices},\ }\href {https://doi.org/10.1063/1.1689000} {\bibfield
  {journal} {\bibinfo  {journal} {Journal of Mathematical Physics}\ }\textbf
  {\bibinfo {volume} {45}},\ \bibinfo {pages} {1787} (\bibinfo {year}
  {2004})}\BibitemShut {NoStop}%
\bibitem [{\citenamefont {Luo}\ and\ \citenamefont {Zhang}(2004)}]{LZ04}%
  \BibitemOpen
  \bibfield  {author} {\bibinfo {author} {\bibfnamefont {S.}~\bibnamefont
  {Luo}}\ and\ \bibinfo {author} {\bibfnamefont {Q.}~\bibnamefont {Zhang}},\
  }\bibfield  {title} {\bibinfo {title} {Informational distance on
  quantum-state space},\ }\href {https://doi.org/10.1103/physreva.69.032106}
  {\bibfield  {journal} {\bibinfo  {journal} {Physical Review A}\ }\textbf
  {\bibinfo {volume} {69}},\ \bibinfo {pages} {032106} (\bibinfo {year}
  {2004})}\BibitemShut {NoStop}%
\bibitem [{\citenamefont {Petz}(1986)}]{Pet86}%
  \BibitemOpen
  \bibfield  {author} {\bibinfo {author} {\bibfnamefont {D.}~\bibnamefont
  {Petz}},\ }\bibfield  {title} {\bibinfo {title} {Quasi-entropies for finite
  quantum systems},\ }\href {https://doi.org/10.1016/0034-4877(86)90067-4}
  {\bibfield  {journal} {\bibinfo  {journal} {Reports on Mathematical Physics}\
  }\textbf {\bibinfo {volume} {23}},\ \bibinfo {pages} {57} (\bibinfo {year}
  {1986})}\BibitemShut {NoStop}%
\bibitem [{\citenamefont {M{\"u}ller-Lennert}\ \emph
  {et~al.}(2013)\citenamefont {M{\"u}ller-Lennert}, \citenamefont {Dupuis},
  \citenamefont {Szehr}, \citenamefont {Fehr},\ and\ \citenamefont
  {Tomamichel}}]{MDS+13}%
  \BibitemOpen
  \bibfield  {author} {\bibinfo {author} {\bibfnamefont {M.}~\bibnamefont
  {M{\"u}ller-Lennert}}, \bibinfo {author} {\bibfnamefont {F.}~\bibnamefont
  {Dupuis}}, \bibinfo {author} {\bibfnamefont {O.}~\bibnamefont {Szehr}},
  \bibinfo {author} {\bibfnamefont {S.}~\bibnamefont {Fehr}},\ and\ \bibinfo
  {author} {\bibfnamefont {M.}~\bibnamefont {Tomamichel}},\ }\bibfield  {title}
  {\bibinfo {title} {On quantum {R{\'e}nyi} entropies: A new generalization and
  some properties},\ }\href {https://doi.org/10.1063/1.4838856} {\bibfield
  {journal} {\bibinfo  {journal} {Journal of Mathematical Physics}\ }\textbf
  {\bibinfo {volume} {54}},\ \bibinfo {pages} {122203} (\bibinfo {year}
  {2013})}\BibitemShut {NoStop}%
\bibitem [{\citenamefont {Wilde}\ \emph {et~al.}(2014)\citenamefont {Wilde},
  \citenamefont {Winter},\ and\ \citenamefont {Yang}}]{wilde_strong_2014}%
  \BibitemOpen
  \bibfield  {author} {\bibinfo {author} {\bibfnamefont {M.~M.}\ \bibnamefont
  {Wilde}}, \bibinfo {author} {\bibfnamefont {A.}~\bibnamefont {Winter}},\ and\
  \bibinfo {author} {\bibfnamefont {D.}~\bibnamefont {Yang}},\ }\bibfield
  {title} {\bibinfo {title} {Strong converse for the classical capacity of
  entanglement-breaking and {Hadamard} channels via a sandwiched {R{\'{e}}nyi}
  relative entropy},\ }\href {https://doi.org/10.1007/s00220-014-2122-x}
  {\bibfield  {journal} {\bibinfo  {journal} {Communications in Mathematical
  Physics}\ }\textbf {\bibinfo {volume} {331}},\ \bibinfo {pages} {593}
  (\bibinfo {year} {2014})}\BibitemShut {NoStop}%
\bibitem [{\citenamefont {Frank}\ and\ \citenamefont {Lieb}(2013)}]{FL13}%
  \BibitemOpen
  \bibfield  {author} {\bibinfo {author} {\bibfnamefont {R.~L.}\ \bibnamefont
  {Frank}}\ and\ \bibinfo {author} {\bibfnamefont {E.~H.}\ \bibnamefont
  {Lieb}},\ }\bibfield  {title} {\bibinfo {title} {Monotonicity of a relative
  {R{\'e}nyi} entropy},\ }\href {https://doi.org/10.1063/1.4838835} {\bibfield
  {journal} {\bibinfo  {journal} {Journal of Mathematical Physics}\ }\textbf
  {\bibinfo {volume} {54}},\ \bibinfo {pages} {122201} (\bibinfo {year}
  {2013})}\BibitemShut {NoStop}%
\bibitem [{\citenamefont {Wilde}(2018)}]{Wilde_2018}%
  \BibitemOpen
  \bibfield  {author} {\bibinfo {author} {\bibfnamefont {M.~M.}\ \bibnamefont
  {Wilde}},\ }\bibfield  {title} {\bibinfo {title} {Optimized quantum
  $f$-divergences and data processing},\ }\href
  {https://doi.org/10.1088/1751-8121/aad5a1} {\bibfield  {journal} {\bibinfo
  {journal} {Journal of Physics A: Mathematical and Theoretical}\ }\textbf
  {\bibinfo {volume} {51}},\ \bibinfo {pages} {374002} (\bibinfo {year}
  {2018})}\BibitemShut {NoStop}%
\bibitem [{\citenamefont {Umegaki}(1962)}]{Ume62}%
  \BibitemOpen
  \bibfield  {author} {\bibinfo {author} {\bibfnamefont {H.}~\bibnamefont
  {Umegaki}},\ }\bibfield  {title} {\bibinfo {title} {Conditional expectation
  in an operator algebra. {IV}. {E}ntropy and information},\ }\href
  {https://doi.org/10.2996/kmj/1138844604} {\bibfield  {journal} {\bibinfo
  {journal} {Kodai Mathematical Seminar Reports}\ }\textbf {\bibinfo {volume}
  {14}},\ \bibinfo {pages} {59} (\bibinfo {year} {1962})}\BibitemShut {NoStop}%
\bibitem [{\citenamefont {Lindblad}(1975)}]{lindblad1975completely}%
  \BibitemOpen
  \bibfield  {author} {\bibinfo {author} {\bibfnamefont {G.}~\bibnamefont
  {Lindblad}},\ }\bibfield  {title} {\bibinfo {title} {Completely positive maps
  and entropy inequalities},\ }\href {https://doi.org/10.1007/BF01609396}
  {\bibfield  {journal} {\bibinfo  {journal} {Communications in Mathematical
  Physics}\ }\textbf {\bibinfo {volume} {40}},\ \bibinfo {pages} {147}
  (\bibinfo {year} {1975})}\BibitemShut {NoStop}%
\bibitem [{\citenamefont {R{\'e}nyi}(1962)}]{Ren62}%
  \BibitemOpen
  \bibfield  {author} {\bibinfo {author} {\bibfnamefont {A.}~\bibnamefont
  {R{\'e}nyi}},\ }\bibfield  {title} {\bibinfo {title} {On measures of entropy
  and information},\ }\href
  {https://projecteuclid.org/ebooks/berkeley-symposium-on-mathematical-statistics-and-probability/On-Measures-of-Entropy-and-Information/chapter/On-Measures-of-Entropy-and-Information/bsmsp/1200512181}
  {\bibfield  {journal} {\bibinfo  {journal} {Proc. 4th Berkeley Symp. on Math.
  Statist. Probability}\ }\textbf {\bibinfo {volume} {1}},\ \bibinfo {pages}
  {547} (\bibinfo {year} {1962})}\BibitemShut {NoStop}%
\bibitem [{\citenamefont {Kullback}\ and\ \citenamefont
  {Leibler}(1951)}]{KLdiv51}%
  \BibitemOpen
  \bibfield  {author} {\bibinfo {author} {\bibfnamefont {S.}~\bibnamefont
  {Kullback}}\ and\ \bibinfo {author} {\bibfnamefont {R.~A.}\ \bibnamefont
  {Leibler}},\ }\bibfield  {title} {\bibinfo {title} {On information and
  sufficiency},\ }\href {https://doi.org/10.1214/aoms/1177729694} {\bibfield
  {journal} {\bibinfo  {journal} {The Annals of Mathematical Statistics}\
  }\textbf {\bibinfo {volume} {22}},\ \bibinfo {pages} {79 } (\bibinfo {year}
  {1951})}\BibitemShut {NoStop}%
\bibitem [{\citenamefont {Hiai}(1994)}]{Hia94}%
  \BibitemOpen
  \bibfield  {author} {\bibinfo {author} {\bibfnamefont {F.}~\bibnamefont
  {Hiai}},\ }\bibfield  {title} {\bibinfo {title} {Equality cases in matrix
  norm inequalities of {Golden}--{Thompson} type},\ }\href
  {https://doi.org/10.1080/03081089408818297} {\bibfield  {journal} {\bibinfo
  {journal} {Linear and Multilinear Algebra}\ }\textbf {\bibinfo {volume}
  {36}},\ \bibinfo {pages} {239} (\bibinfo {year} {1994})}\BibitemShut
  {NoStop}%
\bibitem [{\citenamefont {Buscemi}(2012)}]{Bus12}%
  \BibitemOpen
  \bibfield  {author} {\bibinfo {author} {\bibfnamefont {F.}~\bibnamefont
  {Buscemi}},\ }\bibfield  {title} {\bibinfo {title} {Comparison of quantum
  statistical models: Equivalent conditions for sufficiency},\ }\href
  {https://doi.org/10.1007/s00220-012-1421-3} {\bibfield  {journal} {\bibinfo
  {journal} {Communications in Mathematical Physics}\ }\textbf {\bibinfo
  {volume} {310}},\ \bibinfo {pages} {625} (\bibinfo {year}
  {2012})}\BibitemShut {NoStop}%
\bibitem [{\citenamefont {Boyd}\ and\ \citenamefont
  {Vandenberghe}(2004)}]{BV04}%
  \BibitemOpen
  \bibfield  {author} {\bibinfo {author} {\bibfnamefont {S.}~\bibnamefont
  {Boyd}}\ and\ \bibinfo {author} {\bibfnamefont {L.}~\bibnamefont
  {Vandenberghe}},\ }\href {https://doi.org/10.1017/CBO9780511804441} {\emph
  {\bibinfo {title} {Convex Optimization}}}\ (\bibinfo  {publisher} {Cambridge
  University Press},\ \bibinfo {year} {2004})\BibitemShut {NoStop}%
\bibitem [{\citenamefont {Potra}\ and\ \citenamefont {Wright}(2000)}]{PW2000}%
  \BibitemOpen
  \bibfield  {author} {\bibinfo {author} {\bibfnamefont {F.~A.}\ \bibnamefont
  {Potra}}\ and\ \bibinfo {author} {\bibfnamefont {S.~J.}\ \bibnamefont
  {Wright}},\ }\bibfield  {title} {\bibinfo {title} {Interior-point methods},\
  }\href {https://doi.org/10.1016/S0377-0427(00)00433-7} {\bibfield  {journal}
  {\bibinfo  {journal} {Journal of Computational and Applied Mathematics}\
  }\textbf {\bibinfo {volume} {124}},\ \bibinfo {pages} {281} (\bibinfo {year}
  {2000})}\BibitemShut {NoStop}%
\bibitem [{\citenamefont {Arora}\ \emph {et~al.}(2005)\citenamefont {Arora},
  \citenamefont {Hazan},\ and\ \citenamefont {Kale}}]{Arora2005}%
  \BibitemOpen
  \bibfield  {author} {\bibinfo {author} {\bibfnamefont {S.}~\bibnamefont
  {Arora}}, \bibinfo {author} {\bibfnamefont {E.}~\bibnamefont {Hazan}},\ and\
  \bibinfo {author} {\bibfnamefont {S.}~\bibnamefont {Kale}},\ }\bibfield
  {title} {\bibinfo {title} {Fast algorithms for approximate semidefinite
  programming using the multiplicative weights update method},\ }in\ \href
  {https://doi.org/10.1109/SFCS.2005.35} {\emph {\bibinfo {booktitle} {46th
  Annual IEEE Symposium on Foundations of Computer Science (FOCS'05)}}}\
  (\bibinfo {year} {2005})\ pp.\ \bibinfo {pages} {339--348}\BibitemShut
  {NoStop}%
\bibitem [{\citenamefont {Arora}\ \emph {et~al.}(2012)\citenamefont {Arora},
  \citenamefont {Hazan},\ and\ \citenamefont {Kale}}]{Arora2012}%
  \BibitemOpen
  \bibfield  {author} {\bibinfo {author} {\bibfnamefont {S.}~\bibnamefont
  {Arora}}, \bibinfo {author} {\bibfnamefont {E.}~\bibnamefont {Hazan}},\ and\
  \bibinfo {author} {\bibfnamefont {S.}~\bibnamefont {Kale}},\ }\bibfield
  {title} {\bibinfo {title} {The multiplicative weights update method: A
  meta-algorithm and applications},\ }\href
  {https://doi.org/10.4086/toc.2012.v008a006} {\bibfield  {journal} {\bibinfo
  {journal} {Theory of Computing}\ }\textbf {\bibinfo {volume} {8}},\ \bibinfo
  {pages} {121} (\bibinfo {year} {2012})}\BibitemShut {NoStop}%
\bibitem [{\citenamefont {Lee}\ \emph {et~al.}(2015)\citenamefont {Lee},
  \citenamefont {Sidford},\ and\ \citenamefont {Wong}}]{Lee2015}%
  \BibitemOpen
  \bibfield  {author} {\bibinfo {author} {\bibfnamefont {Y.~T.}\ \bibnamefont
  {Lee}}, \bibinfo {author} {\bibfnamefont {A.}~\bibnamefont {Sidford}},\ and\
  \bibinfo {author} {\bibfnamefont {S.~C.-W.}\ \bibnamefont {Wong}},\
  }\bibfield  {title} {\bibinfo {title} {A faster cutting plane method and its
  implications for combinatorial and convex optimization},\ }in\ \href
  {https://doi.org/10.1109/FOCS.2015.68} {\emph {\bibinfo {booktitle} {2015
  IEEE 56th Annual Symposium on Foundations of Computer Science}}}\ (\bibinfo
  {year} {2015})\ pp.\ \bibinfo {pages} {1049--1065}\BibitemShut {NoStop}%
\bibitem [{\citenamefont {Fawzi}\ \emph {et~al.}(2022)\citenamefont {Fawzi},
  \citenamefont {Shayeghi},\ and\ \citenamefont {Ta}}]{FawziIEEE22}%
  \BibitemOpen
  \bibfield  {author} {\bibinfo {author} {\bibfnamefont {O.}~\bibnamefont
  {Fawzi}}, \bibinfo {author} {\bibfnamefont {A.}~\bibnamefont {Shayeghi}},\
  and\ \bibinfo {author} {\bibfnamefont {H.}~\bibnamefont {Ta}},\ }\bibfield
  {title} {\bibinfo {title} {A hierarchy of efficient bounds on quantum
  capacities exploiting symmetry},\ }\href
  {https://doi.org/10.1109/TIT.2022.3182101} {\bibfield  {journal} {\bibinfo
  {journal} {IEEE Transactions on Information Theory}\ }\textbf {\bibinfo
  {volume} {68}},\ \bibinfo {pages} {7346} (\bibinfo {year}
  {2022})}\BibitemShut {NoStop}%
\bibitem [{\citenamefont {Bergh}\ \emph {et~al.}(2024)\citenamefont {Bergh},
  \citenamefont {Datta}, \citenamefont {Salzmann},\ and\ \citenamefont
  {Wilde}}]{BDS+24}%
  \BibitemOpen
  \bibfield  {author} {\bibinfo {author} {\bibfnamefont {B.}~\bibnamefont
  {Bergh}}, \bibinfo {author} {\bibfnamefont {N.}~\bibnamefont {Datta}},
  \bibinfo {author} {\bibfnamefont {R.}~\bibnamefont {Salzmann}},\ and\
  \bibinfo {author} {\bibfnamefont {M.~M.}\ \bibnamefont {Wilde}},\ }\bibfield
  {title} {\bibinfo {title} {Parallelization of adaptive quantum channel
  discrimination in the non-asymptotic regime},\ }\href
  {https://doi.org/10.1109/TIT.2024.3355929} {\bibfield  {journal} {\bibinfo
  {journal} {IEEE Transactions on Information Theory}\ }\textbf {\bibinfo
  {volume} {70}},\ \bibinfo {pages} {2617} (\bibinfo {year}
  {2024})}\BibitemShut {NoStop}%
\bibitem [{\citenamefont {Bar-Yossef}\ and\ \citenamefont
  {Papadimitriou}(2002)}]{thesis_SC_lower}%
  \BibitemOpen
  \bibfield  {author} {\bibinfo {author} {\bibfnamefont {Z.}~\bibnamefont
  {Bar-Yossef}}\ and\ \bibinfo {author} {\bibfnamefont {C.~H.}\ \bibnamefont
  {Papadimitriou}},\ }\emph {\bibinfo {title} {The complexity of massive data
  set computations}},\ \href@noop {} {Ph.D. thesis},\ \bibinfo  {school}
  {University of California at Berkeley, Computer Science} (\bibinfo {year}
  {2002})\BibitemShut {NoStop}%
\bibitem [{\citenamefont {Canonne}(2022{\natexlab{b}})}]{Cann_PT}%
  \BibitemOpen
  \bibfield  {author} {\bibinfo {author} {\bibfnamefont {C.~L.}\ \bibnamefont
  {Canonne}},\ }\bibfield  {title} {\bibinfo {title} {Topics and techniques in
  distribution testing: A biased but representative sample},\ }\href
  {https://doi.org/10.1561/0100000114} {\bibfield  {journal} {\bibinfo
  {journal} {Foundations and Trends® in Communications and Information
  Theory}\ }\textbf {\bibinfo {volume} {19}},\ \bibinfo {pages} {1032}
  (\bibinfo {year} {2022}{\natexlab{b}})}\BibitemShut {NoStop}%
\bibitem [{\citenamefont {Goldreich}(2017)}]{Goldreich_2017}%
  \BibitemOpen
  \bibfield  {author} {\bibinfo {author} {\bibfnamefont {O.}~\bibnamefont
  {Goldreich}},\ }\href@noop {} {\emph {\bibinfo {title} {Introduction to
  Property Testing}}}\ (\bibinfo  {publisher} {Cambridge University Press},\
  \bibinfo {year} {2017})\BibitemShut {NoStop}%
\bibitem [{\citenamefont {Acharya}\ \emph
  {et~al.}(2020{\natexlab{a}})\citenamefont {Acharya}, \citenamefont
  {Canonne},\ and\ \citenamefont {Tyagi}}]{Inference_SC1}%
  \BibitemOpen
  \bibfield  {author} {\bibinfo {author} {\bibfnamefont {J.}~\bibnamefont
  {Acharya}}, \bibinfo {author} {\bibfnamefont {C.~L.}\ \bibnamefont
  {Canonne}},\ and\ \bibinfo {author} {\bibfnamefont {H.}~\bibnamefont
  {Tyagi}},\ }\bibfield  {title} {\bibinfo {title} {Inference under information
  constraints i: Lower bounds from chi-square contraction},\ }\href
  {https://doi.org/10.1109/TIT.2020.3028440} {\bibfield  {journal} {\bibinfo
  {journal} {IEEE Transactions on Information Theory}\ }\textbf {\bibinfo
  {volume} {66}},\ \bibinfo {pages} {7835} (\bibinfo {year}
  {2020}{\natexlab{a}})}\BibitemShut {NoStop}%
\bibitem [{\citenamefont {Acharya}\ \emph
  {et~al.}(2020{\natexlab{b}})\citenamefont {Acharya}, \citenamefont
  {Canonne},\ and\ \citenamefont {Tyagi}}]{inference_SC2}%
  \BibitemOpen
  \bibfield  {author} {\bibinfo {author} {\bibfnamefont {J.}~\bibnamefont
  {Acharya}}, \bibinfo {author} {\bibfnamefont {C.~L.}\ \bibnamefont
  {Canonne}},\ and\ \bibinfo {author} {\bibfnamefont {H.}~\bibnamefont
  {Tyagi}},\ }\bibfield  {title} {\bibinfo {title} {Inference under information
  constraints ii: Communication constraints and shared randomness},\ }\href
  {https://doi.org/10.1109/TIT.2020.3028439} {\bibfield  {journal} {\bibinfo
  {journal} {IEEE Transactions on Information Theory}\ }\textbf {\bibinfo
  {volume} {66}},\ \bibinfo {pages} {7856} (\bibinfo {year}
  {2020}{\natexlab{b}})}\BibitemShut {NoStop}%
\bibitem [{\citenamefont {Chen}\ \emph {et~al.}(2023)\citenamefont {Chen},
  \citenamefont {Kairouz},\ and\ \citenamefont {Özgür}}]{com_priv_accur23}%
  \BibitemOpen
  \bibfield  {author} {\bibinfo {author} {\bibfnamefont {W.-N.}\ \bibnamefont
  {Chen}}, \bibinfo {author} {\bibfnamefont {P.}~\bibnamefont {Kairouz}},\ and\
  \bibinfo {author} {\bibfnamefont {A.}~\bibnamefont {Özgür}},\ }\bibfield
  {title} {\bibinfo {title} {Breaking the communication-privacy-accuracy
  trilemma},\ }\href {https://doi.org/10.1109/TIT.2022.3218772} {\bibfield
  {journal} {\bibinfo  {journal} {IEEE Transactions on Information Theory}\
  }\textbf {\bibinfo {volume} {69}},\ \bibinfo {pages} {1261} (\bibinfo {year}
  {2023})}\BibitemShut {NoStop}%
\bibitem [{\citenamefont {Canonne}\ \emph {et~al.}(2019)\citenamefont
  {Canonne}, \citenamefont {Kamath}, \citenamefont {McMillan}, \citenamefont
  {Smith},\ and\ \citenamefont {Ullman}}]{Structu_Optimal_Tests19}%
  \BibitemOpen
  \bibfield  {author} {\bibinfo {author} {\bibfnamefont {C.~L.}\ \bibnamefont
  {Canonne}}, \bibinfo {author} {\bibfnamefont {G.}~\bibnamefont {Kamath}},
  \bibinfo {author} {\bibfnamefont {A.}~\bibnamefont {McMillan}}, \bibinfo
  {author} {\bibfnamefont {A.}~\bibnamefont {Smith}},\ and\ \bibinfo {author}
  {\bibfnamefont {J.}~\bibnamefont {Ullman}},\ }\bibfield  {title} {\bibinfo
  {title} {The structure of optimal private tests for simple hypotheses},\ }in\
  \href {https://doi.org/10.1145/3313276.3316336} {\emph {\bibinfo {booktitle}
  {Proceedings of the 51st Annual ACM SIGACT Symposium on Theory of
  Computing}}},\ \bibinfo {series and number} {STOC 2019}\ (\bibinfo
  {publisher} {Association for Computing Machinery},\ \bibinfo {address} {New
  York, NY, USA},\ \bibinfo {year} {2019})\ p.\ \bibinfo {pages}
  {310–321}\BibitemShut {NoStop}%
\bibitem [{\citenamefont {Bun}\ \emph {et~al.}(2021)\citenamefont {Bun},
  \citenamefont {Kamath}, \citenamefont {Steinke},\ and\ \citenamefont
  {Wu}}]{Private_H_Selec}%
  \BibitemOpen
  \bibfield  {author} {\bibinfo {author} {\bibfnamefont {M.}~\bibnamefont
  {Bun}}, \bibinfo {author} {\bibfnamefont {G.}~\bibnamefont {Kamath}},
  \bibinfo {author} {\bibfnamefont {T.}~\bibnamefont {Steinke}},\ and\ \bibinfo
  {author} {\bibfnamefont {Z.~S.}\ \bibnamefont {Wu}},\ }\bibfield  {title}
  {\bibinfo {title} {Private hypothesis selection},\ }\href
  {https://doi.org/10.1109/TIT.2021.3049802} {\bibfield  {journal} {\bibinfo
  {journal} {IEEE Transactions on Information Theory}\ }\textbf {\bibinfo
  {volume} {67}},\ \bibinfo {pages} {1981} (\bibinfo {year}
  {2021})}\BibitemShut {NoStop}%
\bibitem [{\citenamefont {Gopi}\ \emph {et~al.}(2020)\citenamefont {Gopi},
  \citenamefont {Kamath}, \citenamefont {Kulkarni}, \citenamefont {Nikolov},
  \citenamefont {Wu},\ and\ \citenamefont {Zhang}}]{Locally_private_HS20}%
  \BibitemOpen
  \bibfield  {author} {\bibinfo {author} {\bibfnamefont {S.}~\bibnamefont
  {Gopi}}, \bibinfo {author} {\bibfnamefont {G.}~\bibnamefont {Kamath}},
  \bibinfo {author} {\bibfnamefont {J.}~\bibnamefont {Kulkarni}}, \bibinfo
  {author} {\bibfnamefont {A.}~\bibnamefont {Nikolov}}, \bibinfo {author}
  {\bibfnamefont {Z.~S.}\ \bibnamefont {Wu}},\ and\ \bibinfo {author}
  {\bibfnamefont {H.}~\bibnamefont {Zhang}},\ }\bibfield  {title} {\bibinfo
  {title} {Locally private hypothesis selection},\ }in\ \href
  {https://proceedings.mlr.press/v125/gopi20a.html} {\emph {\bibinfo
  {booktitle} {Proceedings of Thirty Third Conference on Learning Theory}}},\
  \bibinfo {series} {Proceedings of Machine Learning Research}, Vol.\ \bibinfo
  {volume} {125},\ \bibinfo {editor} {edited by\ \bibinfo {editor}
  {\bibfnamefont {J.}~\bibnamefont {Abernethy}}\ and\ \bibinfo {editor}
  {\bibfnamefont {S.}~\bibnamefont {Agarwal}}}\ (\bibinfo  {publisher} {PMLR},\
  \bibinfo {year} {2020})\ pp.\ \bibinfo {pages} {1785--1816}\BibitemShut
  {NoStop}%
\bibitem [{\citenamefont {Asoodeh}\ and\ \citenamefont
  {Zhang}(2024)}]{Contraction_local_new24}%
  \BibitemOpen
  \bibfield  {author} {\bibinfo {author} {\bibfnamefont {S.}~\bibnamefont
  {Asoodeh}}\ and\ \bibinfo {author} {\bibfnamefont {H.}~\bibnamefont
  {Zhang}},\ }\bibfield  {title} {\bibinfo {title} {Contraction of locally
  differentially private mechanisms},\ }\href
  {https://doi.org/10.1109/JSAIT.2024.3397305} {\bibfield  {journal} {\bibinfo
  {journal} {IEEE Journal on Selected Areas in Information Theory}\ ,\ \bibinfo
  {pages} {1}} (\bibinfo {year} {2024})},\ \bibinfo {note}
  {arXiv:2210.13386}\BibitemShut {NoStop}%
\bibitem [{\citenamefont {Pensia}\ \emph {et~al.}(2023)\citenamefont {Pensia},
  \citenamefont {Jog},\ and\ \citenamefont
  {Loh}}]{pensia2023communicationconstrained}%
  \BibitemOpen
  \bibfield  {author} {\bibinfo {author} {\bibfnamefont {A.}~\bibnamefont
  {Pensia}}, \bibinfo {author} {\bibfnamefont {V.}~\bibnamefont {Jog}},\ and\
  \bibinfo {author} {\bibfnamefont {P.-L.}\ \bibnamefont {Loh}},\ }\href@noop
  {} {\bibinfo {title} {Communication-constrained hypothesis testing:
  Optimality, robustness, and reverse data processing inequalities}} (\bibinfo
  {year} {2023}),\ \Eprint {https://arxiv.org/abs/2206.02765} {arXiv:2206.02765
  [math.ST]} \BibitemShut {NoStop}%
\bibitem [{\citenamefont {Hardy}(2001)}]{hardy2001quantum}%
  \BibitemOpen
  \bibfield  {author} {\bibinfo {author} {\bibfnamefont {L.}~\bibnamefont
  {Hardy}},\ }\href@noop {} {\bibinfo {title} {Quantum theory from five
  reasonable axioms}} (\bibinfo {year} {2001}),\ \Eprint
  {https://arxiv.org/abs/quant-ph/0101012} {arXiv:quant-ph/0101012}
  \BibitemShut {NoStop}%
\bibitem [{\citenamefont {Hardy}(2002)}]{Hardy2002}%
  \BibitemOpen
  \bibfield  {author} {\bibinfo {author} {\bibfnamefont {L.}~\bibnamefont
  {Hardy}},\ }\bibinfo {title} {Why quantum theory?},\ in\ \href
  {https://doi.org/10.1007/978-94-010-0385-8_4} {\emph {\bibinfo {booktitle}
  {Non-locality and Modality}}},\ \bibinfo {editor} {edited by\ \bibinfo
  {editor} {\bibfnamefont {T.}~\bibnamefont {Placek}}\ and\ \bibinfo {editor}
  {\bibfnamefont {J.}~\bibnamefont {Butterfield}}}\ (\bibinfo  {publisher}
  {Springer Netherlands},\ \bibinfo {address} {Dordrecht},\ \bibinfo {year}
  {2002})\ pp.\ \bibinfo {pages} {61--73}\BibitemShut {NoStop}%
\bibitem [{\citenamefont {Kimmel}\ \emph {et~al.}(2017)\citenamefont {Kimmel},
  \citenamefont {Lin}, \citenamefont {Low}, \citenamefont {Ozols},\ and\
  \citenamefont {Yoder}}]{kimmel2017hamiltonian}%
  \BibitemOpen
  \bibfield  {author} {\bibinfo {author} {\bibfnamefont {S.}~\bibnamefont
  {Kimmel}}, \bibinfo {author} {\bibfnamefont {C.~Y.-Y.}\ \bibnamefont {Lin}},
  \bibinfo {author} {\bibfnamefont {G.~H.}\ \bibnamefont {Low}}, \bibinfo
  {author} {\bibfnamefont {M.}~\bibnamefont {Ozols}},\ and\ \bibinfo {author}
  {\bibfnamefont {T.~J.}\ \bibnamefont {Yoder}},\ }\bibfield  {title} {\bibinfo
  {title} {Hamiltonian simulation with optimal sample complexity},\ }\href
  {https://doi.org/10.1038/s41534-017-0013-7} {\bibfield  {journal} {\bibinfo
  {journal} {npj Quantum Information}\ }\textbf {\bibinfo {volume} {3}},\
  \bibinfo {pages} {13} (\bibinfo {year} {2017})}\BibitemShut {NoStop}%
\bibitem [{\citenamefont {Mishra}\ \emph {et~al.}(2023)\citenamefont {Mishra},
  \citenamefont {Nussbaum},\ and\ \citenamefont {Wilde}}]{mishra2023optimal}%
  \BibitemOpen
  \bibfield  {author} {\bibinfo {author} {\bibfnamefont {H.~K.}\ \bibnamefont
  {Mishra}}, \bibinfo {author} {\bibfnamefont {M.}~\bibnamefont {Nussbaum}},\
  and\ \bibinfo {author} {\bibfnamefont {M.~M.}\ \bibnamefont {Wilde}},\
  }\href@noop {} {\bibinfo {title} {On the optimal error exponents for
  classical and quantum antidistinguishability}} (\bibinfo {year} {2023}),\
  \Eprint {https://arxiv.org/abs/2309.03723} {arXiv:2309.03723} \BibitemShut
  {NoStop}%
\bibitem [{\citenamefont {Bar-Yossef}(2002)}]{bar2002complexity}%
  \BibitemOpen
  \bibfield  {author} {\bibinfo {author} {\bibfnamefont {Z.}~\bibnamefont
  {Bar-Yossef}},\ }\emph {\bibinfo {title} {The complexity of massive data set
  computations}},\ \href@noop {} {Ph.D. thesis},\ \bibinfo  {school}
  {University of California, Berkeley} (\bibinfo {year} {2002}),\ \bibinfo
  {note} {\url{https://www.proquest.com/docview/304791145}}\BibitemShut
  {NoStop}%
\bibitem [{\citenamefont {Canonne}(2017)}]{canonne17note}%
  \BibitemOpen
  \bibfield  {author} {\bibinfo {author} {\bibfnamefont {C.~L.}\ \bibnamefont
  {Canonne}},\ }\href@noop {} {\bibinfo {title} {A short note on distinguishing
  discrete distributions}} (\bibinfo {year} {2017}),\ \bibinfo {note}
  {\url{https://github.com/ccanonne/probabilitydistributiontoolbox/blob/master/testing.pdf}}\BibitemShut
  {NoStop}%
\bibitem [{\citenamefont {Fuchs}\ and\ \citenamefont
  {Caves}(1995)}]{Fuchs1995}%
  \BibitemOpen
  \bibfield  {author} {\bibinfo {author} {\bibfnamefont {C.~A.}\ \bibnamefont
  {Fuchs}}\ and\ \bibinfo {author} {\bibfnamefont {C.~M.}\ \bibnamefont
  {Caves}},\ }\bibfield  {title} {\bibinfo {title} {Mathematical techniques for
  quantum communication theory},\ }\href {https://doi.org/10.1007/BF02228997}
  {\bibfield  {journal} {\bibinfo  {journal} {Open Systems \& Information
  Dynamics}\ }\textbf {\bibinfo {volume} {3}},\ \bibinfo {pages} {345}
  (\bibinfo {year} {1995})}\BibitemShut {NoStop}%
\bibitem [{\citenamefont {Audenaert}\ and\ \citenamefont {Datta}(2015)}]{AD15}%
  \BibitemOpen
  \bibfield  {author} {\bibinfo {author} {\bibfnamefont {K.~M.~R.}\
  \bibnamefont {Audenaert}}\ and\ \bibinfo {author} {\bibfnamefont
  {N.}~\bibnamefont {Datta}},\ }\bibfield  {title} {\bibinfo {title}
  {{$\alpha$}-{$z$}-{R{\'{e}}nyi} relative entropies},\ }\href
  {https://doi.org/10.1063/1.4906367} {\bibfield  {journal} {\bibinfo
  {journal} {Journal of Mathematical Physics}\ }\textbf {\bibinfo {volume}
  {56}},\ \bibinfo {pages} {022202} (\bibinfo {year} {2015})}\BibitemShut
  {NoStop}%
\bibitem [{\citenamefont {Zhang}(2020)}]{Z20}%
  \BibitemOpen
  \bibfield  {author} {\bibinfo {author} {\bibfnamefont {H.}~\bibnamefont
  {Zhang}},\ }\bibfield  {title} {\bibinfo {title} {From {Wigner-Yanase-Dyson}
  conjecture to {Carlen-Frank-Lieb} conjecture},\ }\href
  {https://doi.org/https://doi.org/10.1016/j.aim.2020.107053} {\bibfield
  {journal} {\bibinfo  {journal} {Advances in Mathematics}\ }\textbf {\bibinfo
  {volume} {365}},\ \bibinfo {pages} {107053} (\bibinfo {year} {2020})},\
  \Eprint {https://arxiv.org/abs/1811.01205} {arXiv:1811.01205} \BibitemShut
  {NoStop}%
\bibitem [{\citenamefont {Lin}\ and\ \citenamefont {Tomamichel}(2015)}]{LT15}%
  \BibitemOpen
  \bibfield  {author} {\bibinfo {author} {\bibfnamefont {S.~M.}\ \bibnamefont
  {Lin}}\ and\ \bibinfo {author} {\bibfnamefont {M.}~\bibnamefont
  {Tomamichel}},\ }\bibfield  {title} {\bibinfo {title} {Investigating
  properties of a family of quantum {R}{\'{e}}nyi divergences},\ }\href
  {https://doi.org/10.1007/s11128-015-0935-y} {\bibfield  {journal} {\bibinfo
  {journal} {Quantum Information Processing}\ }\textbf {\bibinfo {volume}
  {14}},\ \bibinfo {pages} {1501} (\bibinfo {year} {2015})}\BibitemShut
  {NoStop}%
\bibitem [{\citenamefont {Belavkin}(1975)}]{Bel75}%
  \BibitemOpen
  \bibfield  {author} {\bibinfo {author} {\bibfnamefont {V.~P.}\ \bibnamefont
  {Belavkin}},\ }\bibfield  {title} {\bibinfo {title} {Optimal multiple quantum
  statistical hypothesis testing},\ }\href
  {https://doi.org/10.1080/17442507508833114} {\bibfield  {journal} {\bibinfo
  {journal} {Stochastics}\ }\textbf {\bibinfo {volume} {1}},\ \bibinfo {pages}
  {315} (\bibinfo {year} {1975})}\BibitemShut {NoStop}%
\bibitem [{\citenamefont {Hausladen}\ and\ \citenamefont
  {Wootters}(1994)}]{HW94}%
  \BibitemOpen
  \bibfield  {author} {\bibinfo {author} {\bibfnamefont {P.}~\bibnamefont
  {Hausladen}}\ and\ \bibinfo {author} {\bibfnamefont {W.~K.}\ \bibnamefont
  {Wootters}},\ }\bibfield  {title} {\bibinfo {title} {A `pretty good'
  measurement for distinguishing quantum states},\ }\href
  {https://doi.org/10.1080/09500349414552221} {\bibfield  {journal} {\bibinfo
  {journal} {Journal of Modern Optics}\ }\textbf {\bibinfo {volume} {41}},\
  \bibinfo {pages} {2385} (\bibinfo {year} {1994})}\BibitemShut {NoStop}%
\bibitem [{\citenamefont {Mosonyi}\ and\ \citenamefont {Ogawa}(2017)}]{MO17}%
  \BibitemOpen
  \bibfield  {author} {\bibinfo {author} {\bibfnamefont {M.}~\bibnamefont
  {Mosonyi}}\ and\ \bibinfo {author} {\bibfnamefont {T.}~\bibnamefont
  {Ogawa}},\ }\bibfield  {title} {\bibinfo {title} {Strong converse exponent
  for classical-quantum channel coding},\ }\href
  {https://doi.org/10.1007/s00220-017-2928-4} {\bibfield  {journal} {\bibinfo
  {journal} {Communications in Mathematical Physics}\ }\textbf {\bibinfo
  {volume} {355}},\ \bibinfo {pages} {373} (\bibinfo {year}
  {2017})}\BibitemShut {NoStop}%
\bibitem [{\citenamefont {Datta}(2009)}]{Dat09}%
  \BibitemOpen
  \bibfield  {author} {\bibinfo {author} {\bibfnamefont {N.}~\bibnamefont
  {Datta}},\ }\bibfield  {title} {\bibinfo {title} {Min- and max-relative
  entropies and a new entanglement monotone},\ }\href
  {https://doi.org/10.1109/tit.2009.2018325} {\bibfield  {journal} {\bibinfo
  {journal} {IEEE Transactions on Information Theory}\ }\textbf {\bibinfo
  {volume} {55}},\ \bibinfo {pages} {2816} (\bibinfo {year}
  {2009})}\BibitemShut {NoStop}%
\bibitem [{\citenamefont {Gily{\'{e}}n}\ \emph {et~al.}(2022)\citenamefont
  {Gily{\'{e}}n}, \citenamefont {Lloyd}, \citenamefont {Marvian}, \citenamefont
  {Quek},\ and\ \citenamefont {Wilde}}]{GLM+22}%
  \BibitemOpen
  \bibfield  {author} {\bibinfo {author} {\bibfnamefont {A.}~\bibnamefont
  {Gily{\'{e}}n}}, \bibinfo {author} {\bibfnamefont {S.}~\bibnamefont {Lloyd}},
  \bibinfo {author} {\bibfnamefont {I.}~\bibnamefont {Marvian}}, \bibinfo
  {author} {\bibfnamefont {Y.}~\bibnamefont {Quek}},\ and\ \bibinfo {author}
  {\bibfnamefont {M.~M.}\ \bibnamefont {Wilde}},\ }\bibfield  {title} {\bibinfo
  {title} {Quantum algorithm for {Petz} recovery channels and pretty good
  measurements},\ }\href {https://doi.org/10.1103/physrevlett.128.220502}
  {\bibfield  {journal} {\bibinfo  {journal} {Physical Review Letters}\
  }\textbf {\bibinfo {volume} {128}},\ \bibinfo {pages} {220502} (\bibinfo
  {year} {2022})}\BibitemShut {NoStop}%
\bibitem [{\citenamefont {Barnum}\ and\ \citenamefont {Knill}(2002)}]{BK02}%
  \BibitemOpen
  \bibfield  {author} {\bibinfo {author} {\bibfnamefont {H.}~\bibnamefont
  {Barnum}}\ and\ \bibinfo {author} {\bibfnamefont {E.}~\bibnamefont {Knill}},\
  }\bibfield  {title} {\bibinfo {title} {Reversing quantum dynamics with
  near-optimal quantum and classical fidelity},\ }\href
  {https://doi.org/10.1063/1.1459754} {\bibfield  {journal} {\bibinfo
  {journal} {Journal of Mathematical Physics}\ }\textbf {\bibinfo {volume}
  {43}},\ \bibinfo {pages} {2097} (\bibinfo {year} {2002})}\BibitemShut
  {NoStop}%
\bibitem [{\citenamefont {Harrow}\ and\ \citenamefont
  {Winter}(2012)}]{harrow2012many}%
  \BibitemOpen
  \bibfield  {author} {\bibinfo {author} {\bibfnamefont {A.~W.}\ \bibnamefont
  {Harrow}}\ and\ \bibinfo {author} {\bibfnamefont {A.}~\bibnamefont
  {Winter}},\ }\bibfield  {title} {\bibinfo {title} {How many copies are needed
  for state discrimination?},\ }\href
  {https://doi.org/10.1109/TIT.2011.2169544} {\bibfield  {journal} {\bibinfo
  {journal} {IEEE Transactions on Information Theory}\ }\textbf {\bibinfo
  {volume} {58}},\ \bibinfo {pages} {1} (\bibinfo {year} {2012})}\BibitemShut
  {NoStop}%
\bibitem [{\citenamefont {Audenaert}\ and\ \citenamefont
  {Mosonyi}(2014)}]{AM14}%
  \BibitemOpen
  \bibfield  {author} {\bibinfo {author} {\bibfnamefont {K.~M.~R.}\
  \bibnamefont {Audenaert}}\ and\ \bibinfo {author} {\bibfnamefont
  {M.}~\bibnamefont {Mosonyi}},\ }\bibfield  {title} {\bibinfo {title} {Upper
  bounds on the error probabilities and asymptotic error exponents in quantum
  multiple state discrimination},\ }\href {https://doi.org/10.1063/1.4898559}
  {\bibfield  {journal} {\bibinfo  {journal} {Journal of Mathematical Physics}\
  }\textbf {\bibinfo {volume} {55}},\ \bibinfo {pages} {102201} (\bibinfo
  {year} {2014})}\BibitemShut {NoStop}%
\bibitem [{\citenamefont {Sason}\ and\ \citenamefont {Verdu}(2018)}]{SV18}%
  \BibitemOpen
  \bibfield  {author} {\bibinfo {author} {\bibfnamefont {I.}~\bibnamefont
  {Sason}}\ and\ \bibinfo {author} {\bibfnamefont {S.}~\bibnamefont {Verdu}},\
  }\bibfield  {title} {\bibinfo {title} {{Arimoto-R{\'{e}}nyi} conditional
  entropy and {Bayesian} {$M$}-ary hypothesis testing},\ }\href
  {https://doi.org/10.1109/tit.2017.2757496} {\bibfield  {journal} {\bibinfo
  {journal} {{IEEE} Transactions on Information Theory}\ }\textbf {\bibinfo
  {volume} {64}},\ \bibinfo {pages} {4} (\bibinfo {year} {2018})}\BibitemShut
  {NoStop}%
\bibitem [{\citenamefont {Jin}\ \emph {et~al.}(2023)\citenamefont {Jin},
  \citenamefont {Liu},\ and\ \citenamefont {Yu}}]{schr2}%
  \BibitemOpen
  \bibfield  {author} {\bibinfo {author} {\bibfnamefont {S.}~\bibnamefont
  {Jin}}, \bibinfo {author} {\bibfnamefont {N.}~\bibnamefont {Liu}},\ and\
  \bibinfo {author} {\bibfnamefont {Y.}~\bibnamefont {Yu}},\ }\bibfield
  {title} {\bibinfo {title} {Quantum simulation of partial differential
  equations: Applications and detailed analysis},\ }\href
  {https://doi.org/10.1103/PhysRevA.108.032603} {\bibfield  {journal} {\bibinfo
   {journal} {Physical Review A}\ }\textbf {\bibinfo {volume} {108}},\ \bibinfo
  {pages} {032603} (\bibinfo {year} {2023})}\BibitemShut {NoStop}%
\bibitem [{\citenamefont {Jin}\ and\ \citenamefont
  {Liu}(2023{\natexlab{a}})}]{analog2023}%
  \BibitemOpen
  \bibfield  {author} {\bibinfo {author} {\bibfnamefont {S.}~\bibnamefont
  {Jin}}\ and\ \bibinfo {author} {\bibfnamefont {N.}~\bibnamefont {Liu}},\
  }\href@noop {} {\bibinfo {title} {Analog quantum simulation of partial
  differential equations}} (\bibinfo {year} {2023}{\natexlab{a}}),\ \Eprint
  {https://arxiv.org/abs/2308.00646} {arXiv:2308.00646} \BibitemShut {NoStop}%
\bibitem [{\citenamefont {Lloyd}(1996)}]{lloyd1996universal}%
  \BibitemOpen
  \bibfield  {author} {\bibinfo {author} {\bibfnamefont {S.}~\bibnamefont
  {Lloyd}},\ }\bibfield  {title} {\bibinfo {title} {Universal quantum
  simulators},\ }\href {https://doi.org/10.1126/science.273.5278.1073}
  {\bibfield  {journal} {\bibinfo  {journal} {Science}\ }\textbf {\bibinfo
  {volume} {273}},\ \bibinfo {pages} {1073} (\bibinfo {year}
  {1996})}\BibitemShut {NoStop}%
\bibitem [{\citenamefont {Childs}\ \emph {et~al.}(2018)\citenamefont {Childs},
  \citenamefont {Maslov}, \citenamefont {Nam}, \citenamefont {Ross},\ and\
  \citenamefont {Su}}]{childs2018toward}%
  \BibitemOpen
  \bibfield  {author} {\bibinfo {author} {\bibfnamefont {A.~M.}\ \bibnamefont
  {Childs}}, \bibinfo {author} {\bibfnamefont {D.}~\bibnamefont {Maslov}},
  \bibinfo {author} {\bibfnamefont {Y.}~\bibnamefont {Nam}}, \bibinfo {author}
  {\bibfnamefont {N.~J.}\ \bibnamefont {Ross}},\ and\ \bibinfo {author}
  {\bibfnamefont {Y.}~\bibnamefont {Su}},\ }\bibfield  {title} {\bibinfo
  {title} {Toward the first quantum simulation with quantum speedup},\ }\href
  {https://doi.org/10.1073/pnas.1801723115} {\bibfield  {journal} {\bibinfo
  {journal} {Proceedings of the National Academy of Sciences}\ }\textbf
  {\bibinfo {volume} {115}},\ \bibinfo {pages} {9456} (\bibinfo {year}
  {2018})}\BibitemShut {NoStop}%
\bibitem [{\citenamefont {Jin}\ \emph {et~al.}(2022)\citenamefont {Jin},
  \citenamefont {Liu},\ and\ \citenamefont {Yu}}]{schr1}%
  \BibitemOpen
  \bibfield  {author} {\bibinfo {author} {\bibfnamefont {S.}~\bibnamefont
  {Jin}}, \bibinfo {author} {\bibfnamefont {N.}~\bibnamefont {Liu}},\ and\
  \bibinfo {author} {\bibfnamefont {Y.}~\bibnamefont {Yu}},\ }\href@noop {}
  {\bibinfo {title} {Quantum simulation of partial differential equations via
  schrodingerisation}} (\bibinfo {year} {2022}),\ \Eprint
  {https://arxiv.org/abs/2212.13969} {arXiv:2212.13969} \BibitemShut {NoStop}%
\bibitem [{\citenamefont {Gily\'{e}n}\ \emph {et~al.}(2019)\citenamefont
  {Gily\'{e}n}, \citenamefont {Su}, \citenamefont {Low},\ and\ \citenamefont
  {Wiebe}}]{gilyen2019quantum}%
  \BibitemOpen
  \bibfield  {author} {\bibinfo {author} {\bibfnamefont {A.}~\bibnamefont
  {Gily\'{e}n}}, \bibinfo {author} {\bibfnamefont {Y.}~\bibnamefont {Su}},
  \bibinfo {author} {\bibfnamefont {G.~H.}\ \bibnamefont {Low}},\ and\ \bibinfo
  {author} {\bibfnamefont {N.}~\bibnamefont {Wiebe}},\ }\bibfield  {title}
  {\bibinfo {title} {Quantum singular value transformation and beyond:
  exponential improvements for quantum matrix arithmetics},\ }in\ \href
  {https://doi.org/10.1145/3313276.3316366} {\emph {\bibinfo {booktitle}
  {Proceedings of the 51st Annual ACM SIGACT Symposium on Theory of
  Computing}}}\ (\bibinfo {year} {2019})\ p.\ \bibinfo {pages}
  {193–204}\BibitemShut {NoStop}%
\bibitem [{\citenamefont {Martyn}\ \emph {et~al.}(2021)\citenamefont {Martyn},
  \citenamefont {Rossi}, \citenamefont {Tan},\ and\ \citenamefont
  {Chuang}}]{martyn2021grand}%
  \BibitemOpen
  \bibfield  {author} {\bibinfo {author} {\bibfnamefont {J.~M.}\ \bibnamefont
  {Martyn}}, \bibinfo {author} {\bibfnamefont {Z.~M.}\ \bibnamefont {Rossi}},
  \bibinfo {author} {\bibfnamefont {A.~K.}\ \bibnamefont {Tan}},\ and\ \bibinfo
  {author} {\bibfnamefont {I.~L.}\ \bibnamefont {Chuang}},\ }\bibfield  {title}
  {\bibinfo {title} {Grand unification of quantum algorithms},\ }\href
  {https://doi.org/10.1103/PRXQuantum.2.040203} {\bibfield  {journal} {\bibinfo
   {journal} {PRX Quantum}\ }\textbf {\bibinfo {volume} {2}},\ \bibinfo {pages}
  {040203} (\bibinfo {year} {2021})}\BibitemShut {NoStop}%
\bibitem [{\citenamefont {An}\ \emph {et~al.}(2022)\citenamefont {An},
  \citenamefont {Liu}, \citenamefont {Wang},\ and\ \citenamefont
  {Zhao}}]{an2022theory}%
  \BibitemOpen
  \bibfield  {author} {\bibinfo {author} {\bibfnamefont {D.}~\bibnamefont
  {An}}, \bibinfo {author} {\bibfnamefont {J.-P.}\ \bibnamefont {Liu}},
  \bibinfo {author} {\bibfnamefont {D.}~\bibnamefont {Wang}},\ and\ \bibinfo
  {author} {\bibfnamefont {Q.}~\bibnamefont {Zhao}},\ }\href@noop {} {\bibinfo
  {title} {A theory of quantum differential equation solvers: limitations and
  fast-forwarding}} (\bibinfo {year} {2022}),\ \Eprint
  {https://arxiv.org/abs/2211.05246} {arXiv:2211.05246} \BibitemShut {NoStop}%
\bibitem [{\citenamefont {Busch}\ \emph {et~al.}(2016)\citenamefont {Busch},
  \citenamefont {Lahti}, \citenamefont {Pellonp{\"a}{\"a}},\ and\ \citenamefont
  {Ylinen}}]{busch2016dilation}%
  \BibitemOpen
  \bibfield  {author} {\bibinfo {author} {\bibfnamefont {P.}~\bibnamefont
  {Busch}}, \bibinfo {author} {\bibfnamefont {P.}~\bibnamefont {Lahti}},
  \bibinfo {author} {\bibfnamefont {J.-P.}\ \bibnamefont {Pellonp{\"a}{\"a}}},\
  and\ \bibinfo {author} {\bibfnamefont {K.}~\bibnamefont {Ylinen}},\ }\bibinfo
  {title} {Dilation theory},\ in\ \href
  {https://doi.org/10.1007/978-3-319-43389-9_7} {\emph {\bibinfo {booktitle}
  {Quantum Measurement}}}\ (\bibinfo  {publisher} {Springer International
  Publishing},\ \bibinfo {year} {2016})\ pp.\ \bibinfo {pages}
  {137--162}\BibitemShut {NoStop}%
\bibitem [{\citenamefont {Liu}\ \emph {et~al.}(2021)\citenamefont {Liu},
  \citenamefont {Kolden}, \citenamefont {Krovi}, \citenamefont {Loureiro},
  \citenamefont {Trivisa},\ and\ \citenamefont {Childs}}]{liu2021efficient}%
  \BibitemOpen
  \bibfield  {author} {\bibinfo {author} {\bibfnamefont {J.-P.}\ \bibnamefont
  {Liu}}, \bibinfo {author} {\bibfnamefont {H.~{\O}.}\ \bibnamefont {Kolden}},
  \bibinfo {author} {\bibfnamefont {H.~K.}\ \bibnamefont {Krovi}}, \bibinfo
  {author} {\bibfnamefont {N.~F.}\ \bibnamefont {Loureiro}}, \bibinfo {author}
  {\bibfnamefont {K.}~\bibnamefont {Trivisa}},\ and\ \bibinfo {author}
  {\bibfnamefont {A.~M.}\ \bibnamefont {Childs}},\ }\bibfield  {title}
  {\bibinfo {title} {Efficient quantum algorithm for dissipative nonlinear
  differential equations},\ }\href {https://doi.org/10.1073/pnas.2026805118}
  {\bibfield  {journal} {\bibinfo  {journal} {Proceedings of the National
  Academy of Sciences}\ }\textbf {\bibinfo {volume} {118}},\ \bibinfo {pages}
  {e2026805118} (\bibinfo {year} {2021})}\BibitemShut {NoStop}%
\bibitem [{\citenamefont {Harrow}\ \emph {et~al.}(2009)\citenamefont {Harrow},
  \citenamefont {Hassidim},\ and\ \citenamefont {Lloyd}}]{harrow2009quantum}%
  \BibitemOpen
  \bibfield  {author} {\bibinfo {author} {\bibfnamefont {A.~W.}\ \bibnamefont
  {Harrow}}, \bibinfo {author} {\bibfnamefont {A.}~\bibnamefont {Hassidim}},\
  and\ \bibinfo {author} {\bibfnamefont {S.}~\bibnamefont {Lloyd}},\ }\bibfield
   {title} {\bibinfo {title} {Quantum algorithm for linear systems of
  equations},\ }\href {https://doi.org/10.1103/PhysRevLett.103.150502}
  {\bibfield  {journal} {\bibinfo  {journal} {Physical Review Letters}\
  }\textbf {\bibinfo {volume} {103}},\ \bibinfo {pages} {150502} (\bibinfo
  {year} {2009})}\BibitemShut {NoStop}%
\bibitem [{\citenamefont {Jin}\ and\ \citenamefont
  {Liu}(2023{\natexlab{b}})}]{quantumlinearalgebra2023}%
  \BibitemOpen
  \bibfield  {author} {\bibinfo {author} {\bibfnamefont {S.}~\bibnamefont
  {Jin}}\ and\ \bibinfo {author} {\bibfnamefont {N.}~\bibnamefont {Liu}},\
  }\href@noop {} {\bibinfo {title} {Quantum simulation of discrete linear
  dynamical systems and simple iterative methods in linear algebra via
  {S}chr\"odingerisation}} (\bibinfo {year} {2023}{\natexlab{b}}),\ \Eprint
  {https://arxiv.org/abs/2304.02865} {arXiv:2304.02865} \BibitemShut {NoStop}%
\bibitem [{\citenamefont {Vulpiani}\ \emph {et~al.}(2009)\citenamefont
  {Vulpiani}, \citenamefont {Cecconi},\ and\ \citenamefont
  {Cencini}}]{vulpiani2009chaos}%
  \BibitemOpen
  \bibfield  {author} {\bibinfo {author} {\bibfnamefont {A.}~\bibnamefont
  {Vulpiani}}, \bibinfo {author} {\bibfnamefont {F.}~\bibnamefont {Cecconi}},\
  and\ \bibinfo {author} {\bibfnamefont {M.}~\bibnamefont {Cencini}},\ }\href
  {https://doi.org/10.1142/7351} {\emph {\bibinfo {title} {Chaos: From Simple
  Models to Complex Systems}}},\ \bibinfo {series} {Series on Advances in
  Statistical Mechanics}, Vol.~\bibinfo {volume} {17}\ (\bibinfo  {publisher}
  {World Scientific},\ \bibinfo {year} {2009})\BibitemShut {NoStop}%
\bibitem [{\citenamefont {Abou-Kandil}\ \emph {et~al.}(2012)\citenamefont
  {Abou-Kandil}, \citenamefont {Freiling}, \citenamefont {Ionescu},\ and\
  \citenamefont {Jank}}]{abou2012matrix}%
  \BibitemOpen
  \bibfield  {author} {\bibinfo {author} {\bibfnamefont {H.}~\bibnamefont
  {Abou-Kandil}}, \bibinfo {author} {\bibfnamefont {G.}~\bibnamefont
  {Freiling}}, \bibinfo {author} {\bibfnamefont {V.}~\bibnamefont {Ionescu}},\
  and\ \bibinfo {author} {\bibfnamefont {G.}~\bibnamefont {Jank}},\ }\href
  {https://doi.org/10.1007/978-3-0348-8081-7} {\emph {\bibinfo {title} {Matrix
  {R}iccati Equations in Control and Systems Theory}}},\ Systems \& Control:
  Foundations \& Applications\ (\bibinfo  {publisher} {Birkh{\"a}user},\
  \bibinfo {year} {2012})\BibitemShut {NoStop}%
\bibitem [{\citenamefont {Nikoukhah}\ \emph {et~al.}(1992)\citenamefont
  {Nikoukhah}, \citenamefont {Willsky},\ and\ \citenamefont
  {Levy}}]{nikoukhah1990kalman}%
  \BibitemOpen
  \bibfield  {author} {\bibinfo {author} {\bibfnamefont {R.}~\bibnamefont
  {Nikoukhah}}, \bibinfo {author} {\bibfnamefont {A.~S.}\ \bibnamefont
  {Willsky}},\ and\ \bibinfo {author} {\bibfnamefont {B.~C.}\ \bibnamefont
  {Levy}},\ }\bibfield  {title} {\bibinfo {title} {Kalman filtering and
  {R}iccati equations for descriptor systems},\ }\href
  {https://doi.org/10.1109/9.159570} {\bibfield  {journal} {\bibinfo  {journal}
  {IEEE Transactions on Automatic Control}\ }\textbf {\bibinfo {volume} {37}},\
  \bibinfo {pages} {1325} (\bibinfo {year} {1992})}\BibitemShut {NoStop}%
\bibitem [{\citenamefont {Anderson}(1999)}]{anderson1999riccati}%
  \BibitemOpen
  \bibfield  {author} {\bibinfo {author} {\bibfnamefont {B.~D.~O.}\
  \bibnamefont {Anderson}},\ }\bibfield  {title} {\bibinfo {title} {Riccati
  equations, network theory and {B}rune synthesis: Old solutions for
  contemporary problems},\ }in\ \href
  {https://doi.org/10.1007/978-3-0348-8970-4_1} {\emph {\bibinfo {booktitle}
  {Dynamical Systems, Control, Coding, Computer Vision: New Trends, Interfaces,
  and Interplay}}},\ Vol.~\bibinfo {volume} {25}\ (\bibinfo  {publisher}
  {Springer},\ \bibinfo {year} {1999})\ pp.\ \bibinfo {pages}
  {1--25}\BibitemShut {NoStop}%
\bibitem [{\citenamefont {Abrams}\ and\ \citenamefont
  {Lloyd}(1998)}]{abrams1998nonlinear}%
  \BibitemOpen
  \bibfield  {author} {\bibinfo {author} {\bibfnamefont {D.~S.}\ \bibnamefont
  {Abrams}}\ and\ \bibinfo {author} {\bibfnamefont {S.}~\bibnamefont {Lloyd}},\
  }\bibfield  {title} {\bibinfo {title} {Nonlinear quantum mechanics implies
  polynomial-time solution for {NP}-complete and \#{P} problems},\ }\href
  {https://doi.org/10.1103/PhysRevLett.81.3992} {\bibfield  {journal} {\bibinfo
   {journal} {Physical Review Letters}\ }\textbf {\bibinfo {volume} {81}},\
  \bibinfo {pages} {3992} (\bibinfo {year} {1998})}\BibitemShut {NoStop}%
\bibitem [{\citenamefont {Childs}\ and\ \citenamefont
  {Young}(2016)}]{childs2016optimal}%
  \BibitemOpen
  \bibfield  {author} {\bibinfo {author} {\bibfnamefont {A.~M.}\ \bibnamefont
  {Childs}}\ and\ \bibinfo {author} {\bibfnamefont {J.}~\bibnamefont {Young}},\
  }\bibfield  {title} {\bibinfo {title} {Optimal state discrimination and
  unstructured search in nonlinear quantum mechanics},\ }\href
  {https://doi.org/10.1103/PhysRevA.93.022314} {\bibfield  {journal} {\bibinfo
  {journal} {Physical Review A}\ }\textbf {\bibinfo {volume} {93}},\ \bibinfo
  {pages} {022314} (\bibinfo {year} {2016})}\BibitemShut {NoStop}%
\bibitem [{\citenamefont {Grover}(1996)}]{grover1996fast}%
  \BibitemOpen
  \bibfield  {author} {\bibinfo {author} {\bibfnamefont {L.~K.}\ \bibnamefont
  {Grover}},\ }\bibfield  {title} {\bibinfo {title} {A fast quantum mechanical
  algorithm for database search},\ }in\ \href
  {https://doi.org/10.1145/237814.237866} {\emph {\bibinfo {booktitle}
  {Proceedings of the Twenty-Eighth Annual ACM Symposium on Theory of
  Computing}}}\ (\bibinfo {year} {1996})\ p.\ \bibinfo {pages}
  {212–219}\BibitemShut {NoStop}%
\bibitem [{\citenamefont {Farhi}\ and\ \citenamefont
  {Gutmann}(1998)}]{farhi1998analog}%
  \BibitemOpen
  \bibfield  {author} {\bibinfo {author} {\bibfnamefont {E.}~\bibnamefont
  {Farhi}}\ and\ \bibinfo {author} {\bibfnamefont {S.}~\bibnamefont
  {Gutmann}},\ }\bibfield  {title} {\bibinfo {title} {Analog analogue of a
  digital quantum computation},\ }\href
  {https://doi.org/10.1103/PhysRevA.57.2403} {\bibfield  {journal} {\bibinfo
  {journal} {Physical Review A}\ }\textbf {\bibinfo {volume} {57}},\ \bibinfo
  {pages} {2403} (\bibinfo {year} {1998})}\BibitemShut {NoStop}%
\bibitem [{\citenamefont {Buhrman}\ \emph {et~al.}(2001)\citenamefont
  {Buhrman}, \citenamefont {Cleve}, \citenamefont {Watrous},\ and\
  \citenamefont {de~Wolf}}]{buhrman2001quantum}%
  \BibitemOpen
  \bibfield  {author} {\bibinfo {author} {\bibfnamefont {H.}~\bibnamefont
  {Buhrman}}, \bibinfo {author} {\bibfnamefont {R.}~\bibnamefont {Cleve}},
  \bibinfo {author} {\bibfnamefont {J.}~\bibnamefont {Watrous}},\ and\ \bibinfo
  {author} {\bibfnamefont {R.}~\bibnamefont {de~Wolf}},\ }\bibfield  {title}
  {\bibinfo {title} {Quantum fingerprinting},\ }\href
  {https://doi.org/10.1103/PhysRevLett.87.167902} {\bibfield  {journal}
  {\bibinfo  {journal} {Physical Review Letters}\ }\textbf {\bibinfo {volume}
  {87}},\ \bibinfo {pages} {167902} (\bibinfo {year} {2001})}\BibitemShut
  {NoStop}%
\bibitem [{\citenamefont {Lloyd}\ \emph {et~al.}(2014)\citenamefont {Lloyd},
  \citenamefont {Mohseni},\ and\ \citenamefont
  {Rebentrost}}]{lloyd2014quantum}%
  \BibitemOpen
  \bibfield  {author} {\bibinfo {author} {\bibfnamefont {S.}~\bibnamefont
  {Lloyd}}, \bibinfo {author} {\bibfnamefont {M.}~\bibnamefont {Mohseni}},\
  and\ \bibinfo {author} {\bibfnamefont {P.}~\bibnamefont {Rebentrost}},\
  }\bibfield  {title} {\bibinfo {title} {Quantum principal component
  analysis},\ }\href {https://doi.org/10.1038/nphys3029} {\bibfield  {journal}
  {\bibinfo  {journal} {Nature Physics}\ }\textbf {\bibinfo {volume} {10}},\
  \bibinfo {pages} {631} (\bibinfo {year} {2014})}\BibitemShut {NoStop}%
\bibitem [{\citenamefont {Shor}(1994)}]{shor1994algorithms}%
  \BibitemOpen
  \bibfield  {author} {\bibinfo {author} {\bibfnamefont {P.}~\bibnamefont
  {Shor}},\ }\bibfield  {title} {\bibinfo {title} {Algorithms for quantum
  computation: discrete logarithms and factoring},\ }in\ \href
  {https://doi.org/10.1109/SFCS.1994.365700} {\emph {\bibinfo {booktitle}
  {Proceedings 35th Annual Symposium on Foundations of Computer Science}}}\
  (\bibinfo {year} {1994})\ pp.\ \bibinfo {pages} {124--134}\BibitemShut
  {NoStop}%
\bibitem [{\citenamefont {Simon}(1997)}]{simon1997power}%
  \BibitemOpen
  \bibfield  {author} {\bibinfo {author} {\bibfnamefont {D.~R.}\ \bibnamefont
  {Simon}},\ }\bibfield  {title} {\bibinfo {title} {On the power of quantum
  computation},\ }\href {https://doi.org/10.1137/S0097539796298637} {\bibfield
  {journal} {\bibinfo  {journal} {SIAM Journal on Computing}\ }\textbf
  {\bibinfo {volume} {26}},\ \bibinfo {pages} {1474} (\bibinfo {year}
  {1997})}\BibitemShut {NoStop}%
\bibitem [{\citenamefont {Deutsch}\ and\ \citenamefont
  {Jozsa}(1992)}]{deutsch1992rapid}%
  \BibitemOpen
  \bibfield  {author} {\bibinfo {author} {\bibfnamefont {D.}~\bibnamefont
  {Deutsch}}\ and\ \bibinfo {author} {\bibfnamefont {R.}~\bibnamefont
  {Jozsa}},\ }\bibfield  {title} {\bibinfo {title} {Rapid solution of problems
  by quantum computation},\ }\href {https://doi.org/10.1098/rspa.1992.0167}
  {\bibfield  {journal} {\bibinfo  {journal} {Proceedings of the Royal Society
  of London. Series A: Mathematical and Physical Sciences}\ }\textbf {\bibinfo
  {volume} {439}},\ \bibinfo {pages} {553} (\bibinfo {year}
  {1992})}\BibitemShut {NoStop}%
\bibitem [{\citenamefont {Banchi}\ \emph {et~al.}(2021)\citenamefont {Banchi},
  \citenamefont {Pereira},\ and\ \citenamefont
  {Pirandola}}]{banchi2021generalization}%
  \BibitemOpen
  \bibfield  {author} {\bibinfo {author} {\bibfnamefont {L.}~\bibnamefont
  {Banchi}}, \bibinfo {author} {\bibfnamefont {J.}~\bibnamefont {Pereira}},\
  and\ \bibinfo {author} {\bibfnamefont {S.}~\bibnamefont {Pirandola}},\
  }\bibfield  {title} {\bibinfo {title} {Generalization in quantum machine
  learning: A quantum information standpoint},\ }\href
  {https://doi.org/10.1103/PRXQuantum.2.040321} {\bibfield  {journal} {\bibinfo
   {journal} {PRX Quantum}\ }\textbf {\bibinfo {volume} {2}},\ \bibinfo {pages}
  {040321} (\bibinfo {year} {2021})}\BibitemShut {NoStop}%
\bibitem [{\citenamefont {Hastie}\ \emph {et~al.}(2009)\citenamefont {Hastie},
  \citenamefont {Tibshirani},\ and\ \citenamefont
  {Friedman}}]{hastie2009elements}%
  \BibitemOpen
  \bibfield  {author} {\bibinfo {author} {\bibfnamefont {T.}~\bibnamefont
  {Hastie}}, \bibinfo {author} {\bibfnamefont {R.}~\bibnamefont {Tibshirani}},\
  and\ \bibinfo {author} {\bibfnamefont {J.~H.}\ \bibnamefont {Friedman}},\
  }\href {https://doi.org/10.1007/978-0-387-84858-7} {\emph {\bibinfo {title}
  {The Elements of Statistical Learning: Data Mining, Inference, and
  Prediction}}},\ Vol.~\bibinfo {volume} {2}\ (\bibinfo  {publisher}
  {Springer},\ \bibinfo {year} {2009})\BibitemShut {NoStop}%
\bibitem [{\citenamefont {Aaronson}(2007)}]{Aar07}%
  \BibitemOpen
  \bibfield  {author} {\bibinfo {author} {\bibfnamefont {S.}~\bibnamefont
  {Aaronson}},\ }\bibfield  {title} {\bibinfo {title} {The learnability of
  quantum states},\ }\href {https://doi.org/10.1098/rspa.2007.0113} {\bibfield
  {journal} {\bibinfo  {journal} {Proceedings of the Royal Society A:
  Mathematical, Physical and Engineering Sciences}\ }\textbf {\bibinfo {volume}
  {463}},\ \bibinfo {pages} {3089} (\bibinfo {year} {2007})}\BibitemShut
  {NoStop}%
\bibitem [{\citenamefont {Cheng}\ \emph {et~al.}(2016)\citenamefont {Cheng},
  \citenamefont {Hsieh},\ and\ \citenamefont {Yeh}}]{CH16a}%
  \BibitemOpen
  \bibfield  {author} {\bibinfo {author} {\bibfnamefont {H.-C.}\ \bibnamefont
  {Cheng}}, \bibinfo {author} {\bibfnamefont {M.-H.}\ \bibnamefont {Hsieh}},\
  and\ \bibinfo {author} {\bibfnamefont {P.-C.}\ \bibnamefont {Yeh}},\
  }\bibfield  {title} {\bibinfo {title} {The learnability of unknown quantum
  measurements},\ }\href {https://doi.org/10.26421/qic16.7-8-4} {\bibfield
  {journal} {\bibinfo  {journal} {Quantum Information and Computation}\
  }\textbf {\bibinfo {volume} {16}},\ \bibinfo {pages} {615} (\bibinfo {year}
  {2016})}\BibitemShut {NoStop}%
\bibitem [{\citenamefont {Rocchetto}(2018)}]{Roc18}%
  \BibitemOpen
  \bibfield  {author} {\bibinfo {author} {\bibfnamefont {A.}~\bibnamefont
  {Rocchetto}},\ }\bibfield  {title} {\bibinfo {title} {Stabiliser states are
  efficiently {PAC}-learnable},\ }\href {https://doi.org/10.26421/qic18.7-8-1}
  {\bibfield  {journal} {\bibinfo  {journal} {Quantum Information and
  Computation}\ }\textbf {\bibinfo {volume} {18}},\ \bibinfo {pages} {541}
  (\bibinfo {year} {2018})}\BibitemShut {NoStop}%
\bibitem [{\citenamefont {Gambs}(2008)}]{gambs2008quantum}%
  \BibitemOpen
  \bibfield  {author} {\bibinfo {author} {\bibfnamefont {S.}~\bibnamefont
  {Gambs}},\ }\href@noop {} {\bibinfo {title} {Quantum classification}}
  (\bibinfo {year} {2008}),\ \Eprint {https://arxiv.org/abs/0809.0444}
  {arXiv:0809.0444} \BibitemShut {NoStop}%
\bibitem [{\citenamefont {Weber}\ \emph {et~al.}(2021)\citenamefont {Weber},
  \citenamefont {Liu}, \citenamefont {Li}, \citenamefont {Zhang},\ and\
  \citenamefont {Zhao}}]{weber2021optimal}%
  \BibitemOpen
  \bibfield  {author} {\bibinfo {author} {\bibfnamefont {M.}~\bibnamefont
  {Weber}}, \bibinfo {author} {\bibfnamefont {N.}~\bibnamefont {Liu}}, \bibinfo
  {author} {\bibfnamefont {B.}~\bibnamefont {Li}}, \bibinfo {author}
  {\bibfnamefont {C.}~\bibnamefont {Zhang}},\ and\ \bibinfo {author}
  {\bibfnamefont {Z.}~\bibnamefont {Zhao}},\ }\bibfield  {title} {\bibinfo
  {title} {Optimal provable robustness of quantum classification via quantum
  hypothesis testing},\ }\href {https://doi.org/10.1038/s41534-021-00410-5}
  {\bibfield  {journal} {\bibinfo  {journal} {npj Quantum Information}\
  }\textbf {\bibinfo {volume} {7}},\ \bibinfo {pages} {76} (\bibinfo {year}
  {2021})}\BibitemShut {NoStop}%
\bibitem [{\citenamefont {Bach}\ \emph {et~al.}(2006)\citenamefont {Bach},
  \citenamefont {Heckerman},\ and\ \citenamefont
  {Horvitz}}]{bach2006considering}%
  \BibitemOpen
  \bibfield  {author} {\bibinfo {author} {\bibfnamefont {F.~R.}\ \bibnamefont
  {Bach}}, \bibinfo {author} {\bibfnamefont {D.}~\bibnamefont {Heckerman}},\
  and\ \bibinfo {author} {\bibfnamefont {E.}~\bibnamefont {Horvitz}},\
  }\bibfield  {title} {\bibinfo {title} {Considering cost asymmetry in learning
  classifiers},\ }\href {http://jmlr.org/papers/v7/bach06a.html} {\bibfield
  {journal} {\bibinfo  {journal} {The Journal of Machine Learning Research}\
  }\textbf {\bibinfo {volume} {7}},\ \bibinfo {pages} {1713} (\bibinfo {year}
  {2006})}\BibitemShut {NoStop}%
\bibitem [{\citenamefont {Sent{\'\i}s}\ \emph {et~al.}(2019)\citenamefont
  {Sent{\'\i}s}, \citenamefont {Monras}, \citenamefont {Mu{\~n}oz-Tapia},
  \citenamefont {Calsamiglia},\ and\ \citenamefont
  {Bagan}}]{sentis2019unsupervised}%
  \BibitemOpen
  \bibfield  {author} {\bibinfo {author} {\bibfnamefont {G.}~\bibnamefont
  {Sent{\'\i}s}}, \bibinfo {author} {\bibfnamefont {A.}~\bibnamefont {Monras}},
  \bibinfo {author} {\bibfnamefont {R.}~\bibnamefont {Mu{\~n}oz-Tapia}},
  \bibinfo {author} {\bibfnamefont {J.}~\bibnamefont {Calsamiglia}},\ and\
  \bibinfo {author} {\bibfnamefont {E.}~\bibnamefont {Bagan}},\ }\bibfield
  {title} {\bibinfo {title} {Unsupervised classification of quantum data},\
  }\href {https://doi.org/10.1103/PhysRevX.9.041029} {\bibfield  {journal}
  {\bibinfo  {journal} {Physical Review X}\ }\textbf {\bibinfo {volume} {9}},\
  \bibinfo {pages} {041029} (\bibinfo {year} {2019})}\BibitemShut {NoStop}%
\bibitem [{\citenamefont {Leifer}\ and\ \citenamefont
  {Maroney}(2013)}]{leifer2013maximally}%
  \BibitemOpen
  \bibfield  {author} {\bibinfo {author} {\bibfnamefont {M.~S.}\ \bibnamefont
  {Leifer}}\ and\ \bibinfo {author} {\bibfnamefont {O.~J.~E.}\ \bibnamefont
  {Maroney}},\ }\bibfield  {title} {\bibinfo {title} {Maximally epistemic
  interpretations of the quantum state and contextuality},\ }\href
  {https://doi.org/10.1103/PhysRevLett.110.120401} {\bibfield  {journal}
  {\bibinfo  {journal} {Physical Review Letters}\ }\textbf {\bibinfo {volume}
  {110}},\ \bibinfo {pages} {120401} (\bibinfo {year} {2013})}\BibitemShut
  {NoStop}%
\bibitem [{\citenamefont {Arai}\ and\ \citenamefont
  {Hayashi}(2023)}]{arai2023derivation}%
  \BibitemOpen
  \bibfield  {author} {\bibinfo {author} {\bibfnamefont {H.}~\bibnamefont
  {Arai}}\ and\ \bibinfo {author} {\bibfnamefont {M.}~\bibnamefont {Hayashi}},\
  }\href@noop {} {\bibinfo {title} {Derivation of standard quantum theory via
  state discrimination}} (\bibinfo {year} {2023}),\ \Eprint
  {https://arxiv.org/abs/2307.11271} {arXiv:2307.11271} \BibitemShut {NoStop}%
\bibitem [{\citenamefont {Konig}\ \emph {et~al.}(2009)\citenamefont {Konig},
  \citenamefont {Renner},\ and\ \citenamefont
  {Schaffner}}]{konig2009operational}%
  \BibitemOpen
  \bibfield  {author} {\bibinfo {author} {\bibfnamefont {R.}~\bibnamefont
  {Konig}}, \bibinfo {author} {\bibfnamefont {R.}~\bibnamefont {Renner}},\ and\
  \bibinfo {author} {\bibfnamefont {C.}~\bibnamefont {Schaffner}},\ }\bibfield
  {title} {\bibinfo {title} {The operational meaning of min-and max-entropy},\
  }\href {https://doi.org/10.1109/TIT.2009.2025545} {\bibfield  {journal}
  {\bibinfo  {journal} {IEEE Transactions on Information Theory}\ }\textbf
  {\bibinfo {volume} {55}},\ \bibinfo {pages} {4337} (\bibinfo {year}
  {2009})}\BibitemShut {NoStop}%
\bibitem [{\citenamefont {Leverrier}\ and\ \citenamefont
  {Grangier}(2009)}]{leverrier2009unconditional}%
  \BibitemOpen
  \bibfield  {author} {\bibinfo {author} {\bibfnamefont {A.}~\bibnamefont
  {Leverrier}}\ and\ \bibinfo {author} {\bibfnamefont {P.}~\bibnamefont
  {Grangier}},\ }\bibfield  {title} {\bibinfo {title} {Unconditional security
  proof of long-distance continuous-variable quantum key distribution with
  discrete modulation},\ }\href
  {https://doi.org/10.1103/PhysRevLett.102.180504} {\bibfield  {journal}
  {\bibinfo  {journal} {Physical Review Letters}\ }\textbf {\bibinfo {volume}
  {102}},\ \bibinfo {pages} {180504} (\bibinfo {year} {2009})}\BibitemShut
  {NoStop}%
\bibitem [{\citenamefont {Brunner}\ \emph {et~al.}(2013)\citenamefont
  {Brunner}, \citenamefont {Navascu{\'e}s},\ and\ \citenamefont
  {V{\'e}rtesi}}]{brunner2013dimension}%
  \BibitemOpen
  \bibfield  {author} {\bibinfo {author} {\bibfnamefont {N.}~\bibnamefont
  {Brunner}}, \bibinfo {author} {\bibfnamefont {M.}~\bibnamefont
  {Navascu{\'e}s}},\ and\ \bibinfo {author} {\bibfnamefont {T.}~\bibnamefont
  {V{\'e}rtesi}},\ }\bibfield  {title} {\bibinfo {title} {Dimension witnesses
  and quantum state discrimination},\ }\href
  {https://doi.org/10.1103/PhysRevLett.110.150501} {\bibfield  {journal}
  {\bibinfo  {journal} {Physical Review Letters}\ }\textbf {\bibinfo {volume}
  {110}},\ \bibinfo {pages} {150501} (\bibinfo {year} {2013})}\BibitemShut
  {NoStop}%
\bibitem [{\citenamefont {Bae}\ and\ \citenamefont
  {Ac{\'\i}n}(2006)}]{bae2006asymptotic}%
  \BibitemOpen
  \bibfield  {author} {\bibinfo {author} {\bibfnamefont {J.}~\bibnamefont
  {Bae}}\ and\ \bibinfo {author} {\bibfnamefont {A.}~\bibnamefont
  {Ac{\'\i}n}},\ }\bibfield  {title} {\bibinfo {title} {Asymptotic quantum
  cloning is state estimation},\ }\href
  {https://doi.org/10.1103/PhysRevLett.97.030402} {\bibfield  {journal}
  {\bibinfo  {journal} {Physical Review Letters}\ }\textbf {\bibinfo {volume}
  {97}},\ \bibinfo {pages} {030402} (\bibinfo {year} {2006})}\BibitemShut
  {NoStop}%
\bibitem [{\citenamefont {Bru{\ss}}\ \emph {et~al.}(2000)\citenamefont
  {Bru{\ss}}, \citenamefont {Cinchetti}, \citenamefont {D’Ariano},\ and\
  \citenamefont {Macchiavello}}]{bruss2000phase}%
  \BibitemOpen
  \bibfield  {author} {\bibinfo {author} {\bibfnamefont {D.}~\bibnamefont
  {Bru{\ss}}}, \bibinfo {author} {\bibfnamefont {M.}~\bibnamefont {Cinchetti}},
  \bibinfo {author} {\bibfnamefont {G.~M.}\ \bibnamefont {D’Ariano}},\ and\
  \bibinfo {author} {\bibfnamefont {C.}~\bibnamefont {Macchiavello}},\
  }\bibfield  {title} {\bibinfo {title} {Phase-covariant quantum cloning},\
  }\href {https://doi.org/10.1103/PhysRevA.62.012302} {\bibfield  {journal}
  {\bibinfo  {journal} {Physical Review A}\ }\textbf {\bibinfo {volume} {62}},\
  \bibinfo {pages} {012302} (\bibinfo {year} {2000})}\BibitemShut {NoStop}%
\bibitem [{\citenamefont {Nuradha}\ and\ \citenamefont
  {Wilde}(2024)}]{nuradha2024contractionQLDP}%
  \BibitemOpen
  \bibfield  {author} {\bibinfo {author} {\bibfnamefont {T.}~\bibnamefont
  {Nuradha}}\ and\ \bibinfo {author} {\bibfnamefont {M.~M.}\ \bibnamefont
  {Wilde}},\ }\bibfield  {title} {\bibinfo {title} {Contraction of private
  quantum channels and private quantum hypothesis testing},\ }\href@noop {}
  {\bibfield  {journal} {\bibinfo  {journal} {arXiv preprint arXiv:2406.18651}\
  } (\bibinfo {year} {2024})}\BibitemShut {NoStop}%
\bibitem [{\citenamefont {Cheng}\ \emph {et~al.}(2024)\citenamefont {Cheng},
  \citenamefont {Hirche},\ and\ \citenamefont
  {Rouz{\'e}}}]{cheng2024samplePrivate}%
  \BibitemOpen
  \bibfield  {author} {\bibinfo {author} {\bibfnamefont {H.-C.}\ \bibnamefont
  {Cheng}}, \bibinfo {author} {\bibfnamefont {C.}~\bibnamefont {Hirche}},\ and\
  \bibinfo {author} {\bibfnamefont {C.}~\bibnamefont {Rouz{\'e}}},\ }\bibfield
  {title} {\bibinfo {title} {Sample complexity of locally differentially
  private quantum hypothesis testing},\ }\href@noop {} {\bibfield  {journal}
  {\bibinfo  {journal} {arXiv preprint arXiv:2406.18658}\ } (\bibinfo {year}
  {2024})}\BibitemShut {NoStop}%
\bibitem [{\citenamefont {Dwork}\ \emph {et~al.}(2006)\citenamefont {Dwork},
  \citenamefont {McSherry}, \citenamefont {Nissim},\ and\ \citenamefont
  {Smith}}]{DMNS06}%
  \BibitemOpen
  \bibfield  {author} {\bibinfo {author} {\bibfnamefont {C.}~\bibnamefont
  {Dwork}}, \bibinfo {author} {\bibfnamefont {F.}~\bibnamefont {McSherry}},
  \bibinfo {author} {\bibfnamefont {K.}~\bibnamefont {Nissim}},\ and\ \bibinfo
  {author} {\bibfnamefont {A.}~\bibnamefont {Smith}},\ }\bibfield  {title}
  {\bibinfo {title} {Calibrating noise to sensitivity in private data
  analysis},\ }in\ \href {https://doi.org/10.1109/ITW.2014.6970882} {\emph
  {\bibinfo {booktitle} {Proceedings of Conference on Theory of Cryptography,
  TCC}}}\ (\bibinfo {year} {2006})\ pp.\ \bibinfo {pages}
  {265--284}\BibitemShut {NoStop}%
\bibitem [{\citenamefont {Dwork}\ and\ \citenamefont {Roth}(2014)}]{DR14}%
  \BibitemOpen
  \bibfield  {author} {\bibinfo {author} {\bibfnamefont {C.}~\bibnamefont
  {Dwork}}\ and\ \bibinfo {author} {\bibfnamefont {A.}~\bibnamefont {Roth}},\
  }\bibfield  {title} {\bibinfo {title} {The algorithmic foundations of
  differential privacy},\ }\href@noop {} {\bibfield  {journal} {\bibinfo
  {journal} {Foundations and Trends in Theoretical Computer Science (FnT-TCS)}\
  }\textbf {\bibinfo {volume} {9}},\ \bibinfo {pages} {211} (\bibinfo {year}
  {2014})}\BibitemShut {NoStop}%
\bibitem [{\citenamefont {{Kifer}}\ and\ \citenamefont
  {{Machanavajjhala}}(2014)}]{KM14}%
  \BibitemOpen
  \bibfield  {author} {\bibinfo {author} {\bibfnamefont {D.}~\bibnamefont
  {{Kifer}}}\ and\ \bibinfo {author} {\bibfnamefont {A.}~\bibnamefont
  {{Machanavajjhala}}},\ }\bibfield  {title} {\bibinfo {title} {Pufferfish: A
  framework for mathematical privacy definitions},\ }\href@noop {} {\bibfield
  {journal} {\bibinfo  {journal} {ACM Transactions on Database Systems}\
  }\textbf {\bibinfo {volume} {39}},\ \bibinfo {pages} {1} (\bibinfo {year}
  {2014})}\BibitemShut {NoStop}%
\bibitem [{\citenamefont {Nuradha}\ and\ \citenamefont
  {Goldfeld}(2023)}]{nuradha2022pufferfishJ}%
  \BibitemOpen
  \bibfield  {author} {\bibinfo {author} {\bibfnamefont {T.}~\bibnamefont
  {Nuradha}}\ and\ \bibinfo {author} {\bibfnamefont {Z.}~\bibnamefont
  {Goldfeld}},\ }\bibfield  {title} {\bibinfo {title} {Pufferfish privacy: An
  information-theoretic study},\ }\href
  {https://doi.org/10.1109/TIT.2023.3296288} {\bibfield  {journal} {\bibinfo
  {journal} {IEEE Transactions on Information Theory}\ }\textbf {\bibinfo
  {volume} {69}},\ \bibinfo {pages} {7336} (\bibinfo {year}
  {2023})}\BibitemShut {NoStop}%
\bibitem [{\citenamefont {Zhou}\ and\ \citenamefont
  {Ying}(2017)}]{QDP_computation17}%
  \BibitemOpen
  \bibfield  {author} {\bibinfo {author} {\bibfnamefont {L.}~\bibnamefont
  {Zhou}}\ and\ \bibinfo {author} {\bibfnamefont {M.}~\bibnamefont {Ying}},\
  }\bibfield  {title} {\bibinfo {title} {Differential privacy in quantum
  computation},\ }in\ \href@noop {} {\emph {\bibinfo {booktitle} {Proceedings
  of IEEE Computer Security Foundations Symposium (CSF)}}}\ (\bibinfo
  {organization} {IEEE},\ \bibinfo {year} {2017})\ pp.\ \bibinfo {pages}
  {249--262}\BibitemShut {NoStop}%
\bibitem [{\citenamefont {Aaronson}\ and\ \citenamefont
  {Rothblum}(2019)}]{aaronson2019gentle}%
  \BibitemOpen
  \bibfield  {author} {\bibinfo {author} {\bibfnamefont {S.}~\bibnamefont
  {Aaronson}}\ and\ \bibinfo {author} {\bibfnamefont {G.~N.}\ \bibnamefont
  {Rothblum}},\ }\bibfield  {title} {\bibinfo {title} {Gentle measurement of
  quantum states and differential privacy},\ }in\ \href@noop {} {\emph
  {\bibinfo {booktitle} {Proceedings of ACM SIGACT Symposium on Theory of
  Computing}}}\ (\bibinfo {year} {2019})\ pp.\ \bibinfo {pages}
  {322--333}\BibitemShut {NoStop}%
\bibitem [{\citenamefont {Hirche}\ \emph {et~al.}(2023)\citenamefont {Hirche},
  \citenamefont {Rouz{\'e}},\ and\ \citenamefont
  {Fran{\c{c}}a}}]{hirche2023quantum}%
  \BibitemOpen
  \bibfield  {author} {\bibinfo {author} {\bibfnamefont {C.}~\bibnamefont
  {Hirche}}, \bibinfo {author} {\bibfnamefont {C.}~\bibnamefont {Rouz{\'e}}},\
  and\ \bibinfo {author} {\bibfnamefont {D.~S.}\ \bibnamefont {Fran{\c{c}}a}},\
  }\bibfield  {title} {\bibinfo {title} {Quantum differential privacy: An
  information theory perspective},\ }\href
  {https://doi.org/10.1109/TIT.2023.3272904} {\bibfield  {journal} {\bibinfo
  {journal} {IEEE Transactions on Information Theory}\ }\textbf {\bibinfo
  {volume} {69}},\ \bibinfo {pages} {5771} (\bibinfo {year} {2023})},\ \bibinfo
  {note} {arXiv:2202.10717}\BibitemShut {NoStop}%
\bibitem [{\citenamefont {Nuradha}\ \emph {et~al.}(2024)\citenamefont
  {Nuradha}, \citenamefont {Goldfeld},\ and\ \citenamefont
  {Wilde}}]{NGW_QPP_24}%
  \BibitemOpen
  \bibfield  {author} {\bibinfo {author} {\bibfnamefont {T.}~\bibnamefont
  {Nuradha}}, \bibinfo {author} {\bibfnamefont {Z.}~\bibnamefont {Goldfeld}},\
  and\ \bibinfo {author} {\bibfnamefont {M.~M.}\ \bibnamefont {Wilde}},\
  }\bibfield  {title} {\bibinfo {title} {Quantum pufferfish privacy: A flexible
  privacy framework for quantum systems},\ }\href
  {https://doi.org/10.1109/TIT.2024.3404927} {\bibfield  {journal} {\bibinfo
  {journal} {IEEE Transactions on Information Theory}\ }\textbf {\bibinfo
  {volume} {70}},\ \bibinfo {pages} {5731} (\bibinfo {year}
  {2024})}\BibitemShut {NoStop}%
\bibitem [{\citenamefont {Pensia}\ \emph {et~al.}(2024)\citenamefont {Pensia},
  \citenamefont {Jog},\ and\ \citenamefont
  {Loh}}]{sample_complexity_classical2024}%
  \BibitemOpen
  \bibfield  {author} {\bibinfo {author} {\bibfnamefont {A.}~\bibnamefont
  {Pensia}}, \bibinfo {author} {\bibfnamefont {V.}~\bibnamefont {Jog}},\ and\
  \bibinfo {author} {\bibfnamefont {P.-L.}\ \bibnamefont {Loh}},\ }\href@noop
  {} {\bibinfo {title} {The sample complexity of simple binary hypothesis
  testing}} (\bibinfo {year} {2024}),\ \Eprint
  {https://arxiv.org/abs/2403.16981} {arXiv:2403.16981} \BibitemShut {NoStop}%
\bibitem [{\citenamefont {Araki}(1990)}]{Ara90}%
  \BibitemOpen
  \bibfield  {author} {\bibinfo {author} {\bibfnamefont {H.}~\bibnamefont
  {Araki}},\ }\bibfield  {title} {\bibinfo {title} {On an inequality of {L}ieb
  and {Thirring}},\ }\href {https://doi.org/10.1007/bf01045887} {\bibfield
  {journal} {\bibinfo  {journal} {Letters in Mathematical Physics}\ }\textbf
  {\bibinfo {volume} {19}},\ \bibinfo {pages} {167} (\bibinfo {year}
  {1990})}\BibitemShut {NoStop}%
\bibitem [{\citenamefont {Lieb}\ and\ \citenamefont {Thirring}(1976)}]{LT76}%
  \BibitemOpen
  \bibfield  {author} {\bibinfo {author} {\bibfnamefont {E.~H.}\ \bibnamefont
  {Lieb}}\ and\ \bibinfo {author} {\bibfnamefont {W.~E.}\ \bibnamefont
  {Thirring}},\ }\bibfield  {title} {\bibinfo {title} {Inequalities for the
  moments of the eigenvalues of the {Schrodinger} {Hamiltonian} and their
  relation to {Sobolev} inequalities}\ }(\bibinfo  {publisher} {Walter de
  Gruyter {GmbH}},\ \bibinfo {year} {1976})\BibitemShut {NoStop}%
\bibitem [{\citenamefont {Audenaert}\ \emph {et~al.}(2008)\citenamefont
  {Audenaert}, \citenamefont {Nussbaum}, \citenamefont {Szko{\l}a},\ and\
  \citenamefont {Verstraete}}]{ANS+08}%
  \BibitemOpen
  \bibfield  {author} {\bibinfo {author} {\bibfnamefont {K.~M.~R.}\
  \bibnamefont {Audenaert}}, \bibinfo {author} {\bibfnamefont {M.}~\bibnamefont
  {Nussbaum}}, \bibinfo {author} {\bibfnamefont {A.}~\bibnamefont
  {Szko{\l}a}},\ and\ \bibinfo {author} {\bibfnamefont {F.}~\bibnamefont
  {Verstraete}},\ }\bibfield  {title} {\bibinfo {title} {Asymptotic error rates
  in quantum hypothesis testing},\ }\href
  {https://doi.org/10.1007/s00220-008-0417-5} {\bibfield  {journal} {\bibinfo
  {journal} {Communications in Mathematical Physics}\ }\textbf {\bibinfo
  {volume} {279}},\ \bibinfo {pages} {251} (\bibinfo {year}
  {2008})}\BibitemShut {NoStop}%
\bibitem [{\citenamefont {Fuchs}\ and\ \citenamefont {van~de
  Graaf}(1999)}]{FG99}%
  \BibitemOpen
  \bibfield  {author} {\bibinfo {author} {\bibfnamefont {C.}~\bibnamefont
  {Fuchs}}\ and\ \bibinfo {author} {\bibfnamefont {J.}~\bibnamefont {van~de
  Graaf}},\ }\bibfield  {title} {\bibinfo {title} {Cryptographic
  distinguishability measures for quantum-mechanical states},\ }\href
  {https://doi.org/10.1109/18.761271} {\bibfield  {journal} {\bibinfo
  {journal} {{IEEE} Transactions on Information Theory}\ }\textbf {\bibinfo
  {volume} {45}},\ \bibinfo {pages} {1216} (\bibinfo {year}
  {1999})}\BibitemShut {NoStop}%
\bibitem [{\citenamefont {Audenaert}(2014)}]{Aud14}%
  \BibitemOpen
  \bibfield  {author} {\bibinfo {author} {\bibfnamefont {K.~M.}\ \bibnamefont
  {Audenaert}},\ }\bibfield  {title} {\bibinfo {title} {Comparisons between
  quantum state distinguishability measures},\ }\href
  {https://doi.org/10.26421/qic14.1-2-2} {\bibfield  {journal} {\bibinfo
  {journal} {Quantum Information and Computation}\ }\textbf {\bibinfo {volume}
  {14}},\ \bibinfo {pages} {31} (\bibinfo {year} {2014})}\BibitemShut {NoStop}%
\bibitem [{\citenamefont {Cheng}(2023)}]{Cheng2022}%
  \BibitemOpen
  \bibfield  {author} {\bibinfo {author} {\bibfnamefont {H.-C.}\ \bibnamefont
  {Cheng}},\ }\bibfield  {title} {\bibinfo {title} {Simple and tighter
  derivation of achievability for classical communication over quantum
  channels},\ }\href {https://doi.org/10.1103/PRXQuantum.4.040330} {\bibfield
  {journal} {\bibinfo  {journal} {PRX Quantum}\ }\textbf {\bibinfo {volume}
  {4}},\ \bibinfo {pages} {040330} (\bibinfo {year} {2023})}\BibitemShut
  {NoStop}%
\bibitem [{\citenamefont {Iten}\ \emph {et~al.}(2017)\citenamefont {Iten},
  \citenamefont {Renes},\ and\ \citenamefont {Sutter}}]{IRS17}%
  \BibitemOpen
  \bibfield  {author} {\bibinfo {author} {\bibfnamefont {R.}~\bibnamefont
  {Iten}}, \bibinfo {author} {\bibfnamefont {J.~M.}\ \bibnamefont {Renes}},\
  and\ \bibinfo {author} {\bibfnamefont {D.}~\bibnamefont {Sutter}},\
  }\bibfield  {title} {\bibinfo {title} {Pretty good measures in quantum
  information theory},\ }\href {https://doi.org/10.1109/tit.2016.2639521}
  {\bibfield  {journal} {\bibinfo  {journal} {{IEEE} Transactions on
  Information Theory}\ }\textbf {\bibinfo {volume} {63}},\ \bibinfo {pages}
  {1270} (\bibinfo {year} {2017})}\BibitemShut {NoStop}%
\bibitem [{\citenamefont {Powers}\ and\ \citenamefont
  {St{\o}rmer}(1970)}]{PS70}%
  \BibitemOpen
  \bibfield  {author} {\bibinfo {author} {\bibfnamefont {R.~T.}\ \bibnamefont
  {Powers}}\ and\ \bibinfo {author} {\bibfnamefont {E.}~\bibnamefont
  {St{\o}rmer}},\ }\bibfield  {title} {\bibinfo {title} {Free states of the
  canonical anticommutation relations},\ }\href
  {https://doi.org/10.1007/bf01645492} {\bibfield  {journal} {\bibinfo
  {journal} {Communications in Mathematical Physics}\ }\textbf {\bibinfo
  {volume} {16}},\ \bibinfo {pages} {1} (\bibinfo {year} {1970})}\BibitemShut
  {NoStop}%
\bibitem [{\citenamefont {Zhang}\ and\ \citenamefont {Wu}(2014)}]{ZW14}%
  \BibitemOpen
  \bibfield  {author} {\bibinfo {author} {\bibfnamefont {L.}~\bibnamefont
  {Zhang}}\ and\ \bibinfo {author} {\bibfnamefont {J.}~\bibnamefont {Wu}},\
  }\bibfield  {title} {\bibinfo {title} {A lower bound of quantum conditional
  mutual information},\ }\href {https://doi.org/10.1088/1751-8113/47/41/415303}
  {\bibfield  {journal} {\bibinfo  {journal} {Journal of Physics A:
  Mathematical and Theoretical}\ }\textbf {\bibinfo {volume} {47}},\ \bibinfo
  {pages} {415303} (\bibinfo {year} {2014})}\BibitemShut {NoStop}%
\bibitem [{\citenamefont {Coles}\ \emph {et~al.}(2014)\citenamefont {Coles},
  \citenamefont {Kaniewski},\ and\ \citenamefont {Wehner}}]{CKW14}%
  \BibitemOpen
  \bibfield  {author} {\bibinfo {author} {\bibfnamefont {P.~J.}\ \bibnamefont
  {Coles}}, \bibinfo {author} {\bibfnamefont {J.}~\bibnamefont {Kaniewski}},\
  and\ \bibinfo {author} {\bibfnamefont {S.}~\bibnamefont {Wehner}},\
  }\bibfield  {title} {\bibinfo {title} {Equivalence of
  wave{\textendash}particle duality to entropic uncertainty},\ }\href
  {https://doi.org/10.1038/ncomms6814} {\bibfield  {journal} {\bibinfo
  {journal} {Nature Communications}\ }\textbf {\bibinfo {volume} {5}},\
  \bibinfo {pages} {5814} (\bibinfo {year} {2014})}\BibitemShut {NoStop}%
\bibitem [{\citenamefont {Wilde}\ \emph {et~al.}(2020)\citenamefont {Wilde},
  \citenamefont {Berta}, \citenamefont {Hirche},\ and\ \citenamefont
  {Kaur}}]{WBH+20}%
  \BibitemOpen
  \bibfield  {author} {\bibinfo {author} {\bibfnamefont {M.~M.}\ \bibnamefont
  {Wilde}}, \bibinfo {author} {\bibfnamefont {M.}~\bibnamefont {Berta}},
  \bibinfo {author} {\bibfnamefont {C.}~\bibnamefont {Hirche}},\ and\ \bibinfo
  {author} {\bibfnamefont {E.}~\bibnamefont {Kaur}},\ }\bibfield  {title}
  {\bibinfo {title} {Amortized channel divergence for asymptotic quantum
  channel discrimination},\ }\href {https://doi.org/10.1007/s11005-020-01297-7}
  {\bibfield  {journal} {\bibinfo  {journal} {Letters in Mathematical Physics}\
  }\textbf {\bibinfo {volume} {110}},\ \bibinfo {pages} {2277} (\bibinfo {year}
  {2020})}\BibitemShut {NoStop}%
\bibitem [{\citenamefont {Shen}\ \emph {et~al.}(2022)\citenamefont {Shen},
  \citenamefont {Gao},\ and\ \citenamefont {Cheng}}]{SGC22a}%
  \BibitemOpen
  \bibfield  {author} {\bibinfo {author} {\bibfnamefont {Y.-C.}\ \bibnamefont
  {Shen}}, \bibinfo {author} {\bibfnamefont {L.}~\bibnamefont {Gao}},\ and\
  \bibinfo {author} {\bibfnamefont {H.-C.}\ \bibnamefont {Cheng}},\ }\href@noop
  {} {\bibinfo {title} {Strong converse for privacy amplification against
  quantum side information}} (\bibinfo {year} {2022}),\ \Eprint
  {https://arxiv.org/abs/2202.10263} {arXiv:2202.10263 [quant-ph]} \BibitemShut
  {NoStop}%
\bibitem [{\citenamefont {Mosonyi}\ and\ \citenamefont {Ogawa}(2014)}]{MO14}%
  \BibitemOpen
  \bibfield  {author} {\bibinfo {author} {\bibfnamefont {M.}~\bibnamefont
  {Mosonyi}}\ and\ \bibinfo {author} {\bibfnamefont {T.}~\bibnamefont
  {Ogawa}},\ }\bibfield  {title} {\bibinfo {title} {Quantum hypothesis testing
  and the operational interpretation of the quantum {R{\'{e}}nyi} relative
  entropies},\ }\href {https://doi.org/10.1007/s00220-014-2248-x} {\bibfield
  {journal} {\bibinfo  {journal} {Communications in Mathematical Physics}\
  }\textbf {\bibinfo {volume} {334}},\ \bibinfo {pages} {1617} (\bibinfo {year}
  {2014})}\BibitemShut {NoStop}%
\bibitem [{\citenamefont {Khatri}\ and\ \citenamefont {Wilde}(2024)}]{KW20}%
  \BibitemOpen
  \bibfield  {author} {\bibinfo {author} {\bibfnamefont {S.}~\bibnamefont
  {Khatri}}\ and\ \bibinfo {author} {\bibfnamefont {M.~M.}\ \bibnamefont
  {Wilde}},\ }\href@noop {} {\bibinfo {title} {Principles of quantum
  communication theory: A modern approach}} (\bibinfo {year} {2024}),\ \Eprint
  {https://arxiv.org/abs/2011.04672v2} {arXiv:2011.04672v2} \BibitemShut
  {NoStop}%
\bibitem [{\citenamefont {Cheng}\ and\ \citenamefont {Gao}(2024)}]{CG24a}%
  \BibitemOpen
  \bibfield  {author} {\bibinfo {author} {\bibfnamefont {H.-C.}\ \bibnamefont
  {Cheng}}\ and\ \bibinfo {author} {\bibfnamefont {L.}~\bibnamefont {Gao}},\
  }\href@noop {} {\bibinfo {title} {On strong converse theorems for quantum
  hypothesis testing and channel coding}} (\bibinfo {year} {2024}),\ \Eprint
  {https://arxiv.org/abs/2403.13584} {arXiv:2403.13584} \BibitemShut {NoStop}%
\bibitem [{\citenamefont {Audenaert}\ \emph {et~al.}(2012)\citenamefont
  {Audenaert}, \citenamefont {Mosonyi},\ and\ \citenamefont
  {Verstraete}}]{AMV12}%
  \BibitemOpen
  \bibfield  {author} {\bibinfo {author} {\bibfnamefont {K.~M.~R.}\
  \bibnamefont {Audenaert}}, \bibinfo {author} {\bibfnamefont {M.}~\bibnamefont
  {Mosonyi}},\ and\ \bibinfo {author} {\bibfnamefont {F.}~\bibnamefont
  {Verstraete}},\ }\bibfield  {title} {\bibinfo {title} {Quantum state
  discrimination bounds for finite sample size},\ }\href
  {https://doi.org/10.1063/1.4768252} {\bibfield  {journal} {\bibinfo
  {journal} {Journal of Mathematical Physics}\ }\textbf {\bibinfo {volume}
  {53}},\ \bibinfo {pages} {122205} (\bibinfo {year} {2012})}\BibitemShut
  {NoStop}%
\bibitem [{\citenamefont {Litjens}\ \emph {et~al.}(2017)\citenamefont
  {Litjens}, \citenamefont {Polak},\ and\ \citenamefont
  {Schrijver}}]{litjens_semidefinite_2017}%
  \BibitemOpen
  \bibfield  {author} {\bibinfo {author} {\bibfnamefont {B.}~\bibnamefont
  {Litjens}}, \bibinfo {author} {\bibfnamefont {S.}~\bibnamefont {Polak}},\
  and\ \bibinfo {author} {\bibfnamefont {A.}~\bibnamefont {Schrijver}},\
  }\bibfield  {title} {\bibinfo {title} {Semidefinite bounds for nonbinary
  codes based on quadruples},\ }\href
  {https://doi.org/10.1007/s10623-016-0216-5} {\bibfield  {journal} {\bibinfo
  {journal} {Designs, Codes and Cryptography}\ }\textbf {\bibinfo {volume}
  {84}},\ \bibinfo {pages} {87} (\bibinfo {year} {2017})}\BibitemShut {NoStop}%
\bibitem [{\citenamefont {Jiang}\ \emph {et~al.}(2020)\citenamefont {Jiang},
  \citenamefont {Kathuria}, \citenamefont {Lee}, \citenamefont {Padmanabhan},\
  and\ \citenamefont {Song}}]{jiang_faster_2020}%
  \BibitemOpen
  \bibfield  {author} {\bibinfo {author} {\bibfnamefont {H.}~\bibnamefont
  {Jiang}}, \bibinfo {author} {\bibfnamefont {T.}~\bibnamefont {Kathuria}},
  \bibinfo {author} {\bibfnamefont {Y.~T.}\ \bibnamefont {Lee}}, \bibinfo
  {author} {\bibfnamefont {S.}~\bibnamefont {Padmanabhan}},\ and\ \bibinfo
  {author} {\bibfnamefont {Z.}~\bibnamefont {Song}},\ }\bibfield  {title}
  {\bibinfo {title} {A faster interior point method for semidefinite
  programming},\ }in\ \href {https://doi.org/10.1109/FOCS46700.2020.00089}
  {\emph {\bibinfo {booktitle} {IEEE 61st Annual Symposium on Foundations of
  Computer Science (FOCS)}}}\ (\bibinfo {year} {2020})\ pp.\ \bibinfo {pages}
  {910--918}\BibitemShut {NoStop}%
\bibitem [{\citenamefont {Wilde}\ \emph {et~al.}(2013)\citenamefont {Wilde},
  \citenamefont {Landon-Cardinal},\ and\ \citenamefont
  {Hayden}}]{wilde_et_al:LIPIcs.TQC.2013.157}%
  \BibitemOpen
  \bibfield  {author} {\bibinfo {author} {\bibfnamefont {M.~M.}\ \bibnamefont
  {Wilde}}, \bibinfo {author} {\bibfnamefont {O.}~\bibnamefont
  {Landon-Cardinal}},\ and\ \bibinfo {author} {\bibfnamefont {P.}~\bibnamefont
  {Hayden}},\ }\bibfield  {title} {\bibinfo {title} {Towards efficient decoding
  of classical-quantum polar codes},\ }in\ \href
  {https://doi.org/10.4230/LIPIcs.TQC.2013.157} {\emph {\bibinfo {booktitle}
  {8th Conference on the Theory of Quantum Computation, Communication and
  Cryptography (TQC 2013)}}},\ \bibinfo {series} {Leibniz International
  Proceedings in Informatics (LIPIcs)}, Vol.~\bibinfo {volume} {22}\ (\bibinfo
  {year} {2013})\ pp.\ \bibinfo {pages} {157--177}\BibitemShut {NoStop}%
\bibitem [{\citenamefont {Bergh}\ \emph {et~al.}(2023)\citenamefont {Bergh},
  \citenamefont {Kochanowski}, \citenamefont {Salzmann},\ and\ \citenamefont
  {Datta}}]{bergh2023infinite}%
  \BibitemOpen
  \bibfield  {author} {\bibinfo {author} {\bibfnamefont {B.}~\bibnamefont
  {Bergh}}, \bibinfo {author} {\bibfnamefont {J.}~\bibnamefont {Kochanowski}},
  \bibinfo {author} {\bibfnamefont {R.}~\bibnamefont {Salzmann}},\ and\
  \bibinfo {author} {\bibfnamefont {N.}~\bibnamefont {Datta}},\ }\href@noop {}
  {\bibinfo {title} {Infinite dimensional asymmetric quantum channel
  discrimination}} (\bibinfo {year} {2023}),\ \Eprint
  {https://arxiv.org/abs/2308.12959} {arXiv:2308.12959 [quant-ph]} \BibitemShut
  {NoStop}%
\bibitem [{\citenamefont {Tomamichel}\ \emph {et~al.}(2009)\citenamefont
  {Tomamichel}, \citenamefont {Colbeck},\ and\ \citenamefont
  {Renner}}]{TCR2009}%
  \BibitemOpen
  \bibfield  {author} {\bibinfo {author} {\bibfnamefont {M.}~\bibnamefont
  {Tomamichel}}, \bibinfo {author} {\bibfnamefont {R.}~\bibnamefont
  {Colbeck}},\ and\ \bibinfo {author} {\bibfnamefont {R.}~\bibnamefont
  {Renner}},\ }\bibfield  {title} {\bibinfo {title} {A fully quantum asymptotic
  equipartition property},\ }\href {https://doi.org/10.1109/TIT.2009.2032797}
  {\bibfield  {journal} {\bibinfo  {journal} {IEEE Transactions on Information
  Theory}\ }\textbf {\bibinfo {volume} {55}},\ \bibinfo {pages} {5840}
  (\bibinfo {year} {2009})}\BibitemShut {NoStop}%
\bibitem [{\citenamefont {Tomamichel}(2012)}]{tomamichel2013framework}%
  \BibitemOpen
  \bibfield  {author} {\bibinfo {author} {\bibfnamefont {M.}~\bibnamefont
  {Tomamichel}},\ }\href@noop {} {\bibinfo {title} {A framework for
  non-asymptotic quantum information theory}} (\bibinfo {year} {2012}),\
  \Eprint {https://arxiv.org/abs/1203.2142} {arXiv:1203.2142 [quant-ph]}
  \BibitemShut {NoStop}%
\end{thebibliography}%

\appendix

\section{Properties of and relations between quantum divergences}

\label{sec:proofs-divergences}

\begin{lemma} \label{lemma:Fact}
	Let $A$, $B$, and $C$ be arbitrary positive semi-definite operators, and let $\rho$ and $\sigma$ be states. The following hold.
	\begin{enumerate}[(i)]
		\item\label{fact:additivity} Multiplicativity and Additivity:
		For every $n\in\mathbb{N}$,
		\begin{align}
			F\!\left(\rho^{\otimes n}, \sigma^{\otimes n}\right) 
			&= \left[ F(\rho,\sigma) \right]^n,
			\\
			F_\mathrm{H}\!\left(\rho^{\otimes n}, \sigma^{\otimes n}\right) 
			&= \left[ F_\mathrm{H}(\rho,\sigma) \right]^n,
			\\
			Q_\textnormal{min}\!\left(\rho^{\otimes n}, \sigma^{\otimes n}\right) 
			&= \left[ Q_\textnormal{min}(\rho,\sigma) \right]^n,
			\\
			\label{eq:Bures_additivity}
			1 - \frac12\left[ d_\mathrm{B}\!\left(\rho^{\otimes n}, \sigma^{\otimes n}\right) \right]^2
			&= \left( 1 - \frac12\left[ d_\mathrm{B}(\rho,\sigma) \right]^2 \right)^n,
			\\
			\label{eq:Hellinger_additivity}			
			1 - \frac12\left[ d_\mathrm{H}\!\left(\rho^{\otimes n}, \sigma^{\otimes n}\right) \right]^2
			&= \left( 1 - \frac12\left[ d_\mathrm{H}(\rho,\sigma) \right]^2 \right)^n,
			\\
			C\!\left(\rho^{\otimes n}\| \sigma^{\otimes n}\right)
			&= n C(\rho\|\sigma).
		\end{align}
	
		\item\label{fact:relation_fidelity}
		Relations between distances~\cite{Ara90, LT76, ANS+08}:
		\begin{align} 
  \label{eq:relation_fidelity}
			F_\mathrm{H}(\rho,\sigma)
			&\leq F(\rho,\sigma)
			\leq Q_{\min}(\rho\|\sigma)
			\leq \sqrt{F_\mathrm{H}}(\rho,\sigma),
			\\
			\label{eq:relation_Bures_Hellinger}
			d_\mathrm{B}(\rho,\sigma)
			&\leq
			d_\mathrm{H}(\rho,\sigma)
            \leq
            \sqrt{2} d_\mathrm{B}(\rho,\sigma),
			\\
			\label{eq:relation_Bures_Chernoff}
			 \left[ d_\mathrm{B}(\rho,\sigma) \right]^2
			&\leq - \ln F(\rho,\sigma)
			\leq -\ln F_\mathrm{H}(\rho,\sigma)			
			\leq 2 C(\rho\|\sigma)
			\leq -2\ln F(\rho,\sigma).
		\end{align}

		\item\label{fact:Chernoff} Quantum Chernoff bound 		\cite{ACM+07}:
		\begin{align} \label{eq:Chernoff}
			\frac{1}{2}\left( \Tr\!\left[A+B\right]-\left\|A-B\right\|_1 \right)
			\leq Q_\textnormal{min}(A\|B).
		\end{align}
		
		\item\label{fact:FG} Generalized Fuchs--van de Graaf inequality~\cite{Holevo1972fid,FG99, Aud14}:
		\begin{align} \label{eq:FG99}
			\Tr\!\left[A+B\right] - 2 \Tr\!\left[ \sqrt{A}\sqrt{B} \right]
			\leq \left\|A-B\right\|_1
			\leq \sqrt{ \left( \Tr[A+B]\right)^2 - 4\left\|\sqrt{A}\sqrt{B} \right\|_1^2 }.
		\end{align}
		
		\item\label{fact:PGM_MIN} A relation for the pretty-good test~\cite{Cheng2022}:
		\begin{align} \label{eq:Cheng2022}
			\Tr\!\left[ A \left(A+B\right)^{-1/2} B \left( A+B \right)^{-1/2} \right] \leq
			\frac12\Tr\!\left[A+B\right] - \frac12\left\|A-B\right\|_1.
		\end{align}
		
		\item\label{fact:subadditivity} Subadditivty~\cite{AM14}:
		\begin{align}
			\left\|\sqrt{A}\sqrt{B+C}\right\|_1 
			\leq \left\|\sqrt{A}\sqrt{B}\right\|_1 + \left\|\sqrt{A}\sqrt{C}\right\|_1.
		\end{align}

	\end{enumerate}
\end{lemma}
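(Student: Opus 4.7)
The lemma collects six clusters of well-known facts, so my plan is to dispatch each part in the lightest available way, leaning on the cited references whenever reasonable and only writing out the arguments that are short and self-contained.

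For part \eqref{fact:additivity}, the plan is pure bookkeeping on tensor powers. The spectral theorem gives $(\rho^{\otimes n})^s = (\rho^s)^{\otimes n}$ for all $s\geq 0$, the trace is multiplicative under tensor products, and the Schatten norms satisfy $\|X \otimes Y\|_p = \|X\|_p\|Y\|_p$. Combining these, I would verify for the quantum fidelity that
\begin{equation}
F(\rho^{\otimes n},\sigma^{\otimes n}) = \left\|(\sqrt{\rho}\sqrt{\sigma})^{\otimes n}\right\|_1^2 = \|\sqrt{\rho}\sqrt{\sigma}\|_1^{2n} = F(\rho,\sigma)^n,
\end{equation}
and the Holevo fidelity and each $Q_s$ follow by the same one-line computation. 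For $Q_{\min}$, I would note that $Q_s(\rho^{\otimes n}\|\sigma^{\otimes n}) = Q_s(\rho\|\sigma)^n$ and that $x \mapsto x^n$ is monotone non-decreasing on $[0,\infty)$, so the minimiser over $s\in[0,1]$ coincides, yielding $Q_{\min}(\rho^{\otimes n}\|\sigma^{\otimes n}) = Q_{\min}(\rho\|\sigma)^n$ and hence additivity of the Chernoff divergence. The Bures and Hellinger identities in \eqref{eq:Bures_additivity}--\eqref{eq:Hellinger_additivity} then follow by unwrapping the definitions $1-\tfrac12 d_{\mathrm{B}}^2 = \sqrt{F}$ and $1-\tfrac12 d_{\mathrm{H}}^2 = \sqrt{F_{\mathrm{H}}}$ and substituting the multiplicativity of the corresponding fidelity.

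For part \eqref{fact:relation_fidelity}, the plan is to reduce each inequality to a standard operator-theoretic fact. The bound $F_{\mathrm{H}} \leq F$ is immediate from $|\Tr[X]| \leq \|X\|_1$ applied to $X = \sqrt{\rho}\sqrt{\sigma}$. The sandwich $F \leq Q_{\min} \leq \sqrt{F_{\mathrm{H}}}$ is where the main work lies; my plan is to obtain both halves from the Araki--Lieb--Thirring inequality together with H\"older-type bounds on $\|\sqrt{\rho}\sqrt{\sigma}\|_1 = \Tr\bigl[\sqrt{\sqrt{\sigma}\rho\sqrt{\sigma}}\bigr]$, invoking the references \cite{Ara90,LT76,ANS+08} for the precise chain. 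The relation $d_{\mathrm{B}} \leq d_{\mathrm{H}} \leq \sqrt 2\, d_{\mathrm{B}}$ follows by re-writing both distances via $1 - \sqrt{F}$ and $1 - \sqrt{F_{\mathrm{H}}}$ and combining with $F_{\mathrm{H}} \leq F$ together with $\sqrt{F_{\mathrm{H}}} \geq F_{\mathrm{H}} \geq \bigl(\sqrt{F}\bigr)^{2}$ to pin down the constants. Finally, the chain \eqref{eq:relation_Bures_Chernoff} is a consequence of $-\ln x \geq 1-x$ at the left end (giving $d_{\mathrm{B}}^2 \leq -\ln F$), $F_{\mathrm{H}}\leq F$ and the already-proven $F\leq Q_{\min}\leq \sqrt{F_{\mathrm{H}}}$ at the middle, and $-\ln \sqrt{F_{\mathrm{H}}} \leq -\ln F$ at the right end after taking logarithms.

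For parts \eqref{fact:Chernoff}--\eqref{fact:subadditivity}, the plan is essentially to invoke the cited papers. The Audenaert bound \eqref{eq:Chernoff} is precisely the key operator inequality from \cite{ACM+07}: for any $s \in [0,1]$, $\tfrac12 \Tr[A+B] - \tfrac12\|A-B\|_1 \leq \Tr[A^s B^{1-s}]$; minimising over $s$ yields the stated inequality. The generalised Fuchs--van de Graaf bounds \eqref{eq:FG99} are proved in \cite{Holevo1972fid,FG99,Aud14}; I would reproduce them by combining \eqref{eq:Chernoff} at $s = 1/2$ with the identity \eqref{eq:vector-trace-norm-fid-identity} extended to positive semi-definite operators through block-matrix purification. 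The pretty-good test inequality \eqref{eq:Cheng2022} is exactly Lemma~1 of \cite{Cheng2022}, and I would simply cite it, noting that its proof proceeds by an operator-monotonicity argument applied to $(A+B)^{-1/2}$. The subadditivity claim in \eqref{fact:subadditivity} is Lemma~5 of \cite{AM14}, whose proof combines the triangle inequality for $\|\cdot\|_1$ with the operator convexity of $t \mapsto \sqrt{t}$. The main obstacle, and the only step where I would need to do real work rather than cite, is establishing the sandwich $F \leq Q_{\min} \leq \sqrt{F_{\mathrm{H}}}$ in part \eqref{fact:relation_fidelity}, because the direction $F \leq Q_s$ for every $s\in[0,1]$ requires a careful use of Araki--Lieb--Thirring with the right exponent, and it is easy to get the inequality the wrong way around.
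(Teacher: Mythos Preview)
Your plan follows the paper's proof closely, but there is one slip. In deriving $d_{\mathrm{H}} \leq \sqrt{2}\, d_{\mathrm{B}}$ you write ``$\sqrt{F_{\mathrm{H}}} \geq F_{\mathrm{H}} \geq (\sqrt{F})^{2}$,'' yet $(\sqrt{F})^{2} = F$ and you have just argued $F_{\mathrm{H}} \leq F$, so the second inequality points the wrong way. What you actually need is $\sqrt{F_{\mathrm{H}}} \geq F$, and this comes straight from the sandwich $F \leq Q_{\min} \leq \sqrt{F_{\mathrm{H}}}$ that you correctly single out as the crux; the paper does exactly that and then factors $1 - F = (1-\sqrt{F})(1+\sqrt{F}) \leq 2(1-\sqrt{F})$ to finish.

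There is also a small divergence in tactics on the first line of \eqref{eq:relation_fidelity}. For $F_{\mathrm{H}} \leq F$ your one-liner $|\Tr X| \leq \|X\|_1$ with $X = \sqrt{\rho}\sqrt{\sigma}$ is correct and more elementary than the paper's appeal to Araki--Lieb--Thirring. In the other direction, for $F \leq Q_{\min}$ the paper is sharper than your vague ``ALT plus H\"older'': it invokes \cite[Theorem~6]{ANS+08}, namely $\|\sqrt{A}\sqrt{B}\|_1 \leq (\Tr[A^s B^{1-s}])^{1/2}(\Tr A)^{(1-s)/2}(\Tr B)^{s/2}$ for all $s\in[0,1]$, which for states gives $F \leq Q_s$ in one line and hence $F \leq Q_{\min}$. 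Parts \eqref{fact:additivity} and \eqref{fact:Chernoff}--\eqref{fact:subadditivity} of your plan match the paper's treatment.
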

\begin{proof}
	\ref{fact:additivity}: The multiplicativity of $F$, $F_\mathrm{H}$, and $Q_{\min}$ follow by inspection.
	Eqs.~\eqref{eq:Bures_additivity} and~\eqref{eq:Hellinger_additivity} follow from the identities $1 - \frac12 \left[ d_\mathrm{B}(\rho,\sigma) \right]^2 = \sqrt{F}(\rho,\sigma)$ and $1 - \frac12 \left[ d_\mathrm{H}(\rho,\sigma) \right]^2 = \sqrt{F_\mathrm{H}}(\rho,\sigma)$.
	
	\ref{fact:relation_fidelity}:
	The first inequality of~\eqref{eq:relation_fidelity} follows from the Araki--Lieb--Thirring inequality~\cite{Ara90, LT76}, and the second from a~\cite[Theorem 6]{ANS+08} with $A = \rho$, $B = \sigma$, and $s = 1/2$:
	\begin{align} \label{eq:Audenaert_Thm6}
		\left\|\sqrt{A}\sqrt{B}\right\|_1
		\leq \left( \Tr\!\left[ A^s B^{1-s} \right] \right)^{1/2} \left( \Tr\!\left[A\right]\right)^{(1-s)/2} \left(\Tr\!\left[B\right]\right)^{s/2}, \quad \forall s\in[0,1].
	\end{align}
	The last inequality of~\eqref{eq:relation_fidelity} follows from definitions.
	We note that the relation $F(\rho,\sigma)
	\leq \sqrt{F_\mathrm{H}}(\rho,\sigma)$
	 was also shown in~\cite[Eq.~(38)]{IRS17}.
	 
	The first inequality of~\eqref{eq:relation_Bures_Hellinger} follows from~\eqref{eq:relation_fidelity}, and the second follows from
	\begin{align}
		\left[ d_\mathrm{H}(\rho,\sigma) \right]^2
		&= 2 \left( 1 - \sqrt{F_{\mathrm{H}}}(\rho,\sigma) \right)
		\\
		&\leq 2 ( 1 - F(\rho,\sigma))
		\\
		&= 2 \left( 1 - \sqrt{F}(\rho,\sigma) \right) \left( 1 + \sqrt{F}(\rho,\sigma) \right)
		\\
		&\leq 4 \left( 1 - \sqrt{F}(\rho,\sigma) \right) 
		\\
		&= 2\left[ d_\mathrm{B}(\rho,\sigma) \right]^2.
	\end{align}
    The first inequality of~\eqref{eq:relation_Bures_Chernoff}
    follows from $[d_{\mathrm{B}}(\rho\,\sigma)]^2 = 2 ( 1 - \sqrt{F}(\rho,\sigma) ) \geq -2 \ln \sqrt{F}(\rho,\sigma)$ because of $\ln x \leq x - 1$ for all $x>0$.
    The rest of inequalities in~\eqref{eq:relation_Bures_Chernoff}
    follow from~\eqref{eq:relation_fidelity}.

	\ref{fact:Chernoff}: The inequality was proved in~\cite[Theorem 1]{ACM+07}.
	
	\ref{fact:FG}: 
	The first inequality follows from~\eqref{eq:Chernoff} by choosing $s=1/2$ as a feasible solution in the definition of $Q_\text{min}$ (see also~\cite{PS70} and~\cite[Eq.~(2.6)]{ZW14}).
	The second inequality follows from~\cite[Lemma 2.4]{AM14} (see also~\cite[Theorem 5]{Aud14},~\cite[Supplementary Lemma 3]{CKW14} and~\cite[Eq.~(156)]{WBH+20}).
	
	\ref{fact:PGM_MIN}: The inequality was proved in~\cite[Lemma 1]{Cheng2022} (see also~\cite[Lemma 3]{SGC22a}).
	
	\ref{fact:subadditivity}: The inequality was proved in~\cite[Lemma 4.9]{AM14}.
	
\end{proof}

Recalling $\beta_{\varepsilon}(\rho\Vert \sigma)$ defined in~\eqref{eq:beta-err-asymm}, we have the following bounds relating it to quantum R\'enyi relative entropies.

\begin{lemma}[{\cite[Eq.~(75)]{MO14},~\cite[Proposition~7.71]{KW20},~\cite[Theorem 1]{CG24a}}] \label{lemma:strong_converse}
	Let $\rho$ and $\sigma$ be states.
	For all $\alpha \in (1,\infty)$ and $\varepsilon \in [0,1)$, the following inequality holds
	\begin{align}
		-\ln \beta_{\varepsilon}(\rho\Vert \sigma)
		\leq \widetilde{D}_{\alpha}(\rho\Vert \sigma)
		+\frac{\alpha}{\alpha-1} \ln\!\left( \frac{1}{1-\varepsilon} \right).
	\end{align}
\end{lemma}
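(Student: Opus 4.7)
The plan is to reduce the quantum inequality to a classical one by applying the data-processing inequality for the sandwiched R\'enyi divergence (valid for $\alpha \in (1,\infty)$) to the two-outcome measurement channel induced by a near-optimal test. Concretely, fix $\eta > 0$ and pick a test $\Lambda$ with $0 \leq \Lambda \leq I$ satisfying $\Tr[(I-\Lambda)\rho] \leq \varepsilon$ and $\Tr[\Lambda \sigma] \leq \beta_{\varepsilon}(\rho\Vert\sigma) + \eta$. Let $\mathcal{M}$ denote the two-outcome measurement channel associated with $\{\Lambda, I-\Lambda\}$, producing classical probability vectors $(p_0,p_1) = (\Tr[\Lambda\rho], \Tr[(I-\Lambda)\rho])$ and $(q_0,q_1) = (\Tr[\Lambda\sigma],\Tr[(I-\Lambda)\sigma])$.

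Next, I invoke the data-processing inequality $\widetilde{D}_\alpha(\rho\Vert\sigma) \geq \widetilde{D}_\alpha(\mathcal{M}(\rho)\Vert\mathcal{M}(\sigma))$, which holds for all $\alpha \in (1,\infty)$ as noted in the excerpt. In the classical (binary) case the right-hand side equals
\begin{equation}
\widetilde{D}_\alpha(\mathcal{M}(\rho)\Vert\mathcal{M}(\sigma)) = \frac{1}{\alpha-1}\ln\!\left( p_0^\alpha q_0^{1-\alpha} + p_1^\alpha q_1^{1-\alpha}\right),
\end{equation}
and since both summands are nonnegative and $\alpha > 1$, I discard the second one to obtain the lower bound
\begin{equation}
\widetilde{D}_\alpha(\rho\Vert\sigma) \geq \frac{1}{\alpha-1}\ln\!\left( p_0^\alpha q_0^{1-\alpha} \right) = \frac{\alpha}{\alpha-1}\ln p_0 - \ln q_0.
\end{equation}

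Finally, I substitute the estimates $p_0 = \Tr[\Lambda\rho] \geq 1-\varepsilon$ (so $\ln p_0 \geq \ln(1-\varepsilon)$; this preserves direction since $\alpha/(\alpha-1) > 0$) and $q_0 = \Tr[\Lambda\sigma] \leq \beta_\varepsilon(\rho\Vert\sigma) + \eta$, which yields
\begin{equation}
\widetilde{D}_\alpha(\rho\Vert\sigma) \geq \frac{\alpha}{\alpha-1}\ln(1-\varepsilon) - \ln(\beta_\varepsilon(\rho\Vert\sigma)+\eta).
\end{equation}
Rearranging gives the claim up to the $\eta$ slack, and taking $\eta \to 0^+$ finishes the argument. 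There is essentially no hard step: the only subtlety is ensuring that a feasible $\Lambda$ attaining the infimum (or a sequence approaching it) exists in the separable Hilbert space setting, which is handled by the infimum definition and the $\eta$-limit; sign bookkeeping with $\alpha/(\alpha-1)$ and the monotonicity of $\ln$ on $(0,\infty)$ must also be done carefully so that dropping the $p_1^\alpha q_1^{1-\alpha}$ term actually produces a lower bound on $\widetilde{D}_\alpha$.
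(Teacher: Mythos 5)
Your proof is correct. The paper itself does not prove Lemma~\ref{lemma:strong_converse}---it imports it from the cited references---and your argument is precisely the standard one found there (e.g., in the proof of \cite[Proposition~7.71]{KW20}): apply the data-processing inequality for $\widetilde{D}_{\alpha}$, $\alpha>1$, to the two-outcome measurement channel of a near-optimal test, reduce to the binary classical R\'enyi divergence, drop the nonnegative term $p_1^{\alpha}q_1^{1-\alpha}$, and use $p_0 \geq 1-\varepsilon$, $q_0 \leq \beta_{\varepsilon}(\rho\Vert\sigma)+\eta$ before sending $\eta \to 0^+$. The degenerate cases ($q_0=0$ or $\beta_{\varepsilon}(\rho\Vert\sigma)=0$) are handled automatically in the extended reals, since they force $\widetilde{D}_{\alpha}(\rho\Vert\sigma)=+\infty$ or are absorbed by the supremum over $\eta$, so there is no gap.
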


\begin{lemma}[{\cite[Theorem 5]{ANS+08},~\cite[Proposition~3.2]{AMV12},~\cite[Proposition~7.72]{KW20},~\cite[Section IV.1]{Cheng2022}}] \label{lemma:Hoeffding}
	Let $\rho$ and $\sigma$ be states.
	For all $\alpha \in (0,1)$ and $\varepsilon \in (0,1]$, the following inequality holds
	\begin{align}
		-\ln \beta_{\varepsilon}(\rho\Vert \sigma)
		\geq {D}_{\alpha}(\rho\Vert \sigma)
		+\frac{\alpha}{\alpha-1} \ln\!\left( \frac{1}{\varepsilon} \right).
	\end{align}
\end{lemma}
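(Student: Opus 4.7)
The plan is an explicit achievability argument based on the quantum Neyman--Pearson test. For a parameter $\lambda>0$ to be chosen, I would work with the Helstrom--Holevo projection $\Lambda_\lambda \coloneqq \{\rho-\lambda\sigma \geq 0\}$ onto the nonnegative eigenspace of $\rho-\lambda\sigma$. The strategy is to pick $\lambda$ so that $\Lambda_\lambda$ is feasible for the optimization defining $\beta_\varepsilon(\rho\Vert\sigma)$ (that is, so that its Type~I error $\Tr[(I-\Lambda_\lambda)\rho]$ is at most $\varepsilon$), and then to upper bound its Type~II error $\Tr[\Lambda_\lambda\sigma]$ in terms of $Q_\alpha(\rho\Vert\sigma)$.

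The key tool is the pointwise form of the Audenaert inequality from \cite{ANS+08} underlying \eqref{eq:Chernoff}, i.e.\ the bound before minimizing over~$s$: for all positive semi-definite $A$, $B$ and every $s\in[0,1]$,
\begin{equation}
\Tr[(I-\Lambda)A] + \Tr[\Lambda B] \leq \Tr[A^s B^{1-s}], \qquad \Lambda \coloneqq \{A-B\geq 0\}.
\end{equation}
This is equivalent to \eqref{eq:Chernoff} via the identity $\Tr[A+B]-\left\|A-B\right\|_1 = 2\Tr[A-(A-B)_+] = 2\bigl(\Tr[(I-\Lambda)A]+\Tr[\Lambda B]\bigr)$. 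Setting $A=\rho$, $B=\lambda\sigma$ and extracting the scalar $\lambda^{1-s}$ yields the twin estimates
\begin{equation}
\Tr[(I-\Lambda_\lambda)\rho] \leq \lambda^{1-s} Q_s(\rho\Vert\sigma), \qquad \Tr[\Lambda_\lambda\sigma] \leq \lambda^{-s} Q_s(\rho\Vert\sigma).
\end{equation}

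Next, I would set $\lambda \coloneqq \bigl(\varepsilon / Q_s(\rho\Vert\sigma)\bigr)^{1/(1-s)}$ in order to saturate the Type~I constraint; this certifies feasibility of $\Lambda_\lambda$ and produces
\begin{equation}
\beta_\varepsilon(\rho\Vert\sigma) \leq \lambda^{-s} Q_s(\rho\Vert\sigma) = Q_s(\rho\Vert\sigma)^{1/(1-s)}\, \varepsilon^{-s/(1-s)}.
\end{equation}
Taking $-\ln$, relabeling $s$ as $\alpha \in (0,1)$, and using $D_\alpha(\rho\Vert\sigma)=-\tfrac{1}{1-\alpha}\ln Q_\alpha(\rho\Vert\sigma)$ together with $\tfrac{\alpha}{\alpha-1}=-\tfrac{\alpha}{1-\alpha}$ delivers the claimed lower bound on $-\ln\beta_\varepsilon(\rho\Vert\sigma)$ by elementary algebra. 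The cases $Q_\alpha(\rho\Vert\sigma) \in \{0,+\infty\}$ are either vacuous or trivially true, and the separable Hilbert-space setting is accommodated by the regularization $\sigma \mapsto \sigma + \varepsilon' I$ already built into the definition of $Q_\alpha$ in \eqref{eq:petz-renyi-div}, followed by $\varepsilon' \to 0^+$.

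I expect the main obstacle to be the Audenaert inequality itself, whose proof relies on operator-monotone function machinery (integral representations of $x\mapsto x^s$ for $s\in(0,1)$) and is not elementary. Since this result is already stated in the paper as Lemma~\ref{lemma:Fact}\ref{fact:Chernoff}, I would simply invoke \cite{ANS+08}; the remainder of the proof is then an explicit substitution and routine simplification.
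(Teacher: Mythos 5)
Your proposal is correct: the twin estimates, the threshold choice $\lambda = (\varepsilon/Q_s(\rho\Vert\sigma))^{1/(1-s)}$, and the closing algebra all check out, and the pointwise inequality you need follows immediately from the paper's Lemma~\ref{lemma:Fact}-\ref{fact:Chernoff}, since $Q_{\min}(A\Vert B) \leq Q_s(A\Vert B)$ for every $s\in[0,1]$ together with the identity $\frac12\left(\Tr[A+B]-\left\|A-B\right\|_1\right) = \Tr[(I-\Lambda)A]+\Tr[\Lambda B]$. The paper itself offers no proof of this lemma---it is recalled from the cited references---and your argument is essentially the standard one given there (quantum Neyman--Pearson test combined with the Audenaert inequality), i.e., the same approach the paper implicitly endorses when it notes that the upper bound in Theorem~\ref{thm:binary_asymmetric} is achieved by the Helstrom--Holevo test.
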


\section{Efficient representation of optimal measurements and efficient calculation of optimal error probabilities in quantum hypothesis testing}

\label{app:efficient-algo-sdp}

In this appendix, we begin by showing that the optimal measurement in symmetric hypothesis testing, i.e., the optimiser  in 
\begin{align}
	\label{eq:perrsdp} p_e(p, \rho,q,\sigma,n)
 & = \min_{\Lambda^{(n)}} \left\{ p \Tr\!\left[ (I^{\otimes n}-\Lambda^{(n)})\rho^{\otimes n} \right] + q \Tr[\Lambda^{(n)} \sigma^{\otimes n}] : 0\leq \Lambda^{(n)} \leq I^{\otimes n}  \right\}\\&= p   -\max_{\Lambda^{(n)}}\left\{ \Tr[\Lambda^{(n)}\left(p\rho^{\otimes n}- q\sigma^{\otimes n}\right)] : 0\leq \Lambda^{(n)} \leq I^{\otimes n} \right\}
 \\
	&= \frac12 \left( 1 - \left\| p \rho^{\otimes n} - q \sigma^{\otimes n} \right\|_1 \right),
\end{align}
can be efficiently represented by $\operatorname{poly}(n)$ variables that can be computed in $\operatorname{poly}(n)$ time (see Lemma~\ref{lem:EfficHolevo} below). For that we essentially follow argument of~\cite{FawziIEEE22} and~\cite[Lemma~11]{BDS+24}. After doing so, we then remark that the same reasoning can be applied to asymmetric binary and multiple quantum hypothesis testing (see Remark~\ref{rem:asym-bin-mult-poly-time}).

Let us first introduce some notation. We denote the symmetric group by $\Sn$ and denote for every permutation $\pi\in\Sn$ the unitary $P_{\cH}(\pi)$ on $\cH^{\otimes n}$ which permutes the $n$ copies of~$\cH^{\otimes n}$ according to $\pi$. For any operator $X$  on $\cH^{\otimes n}$ we denote the group average by
        \begin{equation}
        \label{eq:PermAver}
            \overline{X} \coloneqq \frac{1} { \abs{\Sn}} \sum_{\pi \in \Sn} P_\cH(\pi) X P_\cH(\pi)^{\dagger}\, ,
        \end{equation}
        and the set of all permutation invariant operators by
        \begin{equation}
            \End(\cH^{\otimes n})\coloneqq \left\{A \in \mathcal{B}(\cH^{\otimes n}) | P_{\cH}(\pi) A P_{\cH}(\pi)^{\dagger} = A, \quad \forall \pi \in \Sn\right\} \,.
        \end{equation}
        where we denoted the set of bounded operators on $\cH^{\otimes n}$ by $\mathcal{B}(\cH^{\otimes n}).$ From standard representation theory of the symmetric group, we know that
        \begin{align*}
          \dim\End(\cH^{\otimes n}) \le (n+1)^{d^2}. 
        \end{align*}
        See~\cite[Section 4.1]{FawziIEEE22} and~\cite[Section 2.1]{litjens_semidefinite_2017}.
        The main idea is to note that the optimiser $\Lambda^{(n)}_\star$ of~\eqref{eq:perrsdp} is permutation invariant itself, i.e.
\begin{align}
  \Lambda^{(n)}_\star = \left\{p\rho^{\otimes n}- q\sigma^{\otimes n}\right\}  \in \End(\cH^{\otimes n}),
\end{align}
which shows that it can be described by $\poly(n)$ many classical variables. To find these variables efficiently, i.e. in $\poly(n)$ time, we use the techniques of~\cite{FawziIEEE22,BDS+24} to show that the SDP in~\eqref{eq:perrsdp} is equivalent to one with $\poly(n)$ many variables and constraints whose coffecients can be found from~\eqref{eq:perrsdp} in $\poly(n)$ time. This is the content of the following lemma:

\begin{lemma}
\label{lem:EfficHolevo}
The SDP 
 \begin{maxi}|l|
            {\Lambda^{(n)}\in\mathcal{B}(\cH^{\otimes n})}{\Tr[\Lambda^{(n)}\left(p\rho^{\otimes n}- q\sigma^{\otimes n}\right)]}
            {}{}
            \addConstraint{ 0\leq \Lambda^{(n)} \leq I^{\otimes n}}
            \label{eq:sdp_original}
        \end{maxi}
can be reduced to an equivalent SDP with $\dim\End(\cH^{\otimes n}) \le (n+1)^{d^2}$ many variables and $(n+1)^d$ many constraints. Moreover, the coefficients of the reduced SDP can be efficiently computed in $\poly(n)$ time from the original SDP~\eqref{eq:sdp_original}.
\end{lemma}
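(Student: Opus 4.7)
The plan has three steps: symmetrize to cut down the effective dimension, invoke Schur--Weyl duality to block-diagonalize the positivity constraints, and cite the explicit constructions of~\cite{FawziIEEE22,BDS+24} to assemble the coefficients in polynomial time.

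For the first step, I would note that $p\rho^{\otimes n}-q\sigma^{\otimes n}\in\End(\cH^{\otimes n})$ and that the feasibility constraints $0\le\Lambda^{(n)}\le I^{\otimes n}$ are invariant under conjugation by any $P_\cH(\pi)$. Hence, for any feasible $\Lambda^{(n)}$ the convex average $\overline{\Lambda^{(n)}}$ from~\eqref{eq:PermAver} is also feasible and, by cyclicity of the trace together with the permutation invariance of $p\rho^{\otimes n}-q\sigma^{\otimes n}$, achieves the same objective value. Consequently the optimization can be restricted, without loss of generality, to $\Lambda^{(n)}\in\End(\cH^{\otimes n})$, a real vector space of dimension at most $(n+1)^{d^2}$, which already delivers the claimed variable count.

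For the second step, I would use Schur--Weyl duality to write $\cH^{\otimes n}\cong\bigoplus_{\lambda}V_\lambda\otimes W_\lambda$, summed over partitions $\lambda$ of $n$ with at most $d$ parts; by Weyl's dimension formula this sum has at most $(n+1)^d$ terms, and each $V_\lambda$ has dimension $\poly(n)$ for fixed $d$. Every $\Lambda^{(n)}\in\End(\cH^{\otimes n})$ then takes the block form $\bigoplus_\lambda M_\lambda\otimes I_{W_\lambda}$ with $M_\lambda\in\mathcal{B}(V_\lambda)$, so the constraints $0\le\Lambda^{(n)}\le I^{\otimes n}$ decouple into the $(n+1)^d$ blockwise constraints $0\le M_\lambda\le I_{V_\lambda}$, each of polynomial size. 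Taking the $M_\lambda$ as decision variables yields the desired reduced SDP. Since $\rho^{\otimes n}$ and $\sigma^{\otimes n}$ themselves lie in $\End(\cH^{\otimes n})$, they likewise admit block decompositions $\bigoplus_\lambda R_\lambda\otimes I_{W_\lambda}$ and $\bigoplus_\lambda S_\lambda\otimes I_{W_\lambda}$, whereupon the objective becomes the linear functional $\sum_\lambda(\dim W_\lambda)\,\Tr[M_\lambda(pR_\lambda-qS_\lambda)]$ of the $M_\lambda$.

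The main obstacle is the final step: showing that the reduced blocks $R_\lambda$ and $S_\lambda$, the multiplicities $\dim W_\lambda$, and a basis of $\End(\cH^{\otimes n})$ compatible with the block decomposition can all be computed in $\poly(n)$ time from $\rho$, $\sigma$, and $n$. This requires an explicit, recursive realization of the Schur transform, for instance via iterated Clebsch--Gordan decompositions on a Gelfand--Tsetlin basis, together with careful bookkeeping to express $\rho^{\otimes n}$ and $\sigma^{\otimes n}$ in that basis. Rather than redo this analysis here, I would invoke the algorithms given in~\cite[Section~4]{FawziIEEE22} and~\cite[Lemma~11]{BDS+24}, which are directly applicable in the present setting and supply the required coefficient computations in polynomial time.
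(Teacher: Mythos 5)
Your proposal is correct and follows essentially the same route as the paper: symmetrize to restrict the variable to $\End(\cH^{\otimes n})$, block-diagonalize the positivity constraints (the paper's $*$-algebra isomorphism $\phi_{\HS}$ with $t_{\HS}\le(n+1)^d$ blocks is precisely the Schur--Weyl decomposition you invoke), and delegate the polynomial-time computation of the reduced SDP's coefficients to \cite{FawziIEEE22} and \cite{BDS+24}. The only difference is presentational: you parametrize the variable directly by its blocks $M_\lambda$, whereas the paper expands it in the explicit orthogonal basis $\left(C^{\cH}_r\right)_r$ of \cite{FawziIEEE22} and pushes the constraint through $\phi_{\HS}$ blockwise, which is an equivalent parametrization of the same reduced SDP.
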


\begin{proof}
Note that, for any feasible $\Lambda^{(n)}$ in~\eqref{eq:sdp_original}, also $\overline{\Lambda^{(n)}}$ is feasible, as the group average~\eqref{eq:PermAver} is a positive map.  Furthermore, as by definition we have $p\rho^{\otimes n}- q\sigma^{\otimes n} \in  \End(\cH^{\otimes n}),$ we see \begin{align}\Tr[\Lambda^{(n)}\left(p\rho^{\otimes n}- q\sigma^{\otimes n}\right)] = \Tr[\overline{\Lambda^{(n)}}\left(p\rho^{\otimes n}- q\sigma^{\otimes n}\right)]\end{align} and hence the SDP~\eqref{eq:sdp_original} can equivalently be expressed by simply maximising over the smaller space $\End(\cH^{\otimes n}),$ i.e. by
 \begin{maxi}|l|
            {\Lambda^{(n)}\in\End(\cH^{\otimes n})}{\Tr[\Lambda^{(n)}\left(p\rho^{\otimes n}- q\sigma^{\otimes n}\right)]}
            {}{}
            \addConstraint{ 0\leq \Lambda^{(n)} \leq I^{\otimes n}}.
            \label{eq:sdp_original2}
        \end{maxi}
As pointed out above, this reduces number of variables over which we optimize over to $\dim\End(\cH^{\otimes n}) \le (n+1)^{d^2} = \poly(n).$ However, as in the current form the SDP~\eqref{eq:sdp_original2} is phrased in terms of exponentially large operators, this does not yet show computability in $\poly(n)$ time. To close that remaining gap we use the approach of~\cite[p. 7353]{FawziIEEE22} or~\cite[Lemma 11]{BDS+24}. There, the authors explicitly define an orthogonal basis (with respect to the Hilbert--Schmidt scalar product) denoted by $\left(C^{\cH}_r\right)_{r=1}^{\dim\End(\cH^{\otimes n})}$. For that particular basis and for all $A\in
\mathcal{B}(\cH)$, the authors provide an explicit formula for the coefficients $\left(\gamma_r\right)_{r=1}^{\dim\End(\cH^{\otimes n})}$ of the expansion
\begin{align}
    A^{\otimes n} = \sum_{r=1}^{\dim\End(\cH^{\otimes n})} \gamma_r C^{\cH}_r
\end{align}
  In particular it is shown that $\left(\gamma_r\right)_{r=1}^{\dim\End(\cH^{\otimes n})}$ can be computed in $\poly(n)$ time.
Hence, by linearity also the coefficients $\left(\kappa_r\right)_{r=1}^{\dim\End(\cH^{\otimes n})}$ of the operator $p\rho^{\otimes n} -q\sigma^{\otimes n}$ in this basis, i.e., satisfying
\begin{align}
    p\rho^{\otimes n} -q\sigma^{\otimes n} = \sum_{r=1}^{\dim\End(\cH^{\otimes n})} \kappa_r C^{\cH}_r
\end{align}
can be computed in $\poly(n)$ time. Furthermore, as explained in~\cite{FawziIEEE22,BDS+24}, the square of the norms of the basis operators, i.e. $\Tr[(C^{\cH}_r)^\dagger C^{\cH}_r]$, can be computed efficiently in $O(d^2)$, meaning constant in $n$, i.e. $O(1)$, time.

As $\left(C^{\cH}_r\right)_{r=1}^{\dim\End(\cH^{\otimes n})}$ is a basis, we can also expand the variable $\Lambda^{(n)}\in \End(\cH^{\otimes n})$ in the SDP~\eqref{eq:sdp_original2} as
\begin{align}
    \Lambda^{(n)} = \sum_{r=1}^{\dim\End(\cH^{\otimes n})} x_r C^{\cH}_r.
\end{align} 
As shown in~\cite{FawziIEEE22,BDS+24}, the constraint \begin{align}0 \le \sum_{r=1}^{\dim\End(\cH^{\otimes n})} x_r C^{\cH}_r \le I^{\otimes n}\end{align} in~\eqref{eq:sdp_original2} can be checked in $\poly(n)$ time as well: To see this, note that there exists a $*$-algebra isomorphism from $\End(\HS^{\otimes n})$ to block-diagonal matrices 
        \begin{equation}\label{eq:sdp_star_isomorphism}
            \phi_{\HS}: \End(\HS^{\otimes n}) \to \bigoplus_{i = 1}^{t_{\HS}} \mathbb{C}^{m_i \times m_i}\,,
        \end{equation}
        where
        \begin{equation}\label{eq:sdp_star_isomorphism_coefficients}
            t_{\HS} \leq (n + 1)^{d} \qquad \mathrm{and} \qquad \sum_{i = 1}^{t_\cH} m_i^2 = \dim(\End(\HS^{\otimes n})) \leq (n + 1)^{d^2}\,.
        \end{equation}
        For $A \in \End(\HS^{\otimes n})$ we have $\phi_{\HS}(A) \in \bigoplus_{i = 1}^{t_{\HS}} \mathbb{C}^{m_i \times m_i}$ and write
        $\llbracket \phi_{\HS}(A) \rrbracket_i$ for the $i$-th block of $\phi_{\HS}(A)$. By~\cite[Lemma 3.3]{FawziIEEE22} we have 
        \begin{equation}\label{eq:end_positivity}
        A \geq 0 \Leftrightarrow \phi_{\HS}(A) \geq 0 \Leftrightarrow \llbracket \phi_{\HS}(A) \rrbracket_i \geq 0 \quad \forall i \in \{1, \ldots, t_\cH\} \,.
        \end{equation}
        Hence,
        \begin{align}
        \label{eq:PositivityConstraint}
            0 \le \sum_{r=1}^{\dim\End(\cH^{\otimes n})} x_r C^{\cH}_r \le I^{\otimes n}  \ \Longleftrightarrow \ 0 \le \sum_{r=1}^{\dim\End(\cH^{\otimes n})}x_r\llbracket  \phi_{\HS}(C^{\cH}_r) \rrbracket_i \le I_{\mathbb{C}^{m_i}} \;\forall i \in \{1, \ldots, t_\cH\},
        \end{align}
        where we denoted the identity on $\mathbb{C}^{m_i}$ by $I_{\mathbb{C}^{m_i}}.$
        Furthermore, as shown in~\cite[pp. 7352–735]{FawziIEEE22}, for every basis element $C^{\cH}_r$ the corresponding block diagonal matrix $\phi_{\HS}(C^{\cH}_r)$ can be computed in $\poly(n)$ time. Overall, this shows that the constraint~\eqref{eq:PositivityConstraint} can be checked in $\poly(n)$ time.

Hence, the above shows that the SDP~\eqref{eq:sdp_original} can equivalently be written as 
 \begin{maxi}|l|
            {\left(x_r\right)_{r=1}^{\dim\End(\cH^{\otimes n})}\ \in \mathbb{C}^{\dim\End(\cH^{\otimes n})}}{\sum_{r=1}^{\dim\End(\cH^{\otimes n})}x_r(\kappa_r)^*\Tr[(C^{\cH}_r)^\dagger C^{\cH}_r]}
            {}{}
            \addConstraint{ 0 \le \sum_{r=1}^{\dim\End(\cH^{\otimes n})}x_r\llbracket  \phi_{\HS}(C^{\cH}_r) \rrbracket_i \le I_{\mathbb{C}^{m_i}} }\addConstraint{\forall i \in \{ 1, \ldots, t_\cH\}}
            \label{eq:sdp_original3}
        \end{maxi}
where all coefficients can be found in $\poly(n)$ from the original SDP~\eqref{eq:sdp_original}. As the SDP~\eqref{eq:sdp_original3} has $\poly(n)$ variables and $\poly(n)$ constraints, it can be solved with an additive error $\varepsilon$ in $O(\poly(n)\log(1/\varepsilon))$ time; see, e.g.,~\cite{jiang_faster_2020}.
\end{proof}

\begin{remark}
\label{rem:asym-bin-mult-poly-time}
By exact same line of argument as in the proof Lemma~\ref{lem:EfficHolevo},  we can also reduce the SDP
\begin{mini}|l|
     {\Lambda^{(n)}\in\mathcal{B}(\cH^{\otimes n})}{\Tr[\Lambda^{(n)}\sigma^{\otimes n}]}{}{}
            \addConstraint{0\le\Lambda^{(n)}\le I^{\otimes n}}
            \addConstraint{\Tr[\Lambda^{(n)}\rho^{\otimes n}]\ge 1- \varepsilon}
            \label{eq:sdp_originalAsym},
        \end{mini}        
        which was defined in~\eqref{eq:beta-err-asymm} in the context of asymmetric binary hypothesis testing, and
\begin{mini}|l|
            {\Lambda^{(n)}_1,...,\Lambda_m^{(n)}\in\mathcal{B}(\cH^{\otimes n})}{\sum_{m=1}^M p_m\Tr[(I^{\otimes n}-\Lambda^{(n)})\rho_m^{\otimes n}]}
            {}{}
            \addConstraint{ \Lambda_m^{(n)}\ge 0}\addConstraint{\sum_{m=1}^M\Lambda_m^{(n)}=I^{\otimes n}}{}
            \label{eq:sdp_originalM},
\end{mini}
        which was defined in~\eqref{eq:error_M-ary} in the context of $M$-ary hypothesis testing to equivalent SDPs with $\poly(n)$ classical variables and $\poly(n)$ constraints. Moreover, the coefficients of the reduced SDPs can again be efficiently computed in $\poly(n)$ time from the original SDPs. 

        This shows how also the optimal measurements of asymmetric binary hypothesis testing as well as the optimal measurements in $M$-ary hypothesis testing can be written in terms of $\poly(n)$ many classical variables that can be found efficiently in $\poly(n)$ time.
\end{remark}

\section{Proof of Equations~\eqref{eq:asymm-beta-rewrite-both-errs}--\eqref{eq:asymm-beta-rewrite-2}}

\label{app:proof-of-equiv-exps-asymm-err}

To prove Equations~\eqref{eq:asymm-beta-rewrite-both-errs}--\eqref{eq:asymm-beta-rewrite-2}, we use that the optimal error in asymmetric binary quantum hypothesis testing is attained. In finite dimensions this follows simply by compactness. In infinite dimensions the same proof goes through by noticing that we still have the right notion of compactness, i.e. weak-$^*$ compactness, due to the Banach--Alaoglu theorem. We first formally state and prove this fact in the following lemma and then continue to give the proof of Equations~\eqref{eq:asymm-beta-rewrite-both-errs}--\eqref{eq:asymm-beta-rewrite-2}.

\begin{lemma} \label{lem:BanackAlaoglu} Let $\cH$ be a separable Hilbert space, and let $\rho$ and $ \sigma$ be  quantum states on $\cH$. Fix $\eps\ge 0$. Then the infimum in the optimal error in asymmetric binary quantum hypothesis testing is attained, i.e. 
    \begin{align*}
        \beta_\eps(\rho\|\sigma) \coloneqq\inf_{\substack{0\le \Lambda\le I\\\Tr((I-\Lambda)\rho)\le \eps }}\Tr(\Lambda\sigma) =\min_{\substack{0\le \Lambda\le I\\\Tr((I-\Lambda)\rho)\le \eps }}\Tr(\Lambda\sigma).
    \end{align*}
\end{lemma}

\begin{proof}
By the definition of $\beta_\eps(\rho\|\sigma)$ there exists a sequence of bounded operators $\left(\Lambda_k\right)_{k\in\N}$ such that $0\le \Lambda_k\le I$ and $\Tr((I-\Lambda_k)\rho)\le \eps$ for all $k\in\N$ and 
\begin{align}
\label{eq:InfSequence}
    \lim_{k\to \infty} \Tr(\Lambda_k\sigma) =  \beta_\eps(\rho\|\sigma).
\end{align}
By the Banach–Alaoglu theorem there exists a subsequence $\left(\Lambda_{k_l}\right)_{l\in\N}$ that converges weakly-$^*$ to some bounded operator $\Lambda$; i.e., using that the dual of the trace class operators is the space of bounded operators we have for all trace class operators $T$ that
\begin{align}
    \lim_{l\to\infty}\Tr(\Lambda_{k_l}T) = \Tr(\Lambda T).
\end{align}
From that we immediately see that $0\le \Lambda\le I$ and $\Tr((I-\Lambda)\rho)\le \eps$ and furthermore by~\eqref{eq:InfSequence} also
\begin{align*}
    \Tr(\Lambda\sigma)= \lim_{l\to \infty} \Tr(\Lambda_{k_l}\sigma) =  \beta_\eps(\rho\|\sigma),
\end{align*}
which finishes the proof.
\end{proof}

\begin{proof}[Proof of Equations~\eqref{eq:asymm-beta-rewrite-both-errs}--\eqref{eq:asymm-beta-rewrite-2}]
Let us define
\begin{equation}
\beta_{\varepsilon}(\rho^{\otimes n}\Vert\sigma^{\otimes n}):=\inf
_{\Lambda^{(n)}}\left\{
\begin{array}
[c]{c}
\operatorname{Tr}[\Lambda^{(n)}\sigma^{\otimes n}]:\operatorname{Tr}\!\left[
\left(  I^{\otimes n}-\Lambda^{(n)}\right)  \rho^{\otimes n}\right]
\leq\varepsilon,\\
0\leq\Lambda^{(n)}\leq I^{\otimes n}
\end{array}
\right\}  .\label{eq:beta-quantity}
\end{equation}
Our goal here is to prove that
\begin{equation}
\inf\left\{  n\in\mathbb{N}:\beta_{\varepsilon}(\rho^{\otimes n}\Vert
\sigma^{\otimes n})\leq\delta\right\}  =\inf\left\{
\begin{array}
[c]{c}
n\in\mathbb{N}:\operatorname{Tr}[\Lambda^{(n)}\sigma^{\otimes n}]\leq\delta,\\
\operatorname{Tr}\!\left[  \left(  I^{\otimes n}-\Lambda^{(n)}\right)
\rho^{\otimes n}\right]  \leq\varepsilon,\\
0\leq\Lambda^{(n)}\leq I^{\otimes n}
\end{array}
\right\}  .
\end{equation}
Let $n$ be a feasible choice for the optimization on the left-hand side,
meaning that for this choice of $n$, an optimal choice $\Lambda^{(n)}$ (which exists due to Lemma~\ref{lem:BanackAlaoglu}) for the
optimization $\beta_{\varepsilon}(\rho^{\otimes n}\Vert\sigma^{\otimes n})$
satisfies $\operatorname{Tr}[\Lambda^{(n)}\sigma^{\otimes n}]=\beta
_{\varepsilon}(\rho^{\otimes n}\Vert\sigma^{\otimes n})\leq\delta$. Then, by
the definition in~\eqref{eq:beta-quantity} and the constraint
$\operatorname{Tr}[\Lambda^{(n)}\sigma^{\otimes n}]\leq\delta$, this means
that $n$ and $\Lambda^{(n)}$ are feasible for the optimization on the
right-hand side. So we conclude that
\begin{equation}
n\geq\inf\left\{
\begin{array}
[c]{c}
n\in\mathbb{N}:\operatorname{Tr}[\Lambda^{(n)}\sigma^{\otimes n}]\leq\delta,\\
\operatorname{Tr}\!\left[  \left(  I^{\otimes n}-\Lambda^{(n)}\right)
\rho^{\otimes n}\right]  \leq\varepsilon,\\
0\leq\Lambda^{(n)}\leq I^{\otimes n}
\end{array}
\right\}  .
\end{equation}
Since this argument holds for every feasible choice of $n$ on the left-hand
side, we conclude that
\begin{equation}
\inf\left\{  n\in\mathbb{N}:\beta_{\varepsilon}(\rho^{\otimes n}\Vert
\sigma^{\otimes n})\leq\delta\right\}  \geq\inf\left\{
\begin{array}
[c]{c}
n\in\mathbb{N}:\operatorname{Tr}[\Lambda^{(n)}\sigma^{\otimes n}]\leq\delta,\\
\operatorname{Tr}\!\left[  \left(  I^{\otimes n}-\Lambda^{(n)}\right)
\rho^{\otimes n}\right]  \leq\varepsilon,\\
0\leq\Lambda^{(n)}\leq I^{\otimes n}
\end{array}
\right\}  .
\end{equation}

Now let $n$ and $\Lambda^{(n)}$ be feasible choices for the optimization on
the right-hand side. Then it follows that $\Lambda^{(n)}$ is feasible for
$\beta_{\varepsilon}(\rho^{\otimes n}\Vert\sigma^{\otimes n})$.\ Since
$\operatorname{Tr}[\Lambda^{(n)}\sigma^{\otimes n}]\leq\delta$, we conclude
that
\begin{equation}
\delta\geq\operatorname{Tr}[\Lambda^{(n)}\sigma^{\otimes n}]\geq
\beta_{\varepsilon}(\rho^{\otimes n}\Vert\sigma^{\otimes n}).
\end{equation}
Thus, the choice of $n$ is feasible for the optimization on the left-hand
side. We then conclude that
\begin{equation}
\inf\left\{  n\in\mathbb{N}:\beta_{\varepsilon}(\rho^{\otimes n}\Vert
\sigma^{\otimes n})\leq\delta\right\}  \leq n,
\end{equation}
and since this inequality holds for every feasible choice of $n$ on the
right-hand side, we conclude the opposite inequality:
\begin{equation}
\inf\left\{  n\in\mathbb{N}:\beta_{\varepsilon}(\rho^{\otimes n}\Vert
\sigma^{\otimes n})\leq\delta\right\}  \leq\inf\left\{
\begin{array}
[c]{c}
n\in\mathbb{N}:\operatorname{Tr}[\Lambda^{(n)}\sigma^{\otimes n}]\leq\delta,\\
\operatorname{Tr}\!\left[  \left(  I^{\otimes n}-\Lambda^{(n)}\right)
\rho^{\otimes n}\right]  \leq\varepsilon,\\
0\leq\Lambda^{(n)}\leq I^{\otimes n}
\end{array}
\right\}  .
\end{equation}

By similar reasoning, we conclude that
\begin{equation}
\inf\left\{  n\in\mathbb{N}:\beta_{\delta}(\sigma^{\otimes n}\Vert
\rho^{\otimes n})\leq\varepsilon\right\}  =\inf\left\{
\begin{array}
[c]{c}
n\in\mathbb{N}:\operatorname{Tr}[\Lambda^{(n)}\sigma^{\otimes n}]\leq\delta,\\
\operatorname{Tr}\!\left[  \left(  I^{\otimes n}-\Lambda^{(n)}\right)
\rho^{\otimes n}\right]  \leq\varepsilon,\\
0\leq\Lambda^{(n)}\leq I^{\otimes n}
\end{array}
\right\}  .
\end{equation}
\end{proof}

\section{Proof of Remark~\ref{rem:trivial-conditions}}

\label{app:proof-remark-triv-cond}

    We begin by establishing~\eqref{eq:binary_symmetric1} under the assumption that $\varepsilon \in [1/2,1]$. 
	The trivial strategy of guessing the state uniformly at random achieves an error probability of $1/2$ (i.e., with $n=1$ and $\Lambda^{(1)} = I/2$ in~\eqref{eq:Helstrom1}). 
	Thus, if $\varepsilon \in [1/2,1]$, then the trivial strategy accomplishes the task with just a single sample. Alternatively, by considering the expression in~\eqref{eq:def:sample_complexity_symmetric3} and noting that the norm is always non-negative, if $\varepsilon \in [1/2,1]$, then the left-hand side of the inequality is $\leq 0 $ while the right-hand side is $\geq 0$, so that the constraint is trivially met and the smallest possible feasible $n \in \mathbb{N}$ is $n=1$.

    If $\rho\sigma = 0$, then we set the test $\Lambda^{(1)}$  to be the projection onto the support of $\rho$.
    In this case, $p_e = 0 $ and hence $n=1$ suffices.

	Now we establish~\eqref{eq:binary_symmetric1} under the assumption that there exists $s\in[0,1]$ such that $\varepsilon \geq p^s q^{1-s}$. 
	With a single sample, we can achieve the error probability $\frac12 \left( 1 - \left\| p \rho - q \sigma \right\|_1 \right)$, and by invoking Lemma~\ref{lemma:Fact}-\ref{fact:Chernoff} with $A = p\rho$ and $B = q\sigma$, we conclude that
	\begin{align}
		\frac12 \left( 1 - \left\| p \rho - q \sigma \right\|_1 \right)
		&\leq p^s q^{1-s} \Tr\!\left[ \rho^s \sigma^{1-s} \right]
		\\
		&\leq p^s q^{1-s}
		\\
		&\leq \varepsilon.
	\end{align}
	In the above, we used the H\"{o}lder inequality $|\Tr[AB] | \leq \|A\|_r \|B\|_t$, which holds for $r,t\geq 1$, such that $r^{-1} + t^{-1} = 1$, to conclude that
 \begin{equation}
     \label{eq:renyi-q-lte-1}
 \Tr[\rho^s \sigma^{1-s}]\leq 1.
 \end{equation}
	That is, set $r=s^{-1}, t = (1-s)^{-1}$, $A=\rho$, and $B=\sigma$.
	The last inequality follows by assumption.
	So we conclude that only a single sample is needed in this case also.
	
    We finally prove~\eqref{eq:binary_symmetric2}, which is the statement that if $\rho = \sigma$ and $\min\{p,q\}> \varepsilon \in [0,1/2)$, then $n^*(p,\rho,q,\sigma,\varepsilon) = +\infty$.
	This follows because it is impossible to distinguish the states and the desired inequality $\frac12 \left( 1 - \left\| p \rho^{\otimes n} - q \sigma^{\otimes n} \right\|_1 \right) \leq \varepsilon$ cannot be satisfied for any possible value of~$n$.
	Indeed, in this case,
	\begin{align}
		\frac12 \left( 1 - \left\| p \rho^{\otimes n} - q \sigma^{\otimes n} \right\|_1 \right)
		&= \frac12 \left(1 - |p-q| \left\|\rho^{\otimes n} \right\|_1 \right)
		\\
		&= \frac12 \left( 1 - |p-(1-p)| \right)
		\\
		&= \frac12 ( 1 - |1-2p|).
	\end{align}
	When $p\leq 1/2$, we have that $\frac12 ( 1 - |1-2p|) = \frac12 (1-(1-2p)) = p$, while if $p>1/2$, we have that
	$\frac12 ( 1 - |1-2p|) = \frac12(1-(2p-1)) = 1-p = q$.
	Thus,
	\begin{align}
		\frac12 (1-|1-2p|) = \min\{p,q\},
	\end{align}
	and so
	\begin{align}
		\frac12 \left( 1 - \left\| p \rho^{\otimes n} - q \sigma^{\otimes n} \right\|_1 \right) = \min\{p,q\}
	\end{align}
	in this case. Thus,~\eqref{eq:binary_symmetric2} follows.

\section{Proof of Theorem~\ref{theorem:binary_symmetric_pure}}

\label{app:proof-binary_symmetric_pure}

    First recall the identity in~\eqref{eq:vector-trace-norm-fid-identity}.
 Applying this
to our case of interest, and setting $|\varphi\rangle = \sqrt{p}|\psi\rangle^{\otimes n}$ and $|\zeta\rangle = \sqrt{q}|\phi\rangle^{\otimes n}$, we find that
\begin{align}
1-2\varepsilon & \leq\left\Vert p\psi^{\otimes n}-q\phi^{\otimes n}\right\Vert
_{1}\\
& =\sqrt{\left(  p+q\right)  ^{2}-4pq\left\vert \langle\psi|\phi
\rangle\right\vert ^{2n}}\\
& =\sqrt{1-4pq\left\vert \langle\psi|\phi\rangle\right\vert ^{2n}}\\
& =\sqrt{1-4pq {F}(\psi,\phi)^{n}},
\end{align}
which we can rewrite as
\begin{align}
\left(  1-2\varepsilon\right)  ^{2}  & \leq1-4pq F(\psi,\phi)^{n} \notag \\
\Leftrightarrow\qquad F(\psi,\phi)^{n}  & \leq\frac{1-\left(  1-2\varepsilon
\right)  ^{2}}{4pq}=\frac{4\varepsilon\left(  1-\varepsilon\right)  }{4pq}\\
\Leftrightarrow\qquad n\ln F(\psi,\phi)  & \leq\ln\!\left(  \frac{\varepsilon
\left(  1-\varepsilon\right)  }{pq}\right) \\
\Leftrightarrow\qquad -n\ln F(\psi,\phi)  & \geq -\ln\!\left(  \frac{\varepsilon
\left(  1-\varepsilon\right)  }{pq}\right)\\
\Leftrightarrow\qquad n  & \geq \frac{\ln\!\left(  \frac{ pq }{\varepsilon
\left(  1-\varepsilon\right)}\right)}{-\ln F(\psi,\phi)}.
\end{align}
This finally implies that
\begin{equation}
n=\left\lceil \frac{\ln\!\left(  \frac{pq}{\varepsilon\left(  1-\varepsilon
\right)  }\right)  }{-\ln F(\psi,\phi)}\right\rceil ,
\end{equation}
concluding the proof.

\section{Proof of Theorem~\ref{theorem:binary_symmetric}}

\label{app:binary_symmetric}

	We begin by establishing the first inequality in~\eqref{eq:binary_symmetric3}. That is,
	we first show that
	\begin{align}
	\label{eq:binary_symmetric3-1}
	n^*(p,\rho,q,\sigma, \varepsilon)		
	\geq 
	\frac{\ln(pq/\varepsilon) }{ -\ln F(\rho,\sigma) }.
	\end{align}
	Let us rewrite the constraint in the definition of sample complexity as in~\eqref{eq:def:sample_complexity_symmetric2}:
	\begin{align}
		\varepsilon 
		\geq \frac12 \left( 1 - \left\| p \rho^{\otimes n} - q \sigma^{\otimes n} \right\|_1 \right).
	\end{align}
	Applying the right-most inequality in Lemma~\ref{lemma:Fact}-\ref{fact:FG} with $A=p\rho$ and $B = q\sigma$ leads to $\left\| p \rho^{\otimes n} - q \sigma^{\otimes n} \right\|_1^2 \leq 1 - 4pq F(\rho^{\otimes n}, \sigma^{\otimes n})$, which in turn leads to the following lower bound:
	\begin{align}
		\varepsilon 
		&\geq \frac12 \left( 1 - \left\| p \rho^{\otimes n} - q \sigma^{\otimes n} \right\|_1 \right)
		\\
		&\geq \frac{2pq F(\rho^{\otimes n}, \sigma^{\otimes n}) }{ 1 + \left\| p \rho^{\otimes n} - q \sigma^{\otimes n} \right\|_1  }
		\\
		&\geq pq F(\rho^{\otimes n}, \sigma^{\otimes n})
		\\
		\label{eq:lower_bound_symmetric_binary}
		&= pq \left[ F(\rho,\sigma) \right]^n
	\end{align}
	This arrives at the desired~\eqref{eq:binary_symmetric3-1}. Here we also used the fact that $\left\| p \rho^{\otimes n} - q \sigma^{\otimes n} \right\|_1 \leq \left\| p \rho^{\otimes n} \right\|_1 + \left\| q \sigma^{\otimes n} \right\|_1 = p+q =1$.

	 Next we prove the following inequality stated in~\eqref{eq:binary_symmetric3}:
	\begin{align}
		n^*(p,\rho,q,\sigma, \varepsilon) \geq \frac{pq-\varepsilon(1-\varepsilon)}{pq \left[d_{\mathrm{B}}(\rho,\sigma)\right]^2  }.
  \label{eq:proof-2nd-ineq-dB}
	\end{align}
	Let us rewrite the constraint in the definition of sample complexity  as in~\eqref{eq:def:sample_complexity_symmetric3}:
	\begin{align}
		1-2\varepsilon \leq  \left\| p \rho^{\otimes n} - q \sigma^{\otimes n} \right\|_1.
	\end{align}
	Applying Lemma~\ref{lemma:Fact}-\ref{fact:FG} with $A=p\rho$ and $B = q\sigma$, consider that
	\begin{align}
		\|p\rho-q\sigma\|_1^2 
		&\leq 1 - 4pq F(\rho,\sigma)
		\\
		&= 1 - 4pq + 4pq - 4pq F(\rho,\sigma)
		\\
		&= 1 - 4pq + 4pq (1 - F(\rho,\sigma))
		\\
		&= 1 - 4pq + 4pq \left( 1 + \sqrt{F}(\rho,\sigma) \right) \left( 1 - \sqrt{F}(\rho,\sigma) \right)
		\\
		&\leq 1 -  4pq + 4pq \left[ d_{\mathrm{B}}(\rho,\sigma) \right]^2.
	\end{align} 
	Applying this, we find that
	\begin{align}
		(1-2\varepsilon)^2
		&\leq \left\| p \rho^{\otimes n} - q \sigma^{\otimes n} \right\|_1^2
		\\
		&\leq 1 - 4pq + 4pq \left[ d_{\mathrm{B}}\!\left(\rho^{\otimes n},\sigma^{\otimes n} \right) \right]^2
		\\
		&= 1 - 4pq + 8pq \frac{\left[ d_{\mathrm{B}}\!\left(\rho^{\otimes n},\sigma^{\otimes n} \right) \right]^2}{2}
		\\
		&= 1 - 4pq + 8pq \left( 1- \left( 1 - \frac{\left[ d_{\mathrm{B}}\!\left(\rho^{\otimes n},\sigma^{\otimes n} \right) \right]^2}{2} \right) \right)
		\\		
		&= 1 - 4pq + 8pq \left( 1- \left( 1 - \frac{\left[ d_{\mathrm{B}}\!\left(\rho,\sigma \right) \right]^2}{2} \right)^n \right)
		\\
		&\leq 1 - 4pq + 8pqn \left( \frac{\left[ d_{\mathrm{B}}\!\left(\rho,\sigma \right) \right]^2}{2} \right)
		\\
		&= 1 - 4pq +4pqn \left[ d_{\mathrm{B}}\!\left(\rho,\sigma \right) \right]^2.
	\end{align}
	This implies that
	\begin{align}
		n &\geq \frac{(1-2\varepsilon)^2 - (1-4pq)}{ 4pq \left[ d_{\mathrm{B}}\!\left(\rho,\sigma \right) \right]^2 }
		\\
		&= \frac{4(pq-\varepsilon(1-\varepsilon))}{ 4pq \left[ d_{\mathrm{B}}\!\left(\rho,\sigma \right) \right]^2 },
	\end{align}
	thus establishing~\eqref{eq:proof-2nd-ineq-dB}.
	
	Let us now establish the rightmost inequality in~\eqref{eq:binary_symmetric3}.
 Set
\begin{equation}
n\coloneqq \left\lceil \inf_{s\in\left[  0,1\right]  }\frac{\ln\!\left(  \frac
{p^{s}q^{1-s}}{\varepsilon}\right)  }{-\ln\operatorname{Tr}[\rho^{s}
\sigma^{1-s}]}\right\rceil ,\label{eq:choice-samp-comp-upp-bnd}
\end{equation}
and let $s^{\ast}\in\left[  0,1\right]  $ be the optimal value of $s$ above.
Noting that $-\ln \Tr\!\left[ \rho^s \sigma^{1-s} \right]\geq 0$ due to~\eqref{eq:renyi-q-lte-1},
then
\begin{align}
n  & \geq\frac{\ln\!\left(  \frac{p^{s^{\ast}}q^{1-s^{\ast}}}{\varepsilon
}\right)  }{-\ln\operatorname{Tr}[\rho^{s^{\ast}}\sigma^{1-s^{\ast}}]}\\
\Leftrightarrow\quad-n\ln\operatorname{Tr}[\rho^{s^{\ast}}\sigma^{1-s^{\ast}}]  &
\geq\ln\!\left(  p^{s^{\ast}}q^{1-s^{\ast}}\right)  -\ln\varepsilon\\
\Leftrightarrow\quad-\ln\operatorname{Tr}\!\left[\left(  \rho^{\otimes n}\right)
^{s^{\ast}}\left(  \sigma^{\otimes n}\right)  ^{1-s^{\ast}}\right]-\ln\!\left(
p^{s^{\ast}}q^{1-s^{\ast}}\right)    & \geq-\ln\varepsilon\\
\Leftrightarrow\quad\operatorname{Tr}\!\left[\left(  p\rho^{\otimes n}\right)
^{s^{\ast}}\left(  q\sigma^{\otimes n}\right)  ^{1-s^{\ast}}\right]  &
\leq\varepsilon.
\end{align}
Now applying Lemma~\ref{lemma:Fact}-\ref{fact:Chernoff} with $A = p \rho^{\otimes n}$ and $B = q\sigma^{\otimes n}$, we conclude that
\begin{align}
\varepsilon & \geq\operatorname{Tr}\!\left[\left(  p\rho^{\otimes n}\right)
^{s^{\ast}}\left(  q\sigma^{\otimes n}\right)  ^{1-s^{\ast}}\right]\\
& \geq\frac{1}{2}\left(  1-\left\Vert p\rho^{\otimes n}-q\sigma^{\otimes
n}\right\Vert _{1}\right)  .
\label{eq:binary_symmetric_achievability_final}
\end{align}
The choice of $n$ in~\eqref{eq:choice-samp-comp-upp-bnd} thus satisfies the
constraint~\eqref{eq:def:sample_complexity_symmetric2} in Definition~\ref{def:binary_symmetric}, and since the optimal sample complexity
cannot exceed this choice, this concludes our proof of the rightmost inequality in~\eqref{eq:binary_symmetric3}.

\section{Fuchs--Caves measurement strategy for achieving the asymptotic sample complexity of symmetric binary quantum hypothesis testing}

\label{app:fuchs-caves}

In this appendix, we show how the upper bound on sample complexity in~\eqref{eq:binary-informal-other2}
can be achieved by repeatedly applying the Fuchs--Caves measurement~\cite{Fuchs1995} along with
classical post-processing. The development here generalizes that in
\cite[Lemma~2]{wilde_et_al:LIPIcs.TQC.2013.157}. For simplicity, here we focus
on the case in which $\sigma$ is a positive definite state. Let us first
recall the operator geometric mean of two positive definite operators $A$
and $B$:
\begin{equation}
A\#B\coloneqq B^{1/2}\left(  B^{-1/2}AB^{-1/2}\right)  ^{1/2}B^{1/2},
\end{equation}
which reduces to the usual geometric mean in the case that $A$ and $B$ are
scalars. It is defined as $\lim_{\varepsilon \to 0^+} (A+\varepsilon I)\#(B+\varepsilon I)$ in the more general case that $A$ and $B$ are positive semi-definite.   The Fuchs--Caves observable for distinguishing $\rho^{\otimes n}$
from $\sigma^{\otimes n}$, with respective priors $p$ and $q$, is then
$p\rho^{\otimes n}\#\left(  q\sigma^{\otimes n}\right)  ^{-1}$. Observe that
\begin{equation}
p\rho^{\otimes n}\#\left(  q\sigma^{\otimes n}\right)  ^{-1}=\sqrt{\frac{p}
{q}}\left(  \rho\#\sigma^{-1}\right)  ^{\otimes n}
,\label{eq:tensor-power-geo-mean}
\end{equation}
which implies that a diagonalizing basis for the Fuchs--Caves observable can
be realized as a tensor product of the diagonalizing basis for $\rho
\#\sigma^{-1}$. That is, let us suppose that $\rho\#\sigma^{-1}$ has the
following spectral decomposition:
\begin{equation}
\rho\#\sigma^{-1}=\sum_{y}\lambda_{y}|y\rangle\!\langle
y|.\label{eq:FC-eigen-decomp}
\end{equation}
It was observed in~\cite{Fuchs1995} that the eigenvalues of $\rho\#\sigma^{-1}$ take the
following form:
\begin{equation}
\lambda_{y}=\left(  \frac{\langle y|\rho|y\rangle}{\langle y|\sigma|y\rangle
}\right)  ^{1/2}.\label{eq:FC-eigenvals}
\end{equation}
The so-called Fuchs-Caves measurement for distinguishing between the states $\rho$ and $\sigma$ is a projective measurement defined by the set of rank-one projections $\{ 
|y\rangle\!\langle y|\}$. Performing this measurement leads to the outcome probability distributions $p:=\{p(y)\}$, with $p(y) = \langle y | \rho |y \rangle$, and  $q:= \{q(y)\}$, with $q(y) = \langle y | \sigma |y \rangle$.
It is known that the fidelity $F(\rho, \sigma)$, which is a measure of distinguishability between the two states $\rho$ and $\sigma$, is exactly equal to the classical fidelity $F(p,q)$ of the distributions  $p$ and $q$. In other words,
the quantum fidelity is achieved by the Fuchs--Caves
measurement:
\begin{equation}
F(\rho,\sigma)=\left(  \sum_{y}\sqrt{\langle y|\rho|y\rangle\!\langle
y|\sigma|y\rangle}\right)  ^{2}.
\end{equation}
Then applying~\eqref{eq:tensor-power-geo-mean} and~\eqref{eq:FC-eigen-decomp}, we find that
\begin{equation}
p\rho^{\otimes n}\#\left(  q\sigma^{\otimes n}\right)  ^{-1}=\sqrt{\frac{p}
{q}}\sum_{y^{n}}\lambda_{y^{n}}|y^{n}\rangle\!\langle y^{n}|,
\end{equation}
where
\begin{align}
\lambda_{y^{n}}  & \coloneqq \lambda_{y_{1}}\times\lambda_{y_{2}}\times\cdots
\times\lambda_{y_{n}},\\
|y^{n}\rangle & \coloneqq |y_{1}\rangle\otimes|y_{2}\rangle\otimes\cdots\otimes
|y_{n}\rangle.
\end{align}
Furthermore, observe that
\begin{equation}
\left[  F(\rho,\sigma)\right]  ^{n}=F(\rho^{\otimes n},\sigma^{\otimes
n})=\left(  \sum_{y^{n}}\sqrt{\langle y^{n}|\rho^{\otimes n}|y^{n}
\rangle\ \langle y^{n}|\sigma^{\otimes n}|y^{n}\rangle}\right)  ^{2}.
\end{equation}

The idea of the Fuchs--Caves measurement in this case is to perform the
measurement $\left\{  |y\rangle\!\langle y|\right\}  _{y}$ on each individual
system, leading to a measurement outcome sequence $y^{n}$, and then decide
$\rho$ if $\lambda_{y^{n}}\sqrt{\frac{p}{q}}\geq1$ and $\sigma$ if
$\lambda_{y^{n}}\sqrt{\frac{p}{q}}<1$. In this sense, the test being performed
is a \textquotedblleft quantum likelihood ratio test,\textquotedblright\ as
discussed in~\cite{Fuchs1995}. So indeed the strategy consists of a product measurement
followed by classical post-processing. The error probability of this
measurement strategy is then given by:
\begin{equation}
p\operatorname{Tr}[\Lambda_{\sigma}^{(n)}\rho^{\otimes n}]+q\operatorname{Tr}
[\Lambda_{\rho}^{(n)}\sigma^{\otimes n}],
\end{equation}
where
\begin{align}
\Lambda_{\rho}^{(n)}  & \coloneqq \sum_{y^{n}:\lambda_{y^{n}}\sqrt{\frac{p}{q}}\geq
1}|y^{n}\rangle\!\langle y^{n}|,\\
\Lambda_{\sigma}^{(n)}  & \coloneqq \sum_{y^{n}:\lambda_{y^{n}}\sqrt{\frac{p}{q}}
<1}|y^{n}\rangle\!\langle y^{n}|.
\end{align}
Now generalizing the error analysis in the proof of~\cite[Lemma~2]
{wilde_et_al:LIPIcs.TQC.2013.157} and observing from~\eqref{eq:tensor-power-geo-mean},~\eqref{eq:FC-eigen-decomp}, and~\eqref{eq:FC-eigenvals} that
\begin{equation}
\lambda_{y^{n}}\sqrt{\frac{p}{q}}\geq1\qquad\Leftrightarrow\qquad\left(
p\prod\limits_{i=1}^{n}\langle y_{i}|\rho|y_{i}\rangle\right)  ^{1/2}
\geq\left(  q\prod\limits_{i=1}^{n}\langle y_{i}|\sigma|y_{i}\rangle\right)
^{1/2},
\end{equation}
we find the following upper bound on the error probability of this
Fuchs--Caves strategy:
\begin{align}
& p\operatorname{Tr}[\Lambda_{\sigma}^{(n)}\rho^{\otimes n}
]+q\operatorname{Tr}[\Lambda_{\rho}^{(n)}\sigma^{\otimes n}]\nonumber\\
& =p\sum_{y^{n}:\lambda_{y^{n}}\sqrt{\frac{p}{q}}<1}\langle y^{n}
|\rho^{\otimes n}|y^{n}\rangle+q\sum_{y^{n}:\lambda_{y^{n}}\sqrt{\frac{p}{q}
}\geq1}\langle y^{n}|\sigma^{\otimes n}|y^{n}\rangle
\label{eq:FC-err-analysis-1}\\
& =p\sum_{y^{n}:\lambda_{y^{n}}\sqrt{\frac{p}{q}}<1}\prod\limits_{i=1}
^{n}\langle y_{i}|\rho|y_{i}\rangle+q\sum_{y^{n}:\lambda_{y^{n}}\sqrt{\frac
{p}{q}}\geq1}\prod\limits_{i=1}^{n}\langle y_{i}|\sigma|y_{i}\rangle\\
& =\sum_{y^{n}:\lambda_{y^{n}}\sqrt{\frac{p}{q}}<1}\left(  p\prod
\limits_{i=1}^{n}\langle y_{i}|\rho|y_{i}\rangle\right)  ^{1/2}\left(
p\prod\limits_{i=1}^{n}\langle y_{i}|\rho|y_{i}\rangle\right)  ^{1/2}
\nonumber\\
& \qquad+\sum_{y^{n}:\lambda_{y^{n}}\sqrt{\frac{p}{q}}\geq1}\left(
q\prod\limits_{i=1}^{n}\langle y_{i}|\sigma|y_{i}\rangle\right)  ^{1/2}\left(
q\prod\limits_{i=1}^{n}\langle y_{i}|\sigma|y_{i}\rangle\right)  ^{1/2}\\
& \leq\sum_{y^{n}:\lambda_{y^{n}}\sqrt{\frac{p}{q}}<1}\left(  p\prod
\limits_{i=1}^{n}\langle y_{i}|\rho|y_{i}\rangle\right)  ^{1/2}\left(
q\prod\limits_{i=1}^{n}\langle y_{i}|\sigma|y_{i}\rangle\right)
^{1/2}\nonumber\\
& \qquad+\sum_{y^{n}:\lambda_{y^{n}}\sqrt{\frac{p}{q}}\geq1}\left(
p\prod\limits_{i=1}^{n}\langle y_{i}|\rho|y_{i}\rangle\right)  ^{1/2}\left(
q\prod\limits_{i=1}^{n}\langle y_{i}|\sigma|y_{i}\rangle\right)  ^{1/2}\\
& =\sum_{y^{n}}\left(  p\prod\limits_{i=1}^{n}\langle y_{i}|\rho|y_{i}
\rangle\right)  ^{1/2}\left(  q\prod\limits_{i=1}^{n}\langle y_{i}
|\sigma|y_{i}\rangle\right)  ^{1/2}\\
& =\sqrt{pq}\sum_{y^{n}}\sqrt{\langle y^{n}|\rho^{\otimes n}|y^{n}
\rangle\ \langle y^{n}|\sigma^{\otimes n}|y^{n}\rangle}\\
& =\sqrt{pq}F(\rho,\sigma)^{n/2}.\label{eq:FC-err-analysis-last}
\end{align}
Turning this bound around for the purposes of sample complexity, by setting
\begin{equation}
n\coloneqq \left\lceil \frac{\ln\!\left(  \frac{\sqrt{pq}}{\varepsilon}\right)
}{-\frac{1}{2}\ln F(\rho,\sigma)}\right\rceil ,
\end{equation}
where $\varepsilon$ is the desired threshold in~\eqref{eq:def:sample_complexity_symmetric1}, we find that
\begin{align}
n  & \geq\frac{\ln\!\left(  \frac{\sqrt{pq}}{\varepsilon}\right)  }{-\frac
{1}{2}\ln F(\rho,\sigma)}\\
\Leftrightarrow\qquad n\ln\!\left(  \frac{1}{\sqrt{F}(\rho,\sigma)}\right)
& \geq\ln\!\left(  \frac{\sqrt{pq}}{\varepsilon}\right)  \\
\Leftrightarrow\qquad\frac{1}{\left[  \sqrt{F}(\rho,\sigma)\right]  ^{n}}  &
\geq\frac{\sqrt{pq}}{\varepsilon}\\
\Leftrightarrow\qquad\varepsilon & \geq\sqrt{pq}\left[  \sqrt{F}(\rho
,\sigma)\right]  ^{n},
\end{align}
which, when combined with~\eqref{eq:FC-err-analysis-1}--\eqref{eq:FC-err-analysis-last}, establishes
that the Fuchs--Caves measurement strategy achieves the sample complexity
upper bound in~\eqref{eq:binary-informal-other2}.

\section{Proof of Theorem~\ref{thm:binary_asymmetric}}

\label{app:proof_samp_comp_binary_asymmetric}

Let us begin by proving the lower bound in~\eqref{eq:binary_asymmetric_samp_comp}. Here we assume the existence of a scheme such that the constraint
$\beta_{\varepsilon}(\rho^{\otimes n}\Vert\sigma^{\otimes n})\leq\delta$
holds. Then we apply 
Lemma~\ref{lemma:strong_converse}
to conclude that the following holds for all $\alpha \in (1,\gamma]$:
\begin{align}
\ln\!\left(  \frac{1}{\delta}\right)   &  \leq-\ln\beta_{\varepsilon}
(\rho^{\otimes n}\Vert\sigma^{\otimes n})\\
&  \leq\widetilde{D}_{\alpha}(\rho^{\otimes n}\Vert\sigma^{\otimes n}
)+\frac{\alpha}{\alpha-1}\ln\!\left(  \frac{1}{1-\varepsilon}\right)  \\
&  =n\widetilde{D}_{\alpha}(\rho\Vert\sigma)+\frac{\alpha}{\alpha-1}
\ln\!\left(  \frac{1}{1-\varepsilon}\right)  .
\end{align}
Rewriting this gives the lower bound:
\begin{equation}
n^{\ast}(\rho,\sigma,\varepsilon,\delta)\geq\sup_{\alpha \in (1,\gamma]}\left(  \frac
{\ln\!\left(  \frac{\left(  1-\varepsilon\right)  ^{\alpha^{\prime}}}{\delta
}\right)  }{\widetilde{D}_{\alpha}(\rho\Vert\sigma)}\right)  .
\end{equation}
Now applying the same reasoning, but instead applying the expression in~\eqref{eq:asymm-beta-rewrite-2}, we conclude the lower bound
\begin{equation}
n^{\ast}(\rho,\sigma,\varepsilon,\delta)\geq\sup_{\alpha \in (1,\gamma]}\left(  \frac
{\ln\!\left(  \frac{\left(  1-\delta\right)  ^{\alpha^{\prime}}}{\varepsilon
}\right)  }{\widetilde{D}_{\alpha}(\sigma\Vert\rho)}\right)  ,
\end{equation}
thus finishing the proof of the first inequality in~\eqref{eq:binary_asymmetric_samp_comp}.

Now we prove the upper bound in~\eqref{eq:binary_asymmetric_samp_comp}.
For every $n\in\mathbb{N}$, all $\alpha \in (0,1)$, and all $\varepsilon\in(0,1]$, we apply 
Lemma~\ref{lemma:Hoeffding}
and additivity of the Petz--R\'enyi relative entropy to obtain
\begin{align}
-\ln\beta_{\varepsilon}(\rho^{\otimes n}\Vert\sigma^{\otimes n})
&\geq
D_{\alpha}(\rho^{\otimes n}\Vert\sigma^{\otimes n})+\alpha^{\prime
}\ln\!\left(  \frac{1}{\varepsilon}\right)
\label{eq:petz-renyi-to-hypo-test-1}
\\
&= n D_{\alpha}(\rho\Vert\sigma)+\alpha^{\prime
}\ln\!\left(  \frac{1}{\varepsilon}\right).
\label{eq:petz-renyi-to-hypo-test-2}
\end{align}
Note here that $D_{\alpha}(\rho\Vert\sigma)$ is strictly positive and finite for every $\alpha \in (0,1)$, provided that $\rho\not\perp \sigma$. Now setting
\begin{equation}
    n \coloneqq \left \lceil   \frac{\ln\!\left(  \frac{\varepsilon^{\alpha^{\prime}}
}{\delta}\right)  }{D_{\alpha}(\rho\Vert\sigma)}  \right\rceil,
\label{eq:choice-n-ach-asym-bin}
\end{equation}
we conclude that
\begin{equation}
    n \geq    \frac{\ln\!\left(  \frac{\varepsilon^{\alpha^{\prime}}
}{\delta}\right)  }{D_{\alpha}(\rho\Vert\sigma)}  ,
\end{equation}
which can be rewritten as
\begin{align}
n D_{\alpha}(\rho\Vert\sigma)+\alpha^{\prime
}\ln\!\left(  \frac{1}{\varepsilon}\right)
\geq \ln\! \left( \frac{1}{\delta} \right).
\label{eq:n-constraint-apply}
\end{align}
By applying~\eqref{eq:petz-renyi-to-hypo-test-1}--\eqref{eq:petz-renyi-to-hypo-test-2} and~\eqref{eq:n-constraint-apply}, we then conclude that
\begin{equation}
\beta_{\varepsilon}(\rho^{\otimes n}\Vert\sigma^{\otimes n})\leq\delta,
\end{equation}
which means that the choice of $n$ in~\eqref{eq:choice-n-ach-asym-bin} leads to the desired constraint in~\eqref{eq:asymm-beta-rewrite} being
satisfied. Since the sample complexity is the minimum value of $n\in
\mathbb{N}$ such that~\eqref{eq:asymm-beta-rewrite} holds, we thus conclude
the following upper bound on sample complexity:
\begin{equation}
n^{\ast}(\rho,\sigma,\varepsilon,\delta)\leq\left\lceil   \frac{\ln\!\left(  \frac{\varepsilon^{\alpha^{\prime}}
}{\delta}\right)  }{D_{\alpha}(\rho\Vert\sigma)}  \right\rceil .
\end{equation}
Since the above bound holds for every $\alpha \in (0,1)$, we conclude
\begin{equation}
n^{\ast}(\rho,\sigma,\varepsilon,\delta)
\leq \inf_{\alpha\in\left(
0,1\right)  } \left\lceil   \frac{\ln\!\left(  \frac{\varepsilon^{\alpha^{\prime}}
}{\delta}\right)  }{D_{\alpha}(\rho\Vert\sigma)}  \right\rceil
=\left\lceil \inf_{\alpha\in\left(
0,1\right)  }\left(  \frac{\ln\!\left(  \frac{\varepsilon^{\alpha^{\prime}}
}{\delta}\right)  }{D_{\alpha}(\rho\Vert\sigma)}\right)  \right\rceil .
\end{equation}
Now applying the same reasoning, but instead considering the expression in~\eqref{eq:asymm-beta-rewrite-2}, we conclude the following upper bound:
\begin{equation}
n^{\ast}(\rho,\sigma,\varepsilon,\delta)\leq\left\lceil \inf_{\alpha\in\left(
0,1\right)  }\left(  \frac{\ln\!\left(  \frac{\delta^{\alpha^{\prime}}
}{\varepsilon}\right)  }{D_{\alpha}(\sigma\Vert\rho)}\right)  \right\rceil ,
\end{equation}
thus finishing the proof of the second inequality in~\eqref{eq:binary_asymmetric_samp_comp}.

\section{Proof of Corollary~\ref{corollary:binary-asymmetric-informal}}

\label{app:proof-binary-asymmetric-informal}

Before proving Corollary~\ref{corollary:binary-asymmetric-informal}, we state Lemma~\ref{lem:rel-ent-to-renyis} below. Several aspects of this lemma were already proven as \cite[Lemma~13]{BDS+24} and extended
to the infinite-dimensional case in \cite[Lemma~4.13]{bergh2023infinite} (see also \cite{TCR2009,tomamichel2013framework} for earlier statements along these lines). Here, we restate
the lemma, since our convention with all relative entropies in our paper
involves the natural logarithm instead of the binary logarithm. Furthermore,
since it is rather cumbersome to check and translate all the details of the
proof of \cite[Lemma~13]{BDS+24} using our convention, we have opted to provide
a complete proof 
below, which follows that of \cite[Lemma~13]{BDS+24} quite
closely. It should be noted also that the proof below holds in the more
general case that $\sigma$ is a positive semi-definite operator, which can be
useful in other applications. 
Generalizing this proof to the infinite-dimensional case follows straightforwardly from the techniques of \cite[Lemma~4.13]{bergh2023infinite}.


\begin{lemma}
\label{lem:rel-ent-to-renyis}
Let $\rho$ be a quantum state and $\sigma$ a positive semi-definite operator.
For $\gamma\in(0,1]$, define
\begin{equation}
c_{\gamma}(\rho\Vert\sigma)\coloneqq\frac{1}{\gamma}\ln\!\left(  e^{\gamma
D_{1+\gamma}(\rho\Vert\sigma)}+e^{-\gamma D_{1-\gamma}(\rho\Vert\sigma
)}+1\right)  .
\end{equation}
The following bounds hold for all $\gamma\in(0,1]$:
\begin{align}
D(\rho\Vert\sigma)  &  \leq c_{\gamma}(\rho\Vert\sigma
),\label{eq:rel-ent-to-c-g-bound-up}\\
D(\rho\Vert\sigma)  &  \leq\gamma\left[  c_{\gamma}(\rho\Vert\sigma)\right]
^{2}. \label{eq:rel-ent-to-c-g-bound-up-squared}
\end{align}
In the case that $D(\rho\Vert\sigma)\geq0$, for all $\gamma\in(0,1]$ and
$\zeta\in(0,\frac{\gamma}{2}]$, the following bound holds:
\begin{equation}
D_{1+\zeta}(\rho\Vert\sigma)\leq D(\rho\Vert\sigma)+\zeta\left[  c_{\gamma
}(\rho\Vert\sigma)\right]  ^{2}. \label{eq:alpha-g1-bound-rel-ent}
\end{equation}
For all $\gamma\in(0,1]$ and $\zeta\in(0,\frac{\ln3}{2c_{\gamma}(\rho
\Vert\sigma)}]$, the following bound holds:
\begin{equation}
D_{1+\zeta}(\rho\Vert\sigma)\leq D(\rho\Vert\sigma)+\zeta K\left[  c_{\gamma
}(\rho\Vert\sigma)\right]  ^{2},
\label{eq:rel-ent-u-bnd-Petz-geq-1-noconstr}
\end{equation}
where
\begin{equation}
K\coloneqq\cosh\!\left(  \frac{\ln3}{2}\right)  \approx 1.155 . \label{eq:def-K-constant}
\end{equation}
If there exists $\gamma^{\prime}\in(0,1]$ such that $D_{1+\gamma^{\prime}
}(\rho\Vert\sigma)<+\infty$, then for all $\gamma\in(0,\gamma^{\prime}]$ and
$\zeta\in(0,\frac{\ln3}{2c_{\gamma}(\rho\Vert\sigma)}]$, the following bound
holds:
\begin{equation}
D(\rho\Vert\sigma)\leq D_{1-\zeta}(\rho\Vert\sigma)+\zeta K\left[  c_{\gamma
}(\rho\Vert\sigma)\right]  ^{2}, \label{eq:rel-ent-to-alpha-below-one}
\end{equation}

\end{lemma}

\begin{proof}
%
First, suppose that $D(\rho\Vert\sigma)=+\infty$. Then it follows that
$c_{\gamma}(\rho\Vert\sigma)=+\infty$, so that~\eqref{eq:rel-ent-to-c-g-bound-up},~\eqref{eq:rel-ent-to-c-g-bound-up-squared}, and~\eqref{eq:alpha-g1-bound-rel-ent} hold trivially in this case.

So let us now suppose that there exists $\gamma \in (0,1]$ such that $D_{1+\gamma}(\rho\Vert\sigma)<+\infty $ (and thus $D(\rho\Vert\sigma)<+\infty$) and prove~\eqref{eq:rel-ent-to-c-g-bound-up} and~\eqref{eq:rel-ent-to-c-g-bound-up-squared}. Observe that
\begin{equation}
D(\rho\Vert\sigma)\leq D_{1+\gamma}(\rho\Vert\sigma)=\frac{1}{\gamma}
\ln\!\left(  e^{\gamma D_{1+\gamma}(\rho\Vert\sigma)}\right)  \leq c_{\gamma
}(\rho\Vert\sigma), \label{eq:D-to-g-cg}
\end{equation}
thus establishing~\eqref{eq:rel-ent-to-c-g-bound-up}. Also, under this
assumption, we necessarily have that $\operatorname{supp}(\rho)\subseteq
\operatorname{supp}(\sigma)$. So then we can restrict the Hilbert space to the
support of $\sigma$, where $\sigma$ is an invertible (positive
definite)\ operator. Now define
\begin{align}
X  &  \coloneqq \rho\otimes\left(  \sigma^{-1}\right)  ^{T},\\
|\varphi\rangle &  \coloneqq \left(  \sqrt{\rho}\otimes I\right)  \sum
_{i}|i\rangle|i\rangle,
\end{align}
so that $|\varphi\rangle$ is a normalized state vector, and observe that
\begin{equation}
D(\rho\Vert\sigma)=\langle\varphi|\ln X|\varphi\rangle,
\label{eq:rel-ent-rewrite}
\end{equation}
and for all $\zeta\in\lbrack-1,0)\cup(0,1]$,
\begin{equation}
D_{1+\zeta}(\rho\Vert\sigma)=\frac{1}{\zeta}\ln\langle\varphi|X^{\zeta
}|\varphi\rangle. \label{eq:Petz-renyi-rewrite}
\end{equation}
Additionally, observe that for all $\gamma\in(0,1]$,
\begin{equation}
\label{eq:c-gamma-X-powers}
c_{\gamma}(\rho\Vert\sigma)=\frac{1}{\gamma}\ln\langle\varphi|\left(
X^{\gamma}+X^{-\gamma}+I\right)  |\varphi\rangle,
\end{equation}
Since the function $x^{\gamma}+x^{-\gamma}+1$ has a minimum at $x=1$, it
follows that $x^{\gamma}+x^{-\gamma}+1\geq3$ for all $x>0$. So
the
following inequality holds for all $\gamma\in(0,1]$:
\begin{equation}
\gamma c_{\gamma}(\rho\Vert\sigma)\geq \ln 3 \geq 1. \label{eq:g-cg-to-1}
\end{equation}
Thus, from~\eqref{eq:D-to-g-cg} and~\eqref{eq:g-cg-to-1}, we conclude that
$\gamma D(\rho\Vert\sigma)\leq\gamma c_{\gamma}(\rho\Vert\sigma)\leq\left[
\gamma c_{\gamma}(\rho\Vert\sigma)\right]  ^{2}$, which implies~\eqref{eq:rel-ent-to-c-g-bound-up-squared}.

We now prove~\eqref{eq:alpha-g1-bound-rel-ent}. For all $t>0$ and $\zeta
\in\mathbb{R}$, the first term in the Taylor expansion of $t^{\zeta}$ around
$\zeta=0$ is $\zeta\ln t$. Then we can write $t^{\zeta}=1+\zeta\ln
t+r_{\zeta}(t)$, where
\begin{equation}
r_{\zeta}(t)\coloneqq t^{\zeta}-\zeta\ln t-1. \label{eq:def-r-function}
\end{equation}
Given that $1+x\leq e^{x}$ for all $x\in\mathbb{R}$, we find that
\begin{align}
r_{\zeta}(t)  &  =t^{\zeta}-\zeta\ln t-1\label{eq:r-less-than-s-1}\\
&  =t^{\zeta}+1-\zeta\ln t-2\\
&  \leq t^{\zeta}+e^{-\zeta\ln t}-2\\
&  =e^{\zeta\ln t}+e^{-\zeta\ln t}-2\\
&  =2\left[  \cosh(\zeta\ln t)-1\right]  \eqqcolon s_{\zeta}(t).
\label{eq:r-less-than-s-last}
\end{align}
Observe that
\begin{align}
s_{-\zeta}(t)  &  =s_{\zeta}(t),\label{eq:s-even-func}\\
s_{\zeta}(t)  &  =s_{\gamma\zeta}(t^{1/\gamma})\quad\forall\gamma\in
\mathbb{R}.
\end{align}
One can also verify that $s_{\zeta}(t)$ is monotonically increasing in $t$ for
$t\geq1$ and concave in $t$ if $\zeta\leq1/2$ and $t\geq3$. For all $t\geq0$,
either $t$ or $1/t$ is $\geq1$, so that we can use the monotonicity of
$s_{\zeta}(t)$ to conclude that
\begin{equation}
s_{\zeta}(t)=s_{\zeta}\!\left(  \frac{1}{t}\right)  \leq s_{\zeta}\!\left(
t+\frac{1}{t}\right)  .
\end{equation}
Using this, we conclude for all $t\geq0$ and $\gamma\in(0,1]$ that
\begin{align}
s_{\zeta}(t)  &  =s_{\zeta/\gamma}(t^{\gamma})\\
&  \leq s_{\zeta/\gamma}(t^{\gamma}+t^{-\gamma})\\
&  \leq s_{\zeta/\gamma}(t^{\gamma}+t^{-\gamma}+1).
\end{align}
As mentioned above, the function $t^{\gamma}+t^{-\gamma}+1$ has a minimum at
$t=1$, and so it follows that $t^{\gamma}+t^{-\gamma}+1\geq3$ for all $t>0$. Then we
can use the inequality above, i.e., $s_{\zeta}(t)\leq s_{\zeta/\gamma}(t^{\gamma}+t^{-\gamma}+1)$, and concavity of $s_{\zeta/\gamma}$ for $\zeta\leq\frac{\gamma}{2}$ to
conclude that
\begin{align}
\langle\varphi|s_{\zeta}(X)|\varphi\rangle &  \leq\langle\varphi
|s_{\zeta/\gamma}(X^{\gamma}+X^{-\gamma}+I)|\varphi\rangle\\
&  \leq s_{\zeta/\gamma}(\langle\varphi|\left(  X^{\gamma}+X^{-\gamma
}+I\right)  |\varphi\rangle)\label{eq:concavity-applied-rel-ent-pf}\\
&  =s_{\zeta/\gamma}(e^{\gamma c_{\gamma}(\rho\Vert\sigma)}).
\end{align}
The second inequality follows from applying Jensen's inequality. The equality follows from~\eqref{eq:c-gamma-X-powers}. We finally
use Taylor's theorem in the Lagrange form of the remainder to conclude that
the following holds for all $t\geq0$ and some $\eta\in(0,\zeta)$:
\begin{align}
s_{\zeta}(t)  &  =s_{0}(t)+\left.  \frac{d}{d\zeta}s_{\zeta}(t)\right\vert
_{\zeta=0}\zeta+\frac{1}{2}\left.  \frac{d^{2}}{d\zeta^{2}}s_{\zeta
}(t)\right\vert _{\zeta=\eta}\zeta^{2}\\
&  =\zeta^{2}\left(  \ln t\right)  ^{2}\cosh(\eta\ln t)\\
&  \leq\zeta^{2}\left(  \ln t\right)  ^{2}\cosh(\zeta\ln t).
\label{eq:cosh-mono-inc}
\end{align}
For the second equality, we used the facts that
\begin{align}
s_{0}(t)  &  =\left.  \frac{d}{d\zeta}s_{\zeta}(t)\right\vert _{\zeta=0}=0,\\
\left.  \frac{d^{2}}{d\zeta^{2}}s_{\zeta}(t)\right\vert _{\zeta=\eta}  &
=2\left(  \ln t\right)  ^{2}\cosh(\eta\ln t).
\end{align}
For the inequality in~\eqref{eq:cosh-mono-inc}, we used the fact that
$\eta \mapsto \cosh(\eta\ln t)$ is monotone increasing on $\eta\in(0,\zeta]$. Then we
conclude that
\begin{align}
\langle\varphi|s_{\zeta}(X)|\varphi\rangle &  \leq s_{\zeta/\gamma}(e^{\gamma
c_{\gamma}(\rho\Vert\sigma)})\label{eq:s-quant-final-bnd-1}\\
&  \leq\left(  \frac{\zeta}{\gamma}\right)  ^{2}\left(  \ln e^{\gamma
c_{\gamma}(\rho\Vert\sigma)}\right)  ^{2}\cosh\!\left(  \frac{\zeta}{\gamma
}\ln e^{\gamma c_{\gamma}(\rho\Vert\sigma)}\right) \\
&  =\zeta^{2}\left[  c_{\gamma}(\rho\Vert\sigma)\right]  ^{2}\cosh\!\left(
\zeta c_{\gamma}(\rho\Vert\sigma)\right)  . \label{eq:s-quant-final-bnd-last}
\end{align}
Then for $\zeta\in(0,\frac{\gamma}{2}]$ and when $D(\rho\Vert\sigma)\geq0$,
consider that
\begin{align}
D_{1+\zeta}(\rho\Vert\sigma)  &  =\frac{1}{\zeta}\ln\langle\varphi|X^{\zeta
}|\varphi\rangle\\
&  =\frac{1}{\zeta}\ln\!\left[  1+\zeta\langle\varphi|\ln X|\varphi
\rangle+\langle\varphi|r_{\zeta}(X)|\varphi\rangle\right] \\
&  =\frac{1}{\zeta}\ln\!\left[  1+\zeta D(\rho\Vert\sigma)+\langle
\varphi|r_{\zeta}(X)|\varphi\rangle\right] \\
&  \leq\frac{1}{\zeta}\ln\!\left[  1+\zeta D(\rho\Vert\sigma)+\langle
\varphi|s_{\zeta}(X)|\varphi\rangle\right] \\
&  =\frac{1}{\zeta}\ln\!\left[  1+\zeta D(\rho\Vert\sigma)\right]  +\frac
{1}{\zeta}\ln\!\left[  1+\frac{\langle\varphi|s_{\zeta}(X)|\varphi\rangle
}{1+\zeta D(\rho\Vert\sigma)}\right] \\
&  \leq D(\rho\Vert\sigma)+\frac{1}{\zeta}\ln\!\left[  1+\frac{\langle
\varphi|s_{\zeta}(X)|\varphi\rangle}{1+\zeta D(\rho\Vert\sigma)}\right] \\
&  \leq D(\rho\Vert\sigma)+\frac{1}{\zeta}\ln\!\left[  1+\langle\varphi
|s_{\zeta}(X)|\varphi\rangle\right] \\
&  \leq D(\rho\Vert\sigma)+\frac{1}{\zeta}\ln\!\left[  1+\zeta^{2}\left[
c_{\gamma}(\rho\Vert\sigma)\right]  ^{2}\cosh\!\left(  \zeta c_{\gamma}
(\rho\Vert\sigma)\right)  \right] \\
&  \leq D(\rho\Vert\sigma)+\zeta\left[  c_{\gamma}(\rho\Vert\sigma)\right]
^{2}.
\end{align}
The first equality follows from~\eqref{eq:Petz-renyi-rewrite}, the second
equality from~\eqref{eq:def-r-function}, and the third equality from~\eqref{eq:rel-ent-rewrite}. The first inequality follows from~\eqref{eq:r-less-than-s-1}--\eqref{eq:r-less-than-s-last}. The second
inequality follows because $\ln(1+x)\leq x$ for $x\geq 0$. The third
inequality follows because $\zeta D(\rho\Vert\sigma)\geq0$. The penultimate
inequality follows from~\eqref{eq:s-quant-final-bnd-1}--\eqref{eq:s-quant-final-bnd-last}. The final
inequality follows because the function $k\rightarrow k^{2}-\ln(1+k^{2}
\cosh(k))$ is monotonically increasing for $k\geq0$, so that it is non-negative.

We now prove \eqref{eq:rel-ent-u-bnd-Petz-geq-1-noconstr}. Suppose that $\zeta\in(0,\frac{\ln3}{2c_{\gamma}(\rho\Vert\sigma)}]$,
while making no constraint on the sign of $D(\rho\Vert\sigma)$. Observe from~\eqref{eq:g-cg-to-1} that
\begin{equation}
\frac{\ln3}{2c_{\gamma}(\rho\Vert\sigma)}\leq\frac{\gamma c_{\gamma}(\rho
\Vert\sigma)}{2c_{\gamma}(\rho\Vert\sigma)}=\frac{\gamma}{2},
\end{equation}
ensuring that the needed concavity of $s_{\zeta/\gamma}(X)$ in~\eqref{eq:concavity-applied-rel-ent-pf} holds. Starting from the first few
steps above, consider that
\begin{align}
D_{1+\zeta}(\rho\Vert\sigma)  &  =\frac{1}{\zeta}\ln\!\left[  1+\zeta
D(\rho\Vert\sigma)+\langle\varphi|r_{\zeta}(X)|\varphi\rangle\right] \\
&  \leq\frac{1}{\zeta}\ln\!\left[  1+\zeta D(\rho\Vert\sigma)+\langle
\varphi|s_{\zeta}(X)|\varphi\rangle\right] \\
&  \leq D(\rho\Vert\sigma)+\frac{1}{\zeta}\langle\varphi|s_{\zeta}
(X)|\varphi\rangle\\
&  \leq D(\rho\Vert\sigma)+\zeta\left[  c_{\gamma}(\rho\Vert\sigma)\right]
^{2}\cosh\!\left(  \zeta c_{\gamma}(\rho\Vert\sigma)\right) \\
&  \leq D(\rho\Vert\sigma)+\zeta\left[  c_{\gamma}(\rho\Vert\sigma)\right]
^{2}\cosh\!\left(  \frac{\ln3}{2}\right) \\
&  =D(\rho\Vert\sigma)+\zeta K\left[  c_{\gamma}(\rho\Vert\sigma)\right]
^{2}.
\end{align}
The first inequality follows from~\eqref{eq:r-less-than-s-1}--\eqref{eq:r-less-than-s-last}.
The second
inequality follows because $\ln(1+x)\leq x$ for $x\geq-1$, the latter
condition ensured because $\zeta D(\rho\Vert\sigma)+\langle\varphi|r_{\zeta
}(X)|\varphi\rangle\geq-1$. The third inequality follows from
~\eqref{eq:s-quant-final-bnd-1}--\eqref{eq:s-quant-final-bnd-last}. The
penultimate inequality follows from the assumption that $\zeta\in(0,\frac
{\ln3}{2c_{\gamma}(\rho\Vert\sigma)}]$ and the fact that $\cosh$ is monotone
increasing on this interval. The final equality follows from the definition of
$K$ in~\eqref{eq:def-K-constant}.

Finally, we prove \eqref{eq:rel-ent-to-alpha-below-one}. Suppose that $\zeta\in(0,\frac{\ln3}{2c_{\gamma}(\rho\Vert\sigma)}]$.
Similar to the above, consider that
\begin{align}
D_{1-\zeta}(\rho\Vert\sigma)  &  =-\frac{1}{\zeta}\ln\!\left[  1-\zeta
D(\rho\Vert\sigma)+\langle\varphi|r_{-\zeta}(X)|\varphi\rangle\right] \\
&  \geq-\frac{1}{\zeta}\ln\!\left[  1-\zeta D(\rho\Vert\sigma)+\langle
\varphi|s_{-\zeta}(X)|\varphi\rangle\right] \\
&  =-\frac{1}{\zeta}\ln\!\left[  1-\zeta D(\rho\Vert\sigma)+\langle
\varphi|s_{\zeta}(X)|\varphi\rangle\right] \\
&  \geq D(\rho\Vert\sigma)-\frac{1}{\zeta}\langle\varphi|s_{\zeta}
(X)|\varphi\rangle\\
&  \geq D(\rho\Vert\sigma)-\zeta\left[  c_{\gamma}(\rho\Vert\sigma)\right]
^{2}\cosh\!\left(  \zeta c_{\gamma}(\rho\Vert\sigma)\right) \\
&  \geq D(\rho\Vert\sigma)-\zeta\left[  c_{\gamma}(\rho\Vert\sigma)\right]
^{2}\cosh\!\left(  \frac{\ln3}{2}\right) \\
&  =D(\rho\Vert\sigma)-\zeta K\left[  c_{\gamma}(\rho\Vert\sigma)\right]
^{2}.
\end{align}
The first inequality follows from~\eqref{eq:r-less-than-s-1}--\eqref{eq:r-less-than-s-last}.
The second
equality follows from~\eqref{eq:s-even-func}. The second inequality follows
because $\ln(1+x)\leq x$ for $x\geq-1$. The final steps follow from similar
reasons as above.
\end{proof}

\bigskip

With Lemma~\ref{lem:rel-ent-to-renyis} in hand, we now move on to the proof of Corollary~\ref{corollary:binary-asymmetric-informal}.

\begin{proof}[Proof of Corollary~\ref{corollary:binary-asymmetric-informal}]
Let $\gamma\in(0,1]$ be such that $D_{1+\gamma}(\rho\Vert\sigma)<+\infty$.
Define
\begin{equation}
c_{\gamma}(\rho\Vert\sigma)\coloneqq\frac{1}{\gamma}\ln\!\left(  e^{\gamma
D_{1+\gamma}(\rho\Vert\sigma)}+e^{-\gamma D_{1-\gamma}(\rho\Vert\sigma
)}+1\right)  .
\end{equation}
For all $\xi^{\prime}$ satisfying $0<\xi^{\prime}<\gamma/2$, the following
inequality holds as a consequence of~\eqref{eq:alpha-g1-bound-rel-ent}:
\begin{equation}
D_{1+\xi^{\prime}}(\rho\Vert\sigma)\leq D(\rho\Vert\sigma)+\xi^{\prime}\left[
c_{\gamma}(\rho\Vert\sigma)\right]  ^{2}.
\end{equation}
Setting $\xi\coloneqq\xi^{\prime}\left[  c_{\gamma}(\rho\Vert\sigma)\right]
^{2}$, we can rewrite this as follows: for all $\xi$ satisfying
\begin{equation}
0<\xi<\frac{\gamma\left[  c_{\gamma}(\rho\Vert\sigma)\right]  ^{2}}{2},
\label{eq:xi-constraint-1}
\end{equation}
the following inequality holds:
\begin{equation}
D_{1+\xi^{\prime}}(\rho\Vert\sigma)\leq D(\rho\Vert\sigma)+\xi.
\end{equation}
Now applying Theorem~\ref{thm:binary_asymmetric}, while noting that the sandwiched R\'enyi relative
entropies therein can be replaced with the Petz--R\'enyi relative entropies (due
to $\widetilde{D}_{\alpha}\leq D_{\alpha}$ holding for all $\alpha > 0$),\ we conclude that
\begin{align}
n^{\ast}(\rho,\sigma,\varepsilon,\delta)  &  \geq\frac{\ln\!\left(
\frac{\left(  1-\varepsilon\right)  ^{\frac{1+\xi^{\prime}}{\xi^{\prime}}}
}{\delta}\right)  }{D_{1+\xi^{\prime}}(\rho\Vert\sigma)}\\
&  =\frac{\ln\!\left(  \frac{1}{\delta}\right)  -\left(  \frac{1}{\xi^{\prime
}}+1\right)  \ln\!\left(  \frac{1}{1-\varepsilon}\right)  }{D_{1+\xi^{\prime}
}(\rho\Vert\sigma)}\\
&  \geq\frac{\ln\!\left(  \frac{1}{\delta}\right)  -\left(  \frac{1}
{\xi^{\prime}}+1\right)  \ln\!\left(  \frac{1}{1-\varepsilon}\right)  }
{D(\rho\Vert\sigma)+\xi}\\
&  =\frac{\ln\!\left(  \frac{1}{\delta}\right)  -\left(  \frac{\left[
c_{\gamma}(\rho\Vert\sigma)\right]  ^{2}}{\xi}+1\right)  \ln\!\left(  \frac
{1}{1-\varepsilon}\right)  }{D(\rho\Vert\sigma)+\xi}.
\end{align}
Let us now pick $\xi=\frac{1}{4}D(\rho\Vert\sigma)$. Then the inequality in~\eqref{eq:xi-constraint-1} follows as a consequence of~\eqref{eq:rel-ent-to-c-g-bound-up-squared}. Substituting above, we find that
\begin{equation}
n^{\ast}(\rho,\sigma,\varepsilon,\delta)\geq\frac{\ln\!\left(  \frac{1}
{\delta}\right)  -\left(  \frac{4\left[  c_{\gamma}(\rho\Vert\sigma)\right]
^{2}}{D(\rho\Vert\sigma)}+1\right)  \ln\!\left(  \frac{1}{1-\varepsilon
}\right)  }{\frac{5}{4}D(\rho\Vert\sigma)}.
\end{equation}
Thus, if
\begin{align}
\ln\!\left(  \frac{1}{\delta}\right)   &  \geq2\left(  \frac{4\left[
c_{\gamma}(\rho\Vert\sigma)\right]  ^{2}}{D(\rho\Vert\sigma)}+1\right)
\ln\!\left(  \frac{1}{1-\varepsilon}\right) \\
\Longleftrightarrow\qquad\delta &  \leq\exp\!\left[  -2\left(  \frac{4\left[
c_{\gamma}(\rho\Vert\sigma)\right]  ^{2}}{D(\rho\Vert\sigma)}+1\right)
\ln\!\left(  \frac{1}{1-\varepsilon}\right)  \right]  ,
\label{eq:small-delta-1}
\end{align}
then
\begin{equation}
n^{\ast}(\rho,\sigma,\varepsilon,\delta)\geq\frac{2}{5}\left(  \frac
{\ln\!\left(  \frac{1}{\delta}\right)  }{D(\rho\Vert\sigma)}\right)  ,
\end{equation}
proving the lower bound in \eqref{eq:binary-asymmetric-informal1}.

For all $\xi^{\prime}$ satisfying $0<\xi^{\prime}<\frac{\ln3}{2c_{\gamma}
    (\rho\Vert\sigma)}$, the following inequality holds, as a consequence of~\eqref{eq:rel-ent-to-alpha-below-one}:
\begin{equation}
D(\rho\Vert\sigma)\leq D_{1-\xi^{\prime}}(\rho\Vert\sigma)+\xi^{\prime
}K\left[  c_{\gamma}(\rho\Vert\sigma)\right]  ^{2},
\end{equation}
where
\begin{equation}
K\coloneqq\cosh\!\left(  \frac{\ln3}{2}\right)  .
\end{equation}
By setting $\xi\coloneqq\xi^{\prime}K\left[  c_{\gamma}(\rho\Vert
\sigma)\right]  ^{2}$, we can rewrite this as follows: For all $\xi$
satisfying
\begin{equation}
0<\xi<\frac{K\ln3}{2}c_{\gamma}(\rho\Vert\sigma),\label{eq:xi-condition-K}
\end{equation}
the following inequality holds:
\begin{equation}
D(\rho\Vert\sigma)\leq D_{1-\xi^{\prime}}(\rho\Vert\sigma)+\xi.
\end{equation}
Proceeding similarly as before, and applying Theorem~\ref{thm:binary_asymmetric}, consider that
\begin{align}
n^{\ast}(\rho,\sigma,\varepsilon,\delta) &  \leq\left\lceil \frac{\ln\!\left(
\frac{\varepsilon^{-\left(  1-\xi^{\prime}\right)  /\xi^{\prime}}}{\delta
}\right)  }{D_{1-\xi^{\prime}}(\rho\Vert\sigma)}\right\rceil \\
&  =\left\lceil \frac{\ln\!\left(  \frac{1}{\delta}\right)  +\left(  \frac
{1}{\xi^{\prime}}-1\right)  \ln\!\left(  \frac{1}{\varepsilon}\right)
}{D_{1-\xi^{\prime}}(\rho\Vert\sigma)}\right\rceil \\
&  \leq\left\lceil \frac{\ln\!\left(  \frac{1}{\delta}\right)  +\left(
\frac{1}{\xi^{\prime}}-1\right)  \ln\!\left(  \frac{1}{\varepsilon}\right)
}{D(\rho\Vert\sigma)-\xi}\right\rceil \\
&  =\left\lceil \frac{\ln\!\left(  \frac{1}{\delta}\right)  +\left(
\frac{K\left[  c_{\gamma}(\rho\Vert\sigma)\right]  ^{2}}{\xi}-1\right)
\ln\!\left(  \frac{1}{\varepsilon}\right)  }{D(\rho\Vert\sigma)-\xi
}\right\rceil .
\end{align}
Observing that $\frac{K\ln3}{2}\approx0.634$, now pick
\begin{equation}
\xi=\frac{1}{2}D(\rho\Vert\sigma),
\end{equation}
so that the inequality in~\eqref{eq:xi-condition-K} is satisfied, as a
consequence of~\eqref{eq:rel-ent-to-c-g-bound-up}. Then
\begin{equation}
n^{\ast}(\rho,\sigma,\varepsilon,\delta)\leq\left\lceil \frac{\ln\!\left(
\frac{1}{\delta}\right)  +\left(  \frac{2K\left[  c_{\gamma}(\rho\Vert
\sigma)\right]  ^{2}}{D(\rho\Vert\sigma)}-1\right)  \ln\!\left(  \frac
{1}{\varepsilon}\right)  }{\frac{1}{2}D(\rho\Vert\sigma)}\right\rceil .
\end{equation}
Thus, if
\begin{align}
\ln\!\left(  \frac{1}{\delta}\right)   &  \geq\left(  \frac{2K\left[
c_{\gamma}(\rho\Vert\sigma)\right]  ^{2}}{D(\rho\Vert\sigma)}-1\right)
\ln\!\left(  \frac{1}{\varepsilon}\right)  \\
\Longleftrightarrow\qquad\delta &  \leq\exp\!\left[  -\left(  \frac{2K\left[
c_{\gamma}(\rho\Vert\sigma)\right]  ^{2}}{D(\rho\Vert\sigma)}-1\right)
\ln\!\left(  \frac{1}{\varepsilon}\right)  \right]  ,\label{eq:small-delta-2}
\end{align}
then
\begin{equation}
n^{\ast}(\rho,\sigma,\varepsilon,\delta)\leq\left\lceil 4\left(  \frac
{\ln\!\left(  \frac{1}{\delta}\right)  }{D(\rho\Vert\sigma)}\right)
\right\rceil ,
\end{equation}
proving the upper bound in \eqref{eq:binary-asymmetric-informal1}.
\end{proof}

\bigskip

 \section{Proof of Theorem~\ref{theorem:sc_M-ary}}

\label{app:proof-sc_M-ary}
 
	The achievability part (i.e., the upper bound in~\eqref{eq:sc_M-ary}) can be deduced from~\cite[Theorem 7.2]{AM14}.
	Here, we provide an alternative simple proof for that. In addition, we provide a bound with improved constants, by using~\cite[Eq.~(8)]{AM14}.
	
	We use the following pretty-good measurement~\cite{Bel75, HW94} with respect to the ensemble $\mathcal{S}$; i.e., for each $m\in\{1,\ldots, M\}$,
	\begin{align}
		\Lambda_m &\coloneqq  \left( \sum_{\bar{m}=1}^M A_{\bar{m}} \right)^{-\frac12} A_m \left( \sum_{\bar{m}=1}^M A_{\bar{m}} \right)^{-\frac12},
		\\
		A_m &\coloneqq  p_m \rho_m.
	\end{align}
	Then,
	\begin{align}
		p_e(\mathcal{S})  &\leq \sum_{m=1}^M \Tr\!\left[ A_m \left( I - \Lambda_m \right) \right] 
		\\
		&= \sum_{m=1}^M \Tr\!\left[ A_m \left( A_m +\sum_{\bar{m}\neq m} A_{\bar{m}} \right)^{-\frac12} \left( \sum_{\bar{m}\neq m} A_{\bar{m}} \right) \left( A_m  + \sum_{\bar{m}\neq m} A_{\bar{m}} \right)^{-\frac12} \right]
		\\
		&\overset{\textnormal{(a)}}{\leq}
		\sum_{m=1}^M \left\| \sqrt{A_m} \sqrt{ \sum_{\bar{m}\neq m} A_{\bar{m}} } \right\|_1
		\\
		&\overset{\textnormal{(b)}}{\leq} \sum_{m=1}^M \sum_{\bar{m}\neq m} \left\| \sqrt{A_m} \sqrt{A_{\bar{m}}} \right\|_1
		\\
		&= \sum_{m=1}^M \sum_{\bar{m}\neq m} \sqrt{p_m}\sqrt{p_{\bar{m}}} \sqrt{F}\left( \rho_m, \rho_{\bar{m}} \right)
		\\
        \label{eq:M-ary-last-upper-bound}
		&\leq M(M-1) \max_{\bar{m}\neq m}  \sqrt{p_m}\sqrt{p_{\bar{m}}} \sqrt{F}\left( \rho_m, \rho_{\bar{m}}  \right),
	\end{align}
	where (a) follows from Lemma~\ref{lemma:Fact}-\ref{fact:PGM_MIN}, the first inequality of Lemma~\ref{lemma:Fact}-\ref{fact:FG}, and the first inequality of Lemma~\ref{lemma:Fact}-\ref{fact:relation_fidelity};
	(b) follows from the subadditivity property stated in Fact~\ref{fact:subadditivity}. 
 
	Using the multiplicativity of the fidelity~$F$, i.e., Lemma~\ref{lemma:Fact}-\ref{fact:additivity}, and following the same reasoning used in~\eqref{eq:choice-samp-comp-upp-bnd}--\eqref{eq:binary_symmetric_achievability_final}, 
	we obtain the following upper bound on the sample complexity:
	\begin{equation} \label{eq:with_no_1_2}
		n^*(\mathcal{S},\varepsilon) \leq 
  \left\lceil \max_{m\neq \bar{m}}  \frac{2\ln\!\left( \frac{ M(M-1) \sqrt{p_m} \sqrt{p_{\bar{m}}}  }{ \varepsilon } \right) }{ - \ln F\!\left(\rho_{m},\rho_{\bar{m}} \right) } \right\rceil . 
	\end{equation}

Next, we provide an improved bound compared to~\eqref{eq:with_no_1_2} by adapting~\cite[Eq.~(8)]{AM14}. Using this inequality, we find that
\begin{align}
    p_e(\mathcal{S})  & \leq \frac{1}{2} \sum_{m=1}^M \sum_{\bar{m}\neq m} \sqrt{p_m}\sqrt{p_{\bar{m}}} \sqrt{F}\left( \rho_m, \rho_{\bar{m}}\right) \\ 
    &\leq \frac{1}{2} M(M-1) \max_{\bar{m}\neq m}  \sqrt{p_m}\sqrt{p_{\bar{m}}} \sqrt{F}\left( \rho_m, \rho_{\bar{m}}  \right).
\end{align}
Similarly, by using the multiplicativity of the fidelity~$F$, 
and following the same reasoning used in~\eqref{eq:choice-samp-comp-upp-bnd}--\eqref{eq:binary_symmetric_achievability_final}, 
	we obtain the following upper bound on the sample complexity:
	\begin{align}
		n^*(\mathcal{S},\varepsilon) \leq 
  \left\lceil \max_{m\neq \bar{m}}  \frac{2\ln\!\left( \frac{ M(M-1) \sqrt{p_m} \sqrt{p_{\bar{m}}}  }{ 2\varepsilon } \right) }{ - \ln F\!\left(\rho_{m},\rho_{\bar{m}} \right) } \right\rceil ,
	\end{align}
 thus establishing the second inequality in~\eqref{eq:sc_M-ary}.

	The converse (i.e., the lower bound in~\eqref{eq:sc_M-ary}) follows from classical reasoning; namely, any instance of pairwise simple hypothesis testing is easier than the $M$-ary hypothesis testing.
	Let $\{ \Lambda_1^\star, \ldots, \Lambda_M^\star \}$ be an optimum measurement achieving the minimum error probability, i.e.
	\begin{align} \label{eq:M-ary_converse1}
		p_e(\mathcal{S})  &=  \sum_{m=1}^M \Tr\!\left[ A_m \left( I - \Lambda_m^\star \right) \right].
	\end{align}
	Then, for every $m\neq \bar{m}$, 
	\begin{align}
	p_e(\mathcal{S})
	&\geq 
	\Tr\!\left[ A_m \left( I - \Lambda_m^\star \right) \right]
		+ \Tr\!\left[ A_{\bar m} \left( I - \Lambda_{\bar m}^\star \right) \right]
	\\
	&\geq \inf_{ \Lambda_m,\Lambda_{\bar m}\geq 0   } \left\{\Tr\!\left[ A_m \left( I - \Lambda_m \right) \right]
		+ \Tr\!\left[ A_{\bar m} \left( I - \Lambda_{\bar m} \right) \right]: \Lambda_m+\Lambda_{\bar m} = I - \sum_{\tilde{m}\neq m, \bar{m}} \Lambda_{\tilde{m}}\right\}
	\\
	&\geq \inf_{ \Lambda_m,\Lambda_{\bar m}\geq 0  } \left\{\Tr\!\left[ A_m \left( I - \Lambda_m \right) \right]
	+ \Tr\!\left[ A_{\bar m} \left( I - \Lambda_{\bar m} \right) \right]: \Lambda_m+\Lambda_{\bar m} = I\right\}
	\\
	&= (p_m+p_{\bar m}) \inf_{ \substack{\Lambda_m,\Lambda_{\bar m}\geq 0\\ \Lambda_m+\Lambda_{\bar m} = I}  } \Tr\!\left[ \frac{p_m}{p_m+p_{\bar m}} \rho_m \left( I - \Lambda_m \right) \right]
	+ \Tr\!\left[ \frac{p_{\bar m}}{p_m+p_{\bar m}} \rho_{\bar m} \left( I - \Lambda_{\bar m} \right) \right]
	\\
	&\geq (p_m+p_{\bar m}) \frac{p_m p_{\bar m}}{(p_m+p_{\bar m})^2} F(\rho_m,\rho_{\bar m}).
	\end{align}
To see the third inequality, let $\Lambda_m,\Lambda_{\bar m}\geq 0$ be arbitrary operators satisfying $\Lambda_m+\Lambda_{\bar m} = I - \sum_{\tilde{m}\neq m, \bar{m}} \Lambda_{\tilde{m}}$. Then choose $X,Y\geq 0$ such that $X+Y =  I - (\Lambda_m+\Lambda_{\bar m})$. Now set $\Lambda_m' \coloneqq \Lambda_m + X$ and $\Lambda'_{\bar{m}} \coloneqq \Lambda_{\bar{m}} + Y$. So then $\Lambda_m', \Lambda'_{\bar{m}}\geq 0$ and $\Lambda_m'+ \Lambda'_{\bar{m}} = I$, and thus the choices $\Lambda_m'$ and $\Lambda'_{\bar{m}}$ are feasible for the optimization in the third line. Furthermore, we have that $\operatorname{Tr}[(I-\Lambda_m)\omega] \geq \operatorname{Tr}[(I-\Lambda'_m)\omega]$ and $\operatorname{Tr}[(I-\Lambda_{\bar{m}})\omega] \geq \operatorname{Tr}[(I-\Lambda'_{\bar{m}})\omega]$ for every $\omega \geq 0$. Thus, the objective function in the third line is never larger than that in the second line, thus justifying the third inequality.
	The last line follows from the converse (i.e.,~the lower bound of the sample complexity) of binary hypothesis testing, as shown in~\eqref{eq:lower_bound_symmetric_binary} in the proof of  Theorem~\ref{theorem:binary_symmetric}.

	Hence, by using the multiplicativity of the fidelity again, we obtain
	\begin{align}
		n^*(\mathcal{S},\varepsilon) \geq \max_{m\neq \bar{m}} \frac{ \ln\!\left( \frac{p_m p_{\bar m}}{ (p_m + p_{\bar m})\varepsilon } \right) }{ -\ln F(\rho_m, \rho_{\bar m }) },
	\end{align}
	concluding the proof.

\begin{remark}
The key ingredient of proving the upper bound in Theorem~\ref{theorem:sc_M-ary} is the subadditivity of the square-root of the fidelity (Lemma~\ref{lemma:Fact}-\ref{fact:subadditivity}).
However, the Holevo fidelity generally does not satisfy such a property.
This is why we could not directly achieve the Holevo fidelity in~\eqref{eq:M-ary-last-upper-bound}.
On the other hand, the result mentioned in Remark~\ref{rem:Ke_Li} directly yields the Holevo fidelity.
Note that one could still achieve the Holevo fidelity by losing a factor of $2$ (Lemma~\ref{lemma:Fact}-\ref{fact:relation_fidelity}), i.e.,
\begin{equation}
\frac{1}{-\ln F(\rho,\sigma) } \leq \frac{2}{ - \ln F_{\mathrm{H}}(\rho,\sigma)}.
\end{equation}
\end{remark}

\section{Proof of Remark~\ref{rem:Ke_Li}} 

\label{app:proof:rem:Ke_Li}

Let $\mathcal{S}^{(n)} \coloneqq  \left\{(p_m, \rho_m^{\otimes n})\right\}_{m=1}^M$.
	The following bound was established as~\cite[Eq.~(36)]{Li16}, as a step in the proof of the multiple quantum Chernoff exponent:
	\begin{align}
 \label{eq:KeLiBound}
		p_e(\mathcal{S}^{(n)}) 
		\leq 5 M (M-1)^3 (n+1)^{2d} \max_{1\leq m \leq M}\{p_m\} \max_{m\neq \bar{m}}  \mathrm{e}^{ - n C(\rho_m\Vert \rho_{\bar{m}}) },
	\end{align}	
	where $d$ is the dimension of the underlying Hilbert space.
Denote
\begin{align}
    c \coloneqq 5 M (M-1)^3 \max_{1\leq m \leq M}\{p_m\} \quad\text{and}\quad B\coloneqq\min_{m \neq \bar{m}} \frac{C(\rho_m\Vert\rho_{\bar{m}})}{2}.
\end{align}
Note that the function $f(x) = C(x+1)^{2d}e^{-Bx}$ is maximised at $x_0= \frac{2d}{B}-1$ with maximal value 
\begin{align}
A&\coloneqq f(x_0) = c\left(\frac{2d}{B}\right)^de^{B-2d} \\
&= 5 M (M-1)^3 \max_{1\leq m \leq M}\{p_m\}\left(\frac{4d}{\min_{m \neq \bar{m}} C(\rho_m\Vert \rho_{\bar{m}})}\right)^d \mathrm{e}^{\min_{m \neq \bar{m}} \frac{C(\rho_m\Vert\rho_{\bar{m}})}{2}-2d}
\end{align}
From~\eqref{eq:KeLiBound} we therefore see
\begin{align}
    p_e(\mathcal{S}^{(n)}) \le c (n+1)^{2d} \mathrm{e}^{-2Bn}\le A \mathrm{e}^{-Bn} 
\end{align}
which hence gives
\begin{align}
    n^*(\mathcal{S},\varepsilon) \le \left\lceil\frac{\ln\left(A/\varepsilon\right)}{\min_{m \neq \bar{m}} \frac{C(\rho_m\Vert\rho_{\bar{m}})}{2}}\right\rceil.
\end{align}
Invoking $C(\rho_m\Vert\rho_{\bar{m}}) \geq - \frac{1}{2} \ln F(\rho_m,\rho_{\bar{m}})$
(Lemma~\ref{lemma:Fact}-\ref{fact:relation_fidelity}), the above bound also implies 
\begin{align}
    n^*(\mathcal{S},\varepsilon)
    \leq  O\!\left( \max_{m \neq \bar{m}}\frac{\ln M}{ -\ln F(\rho_m,\rho_{\bar{m}})} \right).
\end{align}

\end{document}